\newtheorem{theorem}{Theorem}
\newtheorem{remark}[theorem]{Remark}
\newtheorem{definition}[theorem]{Definition}
\newtheorem{proposition}[theorem]{Proposition}
\newtheorem{corollary}[theorem]{Corollary}
\newtheorem{lemma}[theorem]{Lemma}
\newcommand{\GG}[1]{}
\newcommand{\LW}{\mathsf{LW}}
\newcommand{\HG}{\mathsf{HG}}
\newcommand{\s}{\slashed}
\newcommand{\nablas}{\slashed{\nabla}}
\newcommand{\Div}{\textup{div}}
\newcommand{\even}{\textup{even}}
\newcommand{\RW}{\textup{RW}}
\newcommand{\done}{\slashed{D}_1}
\newcommand{\dtwo}{\slashed{D}_2}
\newcommand{\donest}{\slashed{D}^*_1}
\newcommand{\dtwost}{\slashed{D}^*_2}
\newcommand{\h}{\mathsf{H}}
\newcommand{\W}{\mathsf{W}}
\newcommand{\cblue}{\color{black}}
\newcommand{\cred}{\color{black}}
\newcommand{\cpur}{\color{black}}
\newcommand{\f}{\mathcal{D}}
\newcommand{\K}{\mathcal{K}}
\newcommand{\D}{\left(1-\frac{2M}{r}\right)}
\numberwithin{theorem}{section}
\numberwithin{equation}{section}
\begin{document}
\title[Schwarzschild linear stability in harmonic gauge]{The linear stability of the Schwarzschild spacetime in the harmonic gauge: even part}

\author[Pei-Ken Hung]{Pei-Ken Hung}
\address{Pei-Ken Hung, Massachusetts Institute of Technology,
Department of Mathematics,
182 Memorial Drive,
Cambridge, MA 02139, USA}
\email{pkhung@mit.edu}

\begin{abstract}
In this paper we study the even part of the linear stability of the Schwarzschild spacetime as a continuation of \cite{Hung-thesis}. By taking the harmonic gauge, we prove that the energy decays at a rate $\tau^{-2+}$ for the solution of the linearized Einstein equation after subtracting its spherically symmetric part. We further show that the spherically symmetric part converges to a linear combination of two special solutions. One is the gauge-fixed mass change solution \cite{Hafner-Hintz-Vasy}. The other is the deformation tensor of a stationary one form, which solves the tensorial wave equation. As a key ingredient, we prove that the solutions of the tensorial wave equation converge to this stationary one form up to a scalar multiplication. 
\end{abstract}
\maketitle

\section{Introduction}

Einstein's general theory of relativity describes how the metric on the spacetime, a 4-dimensional Lorenzian manifold, interacts with the matter fields. When there is no matter field, the theory reduces to the study of the vacuum Einstein equation, which is equivalent to the Ricci flat equation on the metric $g$:

\begin{align}\label{VEE}
Ric(g)=0.
\end{align}

Choquet-Bruhat and Choquet-Bruhat-Geroch \cite{Choquet-Bruhat, Choquet-Bruhat-Geroch} formulated equation \eqref{VEE} as a Cauchy problem with the initial data being a triple $(\Sigma,g^3_{ij},k_{ij})$. The triple consists of a 3-dimensional manifold $\Sigma$, a Remannian metric $g^3_{ij}$ and a symmetric two tensor $k_{ij}$. If and only if the constraint equations (Gauss and Codazzi equations) are satisfied, there exist a vacuum spacetime $(\mathcal{M},g)$, a 4-dimensional manifold equipped with a Lorentzian metric $g$ solving \eqref{VEE}, and an embedding $i:\Sigma\to\mathcal{M}$ with $g^3$ and $k$ being the pulled back induced metric and second fundamental form respectively. This can be viewed as the local (in time) existence theorem for \eqref{VEE}.

With the local existence of \eqref{VEE}, it is of great attraction to study the long-time behavior of the solution. A major open problem in this direction is the stability conjecture of the Kerr spacetimes $(\mathcal{M}_{M,a},g_{M,a})$, which is a family of stationary solutions of \eqref{VEE}. It is believed that the Kerr family is stable as solutions of \eqref{VEE}: for initial data $(\Sigma,g^3_{ij},k_{ij})$ close to one from a Kerr spacetime, the solution $g$ has ``long-time existence" and converges to a member of the Kerr family. We refer readers to \cite[section 5.6]{Dafermos-Rodnianski_note} for a more precise statement of the conjecture.

One needs to understand the linearized equation of $\eqref{VEE}$ before studying the stability problem. In the Kerr background, the linearized equation is a linear equation on symmetric two tensors $h$, which reads
\begin{equation}\label{linear_gravity}
\begin{split}
\frac{d}{ds} Ric (g_{M,a}+sh)\bigg|_{s=0}=0.
\end{split}
\end{equation}
There are two types of special solutions of \eqref{linear_gravity}. The first is a 4-dimensional vector space which comes from the perturbation within the Kerr family. The second, being infinite dimensional, consists of deformation tensors which correspond to infinitesimal diffeomorphisms. As a consequence of the Kerr stability conjecture, the solution $h$ is believed to decay to zero up to these special solutions, which is known as the linear stability problem of Kerr.

In this paper, we study the linear stability of the \textbf{Schwarzschild spacetime}. To deal with the infinite dimensional deformation tensors, we impose the harmonic gauge condition:

\begin{align}\label{HG}
\Gamma_b[h]:=\nabla^a\left(h_{ab}-\frac{1}{2}({tr} h)g_{ab}\right)=0.
\end{align}
The harmonic gauge is the linearization of the harmonic map gauge used in \cite{Choquet-Bruhat, Choquet-Bruhat-Geroch}. Lindblad-Rodnianski \cite{Lindblad-Rodnianski-05, Lindblad-Rodnianski-10} and Hintz-Vasy \cite{Hintz-Vasy-1} also adapted the harmonic map gauge in proving the stability of the Minkowski spacetime. These accomplishments in the harmonic map gauge is the main motivation to study the harmonic gauge in the Schwarzschild background. Under the harmonic gauge condition \eqref{HG}, the linearized equation (\ref{linear_gravity}) is equivalent to the Lichnerowicz d'Alembertian equation:
\begin{equation}\label{Main_equation}
\begin{split}
\Box h_{ab}+2R_{a\ b}^{\ c\ d}h_{cd}=0,
\end{split}
\end{equation}
where $R_{a\ b}^{\ c\ d}$ is the (index raised) Remannian curvature tensor of $g$. The equation \eqref{Main_equation} is a wave equation on symmetric two tensors and hence has well-posed Cauchy problem. In particular, we have long-time existence of solutions for regular initial data. Moreover, the gauge condition (\ref{HG}) is preserved: for any solution $h$ of the Lichnerowicz d'Alembertian equation \eqref{Main_equation}, the gauge one from $\Gamma_b[h]$ satisfies the tensorial wave equation: 
\begin{align*}
\Box \Gamma_a[h]=0.
\end{align*}
Therefore $\Gamma_b[h]$ vanishes identically provided $\Gamma_b[h]$ and its normal derivative vanish initially. We remark that there are sitll infinite dimensional deformation tensors satisfying \eqref{Main_equation}. Since a deformation tensor ${}^W\pi$ is always a solution of \eqref{linear_gravity}, it solves \eqref{Main_equation} provided \eqref{HG} holds, which is equivalent to the tensorial wave equation on the potential one form:
\begin{align}\label{wave_equation_vector}
\Box W_a=0.
\end{align}

We investigate the equations \eqref{HG}, \eqref{Main_equation} for \textbf{even} symmetric two tensors and the equation \eqref{wave_equation_vector} for \textbf{even} one forms as the odd part was studied in \cite{Hung-thesis}. See subsection \ref{subsec:evenodd} for the even/odd decomposition. Denote by $W_{\ell=0}$ or $h_{\ell=0}$ the spherically symmetric part of a one form $W$ or a symmetric two tensor $h$ respectively. We show that solutions of \eqref{HG} and \eqref{Main_equation} or \eqref{wave_equation_vector} decay to zero after subtracting $W_{\ell=0}$ or $h_{\ell=0}$. See Theorem \ref{thm:ell_geq_2} and \ref{thm:ell_geq_1} for the precise statements of the following results.
\begin{theorem}
Let $W=W_adx^a$ be a solution of \eqref{wave_equation_vector} with initial data falling off fast enough. Then the energy of $W-W_{\ell=0}$ decays at a rate $\tau^{-2+}$.
\end{theorem}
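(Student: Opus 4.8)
The plan is to reduce the tensorial wave equation \eqref{wave_equation_vector} to a family of scalar equations on the two-dimensional quotient and then run the standard program for waves on Schwarzschild: uniform boundedness, integrated local energy decay, and the $r^p$-weighted hierarchy. First I would expand $W$ in spherical harmonics and use the even/odd splitting. For each fixed $\ell\ge 1$ the even part of $W$ is carried by three functions of $(t,r)$ on the quotient $\mathcal{Q}=\mathbb{R}_t\times(2M,\infty)_r$ — the components $W_t,W_r$, which are scalars on $S^2$, together with the gradient potential of the angular part — and these obey a closed coupled system of wave equations with $\ell$-dependent potentials. Since $W_{\ell=0}$ is precisely the $\ell=0$ content of $W$ (for which the gradient potential is absent), proving the theorem amounts to controlling this system for every $\ell\ge 1$, with estimates that are summable in $\ell$.

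Next I would decouple the system. The natural input is that the divergence $\psi:=\Div W$ satisfies the scalar wave equation $\Box\psi=0$ — this follows by commuting \eqref{wave_equation_vector} with $\nabla^a$ on the Ricci-flat background (equivalently, via the Weitzenb\"ock identity, $\Box W=0$ says $W$ is a harmonic one-form, so $\Div W$ is harmonic) — and that the exact two-form $dW$ then solves a Maxwell-type system sourced by $d\psi$. Projecting onto spherical harmonics and applying Chandrasekhar-type transformations should produce, for each $\ell\ge 1$, a short list of master quantities $\Psi$ satisfying Regge--Wheeler-type equations $\partial_t^2\Psi-\partial_{r^*}^2\Psi+V_\ell(r)\,\Psi=0$, together with transport relations recovering the remaining components — and hence the energy of $W-W_{\ell=0}$ — from the $\Psi$'s and from $\psi$ (possibly with a triangular coupling to lower-order, already-controlled pieces). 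The decoupling must be arranged so that the potentials $V_\ell$ are non-negative away from the photon sphere; this is what makes the next step available.

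For each scalar equation I would then apply the vector-field method: (i) uniform boundedness of a non-degenerate energy, using the redshift vector field near $r=2M$ and the Killing field $\partial_t$ away from it; (ii) an integrated local energy decay estimate with a multiplier degenerating at $r=3M$ to handle trapping, absorbing the zeroth-order potential terms using their sign; and (iii) the $r^p$-weighted hierarchy with $p\in(0,2]$, which upgrades (i)--(ii) to decay of the $T$-energy flux through $\Sigma_\tau$ at the rate $\tau^{-2}$, the residual ``$+$'' loss coming from the lowest angular modes and from the elliptic and transport estimates relating $W$ to the decoupled quantities. Summing over $\ell\ge 1$ is legitimate because the data of $W-W_{\ell=0}$ fall off fast enough that the $\ell$-sums of the higher-order and $r^p$-weighted energies converge; translating back through the transport relations then yields the stated decay.

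The step I expect to be the main obstacle is the interaction of the decoupling with the low-frequency behaviour. One must choose the decoupled quantities so that trapping at $r=3M$ \emph{and} the signs of the potentials are simultaneously under control for a genuinely coupled two-dimensional system, and the mode $\ell=1$ will almost certainly need a dedicated treatment — both because its Regge--Wheeler potential is the least favourable and because the $p=2$ estimate in the hierarchy is borderline there. One also has to verify that, once $W_{\ell=0}$ has been removed, no stationary or slowly-decaying contribution is hidden in the $\ell=1$ even sector, i.e.\ that the relevant conserved or almost-conserved charges either vanish for, or are controlled by, the given data. This mode-by-mode bookkeeping, rather than the high-frequency decay, is where I expect the real work to lie.
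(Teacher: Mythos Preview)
Your high-level plan --- use that $\psi=\nabla^aW_a$ solves the free wave equation, identify components of $dW$, then run redshift/Morawetz/$r^p$ --- matches the paper's starting point. The paper indeed controls $S_W=2\,\mathrm{div}\,W$ (free wave), then $P_{\mathrm{even}}$ (wave with source $\nabla_t S_W$), then $W_0$ (wave with potential $-2M/r^3$ and source $P_{\mathrm{even}}$), and separately $Q_{\mathrm{even}}$ (decoupled, Regge--Wheeler type). So the ``triangular hierarchy'' you anticipate is real for four of the five degrees of freedom.

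The gap is at your decoupling step for the remaining pair. You write that ``Chandrasekhar-type transformations should produce \ldots\ master quantities $\Psi$ satisfying Regge--Wheeler-type equations'', but no such clean decoupling for $(W_1,W_2)$ is produced or used in the paper, and the expectation that it exists with good potentials is unjustified. The paper instead attacks the coupled system $\Box\mathsf{W}=A_{\mathsf W}\mathsf{W}$ for $\mathsf{W}=(W_1,W_2)$ directly. Two difficulties you do not anticipate drive the argument: (i) the coupling $A_{\mathsf W}$ is \emph{not} self-adjoint, so even the $T$-current has uncontrolled bulk terms; (ii) at $\ell=1$ the leading symmetric part is only semi-definite, so positivity of boundary and $r^p$ bulk terms is borderline and must be recovered via Hardy and an auxiliary current. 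The key device that closes both issues --- and which your proposal misses --- is to rewrite the \emph{definitions} of $S_W,P_{\mathrm{even}},Q_{\mathrm{even}}$ as first-order constraints
\[
\nabla_r W_1+\slashed D_1 W_2+\tfrac{2}{r}W_1=F_1,\qquad \slashed D_1^{*}W_1+\bigl(1-\tfrac{2M}{r}\bigr)\nabla_r W_2+\cdots=F_2,\qquad \nabla_t W_1=F_3,
\]
with $F_i$ built from the already-controlled quantities $W_0,P_{\mathrm{even}},Q_{\mathrm{even}},S_W$. These substitutions are used (a) to absorb the non-self-adjoint error terms in the $T$-current, and (b) inside the Morawetz bulk to replace $\nabla_r W_1$ and $(1-2M/r)\nabla_r W_2$ by angular derivatives plus $F_i$, after which an explicit $2\times2$ matrix computation shows the Morawetz bulk is \emph{non-degenerate at the photon sphere} for the coupled system. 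This is the opposite of what you predict (``multiplier degenerating at $r=3M$''): the constraints buy an integrated decay estimate for $(W_1,W_2)$ without trapping loss, at the price of error terms in $F_i$ which are already decaying. Without this mechanism your Morawetz step for the residual coupled pair would not close, and your hoped-for decoupling is not a substitute for it.
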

\begin{theorem}
Let $h=h_{ab}dx^adx^b$ be a solution of \eqref{HG} and \eqref{Main_equation} with initial data falling off fast enough.. Then the energy of $h-h_{\ell=0}$ decays at a rate $\tau^{-2+}$.
\end{theorem}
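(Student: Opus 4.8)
The plan is to decompose \eqref{Main_equation}--\eqref{HG} into spherical harmonic modes and then run a Dafermos--Rodnianski style decay scheme for a scalar master quantity attached to each mode. Using the even/odd splitting of subsection~\ref{subsec:evenodd} and removing $h_{\ell=0}$, it suffices to treat a fixed even mode with angular number $\ell\geq1$. Expanding in even scalar-, vector- and tensor-type harmonics turns the Lichnerowicz equation into a coupled system of wave equations for finitely many functions on the $1+1$ dimensional quotient $\mathcal Q=\R_t\times(2M,\infty)_r$ with metric $-\D dt^2+\D^{-1}dr^2$, while the four components of \eqref{HG} give first order constraints among those functions. First I would use the constraints to eliminate the redundant unknowns and, after an explicit $r$-dependent substitution, isolate a single scalar $\psi$ satisfying a wave equation $\Box_{\mathcal Q}\psi-V_\ell\psi=0$ on $\mathcal Q$ --- the even-sector (Zerilli-type) counterpart of the Regge--Wheeler equation used for the odd part in \cite{Hung-thesis} --- together with transport and elliptic identities reconstructing the remaining components from $\psi$, $\partial_t\psi$ and a residual gauge one form $W$ which, by the discussion preceding \eqref{wave_equation_vector}, satisfies $\Box W=0$.

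Next I would prove decay for $\psi$. For $\ell\geq2$ the potential $V_\ell$ is smooth, positive and $O(r^{-2})$ at infinity, with no zero-energy bound state, and the only trapping is the normally hyperbolic photon sphere at $r=3M$; consequently the standard package --- redshift estimate near $r=2M$, integrated local energy decay with the expected degeneration at $r=3M$, and the $r^p$-hierarchy, all commuted with $\partial_t$ and the angular momentum operators --- yields an energy flux through the hypersurfaces $\{\tau=\text{const}\}$ decaying like $\tau^{-2+}$ for $\psi$ and its first derivatives, the number of commutations (hence the $\epsilon$ loss) being controlled by the assumed fall-off. The residual gauge one form $W$ is even with $W_{\ell=0}=0$, so its energy decays at the same rate by our result on the tensorial wave equation, Theorem~\ref{thm:ell_geq_1}. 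Feeding both pieces back through the reconstruction identities --- where one must commute once more and use the $r$-weighted estimates to avoid losing powers of $\tau$ when integrating the transport equations in $r$ --- propagates the $\tau^{-2+}$ decay to the full energy of the $\ell\geq2$ part of $h$, and the sum over $\ell$ converges because the data fall off fast enough.

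It remains to handle the even $\ell=1$ modes, where the Zerilli-type substitution degenerates and there is no gauge-invariant dynamical content. Here one shows directly that the even $\ell=1$ part of $h$ coincides with the deformation tensor ${}^W\pi$ of an even one form $W$ with $\Box W=0$ and $W_{\ell=0}=0$ (solving for $W$ from $h$ and checking that the required fall-off is inherited from that of $h$); differentiating, its energy decays at rate $\tau^{-2+}$ by Theorem~\ref{thm:ell_geq_1} after one commutation. Combining the $\ell\geq2$ estimate with the $\ell=1$ estimate gives the asserted decay of the energy of $h-h_{\ell=0}$, the precise statement being Theorem~\ref{thm:ell_geq_2}.

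I expect the main obstacle to be the second step, for two reasons. First, producing the decoupled equation $\Box_{\mathcal Q}\psi-V_\ell\psi=0$ with a \emph{good} potential uniformly in $\ell$: unlike Regge--Wheeler, the even-sector potential is a rational function of $r$, and its positivity, the absence of a zero-energy resonance, and the compatibility of the Morawetz multiplier with the coupling terms that \eqref{HG} has been used to remove must all be verified by explicit computation. Second, transferring the decay from $\psi$ (and $W$) back to the individual tensor components: several components are controlled only by transporting in $r$ along the static Killing field or by solving an elliptic equation on $\mathcal Q$, so a crude argument loses powers of $\tau$; recovering the sharp $\tau^{-2+}$ rate requires a carefully ordered hierarchy of estimates with the correct $r$-weights and enough commutations by $\partial_t$, the angular derivatives $\nablas$ on the spheres, and the redshift vector field near $r=2M$, more intricate than---but modelled on---the bookkeeping carried out for the odd part in \cite{Hung-thesis}.
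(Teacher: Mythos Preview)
Your overall architecture---decouple a Zerilli-type scalar $\psi_Z$, control a gauge one form $W$, then reconstruct $h$---matches the paper's, and your handling of $\ell=1$ is exactly right.  The genuine gap is your claim that the residual gauge one form satisfies $\Box W=0$.  In the decomposition $h=h^{\RW}-{}^W\pi$ used in subsection~\ref{subsec:to_RW}, the piece ${}^W\pi$ is \emph{not} in harmonic gauge (because $h^{\RW}$ is not), so the discussion preceding \eqref{wave_equation_vector} does not apply and $W$ satisfies the \emph{inhomogeneous} system \eqref{equ:W_0}--\eqref{equ:W_2} with sources built from $\psi_Z$ (and $S=\mathrm{tr}\,h$, $P_{\even}$, $Q_{\even}$).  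Consequently you cannot invoke Theorem~\ref{thm:ell_geq_1} as a black box for $W$ when $\ell\geq 2$.

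The paper handles this by running the whole vector-field machinery on the inhomogeneous $W$-system directly (Sections~\ref{sec:W0}--\ref{sec:W12}): one first closes estimates on $\psi_Z$ and $S$, then on $\hat P_{\even}$, then on $\hat W_0$, and finally on the coupled pair $(\hat W_1,\hat W_2)$, each step feeding source bounds into the next.  A second point you miss is that the raw sources in \eqref{equ:P} and \eqref{equ:W_0} carry bad $r$-weights; the paper absorbs these by the substitutions \eqref{def:hP}--\eqref{def:hW2}, which trade $r$-growth for $u$-weights and therefore require the faster $\tau^{-4+}$ decay of $\psi_Z$ and $S$ coming from Proposition~\ref{cor:free_price}.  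Finally, rather than reconstructing all components of $h$ from $(\psi_Z,W)$ globally, the paper only does this in the bounded-$r$ region; near the horizon and near null infinity it controls $h$ directly via red-shift and $r^p$ estimates on the decomposition $\mathsf H$ of subsection~\ref{subsec:decomp_h} (Proposition~\ref{pro:metric_far}), which sidesteps the loss-of-$\tau$ issue you anticipated in the transport step.
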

The equation \eqref{wave_equation_vector} admits a spherically symmetric and stationary solution $W^*$ with finite initial energy \cite{Hafner-Hintz-Vasy}. See \eqref{def:W*} in section \ref{sec:ell=0} for the explicit form of $W^*$. We show that spherically symmetric solutions of \eqref{wave_equation_vector} converge to a multiple of $W^*$. See Theorem \ref{thm:vector,l=0} for the precise statement of the following result.
\begin{theorem}
Let $W=W_adx^a$ be a spherically symmetric solution of \eqref{wave_equation_vector} with initial data falling off fast enough. Then $W$ converges to $cW^*$ for some constant $c$.
\end{theorem}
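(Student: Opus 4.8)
\emph{Proof idea.} Throughout we stay in the spherically symmetric (hence even) sector, in which the angular components vanish and $W=W_t(\tau,r)\,dt+W_r(\tau,r)\,dr$. The first step is to write \eqref{wave_equation_vector} out in these components: on the exterior $\{r>2M\}$ it becomes a coupled pair of $1+1$ dimensional wave equations for $(W_t,W_r)$ with $r$-dependent coefficients built from $\D$, the coupling entering only through the Christoffel symbols carrying a $t$- or $r$-index. Combined with the redshift-assisted energy estimate attached to the Killing field $\partial_t$, this gives global existence and boundedness of the spherically symmetric energy of $W$.

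The second step isolates the obstructions to decay. Since Schwarzschild is Ricci flat, on one-forms the operator in \eqref{wave_equation_vector} coincides with the Hodge d'Alembertian, so the scalar $\chi:=-\nabla^aW_a$ solves the spherically symmetric scalar wave equation while $dW$ is a closed $2$-form with $\delta(dW)=-d\chi$. In the $\ell=0$ sector this bookkeeping produces three conserved functionals of the data: a ``Coulomb charge'' $\mathsf q[W]$, read off from $dW(\tau)\to\mathsf q\,r^{-2}\,dt\wedge dr$ as $\tau\to\infty$; the flux and spatial-infinity value of the scalar $\chi$; and the value $\mathsf m[W]$ of $W_t$ at spatial infinity. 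The fall-off hypotheses on the data are calibrated so that $\mathsf q[W]=0$ — this excludes the Coulomb-charged stationary solution, which decays only like $1/r$ at infinity — and so that the conserved quantities of $\chi$ vanish. One then checks from \eqref{def:W*} that $W^*$ is closed and co-closed (in fact $W^*=d\Phi$ for an explicit $\Phi$ with $\Box\Phi=0$) and that $\mathsf m[W^*]\neq0$; setting
\begin{equation*}
c:=\frac{\mathsf m[W]}{\mathsf m[W^*]},\qquad \widetilde W:=W-cW^* ,
\end{equation*}
we obtain a solution $\widetilde W$ of \eqref{wave_equation_vector} (since $\Box W^*=0$) for which $\mathsf q$, $\mathsf m$, and the flux and spatial-infinity value of $\delta\widetilde W$ all vanish.

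The third and main step is the quantitative decay of $\widetilde W$. First, $\delta\widetilde W=\chi$ is a spherically symmetric scalar wave whose relevant conserved quantities vanish, so it decays at rate $\tau^{-2+}$ by the standard combination of redshift, integrated local energy (Morawetz) decay, and the $r^p$-hierarchy; and the closed $2$-form $d\widetilde W=dW$, being pinned down by its (vanishing) Coulomb charge and by $\chi$ through $\delta(dW)=-d\chi$, decays at the same rate. It remains to reconstruct $\widetilde W$ itself from its differential and codifferential. A spherically symmetric one-form is determined by $d(\cdot)$ and $\delta(\cdot)$ up to a spherically symmetric harmonic one-form, and an elementary ODE analysis on $(2M,\infty)$ shows that the only such harmonic form regular at $r=2M$ and compatible with the fall-off at infinity is a multiple of $W^*$; since $\mathsf m[\widetilde W]=0$ this multiple vanishes. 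Propagating the $\tau^{-2+}$ bounds on $d\widetilde W$ and $\delta\widetilde W$ through this reconstruction — equivalently, a weighted Hodge-type inequality bounding the energy of $\widetilde W$ by that of $d\widetilde W$ and $\delta\widetilde W$ on the complement of $\R W^*$ — shows that the energy of $\widetilde W=W-cW^*$ decays at rate $\tau^{-2+}$, and in particular $W\to cW^*$. (Alternatively one may apply the decay statement to $\partial_\tau W$, all of whose charges vanish automatically, together with the classification of finite-energy stationary solutions; that route still requires a reconstruction estimate to pass from energy decay of $\partial_\tau W$ to convergence of $W$.)

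The hard part is the reconstruction. The one-form equation does not decouple into scalar equations in the naive variables, and the part of $\widetilde W$ invisible to $d\widetilde W$ and $\delta\widetilde W$ is controlled only through transport-type relations with weights that degenerate like $\D$ at the horizon and decay slowly near infinity; verifying that the vanishing of $\mathsf q$, of the conserved quantities of $\chi$, and of $\mathsf m$ is precisely what removes the stationary and the $1/r$ tails, and tracking the rate $\tau^{-2+}$ through these weighted estimates, is where the real work lies. The component reduction, the identification of the charges, and the ODE classification of stationary solutions are comparatively routine.
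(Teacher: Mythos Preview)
Your overall architecture---decouple via gauge-invariant scalars, show those decay, then classify the residual stationary piece by an ODE---matches the paper's. But several of your specific claims about $W^*$ are wrong, and they break the argument as written.

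First, $W^*$ is \emph{not} closed: from \eqref{def:W*} one computes $(dW^*)_{tr}=\partial_t W^*_r-\partial_r W^*_t = r^{-2}$, so $dW^*=r^{-2}\,dt\wedge dr$ is precisely the Coulomb two-form. Hence $\mathsf q[W^*]\neq 0$, and your hypothesis ``the fall-off is calibrated so that $\mathsf q[W]=0$'' would force $c=0$, contradicting the statement. (It also rules out $W^*=d\Phi$.) Your other charge $\mathsf m[W]$, the value of $W_t$ at spatial infinity, vanishes for $W^*$ since $W^*_t=1/r\to 0$, so $c=\mathsf m[W]/\mathsf m[W^*]$ is undefined. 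And since $dW^*\neq 0$, $W^*$ does not lie in the kernel of $W\mapsto(dW,\delta W)$, so the reconstruction claim ``determined by $d$ and $\delta$ up to a multiple of $W^*$'' is false as stated. What is true is that $W^*$ is co-closed ($S_W[W^*]=0$) and that $P_{\even}[W^*]=0$; the latter is a different combination than $dW$.

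The paper works not with $dW$ but with $P_{\even}$ of \eqref{def:P}, built so that both $S_W$ and $P_{\even}$ vanish on $W^*$. After these are shown to decay (Propositions~\ref{pro:S,l=0} and~\ref{pro:P,l=0}), the paper does \emph{not} invoke a Hodge-type inequality. Instead, observing that when $S_W,P_{\even}$ vanish the equation for $W_0$ becomes a second-order ODE in $r$ along each $\Sigma_\tau$, it writes $W_0=c_1(\tau)W_{0,1}+c_2(\tau)W_{0,2}+O_3$ with $W_{0,1}=1/r$ regular and $W_{0,2}=1+\tfrac{2M}{r}\log(\tfrac{r}{2M}-1)$ singular at the horizon, and with $O_3$ controlled by $S_W,P_{\even}$. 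The coefficient $c_2(\tau)$ of the singular branch is then shown to decay by a \emph{red-shift} energy estimate near $\mathcal H^+$; this is the mechanism that replaces your reconstruction, and it is where the real work is. The constant $c=c_1(\infty)$ is read off at the horizon (see the formula in Theorem~\ref{thm:vector,l=0}), not at spatial infinity. If you want to repair your Hodge approach, the right move is to let the Coulomb charge of $dW$ be nonzero and use it to \emph{define} $c$, then show $d(W-cW^*)$ and $\delta(W-cW^*)$ decay; but you will still need a near-horizon argument in place of the weighted Hodge inequality you invoke.
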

The equations \eqref{HG} and \eqref{Main_equation} have two spherically symmetric and non-decaying solutions. One solution $K^*$ is the gauge-fixed mass perturbation \cite{Hafner-Hintz-Vasy}, which grows linearly. See subsection \ref{subsec:sad} for the explicit form of $K^*$. The other is the deformation tensor of $W^*$, which is stationary. We show that under certain conditions, spherically symmetric solutions of \eqref{HG} and \eqref{Main_equation} converge to a linear combination of these two. See Theorem \ref{thm:sad}  for the precise statement of the following result.
\begin{theorem}
Let $h=h_{ab}dx^adx^b$ be a spherically symmetric solution of \eqref{HG} and \eqref{Main_equation} with initial data falling off fast enough. Then $h$ converges to $c_1 K^*+c_2 ({}^{W^*}\pi)$ for some constants $c_1$ and $c_2$.
\end{theorem}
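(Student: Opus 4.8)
The plan is to reduce the spherically symmetric problem to the infinitesimal Birkhoff theorem together with the analysis of the vector equation \eqref{wave_equation_vector} already carried out in Theorem~\ref{thm:vector,l=0}, with the linearized mass as the only gauge-invariant datum. First I would fix the spherically symmetric ansatz, writing $h$ in terms of four scalar functions of $(t,r)$ (the components $h_{tt},h_{tr},h_{rr}$ and the coefficient of $\slashed{g}_{AB}$ in $h_{AB}$), and expand \eqref{HG} as two scalar constraints and \eqref{Main_equation} as the coupled system they satisfy. I would then write the linearized Hawking/ADM mass $\mathbf{m}[h]$ as a first-order expression in these scalars and show, using the two harmonic-gauge conditions together with a suitable subset of the evolution equations, that $\partial_t\mathbf{m}[h]=\partial_r\mathbf{m}[h]=0$; this is the infinitesimal Birkhoff identity. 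Normalizing so that $\mathbf{m}[K^*]=1$, this produces the constant $c_1:=\mathbf{m}[h]$, and $\tilde h:=h-c_1K^*$ is again a spherically symmetric solution of \eqref{HG} and \eqref{Main_equation}, now with vanishing linearized mass and with fall-off inherited from that of $h$.

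The core step is that any such $\tilde h$ is a deformation tensor: $\tilde h={}^{W}\pi$ for a spherically symmetric one form $W=W_t\,dt+W_r\,dr$ whose initial data falls off fast enough. To produce $W$ I would solve the first-order gauge system $\nabla_aW_b+\nabla_bW_a=\tilde h_{ab}$, together with its first time derivative, on the initial slice for $(W,\partial_tW)|_{t=0}$ --- here the harmonic gauge of $\tilde h$ makes the system consistent and the vanishing of $\mathbf{m}[\tilde h]$ is exactly the integrability condition that lets one choose a solution inside the fast-decaying class --- and then propagate $W$ off the slice by imposing \eqref{wave_equation_vector}. Because $Ric(g)=0$ yields $\Gamma_b[{}^{W}\pi]=\Box W_b$, the deformation tensor ${}^{W}\pi$ lies in the harmonic gauge and solves \eqref{Main_equation}; hence $\tilde h-{}^{W}\pi$ is a harmonic-gauge solution of \eqref{Main_equation} with trivial Cauchy data and vanishes identically. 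The reconstruction is well defined because any two fast-decaying choices of $W$ differ by a spherically symmetric Killing field, i.e. a multiple of $\partial_t$, which has infinite energy and so must be zero.

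With $\tilde h={}^{W}\pi$ in hand I would invoke Theorem~\ref{thm:vector,l=0}: the spherically symmetric solution $W$ of \eqref{wave_equation_vector} converges to $c_2W^*$ for some constant $c_2$, so $V:=W-c_2W^*$ solves \eqref{wave_equation_vector} and decays. Commuting with the Killing field $\partial_t$ and using the wave equation to trade time derivatives for spatial ones upgrades this to decay of two derivatives of $V$, hence to energy decay of the deformation tensor ${}^{V}\pi$. By linearity ${}^{W}\pi=c_2\,{}^{W^*}\pi+{}^{V}\pi$, so $h=c_1K^*+c_2\,({}^{W^*}\pi)+{}^{V}\pi$ with the energy of ${}^{V}\pi$ tending to zero, which is the claimed convergence; the constants $c_1,c_2$ are determined by $h$ since $\mathbf{m}$ is and since the reconstructed $W$ is unique in the fast-decaying class.

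I expect the main obstacle to be the middle part: proving that $\mathbf{m}$ is conserved, that its vanishing is precisely what permits writing $\tilde h={}^{W}\pi$, and --- crucially for applying Theorem~\ref{thm:vector,l=0} --- that the assumed fall-off of $h$ survives the passage to $\tilde h$ and then to $W$ without degrading below the threshold that theorem requires. The bookkeeping of the four-scalar system and of the weights at the horizon and at null infinity is where the real work lies; once $W$ has been produced with adequate decay, the convergence is an immediate consequence of the vector result.
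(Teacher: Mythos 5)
Your proposal is correct in outline and shares the paper's skeleton: split off the mass part, reduce the remainder to a deformation tensor ${}^{W}\pi$ of a spherically symmetric one form solving \eqref{wave_equation_vector}, and then quote Theorem \ref{thm:vector,l=0} to get convergence to $c_{2}\,{}^{W^*}\pi$. Where you genuinely differ is in how the decomposition is obtained. The paper does not prove an infinitesimal Birkhoff statement: it cites the classical $\ell=0$ result (Zerilli, Martel--Poisson) that $h_{\ell=0}=cK+{}^{W}\pi$ with $c$ read off from the explicit gauge-invariant expression \eqref{def:csad}, and then converts $K$ into the harmonic-gauge solution $K^*=K+{}^{W^{**}}\pi$ using the explicit corrector $W^{**}$ of H\"afner--Hintz--Vasy \eqref{def:Kstar}; the potential $W'=W-cW^{**}$ then automatically solves \eqref{wave_equation_vector} because $h-cK^*={}^{W'}\pi$ is in the gauge \eqref{HG}. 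You instead re-derive the decomposition: conservation of a linearized mass $\mathbf{m}[h]$, subtraction of $\mathbf{m}[h]K^*$, and reconstruction of $W$ by solving the first-order system for $(W,\partial_t W)$ on the initial slice and propagating with $\Box W=0$, identifying ${}^{W}\pi$ with $\tilde h$ by uniqueness for \eqref{Main_equation} (your identity $\Gamma_b[{}^{W}\pi]=\Box W_b$ in vacuum is correct, so ${}^{W}\pi$ does solve \eqref{Main_equation} once $\Box W=0$). This buys self-containedness at the cost of carrying out the integrability analysis that the paper delegates to the literature; it is standard but real work, and it is exactly where the vanishing of the mass enters.

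One caveat on the part you flag as the main obstacle: the paper does not actually derive the admissible fall-off of $W$ from that of $h$ either. In the precise statement (Theorem \ref{thm:sad}) the finiteness of the weighted initial norms of $\f^{\leq 7}S$ and of $\f^{\leq 7}\hat{P}_{\even}$, for a suitable choice of $W$ modulo the Killing ambiguity $T_\flat$ (which shifts $P_{\even}$ by a constant, since $P_{\even}[T_\flat]=1$), is taken as a hypothesis rather than proved from decay of $h$. So your plan to extract these norms from ``initial data falling off fast enough'' for $h$ is attempting strictly more than the paper establishes; your resolution of the $T_\flat$ ambiguity by demanding a fast-decaying representative is consistent with the paper's normalization (since $T_\flat$ fails the finite-energy requirements while $W^*$ does not), but the quantitative bookkeeping of weights for $\hat{P}_{\even}$ and $S_W$, and the extra commutation with $T$ needed to turn the $H^1$ control of Theorem \ref{thm:vector,l=0} into decay of ${}^{V}\pi$, should be stated as assumptions or carried out explicitly rather than left implicit.
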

{\color{white}.}\\
\textbf{Related Work.}\\

There have been many progresses towards the Kerr stability conjecture. The stability of the Minkowski spacetime was proved in the monumental work of Christodoulou and Klainerman \cite{Christodoulou-Klainerman}. See also \cite{Klainerman-Nicolo, Bieri, Lindblad-Rodnianski-05, Lindblad-Rodnianski-10, Hintz-Vasy-1} for various approaches. Recently, the stability of the Schwarzschild spacetime was established by Klainerman and Szeftel \cite{Klainerman-Szeftel} for axial symmetric polarized perturbations. In the positive cosmological constant setting, the stability of Kerr-de Sitter with small angular momentum was proved by Hintz and Vasy \cite{Hintz-Vasy-dS}.
 

The study of equation \eqref{linear_gravity} on the Schwarzschild background was initiated by Regge and Wheeler \cite{Regge-Wheeler}. The authors performed the even/odd decomposition and derived the Regge-Wheeler equation for the odd solutions of (\ref{linear_gravity}). For even solutions, there is a similar equation discovered by Zerilli \cite{Zerilli}. Bardeen and Press \cite{Bardeen-Press} adapted the Newman-Penrose formalism to study equation (\ref{linear_gravity}). This approach was extended to Kerr spacetimes by Teukolsky \cite{Teukolsky}, showing that the extreme Weyl curvature components satisfy the Teukolsky equations. In the Schwarzschild spacetime, the transformation theory of Wald \cite{Wald} and Chandrasekhar \cite{Chandrasekhar} relates Regge-Wheeler-Zerilli-Moncrief system to the Teukolsky equations. See also \cite{Aksteiner-Andersson-Backdahl-Shah-Whiting} for further refinement of the transformation theory. These works accumulated to the proof of mode stability for Kerr by Whiting \cite{Whiting}.

A major progress which goes beyond mode stability is the work of quantitative linear stability of Schwarzschild by Dafermos, Holzegel and Rodnianski \cite{Dafermos-Holzegel-Rodnianski}. The authors proved the boundedness and decay estimates for the Regge-Wheeler equation and then for Teukolsky equations through transformation theory. The metric perturbation was reconstructed in the double null gauge. Keller, Wang and the author \cite{Hung-Keller-Wang} worked in the mixed Regge-Wheeler/Chandrasekhar gauge and shown $t^{-1/2}$ decay of the metric coefficient based on Regge-Wheeler/Zerilli equations. Johnson further \cite{Johnson1,Johnson_2} proved $t^{-1}$ decay through Regge-Wheeler/Zerilli equations with an insightful chosen generalized wave gauge in which the metric perturbation is related to Regge-Wheeler/Zerilli quantities by pseudo-differential operators. See also \cite{Hung-thesis} for the $t^{-1+}$ decay of the odd part in harmonic gauge. For the linearized Einstein-Maxwell equations, Giorgi \cite{Elena-combine, Elena-spin2,Elena-spin1,Elena-linearstability} obtained the boundedness and decay estimates for the Teukolsky system in Reissner-Nordstr\"om spacetimes with small charge and proved linear stability under the gravitational-electromagnetic perturbations.

Recently, there are huge breakthroughs for the linear stability of \textbf{Kerr} spacetimes. Andersson, B\"ackdahl, Blue and Ma \cite{Andersson-Backdahl-Blue-Ma} established linear stability of Kerr with small angular momentum with pointwise $t^{-3/2+}$ decay in the outgoing radiation gauge. Hafner, Hintz and Vasy \cite{Hafner-Hintz-Vasy} gave a detailed description of the metric perturbation in Kerr spacetimes with small angular momentum under the wave gauge of the present paper. The authors identified 7-dimensional stationary and additional 4-dimensional linear growth solutions of \eqref{HG} and \eqref{linear_gravity}; these solutions consist of 4-dimensional Kerr family perturbations and 7-dimensional deformation tensors. Under weaker initial fall-off condition than the present paper, The authors further established $t^{-1-}$ decay upto these 11-dimensional space.\\

\textbf{Outline.} In section \ref{sec:schwarzschild} we introduce the Schwarzschild spacetime and relevant notations. Section \ref{sec:lemma} contains simple lemmas that we use repeatedly. In section \ref{sec:4} we  apply the vector field method to \eqref{Main_equation} near the horizon and null infinity. Then a gauge transformation from harmonic gauge to Regge-Wheeler gauge is performed. The transformation satisfies \eqref{wave_equation_vector} with source terms on the right hand side. In section \ref{sec:W0} and section \ref{sec:W12} we start to analyze equation \eqref{wave_equation_vector} supported on $\ell\geq 1$, which leads to the main results for $\ell\geq 1$ in section \ref{sec:proof}. The case $\ell=0$ is considered in section \ref{sec:ell=0}.

\section*{Acknowledgments}
The author is grateful to Simon Brendle for suggesting this problem and for his initial contribution. The author also thanks Sergiu Klainerman and Mu-Tao Wang for their encouragement. The author further thanks T\"ubingen University where part of this work was carried out.
\section{Schwarzschild spacetime and notations}\label{sec:schwarzschild}

\subsection{Schwarzschild coordinate}\label{subsec:Schwarzschild}

In this subsection we introduce the Schwarzschild coordinate and the vector fields that we need later. Let $M>0$ be a fixed constant. The Schwarzschild metric with mass $M$ can be written as

\begin{equation*}
g_M=-\D dt^2+\D^{-1}dr^2+r^2(d\theta^2+\sin^2\theta d\phi^2).
\end{equation*}
The range of the $(t,r,\theta,\phi)$ coordinate is $t\in \mathbb{R}$, $r>2M$ and $(\theta,\phi)\in \mathbb{S}^2$. To see the scaling of energies clearly, we define $s:=r/M$. We suppress the dependence of $M$ in $g_M$ and denote it by $g$. Let $\nabla$ be the the Levi-Civita connection of $g$. We use $x^a,x^b$ to denote a spacetime coordinate, $x^A,x^B$ to denote the spherical coordinate on $\mathbb{S}^2$ and $x^\alpha,x^\beta$ to denote the quotient coordinate on $\mathbb{R}_t\times (2M,\infty)_r$. Let $\s{g}_{AB}dx^Adx^B:= r^2(d\theta^2+\sin^2\theta d\phi^2)$ be the induced metric on the orbit spheres and $\nablas$ be the Levi-Civita connection of $\s{g}$. For any scalar function $\psi$, we denote 

\begin{align*}
|\nablas\psi|^2:=&\s{g}^{AB}\nablas_A\psi \nablas_B\psi=\frac{1}{r^2}\left( \left|\frac{\partial\psi}{\partial\theta}\right|^2+\sin^{-2}\theta \left|\frac{\partial \psi}{\partial\phi}\right|^2 \right).
\end{align*}
The $(t,r,\theta,\phi)$ coordinate system has coordinate singularity at $r=2M$. To remove the singularity and to include the future horizon, it is convenient to work with the $(v,R,\theta,\phi)$ coordinate, where 

\begin{equation}\label{coord_horizon}
v:=t+r+2M\log \left(\frac{r}{2M}-1\right),\ R:=r.
\end{equation}
The metric takes the form
\begin{equation*}
g=-\left(1-\frac{2M}{R}\right) dv^2+2dvdR+R^2 (d\theta^2+\sin^2\theta d\phi^2),
\end{equation*}
which is smooth upto $R=2M, v\in\mathbb{R}$. We denote by $\mathcal{M}$ the Schwarzschild exterior including the future horizon $\mathcal{H}^+:=\{R=2M, v\in\mathbb{R}\}$ as

\begin{align*}
\mathcal{M}:=\bigg( \mathbb{R}_t\times (2M\infty)_r \sqcup \mathbb{R}_v\times [2M,\infty)_R/\sim  \bigg)\times \mathbb{S}^2.
\end{align*} 
Here we identify $(t,r)$ and $(v,R)$ coordinates through \eqref{coord_horizon}.\\

We record here the wave operator in various coordinate systems defined above. Let $\Box$ be the d'Alembertion operator with respect to the Schwarzschild metric $g$ as

\begin{align*}
\Box \psi:=g^{ab}\nabla_a\nabla_b \psi.
\end{align*}
In particular, in the $(t,r,\theta,\phi)$ coordinate we have

\begin{align*}
\Box=-\D^{-1}\frac{\partial^2}{\partial r^2}+\D\frac{\partial^2}{\partial r^2}+\frac{2}{r}\left(1-\frac{M}{r}\right)\frac{\partial}{\partial r}+\s{\Delta},
\end{align*}
where

\begin{align*}
\s{\Delta}=\frac{1}{r^2}\left( \frac{\partial^2}{\partial \theta^2}+\cot\theta\frac{\partial}{\partial\theta}+\sin^{-2}\theta\frac{\partial^2}{\partial\phi^2} \right)
\end{align*}
is the Laplacian operator for $\s{g}$. In the $\{v,R,\theta,\phi\}$ coordinate, we have

\begin{equation}\label{Box_horizon}
\Box =\D\frac{\partial^2}{\partial R^2}+2\frac{\partial^2}{\partial v\partial R}+\frac{2}{r}\left(1-\frac{M}{r}\right)\frac{\partial}{\partial R}+\frac{2}{r}\frac{\partial}{\partial v}+\s{\Delta}.
\end{equation}
More generally, let $g(r)$ be a function and define the new variables

\begin{align*}
\tau:=t+r+2M\log \left(\frac{r}{2M}-1\right)-g(r),\ \rho:=r.
\end{align*}
In this coordinate system, the the d'Alembertion operator reads

\begin{align}\label{Box decom}
\Box=\D\frac{\partial^2}{\partial \rho^2}+\frac{2}{r}\left(1-\frac{M}{r}\right)\frac{\partial}{\partial \rho}+Z\frac{\partial}{\partial \tau}+\s{\Delta},
\end{align}
where

\begin{equation*}
\begin{split}
Z=&\left( -2 \frac{dg}{dr}+\D \left(\frac{dg}{dr}\right)^2 \right)\frac{\partial}{\partial \tau}+\left( 2-2\D \frac{dg}{dr} \right)\frac{\partial}{\partial \rho}\\
&+\left(-\D \frac{d^2g}{dr^2}-\frac{2}{r}\left(1-\frac{M}{r}\right)\frac{dg}{dr}+\frac{2}{r}\right).
\end{split}
\end{equation*}

As for any mass parameter $M>0$ the Schwarzschild metric is vacuum, the infinitesimal mass change, denoted by $K$, solves \eqref{linear_gravity}. Explicitly,

\begin{align}\label{def:K}
K=\frac{1}{R}dv^2.
\end{align}
Similarly, for each $a$, the Kerr metric $g_{M,a}$ is vacuum and infinitesimal angular momentum change solve \eqref{linear_gravity}. Denote them as
\begin{align}\label{def:Km}
K_m=\frac{1}{R} \s{\epsilon}_A^{\ B}\nablas_B Y_{m1}\bigg((dv-dR) dx^A+dx^A (dv-dR)\bigg),\ m=-1,0,1.
\end{align}
Here $Y_{1m}$ are the spherical harmonic functions supported on $\ell=1$ and $\s{\epsilon}$ is the Levi-Civita tensor of $\s{g}_{AB}$. It can be verified directly that $K_m$ satisfies the harmonic gauge \eqref{HG}: $\Gamma[K_m]=0$. However, $K$ fails to be in the harmonic gauge and

\begin{align*}
\Gamma[K] =\frac{1}{R^2}dv.
\end{align*}
See subsection \ref{subsec:sad} for more discussion.\\

We start to define the vector fields. The isometry group of $g_M$ is 4-dimensional, including one time translation and 3-dimensional rotation Killing vector fields. We fix the notation for these Killing vector fields as

\begin{align*}
T:=&\frac{\partial}{\partial t},\ \Omega_1:=\frac{\partial}{\partial \phi},\\
\Omega_2:=&\cos\phi\frac{\partial}{\partial \theta}-\cot\theta\sin\phi\frac{\partial}{\partial \phi},\\
\Omega_3:=&\sin\phi\frac{\partial}{\partial \theta}+\cot\theta\cos\phi\frac{\partial}{\partial \phi}.
\end{align*}
Define the outgoing and incoming null vectors $L$ and $\underline{L}$ as

\begin{align*}
L:=\frac{\partial}{\partial t}+\D\frac{\partial}{\partial r},\ \underline{L}:=\frac{\partial}{\partial t}-\D\frac{\partial}{\partial r}.
\end{align*}
Note that in the $(v,R,\theta,\phi)$ coordinate, $\underline{L}=-\D\partial_R$ which vanishes along $\mathcal{H}^+$. To get non-zero incoming null vector, we define

\begin{align*}
\underline{L}':=&\D^{-1}\underline{L}.
\end{align*}
We are ready to define the red-shift vector $Y(\sigma)$ as in \cite{Dafermos-Rodnianski_redshift}. Fix  $r_{rs}^+:=5M/2$ and a cut-off function 

\begin{align}\label{cutoff_rs}
\eta_{rs}(r)=\left\{\begin{array}{cc}
1 & r\in [2M, 9M/4],\\
0 & r\in [5M/2,\infty).
\end{array}
\right.
\end{align}
For any $\sigma>0$, let

\begin{align}\label{vector_rs}
Y(\sigma):=\eta_{rs}(r)\cdot \left( \frac{\sigma}{2}(r-2M)T+(2+\sigma(r-2M))\underline{L}' \right).
\end{align}
Note that $Y(\sigma)$ is casual in $r\geq 2M$ and is supported in $r\in [2M,r_{rs}^+]$. We fix the notation for the following collections of vector fields:

\begin{equation}\label{def:collection}
\begin{split}
\mathcal{K}:=& \{MT,\Omega_1,\Omega_2,\Omega_3\},\\
\partial:=&\{T,\frac{1}{r}\Omega_1,\frac{1}{r}\Omega_2,\frac{1}{r}\Omega_3,\underline{L}'\},\\
\f:=&\{MT,\Omega_1,\Omega_2,\Omega_3,rL,M\underline{L}' \}.
\end{split}
\end{equation}
The collection $\K$ consists Killing vectors and we will use it, together with $\f$, to express higher order estimates of wave equations. See Proposition \ref{cor:source_high_order} in Appendix \ref{sec:wave equation}. 



\subsection{foliation of spacetime}

In this subsection we define the spacelike/null hypersurface along which we measure the solutions. Let $R_{\textup{null}}\geq 10 M$ to be determined later. It is the maximum among $R_Z$, $\ R_{\h}$ and $R_{\W}$ determined in subsections \ref{subsec:decomp_h}, \ref{subsec:psi_estimate} and \ref{subsec:rp} respectively and we keep the dependence of $R_{\textup{null}}$ in estimates until its value is fixed. Let $g(r)$ be a piecewise smooth function  defined in $[2M,\infty)$ which satisfies the following conditions: In $[2M,R_{\textup{null}}]$, $g(r)$ is smooth with

\begin{align*}
\frac{dg}{dr} \left(-2+\D \frac{dg}{dr}\right)<0.
\end{align*}
In $[5M/2,R_{\textup{null}}]$, 

\begin{align*}
g(r)=r+2M\log\left(\frac{r}{2M}-1\right).
\end{align*}
In $(R_{\textup{null}},\infty)$, 

\begin{align*}
g(r)=2\left(r+2M\log\left(\frac{r}{2M}-1\right)\right) -\left(R_{\textup{null}}+2M\log\left(\frac{R_{\textup{null}}}{2M}-1\right)\right).
\end{align*}
Let $\Sigma_\tau:=\{t+r+2M\log (r/2M-1)-g(r)=\tau\}$ for any $\tau\in\mathbb{R}$. The above requirements ensure that $\Sigma_\tau$ is spacelike in $r\in [2M,R_{\textup{null}}]$, is null in $r\in (R_{\textup{null}},\infty)$ and intersects with $\mathcal{H}^+$ transversally. Let $n$ be the unit future vector of $\Sigma_\tau$ in $[2M,R_{\textup{null}}]$ and fix $n=L$ in $(R_{\textup{null}},\infty)$.  For any $\tau_2\geq \tau_1$, let $D(\tau_1,\tau_2)$ be the region bounded by $\Sigma_{\tau_1}$ and $\Sigma_{\tau_2}.$ We introduce notations for $\Sigma_\tau$ and $D(\tau_1,\tau_2)$ in $r\leq R_{\textup{null}}$ or $r\geq R_{\textup{null}}$ respectively.

\begin{align*}
\Sigma_{\tau}'':=&\Sigma_{\tau}\cap\{2M \leq r\leq R_{\textup{null}}\},\\
\Sigma_{\tau}':=&\Sigma_{\tau}\cap\{R_{\textup{null}}\leq r\},\\
D''(\tau_1,\tau_2):=&D(\tau_1,\tau_2)\cap\{2M \leq r\leq R_{\textup{null}}\},\\
D'(\tau_1,\tau_2):=&D(\tau_1,\tau_2)\cap\{R_{\textup{null}}\leq r\}.
\end{align*}
For any $\tau\in\mathbb{R}$ and $r_0\geq 2M$, we denote the orbit sphere as

\begin{equation*}
\mathbb{S}^2(\tau,r_0):=\Sigma_\tau\cap \{r=r_0\}.
\end{equation*}
See figure \ref{fig:Penrose} for the Penrose diagram of $\mathcal{M}$. \\

Let the spacetime volume form and the volume form of unit sphere be denoted by

\begin{align*}
dvol&:=r^2\sin\theta dt\wedge dr\wedge d\theta\wedge d\phi,\\
dvol_{\mathbb{S}^2}&:=\sin\theta d\theta\wedge d\phi.
\end{align*}
Denote by $dvol_3$ the three form of $\Sigma_\tau$ corresponding to $n$. In other words,

\begin{equation*}
-n_{\flat} \wedge dvol_3=dvol,
\end{equation*}
where $(n_{\flat})_a=g_{ab}n^b$.  

\begin{figure}[h]\label{fig:Penrose}
\includegraphics[scale=0.15]{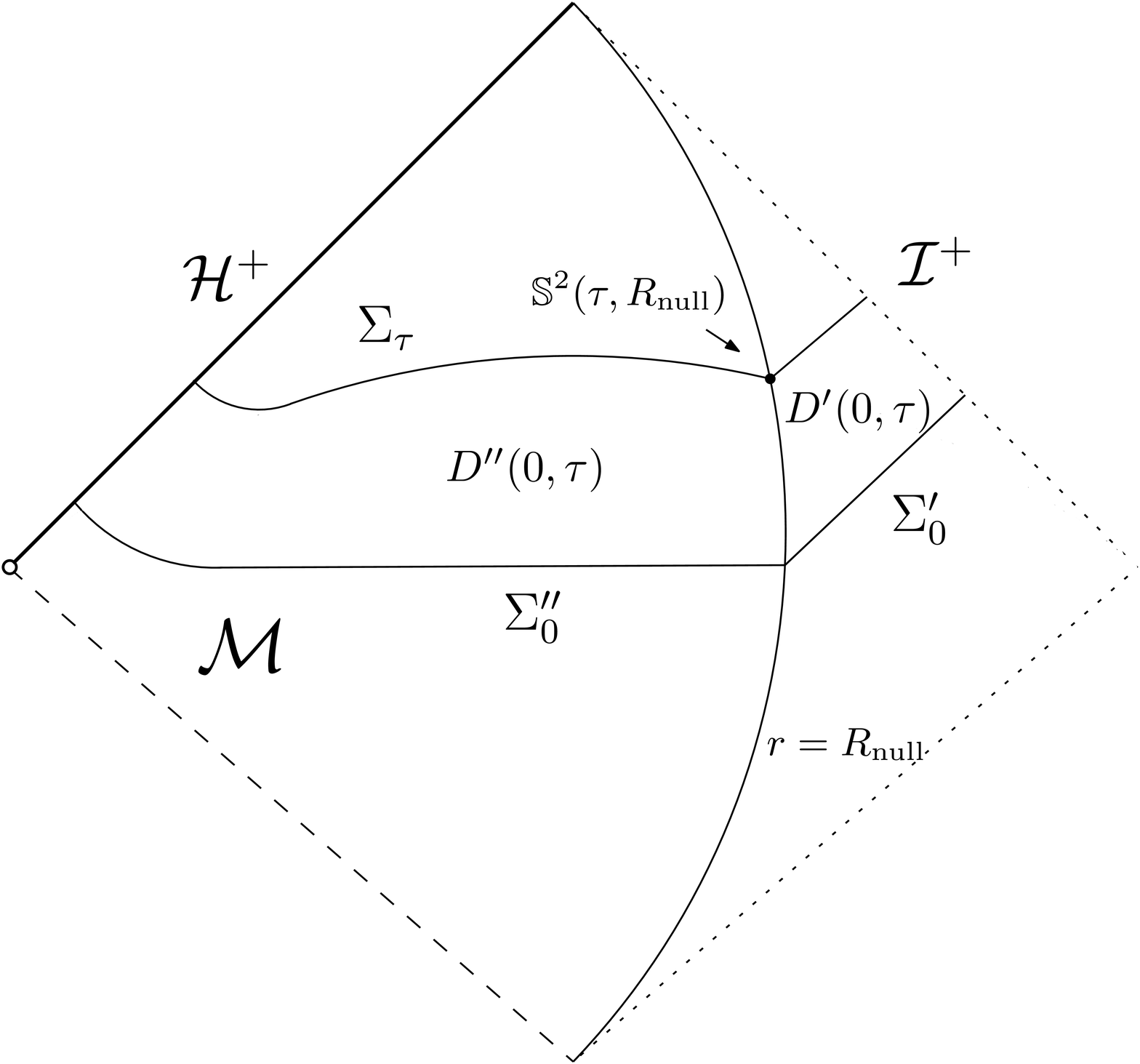}
\caption{Penrose diagram}
\end{figure}

\subsection{vector bundles $\mathcal{L}(-1)$ and $\mathcal{L}(-2)$}

Besides scalar functions, we also work with spherical one forms and spherical symmetric traceless two tensors. Let $\mathcal{L}(-1)\subset T^*\mathcal{M}$ be the subbundle of spherical one forms. Locally, a section $\xi$ of $\mathcal{L}(-1)$ can be written as

\begin{align*}
\xi=\xi_A dx^A.
\end{align*} 
Similarly, we denote by $\mathcal{L}(-2)\subset T^*\mathcal{M}\otimes_s T^*\mathcal{M}$ the subbundle of  spherical symmetric traceless two tensors. Locally, a section $\Xi$ of $\mathcal{L}(-2)$ can be written as

\begin{align*}
\Xi=\Xi_{AB} dx^A dx^B,\ \s{g}^{AB}\Xi_{AB}=0.
\end{align*}
The connections on $\mathcal{L}(-1)$ and $\mathcal{L}(-2)$ induced by the Levi-Civita connection are denoted by ${}^1\nabla$ and ${}^2\nabla$ respectively. Equipped with the induced metric, $\mathcal{L}(-1)$ and $\mathcal{L}(-2)$ are Riemannian vector bundles. We denote by ${}^1\Box$ and ${}^2\Box$ the d'Alembertion operator for ${}^1\nabla$ and ${}^2\nabla$. We omit the superscript in ${}^1 \nabla$ and ${}^2\nabla$ when it doesn't cause confusion.\\

Denote by $\mathcal{L}(0)$ the trivial line bundle $\mathbb{R}\times\mathcal{M}$ with sections being scalar functions. Define the operators $\done: \mathcal{L}(-1) \longrightarrow \mathcal{L}(0)$ and $ \dtwo:\mathcal{L}(-2) \longrightarrow \mathcal{L}(-1)$ as:

\begin{align*}
\done \xi:=&\nablas^A\xi_A,\\
(\dtwo \Xi)_A:=&\nablas^B\Xi_{AB}.
\end{align*}
Their adjoints are

\begin{align*}
(\donest \psi)_A:=&-\nablas_A \psi,\\
(\dtwost \xi)_{AB}:=&-\frac{1}{2}(\nablas_A\xi_B+\nablas_B\xi_A-\nablas^C\xi_C \slashed{g}_{AB}).
\end{align*}
We record here the simple commutation relation between $\done, \dtwo,\donest,\dtwost$ and ${}^1\nabla, {}^2\nabla$:

\begin{equation}\label{comm_r}
\begin{split}
\donest \nabla_r&= {}^1\nabla_r\donest+r^{-1}\donest,\ \done {}^1\nabla_r=\nabla_r\done +r^{-1}\done,\\
\dtwost {}^1\nabla_r&= {}^2\nabla_r\dtwost+r^{-1}\dtwost,\ \dtwo {}^2\nabla_r={}^1\nabla_r\dtwo +r^{-1}\done,
\end{split}
\end{equation}

\begin{equation}\label{comm_t}
\begin{split}
\donest \nabla_t&= {}^1\nabla_t\donest ,\ \done {}^1\nabla_t=\nabla_t\done,\\
\dtwost {}^1\nabla_t&= {}^2\nabla_t\dtwost ,\ \dtwo {}^2\nabla_t={}^1\nabla_t\dtwo .
\end{split}
\end{equation}


Let $W_a$ be a one form. we decompose $W_a$ as

\begin{align*}
W_adx^a=W_0dt+\D^{-1} W_1 dr+W_{2,A}dx^A.
\end{align*}   
Here $W_0$ and $W_1$ are scalar functions and $W_2$ is a section of $\mathcal{L}(-1)$. They are smooth up to the horizon provided $W_a$ is. We record here the components of the deformation tensor ${}^W\pi_{ab}=\nabla_a W_b+\nabla_b W_a$ in terms of $W_0$, $W_1$ and $W_2$.

\begin{equation}\label{pi}
\begin{split}
({}^W \pi)_{tt}=&2\nabla_t W_0-\frac{2M}{r^2}W_1,\\
({}^W \pi)_{rr}=&2\D^{-1}\nabla_r W_1-\frac{2M}{r^2}\D^{-2}W_1,\\
({}^W \pi)_{tr}=&\nabla_r W_0+\D^{-1}\nabla_t W_1-\frac{2M}{r^2}\D^{-1}W_0,\\
({}^W \pi)_{tA}=&-\donest W_0+{}^1\nabla_t W_2,\\
({}^W \pi)_{rA}=&-\D^{-1}\donest W_1+{}^1\nabla_r W_2-\frac{1}{r}W_2,\\
\s{tr} ({}^W \pi) =&2\done W_2+\frac{4}{r}W_1,\\
{}^W \hat{\pi}_{AB}=&-2\dtwost W_2.
\end{split}
\end{equation}
Here $\s{tr}{}^W\pi$ is the trace of ${}^W\pi_{AB}$ with respect to $\s{g}^{AB}$ and ${}^W\hat{\pi}_{AB}$ is the traceless part of ${}^W\pi_{AB}$.\\

Let $\Phi$ be a scalar function, a section of $\mathcal{L}(-1)$, $\mathcal{L}(-2)$ or more generally a section of products of these vector bundles, the stress-energy tensor of $\Phi$ is defined as

\begin{align*}
T_{ab}[\Phi]:=\nabla_a\Phi\cdot\nabla_b\Phi-\frac{1}{2}(\nabla^c\Phi\cdot\nabla_c\Phi)g_{ab}.
\end{align*}
Here the $\cdot$ stands for the contraction using the bundle metric. The stress-energy tensor satisfies the energy condition that $T_{ab}[\Phi]X^a Y^b\geq 0$ for any future causal vectors $X$ and $Y$. Moreover, for any vector field $X$ \textbf{orthogonal} to $\mathbb{S}^2(\tau,r)$, one has

\begin{align}\label{div}
\nabla^a (T_{ab}[\Phi]X^b)=\nabla_X\Phi\cdot \Box\Phi+T_{ab}[\Phi]\nabla^aX^b.
\end{align}
As $\Phi$ is a scalar function, \eqref{div} holds without the requirement $X$ being perpendicular to $\mathbb{S}^2(\tau,r)$. As $\Phi$ is a section of $\mathcal{L}(-1)$, $\mathcal{L}(-2)$ or their products, \eqref{div} follows from the fact that the curvature two forms of $\mathcal{L}(-1)$ and $\mathcal{L}(-2)$ are supported on the tangent plane of $\mathbb{S}^2(\tau,r)$, which can be checked through direct computation. An alternative way to see this is to consider $\mathcal{L}(-1)$ and $\mathcal{L}(-2)$ as pull back bundles from $\mathbb{S}^2$. We compute here the divergence of the red-shfit current. Let $Y(\sigma)$ be the red-shift vector defined in \eqref{vector_rs}. For any constant $c$, one has

\begin{equation}\label{g_redshift}
\begin{split}
&\nabla^a \left(T_{ab}[\Phi]Y^b(\sigma)-\frac{c}{r^2}|\Phi|^2 Y_a(\sigma) \right)\bigg|_{r=2M}\\
=&\frac{\sigma}{2}|\nabla_v\Phi|^2+\frac{1}{2M}|\nabla_R\Phi|^2+\frac{2}{M}\nabla_R\Phi\cdot\nabla_v\Phi+\frac{\sigma}{2}|\nablas\Phi|^2+\frac{c\sigma}{8M^2}|\Phi|^2\\
+&\nabla_Y\Phi\cdot \left(\Box\Phi-\frac{c}{r^2}\Phi\right)\bigg|_{r=2M}.
\end{split}
\end{equation}
Here we use the coordinate vector fields $\partial_v$ and $\partial_R$ in the coordinate \eqref{coord_horizon}. See \cite{Dafermos-Rodnianski_redshift} for the original computation.\\

For a one form $J_a$, the divergence theorem in $D(\tau_1,\tau_2)$ implies

\begin{equation}\label{div_thm}
\begin{split}
\int_{\Sigma_{\tau_1}} J\cdot n\ dvol_3=&\int_{\Sigma_{\tau_2}} J\cdot n\ dvol_3+\int_{\mathcal{H}^+(\tau_1,\tau_2) } J\cdot L\ dvol_{\mathcal{H}^+}\\
+&\int_{\mathcal{I}^{+}(\tau_1,\tau_2)} J\cdot \underline{L}\ dvol_{\mathcal{I}^+}+\int_{D(\tau_1,\tau_2)} \Div J\ dvol.
\end{split}
\end{equation}
Here $\mathcal{H}^+(\tau_1,\tau_2):=\mathcal{H}^+\cap\{\tau_1\leq \tau\leq \tau_2 \}$ is part of the boundary of $D(\tau_1,\tau_2)$ and $dvol_{\mathcal{H}^+}$ is the three form corresponding to the normal vector $L$. $\mathcal{I}^+(\tau_1,\tau_2)$ is the boundary of $D(\tau_1,\tau_2)$ at null infinity (See figure \ref{fig:Penrose}) and the integral is understood as

\begin{align*}
\int_{\mathcal{I}^{+}(\tau_1,\tau_2)} J\cdot \underline{L}\ dvol_{\mathcal{I}^+}:=\lim_{v_0\to\infty} \int_{\{\tau_1\leq \tau\leq \tau_2, v=v_0\}} J\cdot \underline{L}\ dvol_{\underline{L}},
\end{align*}
and $dvol_{\underline{L}}$ is the three form corresponding to $\underline{L}$. 
The vector field method in estimating wave equation is choosing a suitable multiplier $X$ and applying the divergence theorem to $J_a=T_{ab}[\Phi]X^b$ to control the behavior of $\Phi$ at later time $\Sigma_{\tau_2}$ by the previous data along $\Sigma_{\tau_1}$. See Appendix \ref{sec:wave equation} for a brief overview.\\ 

Recall that $\K,\f$ and $\partial$ are collection of vector fields defined in \eqref{def:collection}. For $\mathcal{A}=\f$ or $\mathcal{A}=\partial$ and $j\in\mathbb{N}_0$, we define

\begin{align*}
|\mathcal{A}^j\Phi|^2:=&\sum_{X_1,X_2,\cdots,X_j\in\mathcal{A}}|\nabla_{X_1}\nabla_{X_2}\cdots\nabla_{X_j}\Phi|^2,\\
|\mathcal{A}^{\leq j}\Phi|^2:=&\sum_{i=1}^j|\mathcal{A}^i\Phi|^2.
\end{align*}
For the Killing vector fields $\K$, we instead use Lie derivative and abuse the notation as

\begin{align*}
|\mathcal{K}^j\Phi|^2:=&\sum_{X_1,X_2,\cdots,X_j\in\mathcal{K}}|\mathcal{L}_{X_1}\mathcal{L}_{X_2}\cdots\mathcal{L}_{X_j}\Phi|^2,\\
|\mathcal{K}^{\leq j}\Phi|^2:=&\sum_{i=1}^j|\mathcal{K}^i\Phi|^2.
\end{align*}

\subsection{Energy norms}

Let $\Phi$ be a smooth function, section of $\mathcal{L}(-1)$, $\mathcal{L}(-2)$ or their products,  we define inductively for any $j\in\mathbb{N}$,

\begin{align*}
{\Phi}^{(0)}&:= s\cdot \Phi,\\
{\Phi}^{(j)}&:=\D^{-1} M s^2\cdot\nabla_L {\Phi}^{(j-1)}.
\end{align*}
Consider the energy norms:

\begin{align}
F[\Phi](\tau):=&\int_{\Sigma_\tau''} |\partial\Phi|^2+M^{-2}s^{-2}|\Phi|^2 dvol_3+\int_{\Sigma_\tau'} |\nabla_L \Phi|^2+|\nablas \Phi|^2+M^{-2}s^{-2}|\Phi|^2 dvol_3, \label{def:Fenergy}\\
B[\Phi](\tau):=&\int_{\Sigma_\tau} s^{-2}\left( \left(1-3s^{-1}\right)^2|\partial\Phi|^2+(1-2s^{-1})^2|\nabla_r\Phi|^2+M^{-2}s^{-2}|\Phi|^2 \right)  dvol_3, \label{def:Benergy}
\end{align}

\begin{equation}\label{def:FTenergy}
\begin{split}
F^T[\Phi](\tau):=&\int_{\Sigma_\tau''} (1-2s^{-1})|\nabla_{\underline{L}'}\Phi|^2+|\nabla_{L}\Phi|^2+|\nablas\Phi|^2+M^{-2}s^{-2}|\Phi|^2 dvol_3\\
+&\int_{\Sigma_\tau'} |\nabla_L \Phi|^2+|\nablas \Phi|^2+M^{-2}s^{-2}|\Phi|^2 dvol_3,
\end{split}
\end{equation}

\begin{align}\label{def:Bbarenergy}
\bar{B} [\Phi](\tau):=&\int_{\Sigma_\tau} s^{-2}\left( |\partial\Phi|^2+M^{-2}s^{-2}|\Phi|^2 \right)  dvol_3,
\end{align}

\begin{align}
E^{p,(j)}_L[\Phi](\tau):=&\int_{\Sigma_\tau'} s^{p-2}|\nabla_L {\Phi}^{(j)}|^2 dvol_3, \label{def:rpL_energy}\\
E^{p,(j)}_{\nablas}[\Phi](\tau):=&\int_{\Sigma_\tau'} s^{p-2}|\nablas {\Phi}^{(j)}|^2 dvol_3, \label{def:rps_energy}\\
E^{p,(j)}_{L,\nablas}[\Phi](\tau):=&E^{p,(j)}_L[\Phi](\tau)+E^{p,(j)}_{\nablas}[\Phi](\tau).
\end{align}
The $F[\Phi](\tau)$ energy is the $\dot{H}^1\cap\ rL^2$ norm of $\Phi$ along $\Sigma_\tau$ together with the $L^2$ norm of $\nabla_{n}\Phi$. Except the zeroth order term, $F^T[\Phi](\tau)$ energy is the boundary integral along $\Sigma_\tau$ when applying divergence theorem \eqref{div_thm} to the $T$-current $T_{ab}[\Phi]T^b$. Compared to $F[\Phi]$, $F^T[\Phi]$ degenerates at $r=2M$ for the term $|\nabla_{\underline{L}'}\Phi|^2$ because $T$ becomes a null vector. This degeneracy can be removed by considering the red-shift vector $Y(\sigma)$ defined in \eqref{vector_rs}. The $B[\Phi](\tau)$ energy appears in the bulk term in the Morawetz type estimates. It has degeneracy at the photon sphere $r=3M$ because of trapped geodesics. See subsection \ref{subsection:Morawetz}. The non-degenerate version is denoted by $\bar{B}[\Phi]$. The $E^{p,(0)}_{L,\nablas}$ energy appears in the the $r^p$-estimates. See subsection \ref{subsec:rp}.\\

We note that we only consider $E^{p,(j)}_L,\ E^{p,(j)}_{\nablas}$ and $E^{p,(j)}_{L,\nablas}$ for truncated $\Phi$ which is supported in $r\geq R_{\textup{null}}$. Throughout this paper, we take a cut-off function $\eta_{\textup{null}}(r)$ (depending only on $R_{\textup{null}}$) as follows. Let $\phi(s)$ be a cut-off function such that $\phi(s)=1$ for $s\geq 1$ and $\phi(s)=0$ for $s\leq 0$. Define 

\begin{equation}\label{cutoff_rp}
\eta_{\textup{null}}(r)=\phi((r-R_{\textup{null}})/M).
\end{equation}
To simply the notation, we use $\tilde{\Phi}$ to stand for $\eta_{\textup{null}}(r)\Phi$.

\begin{remark}
For a symmetric two tensor $h_{ab}$ or a one form $W_a$, we measure their size by components with respect to vector fields in $\partial$. In particular, the $F$-energy norm of $h_{ab}$ is defined as

\begin{align*}
F[h](\tau):=\sum_{X_1,X_2\in\partial} F[h(X_1,X_2)](\tau).
\end{align*}
Other energy norms for $h_{ab}$ or $W_a$ are defined similarly.
\end{remark}
We define below the norms for the source term. For any $p\in\mathbb{R}$ and $\tau_2\geq \tau_1$, let

\begin{equation*}
\begin{split}
\textup{E}^p_{\textup{source}}[G](\tau_1,\tau_2):=&M\int_{D(\tau_1,\tau_2)}s^{p+1}|G|^2dvol+ M^2\int_{\Sigma''_{\tau_1}}|G|^2dvol_3+M^2\int_{\Sigma''_{\tau_2}}|G|^2dvol_3\\
+&M^3\int_{D''(\tau_1,\tau_2)} \left(1-\frac{3M}{r}\right)^2|\partial G|^2+\D^2|\partial_r G|^2 dvol.
\end{split}
\end{equation*}
The decay of $\Phi$ which satisfies a wave equation with a source term $G$ will be formulate in terms of the following initial norms for $\Phi$ and spacetime bound on $G$. We fix a small constant $0<\delta<1/10$ throughout this paper. For any $k\in\mathbb{N}_0$, let

\begin{equation}\label{initial_norm}
I^{(k)}[\Phi]:=\left( F[\mathcal{K}^{\leq 2k+2}\Phi]+\sum_{\substack{i_1+i_2\leq 2k+1 \\ i_2\leq k}} E_L^{2-\delta,(i_2)}[\mathcal{K}^{i_1}\tilde{\Phi}]\right)(0).
\end{equation}
For any $\bar{\delta}\geq\delta>0$, let

\begin{equation}\label{source_assumption}
\textup{I}_{\textup{source},\bar{\delta}}[G]:=\sup_{p\in [\delta,2-\bar{\delta}]}\sup_{\tau_2\geq \tau_1\geq 0} \left(1+\frac{\tau_1}{M}\right)^{2-p-\bar{\delta}} \textup{E}^p_{\textup{source}}[\K^{\leq 2} G](\tau_1,\tau_2).
\end{equation}
We record in Appendix \ref{sec:wave equation} the decay estimates for wave equation obtained through the vector field method. 

\subsection{even/odd and spherical harmonic decomposition}\label{subsec:evenodd}

A spherical one form $\xi_A$ is said to be {even} if $\s{\epsilon}^{AB} \nablas_A\xi_B=0$ and is said to be odd if $\nablas^A\xi_A=0$, where $\s{\epsilon}$ is the Levi-Civita tensor of $\s{g}$. This is equivalent to the Hodge decomposition on $\mathbb{S}^2$. Similarly, a traceless symmetric two tensor $\Xi_{AB}$ is called even (odd) if $\nablas^A \Xi_{AB}$ is even (odd). All scalar functions are said to be even.\\

Recall that $x^\alpha, x^\beta$ is the coordinate of the quotient space $\mathcal{M}/\mathbb{S}^2$ and $x^A,x^B$ is the coordinate of $\mathbb{S}^2$. For a spacetime one from $W_a$ or a spacetime symmetric two tensor $h_{ab}$, we decompose them as $W_a dx^a=W_\alpha dx^\alpha+W_A dx^A$ and

\begin{align*}
h_{ab}dx^adx^b=h_{\alpha\beta}dx^\alpha dx^\beta+h_{\alpha A}(dx^\alpha dx^A+dx^Adx^\alpha )+\hat{h}_{AB}dx^Adx^B+\frac{1}{2}\s{tr}h\s{g}_{AB}dx^Adx^B.
\end{align*}
We use $\hat{h}_{AB}$ and $\s{tr}h$ to denote the traceless and the trace of $h_{AB}$ respectively. We consider $W_\alpha$, $h_{\alpha\beta}$ and $\s{tr}h$ as scalars, $W_A$ and $h_{\alpha A}$ as spherical one forms and $\hat{h}_{AB}$ as a spherical symmetric two tensor. Then $W_a$ or $h_{ab}$ is called even or odd if its components above are all even or odd. In the case that $W_a$ and $h_{ab}$ are axially symmetric, being even (odd) is the same as being symmetric (anti-symmetric) under the isometry $\phi\mapsto -\phi$. Therefore, the following lemma holds.

\begin{lemma}
Any symmetric two tensor $h_{ab}$ or one form $W_a$ can be decomposed into even and odd parts.

\begin{align*}
h_{ab}=&h_{ab}^{\even}+h_{ab}^{\textup{odd}},\\
W_a=&W_a^{\even}+W_a^{\textup{odd}}.
\end{align*}
Furthermore, $h_{ab}$ is a solution of \eqref{linear_gravity}, \eqref{HG} or \eqref{Main_equation} if and only if both $h_{ab}^{\even}$ and $h_{ab}^{\textup{odd}}$ are solutions separately. Similarly, $W_a$ is a solution of \eqref{wave_equation_vector} if and only if both $W_a^{\even}$ and $W_a^{\textup{odd}}$ are solutions.
\end{lemma}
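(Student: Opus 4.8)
The plan is to reduce the statement to the linearity of the three equations \eqref{linear_gravity}, \eqref{HG}, \eqref{Main_equation} and \eqref{wave_equation_vector} together with the fact that the even/odd splitting is well-defined and commutes with the relevant differential operators. First I would establish existence and uniqueness of the decomposition $h_{ab}=h_{ab}^{\even}+h_{ab}^{\textup{odd}}$ (and similarly for $W_a$). This follows from the Hodge decomposition on $\mathbb{S}^2$: on each orbit sphere $\mathbb{S}^2(\tau,r)$, a spherical one form splits uniquely into an exact part $\donest\psi$ (even) and a coclosed part $\s{\epsilon}_A^{\ B}\nablas_B\chi$ (odd), and a traceless symmetric spherical two tensor splits uniquely into $\dtwost$ applied to an even one form and $\dtwost$ applied to an odd one form; the scalar components $W_\alpha$, $h_{\alpha\beta}$, $\s{tr}h$ are declared even with nothing to prove. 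Applying this fiberwise to the components $W_A$, $h_{\alpha A}$, $\hat h_{AB}$ as defined in subsection \ref{subsec:evenodd} gives the decomposition, and since the Hodge projectors depend smoothly on the base point, $h^{\even}$ and $h^{\textup{odd}}$ are again smooth (up to the horizon).

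Next I would check that the even and odd projections commute with the operators appearing in \eqref{linear_gravity}, \eqref{HG}, \eqref{Main_equation} and \eqref{wave_equation_vector}. The cleanest way is to observe that each of these operators is built from $\nabla_\alpha$ (derivatives in the quotient directions $t,r$), from $\nablas$ on the sphere, and from $g$-contractions, and that every such building block preserves the even/odd type. For the sphere derivatives this is exactly the content of the commutation relations \eqref{comm_r}, \eqref{comm_t} together with the fact that $\done$, $\dtwo$, $\donest$, $\dtwost$ are the natural maps realizing the Hodge decomposition; for the quotient derivatives it is immediate since $\nabla_t$, $\nabla_r$ act on the scalar/spherical-tensor coefficients without mixing the two classes (again by \eqref{comm_r}, \eqref{comm_t}). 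Concretely, using the component formulas — e.g.\ \eqref{pi} for ${}^W\pi$ and the analogous decomposition of $\Box h_{ab}+2R_{a\ b}^{\ c\ d}h_{cd}$ and of $\Gamma_b[h]$ — one sees that the even components of the output depend only on the even components of the input, and likewise for odd. An alternative, and perhaps the argument I would actually write, is the symmetry argument sketched in the excerpt: it suffices to treat the axially symmetric case (since general modes are rotations of axially symmetric ones and all four equations are rotationally invariant), and for axially symmetric tensors ``even'' and ``odd'' coincide with symmetry and antisymmetry under the isometry $\iota:\phi\mapsto-\phi$; since $\iota$ is an isometry of $g$, it maps solutions to solutions, so $\tfrac12(h+\iota^*h)$ and $\tfrac12(h-\iota^*h)$ are both solutions whenever $h$ is, which is precisely the claim.

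With the projections in hand, the theorem is immediate: if $h_{ab}$ solves (say) \eqref{Main_equation}, then applying the even projection to the equation and using that the projection commutes with $\Box h_{ab}+2R_{a\ b}^{\ c\ d}h_{cd}$ shows $h^{\even}_{ab}$ solves \eqref{Main_equation}; same for $h^{\textup{odd}}$, for \eqref{HG} and \eqref{linear_gravity}, and for $W_a$ and \eqref{wave_equation_vector}. The converse direction is trivial by linearity: the sum of two solutions is a solution. The main obstacle is not conceptual but bookkeeping — one must verify carefully that \emph{all} components of each operator respect the type decomposition, in particular that no curvature or connection term of the Schwarzschild background (which enters \eqref{Main_equation} through $R_{a\ b}^{\ c\ d}$ and through the induced connections on $\mathcal{L}(-1)$, $\mathcal{L}(-2)$) introduces a parity-violating coupling. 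This is guaranteed because the Schwarzschild metric is itself ``even'' (it is $\iota$-invariant, and its curvature is supported in a way compatible with the spherical symmetry, as already used in the discussion around \eqref{div}), so I would isolate that observation as the one point deserving an explicit remark and leave the remaining component checks to the reader as a routine computation.
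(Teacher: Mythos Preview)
Your proposal is correct and actually contains the paper's argument as your second alternative: the paper simply remarks, immediately before the lemma, that in the axially symmetric case being even (odd) is the same as being symmetric (anti-symmetric) under the isometry $\phi\mapsto-\phi$, and declares ``Therefore, the following lemma holds'' with no further proof. Your isometry argument via $\tfrac12(h\pm\iota^*h)$ is exactly what is intended there, and your additional first route (checking directly that the building blocks $\nabla_\alpha$, $\done$, $\dtwo$, $\donest$, $\dtwost$ preserve parity) is a more explicit alternative that the paper does not spell out.
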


Let $Y_{m\ell}(\theta,\phi),\ \ell\geq 0, |m|\leq \ell$ be the spherical harmonic functions on the unit sphere. For any smooth function $\psi(t,r,\theta,\phi)$, we can perform spherical harmonic decomposition for each $(t,r)$ and express $\psi$ as

\begin{align*}
\psi(t,r,\theta,\phi)=\sum_{\ell=0}^{\infty}\sum_{|m|\leq \ell} \psi_{m\ell}(t,r)\cdot Y_{m\ell}(\theta,\phi).
\end{align*}
Similarly, for any spherical one form $\xi_A$ , we decompose it as

\begin{align*}
\xi_A=&\sum_{\ell=1}^\infty \sum_{|m|\leq \ell} \xi^{\even}_{m\ell}(t,r)\cdot \nablas_A Y_{m\ell}+\sum_{\ell=1}^\infty \sum_{|m|\leq \ell}\xi^{\textup{odd}}_{m\ell}(t,r)\cdot\s{\epsilon}_A^{\ B} \nablas_B Y_{m\ell}.
\end{align*}
For any spherical symmetric traceless two tensor $\Xi_{AB}$,

\begin{align*}
\Xi_{AB}=&\sum_{\ell=2}^\infty \sum_{|m|\leq \ell} \Xi^{\even}_{m\ell}(t,r)\cdot\hat{\nablas}_{AB} Y_{m\ell}+\sum_{\ell=2}^\infty \sum_{|m|\leq \ell} \Xi^{\textup{odd}}_{m\ell}(t,r)\cdot\frac{1}{2}( \s{\epsilon}_A^{\ C} {\nablas}_{CB} Y_{m\ell}+\s{\epsilon}_B^{\ C} {\nablas}_{CA} Y_{m\ell}).
\end{align*}
In this paper, we focus only on the \textbf{even} tensors $\xi$ and $\Xi$ for which $\xi^{\textup{odd}}_{m\ell}\equiv 0$ and $\Xi^{\textup{odd}}_{m\ell}\equiv 0$. We often work with a fixed mode $\psi_{m\ell}(t,r)Y_{m\ell}(\theta,\phi)$. In this situation, $\s{\Delta}$ is equivalent to multiplying the number $r^{-2}\ell(\ell+1)$. We define $\lambda=\lambda(\ell)$ and $\Lambda=\Lambda(\ell)$ by

\begin{align}\label{def:lambda_ell}
\ell(\ell+1)=2\lambda+2=\Lambda.
\end{align}
Using this notation, we have a simple lemma below.

\begin{lemma}\label{lem:d_mode}

\begin{align*}
\donest \left(\sum_{\ell=0}^\infty\sum_{|m|\leq \ell} \psi_{m\ell}Y_{m\ell}\right)=\sum_{\ell=1}^\infty\sum_{|m|\leq \ell} -\psi_{m\ell} \nablas_A Y_{m\ell},\\
\dtwost \left(\sum_{\ell=1}^\infty \sum_{|m|\leq \ell} \xi_{m\ell} \nablas_A Y_{m\ell}\right)=\sum_{\ell=2}^\infty\sum_{|m|\leq \ell} -\xi_{m\ell} \hat{\nablas}_{AB} Y_{m\ell},
\end{align*}

and

\begin{align*}
\done \left(\sum_{\ell=1}^\infty\sum_{|m|\leq \ell} \xi_{m\ell} \nablas_A Y_{m\ell}\right)= \sum_{\ell=1}^\infty\sum_{|m|\leq \ell} -\Lambda r^{-2} \xi_{m\ell}  Y_{m\ell},\\
\dtwo \left(\sum_{\ell=2}^\infty\sum_{|m|\leq \ell}\Xi_{m\ell} \hat{\nablas}_{AB} Y_{m\ell}\right)=\sum_{\ell=2}^\infty\sum_{|m|\leq \ell} -\lambda r^{-2} \Xi_{m\ell} \nablas_{A} Y_{m\ell}.
\end{align*}

\end{lemma}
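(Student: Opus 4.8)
\textbf{Proof proposal for Lemma \ref{lem:d_mode}.}

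The plan is to verify each of the four identities by reducing everything to standard facts about the eigenfunctions $Y_{m\ell}$ on the orbit sphere, namely $\s{\Delta} Y_{m\ell} = -\Lambda r^{-2} Y_{m\ell}$ where $\Lambda = \ell(\ell+1)$, together with the definitions of the four operators $\donest, \dtwost, \done, \dtwo$ given just above. Since all four claims are linear in the summands, it suffices to treat a single mode and then sum. I would organize the computation as two groups: the two "costar" identities (which are immediate from the definitions) and the two "unstarred" identities (which require one integration-by-parts type manipulation or, equivalently, the commutation of $\done$ with $\donest$ and of $\dtwo$ with $\dtwost$).

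First I would dispatch the two easy identities. For $\donest$, by definition $(\donest \psi)_A = -\nablas_A \psi$, so applying it termwise to $\sum \psi_{m\ell} Y_{m\ell}$ gives $\sum_{\ell \geq 1} -\psi_{m\ell}\nablas_A Y_{m\ell}$ (the $\ell = 0$ term drops since $Y_{m0}$ is constant on each sphere, hence $\nablas_A Y_{m0} = 0$). For $\dtwost$, by definition $(\dtwost \xi)_{AB} = -\tfrac12(\nablas_A\xi_B + \nablas_B\xi_A - \nablas^C\xi_C\,\s{g}_{AB})$; plugging in $\xi_A = \xi_{m\ell}\nablas_A Y_{m\ell}$ gives $-\xi_{m\ell}\bigl(\nablas_A\nablas_B Y_{m\ell} - \tfrac12 (\s{\Delta}Y_{m\ell})\,\s{g}_{AB}\bigr) \cdot r^2 \cdot r^{-2}$, and the bracket is precisely the definition of the traceless Hessian $\hat{\nablas}_{AB}Y_{m\ell}$, so the result is $-\xi_{m\ell}\hat{\nablas}_{AB}Y_{m\ell}$ for $\ell \geq 2$ (for $\ell = 1$ the traceless Hessian of $Y_{m1}$ vanishes, a standard conformal Killing identity on $\mathbb{S}^2$, so those terms are absent).

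For the two unstarred identities I would use that $\done$ and $\dtwo$ are the formal adjoints of $\donest$ and $\dtwost$ respectively, or compute directly. For $\done$: $\done(\xi_{m\ell}\nablas_A Y_{m\ell}) = \nablas^A(\xi_{m\ell}\nablas_A Y_{m\ell}) = \xi_{m\ell}\,\s{\Delta}Y_{m\ell} = -\Lambda r^{-2}\xi_{m\ell}Y_{m\ell}$, which is the third identity. For $\dtwo$: $(\dtwo(\Xi_{m\ell}\hat{\nablas}_{AB}Y_{m\ell}))_A = \Xi_{m\ell}\nablas^B\hat{\nablas}_{AB}Y_{m\ell}$, and here one needs the Bochner/commutation identity on $\mathbb{S}^2$ expressing the divergence of the traceless Hessian of an eigenfunction back in terms of its gradient: on a round sphere of radius $r$, $\nablas^B\hat{\nablas}_{AB}Y_{m\ell} = \bigl(-\tfrac12\Lambda r^{-2} + r^{-2}\bigr)\nablas_A Y_{m\ell} = -\lambda r^{-2}\nablas_A Y_{m\ell}$ using $\Lambda = 2\lambda + 2$; this gives the fourth identity. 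The main obstacle — really the only non-bookkeeping step — is this last commutation identity for $\nablas^B\hat{\nablas}_{AB}$: it requires carefully tracking the Gauss curvature term (equal to $r^{-2}$ here) that appears when commuting the two covariant derivatives on the sphere, and checking the normalization so that the coefficient comes out exactly as $-\lambda r^{-2}$ with $\lambda = \lambda(\ell)$ as defined in \eqref{def:lambda_ell}. Everything else is a direct substitution of the operator definitions.
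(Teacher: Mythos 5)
Your proof is correct. The paper states Lemma \ref{lem:d_mode} without proof, treating it as a direct computation, and your argument supplies exactly the intended verification: termwise application of the operator definitions (the radial coefficients pass through since the operators are purely angular), the eigenvalue relation $\s{\Delta}Y_{m\ell}=-\Lambda r^{-2}Y_{m\ell}$, the vanishing of $\nablas_A Y_{00}$ and of the traceless Hessian of $Y_{m1}$, and, for the fourth identity, the contracted Ricci identity $\nablas^B\nablas_A\nablas_B Y=\nablas_A\s{\Delta}Y+r^{-2}\nablas_A Y$ on the radius-$r$ sphere, which gives the coefficient $\left(1-\tfrac{\Lambda}{2}\right)r^{-2}=-\lambda r^{-2}$ exactly as you state.
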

We denote the projection for a scalar function $\psi$ as

\begin{align*}
\psi_{\ell\geq \ell_0}:=&\sum_{\ell\geq \ell_0}\sum_{|m|\leq \ell} \psi_{m\ell}(t,r)Y_{m\ell}(\theta,\phi),\\
\psi_{\ell_0}:=&\sum_{|m|\leq \ell_0} \psi_{m\ell_0}(t,r)Y_{m\ell_0}(\theta,\phi).
\end{align*}
The projection of a section in $\mathcal{L}(-1)$ or $\mathcal{L}(-2)$ is defined similarly. We say $\psi$ is supported on $\ell\geq \ell_0$ if $\psi=\psi_{\ell\geq \ell_0}$.\\

We adapt the notation $A\leq_s B$ if for any $\tau\in\mathbb{R}$ and any $r\geq 2M$,

\begin{align*}
\int_{\mathbb{S}^2 (\tau,r)}A dvol_{\mathbb{S}^2}\leq \int_{\mathbb{S}^2(\tau,r)}B dvol_{\mathbb{S}^2}.
\end{align*}
For example, if $\psi=\psi_{\ell\geq \ell_0}$, we have $r^{-2}\ell_0(\ell_0+1) |\psi|^2\leq_s |\nablas \psi|^2$ by Poincar\'e inequality. Also We use the notation $A\lesssim B$ for two non-negative quantities $A$ and $B$ to indicate that there is a constant $C$ such that $A\leq CB$. The constant $C$ may depend on $\delta$, the small constant we fix throughout this paper, and the defining function of $\Sigma_0$ in $r\in [2M,3M]$.  We also use $A\approx B$ to denote the relation $A\lesssim B$ and $B\lesssim A$. If the constant $C$ depends on other quantities, we indicate the dependence using subscript. For instance, the notation $\lesssim_{R_{\textup{null}}}$ emphasizes that the constant depends on $R_{\textup{null}}$.

\subsection{main theorems for $\ell\geq 1$}

Now we are ready to state the main theorems for $\ell\geq 1$. The results for $\ell=0$ have more involved assumptions and we postpone them to section \ref{sec:ell=0}. The first one concerns the even solutions of \eqref{HG} and \eqref{Main_equation} supported on $\ell\geq 1$.

\begin{theorem}\label{thm:ell_geq_2}

Let $h_{ab}$ be an \textbf{even} solution of \eqref{HG} and \eqref{Main_equation}. \\

(1)\ Further assume that $h_{ab}$ is supported on $\ell\geq 2$ and that $\textup{Decay}_{\ell\geq 2}[h]$ defined below is finite. Then for any $p\in [\delta,2-3\delta]$ and $\tau\geq 0$, we have

\begin{align*}
F[h](\tau)+E^{p,(0)}_L[h](\tau)+\int_\tau^\infty \bar{B} [h](\tau')+E^{p-1,(0)}_{L,\nablas}[h](\tau') d\tau'\lesssim \left(1+\frac{\tau}{M}\right)^{-2+p+3\delta} \textup{Decay}_{\ell\geq 2}[h].
\end{align*}
Here

\begin{align*}
\textup{Decay}_{\ell\geq 2}[h]:=I^{(0)}[s^3 (M\partial)^{\leq 2} \K^{\leq 4}h, s^2(M\partial)^{\leq 3}\K^{\leq 1}h ,s\f^{\leq 3}(M\partial)^{\leq 1} \K^{\leq 6}h]+I^{(1)}[s(M\partial)^{\leq 1} \K^{\leq 8}h].
\end{align*}

(2)\ Assume that $h_{ab}$ is supported on $\ell= 1$ and that $\textup{Decay}_{\ell=1}[h]$ defined below is finite. Then for any $p\in [\delta,2-3\delta]$ and $\tau\geq 0$, we have

\begin{align*}
F[h](\tau)+E^{p,(0)}_L[h](\tau)+\int_\tau^\infty F[h](\tau')+E^{p-1,(0)}_{L,\nablas}[h](\tau') d\tau'\lesssim \left(1+\frac{\tau}{M}\right)^{-2+p+3\delta} \textup{Decay}_{\ell=1}[h].
\end{align*}
Here

\begin{align*}
\textup{Decay}_{\ell=1}[h]:=I^{(0)}[s^4(M\partial)^{\leq 2}\K^{\leq 4}h,\f^{\leq 1}\K^{\leq 6} h ]+I^{(1)}[\K^{\leq 6} h].
\end{align*}

\end{theorem}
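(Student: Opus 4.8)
\textbf{Proof plan for Theorem \ref{thm:ell_geq_2}.}

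The plan is to reduce the tensorial Lichnerowicz equation \eqref{Main_equation} under the gauge \eqref{HG} to a system of scalar and $\mathcal{L}(-1)$, $\mathcal{L}(-2)$ wave equations on the quotient, and then run the vector field method on each of those, feeding the output of the lower-order pieces as source terms into the higher-order ones. First I would decompose $h_{ab}$ into its quotient-scalar components $h_{\alpha\beta}$, $\s{tr}h$, its $\mathcal{L}(-1)$ components $h_{\alpha A}$, and its $\mathcal{L}(-2)$ component $\hat h_{AB}$, exactly as in subsection \ref{subsec:evenodd}, and rewrite \eqref{Main_equation} componentwise. Using the commutation relations \eqref{comm_r}, \eqref{comm_t} and Lemma \ref{lem:d_mode}, the curvature term $2R_{a\ b}^{\ c\ d}h_{cd}$ becomes, on each fixed mode, a zeroth-order coupling with $r$-dependent coefficients, so one obtains ${}^s\Box$ (or ${}^1\Box$, ${}^2\Box$) applied to a component equals a linear combination of components with potential-type coefficients. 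The harmonic gauge \eqref{HG}, likewise decomposed, gives first-order constraints relating the components; the key point (as in \cite{Hung-thesis} for the odd part) is that one can use \eqref{HG} to eliminate enough components so that the surviving unknowns solve \emph{decoupled} scalar wave equations modulo lower-order, already-controlled terms — effectively recovering Zerilli-type quantities but now staying inside the harmonic gauge.

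Next, for each scalar (or bundle-valued) unknown $\Phi$ in the reduced system I would invoke the wave-equation decay package recorded in Appendix \ref{sec:wave equation}: the red-shift estimate \eqref{g_redshift} near $\mathcal{H}^+$, the Morawetz/$B$-energy estimate (degenerate at the photon sphere for $\ell\geq 2$, which is why part (1) only gets $\bar B[h]$ and not $F[h]$ in the spacetime integral, whereas $\ell=1$ is not trapped so part (2) recovers $F[h]$), the $r^p$-hierarchy producing the $E^{p,(j)}_{L,\nablas}$ terms near null infinity, and the resulting $\tau^{-2+p+\delta}$-type decay driven by the initial norm $I^{(k)}[\Phi]$ and the source norm $\mathrm{I}_{\mathrm{source},\bar\delta}[G]$ via \eqref{initial_norm}, \eqref{source_assumption}. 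The bookkeeping is organized so that the source for the equation governing a given component consists only of components that are ``lower'' in the elimination order, together with their $\K$- and $\f$-derivatives; this is exactly what the elaborate definitions of $\textup{Decay}_{\ell\geq 2}[h]$ and $\textup{Decay}_{\ell=1}[h]$ are tracking — the powers of $s$ and the numbers of $\partial$, $\f$, $\K$ derivatives there are precisely the weights needed so that $\mathrm{I}_{\mathrm{source}}$ of the coupling terms is bounded by $I^{(k)}$ of the weighted higher-derivative data. One then closes the argument by a finite induction on the elimination order: the bottom unknown satisfies a homogeneous (up to harmless potential) wave equation and decays by the Appendix estimates; each subsequent unknown has a source built from already-decayed quantities, so its decay follows, and summing the component estimates and reassembling via the $\partial$-frame gives the stated bound on $F[h](\tau)+E^{p,(0)}_L[h](\tau)$ and the spacetime integral.

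The main obstacle I expect is the \textbf{structure of the coupling and the weight counting}: one must check that the harmonic gauge genuinely decouples the system into a triangular (not circular) hierarchy, so that no component's source depends on a component not yet estimated — a circular dependence would force a simultaneous Grönwall-type argument and likely lose the sharp $\tau^{-2+}$ rate. Concretely, the Lichnerowicz curvature term and the $\dtwost$, $\donest$ operators introduce factors of $r^{-1}$ and $r^{-2}$, and the $r^p$-estimate is sensitive to exactly these weights near infinity; getting the source of each equation to land in the range $p\in[\delta,2-\bar\delta]$ with the correct $\tau$-decay requires spending derivatives (hence the $\K^{\leq 8}$, $\f^{\leq 3}$ in $\textup{Decay}_{\ell\geq 2}[h]$) and extra powers of $s$ (the $s^3$, $s^2$, $s$ weights) to upgrade a component bound to a source bound one level up. A secondary difficulty is the behavior at the photon sphere for $\ell\geq 2$: the degenerate $B[h]$ must be upgraded, where possible, using the commuted equation and the red-shift, and one has to be careful that the mode $\ell=1$ — which is special because $\dtwo$ has a zero eigenvalue there and $\hat h_{AB}\equiv 0$ — is handled by a genuinely separate computation, as reflected in the different statement and different $\textup{Decay}$ functional in part (2).
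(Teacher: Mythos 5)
Your overall philosophy (reduce to a hierarchy of wave equations and feed decayed quantities as sources into the next level) matches the spirit of the paper, but the central structural step you propose is not justified and, as stated, would fail. You claim that the harmonic gauge \eqref{HG} can be used to eliminate components of $h$ so that the surviving unknowns satisfy \emph{decoupled} scalar wave equations ``inside the harmonic gauge.'' That is not what happens: the componentwise form of \eqref{Main_equation} is the genuinely coupled system \eqref{equation_far} for the seven-vector $\h$, and the paper only extracts from it a conditional red-shift/$r^p$ estimate (Proposition \ref{pro:metric_far}) with an uncontrolled interior error in $r\in[r_{rs,\h},R_{\textup{null}}+M]$, after checking matrix positivity (Lemma \ref{lem:metric_coefficient_positivity}) rather than decoupling. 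The interior error is then controlled by a completely different device that is missing from your plan: the gauge transformation $h=h^{\RW}-{}^W\pi$ of subsection \ref{subsec:to_RW}, where $h^{\RW}$ is reconstructed from the gauge-invariant Zerilli quantity $\psi_Z$ solving \eqref{equ:psi}, and the transformation one-form $W$ solves \eqref{wave_equation_vector} with $\psi_Z$-sources. Even after this reduction the hierarchy is not triangular-scalar all the way down: the pair $(\hat W_1,\hat W_2)$ remains a coupled system \eqref{equ:W12} with a non-self-adjoint coupling matrix, and sections \ref{subsection:T}--\ref{subsec:rp} must build bespoke $T$, red-shift, Morawetz and $r^p$ currents, verify $2\times 2$ matrix positivity mode by mode (delicate precisely at $\ell=1$ and $\ell=2$), and invoke the algebraic constraints \eqref{X1r_X2o_substitution}--\eqref{X1t_substitution} to remove the photon-sphere degeneracy. ``Invoke the Appendix package'' does not cover this step, and the $\ell=1$ case requires the separate identities \eqref{def:W0_ell=1}--\eqref{def:W2_ell=1} since \eqref{def:W2} is vacuous there, which your plan acknowledges only in passing.

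A second concrete gap is the weight problem in the sources. The source of the $P_{\even}$ and $W_0$ equations, \eqref{equ:P} and \eqref{equ:W_0}, contains terms like $r\nabla_t\psi_Z$ that \emph{grow} in $r$; no amount of ``spending derivatives and extra powers of $s$'' puts such a term into the admissible range of the $r^p$ hierarchy for $p\in[\delta,2-\bar\delta]$. The paper resolves this by the $u$-weighted renormalizations \eqref{def:hP}--\eqref{def:hW2}, which trade the bad $r$-weight for a factor of $\tau$, and this only closes because $\psi_Z$ and $S$ obey the \emph{improved} $\tau^{-4+p+3\delta}$ decay of Propositions \ref{pro:S} and \ref{pro:psi}, obtained from the $r^2L$-commuted estimates (Proposition \ref{cor:free_price} with $k=1$, after verifying $V_{Z,\ell}\in\mathcal{V}(\delta,R_{\textup{null}},\ell,1)$ uniformly in $\ell$). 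This is exactly why the $I^{(1)}$ norms appear in $\textup{Decay}_{\ell\geq 2}[h]$; your proposal lists them but supplies no mechanism that uses them. Finally, a minor point: the distinction between $\bar B[h]$ in part (1) and $F[h]$ in part (2) is about $r$-weights on a fixed low mode rather than photon-sphere degeneracy per se — the degeneracy in $B$ is already removed in $\bar B$ at the cost of commuting with Killing fields.
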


The second theorem concerns the even solutions of $\eqref{wave_equation_vector}$.

\begin{theorem}\label{thm:ell_geq_1}

Let $W_{a}$ be an \textbf{even} solution of \eqref{wave_equation_vector}. Further assume that $W_a$ is supported on $\ell\geq 1$ and that $\textup{Decay}_{\ell\geq 1}[W]$ defined below is finite. Then for any $p\in [\delta,2-3\delta]$ and $\tau\geq 0$, we have

\begin{align*}
F[W](\tau)+E^{p,(0)}_L[W](\tau)+\int_\tau^\infty B[W](\tau')+E^{p-1,(0)}_{L,\nablas}[W](\tau') d\tau'\lesssim \left(1+\frac{\tau}{M}\right)^{-2+p+3\delta} \textup{Decay}_{\ell\geq 1}[W].
\end{align*}
Here

\begin{align*}
\textup{Decay}_{\ell\geq 1}[W]:=&I^{(0)}[s(M\partial)^{\leq 1}\K^{\leq 3}W,\f^{\leq 1}(M\partial)^{\leq 1}\K^{\leq 5}W]+I^{(1)}[(M\partial)^{\leq 1} \K^{\leq 5}W].
\end{align*}
\end{theorem}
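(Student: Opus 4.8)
\textbf{Proof proposal for Theorem \ref{thm:ell_geq_1}.}

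The plan is to treat the tensorial wave equation $\Box W_a=0$ as a coupled system of scalar/bundle wave equations for the components $W_0,W_1,W_2$ appearing in the decomposition $W_adx^a=W_0dt+\D^{-1}W_1dr+W_{2,A}dx^A$, and then to run the standard vector field hierarchy (red-shift near $\mathcal{H}^+$, Morawetz estimate with its degeneracy at $r=3M$, and $r^p$-estimates near $\mathcal{I}^+$) on this system. First I would compute the commutator between $\Box$ on one forms and the projections onto the $dt$, $dr$, $dx^A$ pieces; this produces a system in which $\Box W_0$, $\Box W_1$, ${}^1\Box W_2$ each equal a first-order (in fact zeroth- and first-order, with coefficients decaying in $r$) combination of the other components, the lower-order couplings being concentrated near $r=2M$ and the mixing of $W_0$ with $W_1$ being the source of the potential obstruction. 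One then diagonalizes or at least triangulates this system: the gauge/constraint structure (that $\Box\Gamma_a[W]=0$ if $\Gamma_a$ were present — here for the pure wave equation the relevant simplification is that $W_2$ decouples to leading order, being a genuine $\mathcal{L}(-1)$-wave, and the pair $(W_0,W_1)$ forms a closed $2\times 2$ system modulo terms controlled by $W_2$). Because we are supported on $\ell\geq 1$, the Poincar\'e inequality $r^{-2}\ell(\ell+1)|\psi|^2\leq_s|\nablas\psi|^2$ gives a coercive zeroth-order term, so each component satisfies a wave equation with a well-signed (or at worst borderline) potential, which is exactly what makes the $\ell\geq 1$ case cleaner than $\ell=0$.

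The key steps, in order: (i) derive the component system and identify the source terms $G_0,G_1,G_2$ on the right-hand sides, verifying that each $G_i$ is a sum of a term supported in $r\leq R_{\textup{null}}$ (hence harmless) plus a term that is at least one power of $s^{-1}$ better than scaling; (ii) for each component apply the combined energy estimate of Appendix \ref{sec:wave equation} (the $T$-current plus red-shift vector $Y(\sigma)$ plus Morawetz multiplier), treating the off-diagonal couplings as source terms $G$ and absorbing them using $\textup{I}_{\textup{source},\bar\delta}[G]\lesssim$ (weighted energies of the other components), which closes because the couplings are lower order; (iii) run the $r^p$-hierarchy for $p\in[\delta,2-\bar\delta]$ on the truncated components $\tilde W_i$ to upgrade the $T$-energy flux to the $E^{p,(0)}_{L,\nablas}$ decay, again feeding in the source terms; (iv) set up the induction on commuted quantities $\K^{\leq k}W$ and on the $r$-weighted derivatives ${}^{(j)}$, matching the number of Killing commutations and $r^p$-levels to the norms appearing in $\textup{Decay}_{\ell\geq 1}[W]$ — this is precisely why the definition bundles $I^{(0)}$ of $\K^{\leq 5}W$ together with $I^{(1)}$ of $\K^{\leq 5}W$; (v) run the standard pigeonhole/interpolation argument of Dafermos–Rodnianski to convert the bounded-flux-plus-integrated-decay statements into the pointwise-in-$\tau$ rate $(1+\tau/M)^{-2+p+3\delta}$.

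The main obstacle I expect is step (ii), specifically closing the bootstrap for the mixed $(W_0,W_1)$ subsystem: the off-diagonal terms in the component wave equations are only borderline-subcritical (decaying like $s^{-1}$ times the natural scaling weight, coming from the $\frac{2M}{r^2}$-type factors in \eqref{pi}-like computations and from $[\Box,\text{projection}]$), so a naive estimate loses a power of $\tau$. The resolution — which I would carry out — is to estimate the system as a whole rather than component-by-component: form the total energy $F[W]=\sum_{X_1,X_2\in\partial}F[W(X_1,X_2)]$ and the total Morawetz bulk $B[W]$, write the divergence identity for the sum of currents $\sum_i T_{ab}[W_i]X^b$, and observe that the dangerous cross terms assemble into a total derivative plus terms absorbable by the bulk $B[W]$ (this is where the coercive $\ell\geq 1$ zeroth-order term is essential). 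A secondary technical point is the behavior at the horizon, where $T$ degenerates and one of the $W_1$-couplings has a $\D^{-1}$ factor; this is handled by working in the $(v,R)$ chart with $\underline L'=\D^{-1}\underline L$ and invoking the red-shift computation \eqref{g_redshift}, which controls the degenerate derivative. Once the system-level estimate closes at the base level, the commuted estimates and the $r^p$-hierarchy follow by the same mechanism with no new difficulty beyond bookkeeping, and step (v) is routine.
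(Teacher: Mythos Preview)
Your proposal misses the key structural idea and, as written, the bootstrap in step (ii) would not close. The component equations are \eqref{equ:W_0}--\eqref{equ:W_2} with $\psi_Z\equiv 0$: the right-hand side of \eqref{equ:W_0} is $-\tfrac{2M}{r^3}P_{\even}$, where $P_{\even}=-r\nabla_r W_0+\D^{-1}r\nabla_t W_1-W_0$ carries an extra factor of $r$ on \emph{first} derivatives of $W_0,W_1$. This is not ``one power of $s^{-1}$ better than scaling''; treating it as a source term $G_0$ in the Morawetz/$r^p$ hierarchy loses exactly the power you need. Similarly, the $(W_1,W_2)$ system is genuinely coupled at top order through $\tfrac{2}{r}\done W_2$ and $\tfrac{2}{r}\donest W_1$ (so $W_2$ does \emph{not} decouple to leading order), and at $\ell=1$ the resulting potential matrix $A_{\W,\textup{main}}$ is only semidefinite, which obstructs your proposed total-energy argument: the bulk is not coercive in $|\nablas\W|^2$, and the Morawetz current for the system has sign-indefinite cross terms that do not assemble into a total derivative.

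The paper's route is to pass to auxiliary scalars $S_W=\mathrm{tr}\,{}^W\pi$, $P_{\even}$, $Q_{\even}$ which satisfy \emph{their own} wave equations \eqref{equ:S}, \eqref{equ:P}, \eqref{equ:Q} in a triangular hierarchy: $\Box S_W=0$, $\Box P_{\even}=\nabla_t S_W$, and $Q_{\even}$ is free. One first obtains decay for $S_W$ and then for $\hat P_{\even}$ (Propositions~\ref{pro:S}, \ref{pro:P}), which feeds into \eqref{equ:W_0} as a genuine decaying source and yields Proposition~\ref{pro:X0} for $W_0$. For $(W_1,W_2)$ the crucial step is the substitutions \eqref{X1r_X2o_substitution}--\eqref{X1t_substitution}, which rewrite $\nabla_r W_1+\done W_2$, $\donest W_1+\D\nabla_r W_2$, and $\nabla_t W_1$ in terms of the already-controlled $S_W,P_{\even},Q_{\even},W_0$; this both removes the photon-sphere degeneracy (yielding $\bar B[\W]$) and closes the Morawetz bulk (Section~\ref{subsection:Morawetz}). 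The actual proof of Theorem~\ref{thm:ell_geq_1} then reduces to checking that the initial norms $\textup{Decay}[\W]$ from Propositions~\ref{pro:X0} and \ref{pro:X12} are bounded by $\textup{Decay}_{\ell\geq 1}[W]$. Your outline does not identify this hierarchy, and without it the coupling terms are not absorbable.
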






\section{Basic Lemmas}\label{sec:lemma}

We prove in this section some useful lemmas that we need. Once we have an estimate for the solution $\psi$ of a wave equation, for instance Proposition \eqref{cor:free_price}, we can commute the equation with Killing vectors in $\K$ to control $\K\psi$. However $\K$ doesn't generate the full tangent space of $\mathcal{M}$. Fortunately, as long as $r>2M$, the missing direction can be recovered using $\Box\psi$, which is the content of the lemma below. 

\begin{lemma}
Let $\psi$ be a smooth function. Then for any $k\geq 1$ and $\bar{r}>2M$, we have for $r\geq \bar{r}$,

\begin{equation}\label{killing_to_partial}
\left|(M\partial)^{\leq k} u\right|^2\lesssim_{k,\bar{r}} \left(|u|^2+|M\partial\mathcal{K}^{\leq k-1} u|^2+M^{4}|(M\partial)^{\leq k-2}\Box u|^2 \right).
\end{equation}

\end{lemma}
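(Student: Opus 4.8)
The plan is to induct on $k$ and exploit that, away from the horizon, the operators $T$, $\Omega_1,\Omega_2,\Omega_3$ together with $\Box$ span enough derivatives to recover $\partial_r$. First I would set up the base case $k=1$. Working in the region $r\ge\bar r>2M$, observe that the vector fields $T=\partial_t$ and $\frac1r\Omega_i$ already control all derivatives in $\partial$ except $\nabla_{\underline L'}$, equivalently except $\partial_r$; since $\underline L'=\D^{-1}(\partial_t-\D\partial_r)$ and $\D^{-1}$ is bounded by a constant depending on $\bar r$ on $r\ge\bar r$, controlling $M\partial_r\psi$ suffices. From the expression for $\Box$ in the $(t,r,\theta,\phi)$ coordinates,
\begin{align*}
\Box\psi=-\D^{-1}\partial_t^2\psi+\D\,\partial_r^2\psi+\frac{2}{r}\Bigl(1-\frac{M}{r}\Bigr)\partial_r\psi+\slashed\Delta\psi,
\end{align*}
the term $\frac2r(1-\frac Mr)\partial_r\psi$ has a coefficient bounded below by a positive constant on any $r\ge\bar r$, so
\begin{align*}
\partial_r\psi=\frac{r}{2(1-M/r)}\Bigl(\Box\psi+\D^{-1}\partial_t^2\psi-\D\,\partial_r^2\psi-\slashed\Delta\psi\Bigr);
\end{align*}
however this still involves $\partial_r^2\psi$ on the right, so the clean way is instead to treat $\Box$ as giving the \emph{missing second-order} information: one gets a first-order ODE in $r$ for $\partial_r\psi$ driven by $\Box\psi$, $T$-derivatives and angular derivatives, and integrating it in $r$ from a point where $\partial_r\psi$ is controlled. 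Actually the simplest route for the base case $k=1$ is that $|M\partial\psi|^2$ with the sole exception of the $\partial_r$ piece is already $\lesssim |M\partial\K^{\le 0}\psi|^2=|M\partial\psi|$-free; more precisely $\{T,\frac1r\Omega_i\}$-derivatives of $\psi$ are literally among $M\partial \K^{\le 0}$ up to the factor $M/r\lesssim_{\bar r}1$, and the only genuinely new quantity is $M\nabla_{\underline L'}\psi$. For that one uses that $\nabla_T\psi$ and $\nabla_L\psi$ are controlled (the first is a Killing derivative, $L$ is a combination available once one controls $\partial_r$), so it really does come down to $\partial_r$.

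The cleaner formulation, which I would adopt, avoids the ODE: rewrite $\Box$ in the form
\begin{align*}
\D\,\partial_r\bigl(r^2\D\,\partial_r\psi\bigr)=r^2\Bigl(\Box\psi+\D^{-1}\partial_t^2\psi-\slashed\Delta\psi\Bigr),
\end{align*}
i.e. $\partial_r(r^2\D\,\partial_r\psi)=\D^{-1}r^2(\Box\psi+\D^{-1}\partial_t^2\psi-\slashed\Delta\psi)$, and integrate in $r$ over a compact $r$-interval. But for an $L^2$-type pointwise bound on $r\ge\bar r$ (not decaying at infinity) one does not want to integrate over an unbounded interval, so instead I would argue pointwise and algebraically: from the $\Box$ formula solve \emph{for the top radial derivative at each order}. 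At order $k$, the claim is that any string of $k$ vectors from $\partial$ applied to $\psi$ is, up to $\bar r$-dependent constants, a combination of (i) strings with at most $k-1$ radial factors together with $\le k$ total derivatives of which $\le k-1$ use $\K$-directions — handled by the inductive hypothesis after converting $\frac1r\Omega_i$ into $\Omega_i$ and $T$ into itself — and (ii) $(M\partial)^{\le k-2}\Box\psi$ arising from replacing $\D\partial_r^2$ by the other terms in $\Box$. The mechanism: whenever two radial derivatives are adjacent, substitute
\begin{align*}
\D\,\partial_r^2\psi=\Box\psi+\D^{-1}\partial_t^2\psi-\frac{2}{r}\Bigl(1-\frac{M}{r}\Bigr)\partial_r\psi-\slashed\Delta\psi,
\end{align*}
which trades a $\partial_r^2$ for a single $\partial_r$ plus $\Box$ plus $\partial_t$ and angular terms; iterating reduces the number of radial derivatives one at a time, each step either feeding the inductive hypothesis or producing a $\Box$-term with strictly fewer derivatives. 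Commutators between $\partial_r$ and $T,\Omega_i,\slashed\Delta$ produce only lower-order terms with $\bar r$-dependent coefficients (e.g. $[\partial_r,\slashed\Delta]$ costs an $r^{-1}$), so they are absorbed.

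The main steps in order: (1) normalize so that $\K$ means $\{MT,\Omega_i\}$ and $\partial$ means $\{T,\frac1r\Omega_i,\underline L'\}$, and reduce the whole statement to controlling $|(Mr^{-1}\text{-weighted})\,M\partial_r^{\,j}\K^{\le k-j}\psi|$ for $j\le k$; (2) prove the key algebraic substitution lemma: on $r\ge\bar r$, $M^2\D\,\partial_r^2=M^2\Box+M^2\D^{-1}\partial_t^2-\frac{2M^2}{r}(1-\frac Mr)\partial_r-M^2\slashed\Delta$, with all non-$\Box$ coefficients uniformly bounded (and the $\D^{-1},\D$ factors bounded by $\bar r$-constants); (3) feed this into an induction on $k$, using at each stage that lowering a radial derivative either lands in the inductive hypothesis (the $|M\partial\K^{\le k-1}\psi|^2$ term, after noting $\Omega_i=r\cdot\frac1r\Omega_i$ and $r\lesssim_{\bar r}$ nothing — wait, $r$ is unbounded, so here one must be careful and keep the $\frac1r\Omega_i$ normalization throughout, i.e. never convert $\frac1r\Omega_i$ to $\Omega_i$; instead observe $[\partial_r,\frac1r\Omega_i]=-\frac1{r^2}\Omega_i\cdot r=-\frac1r\cdot\frac1r\Omega_i$ has bounded coefficient) or produces $M^4(M\partial)^{\le k-2}\Box\psi$; (4) collect: $|(M\partial)^{\le k}\psi|^2\lesssim_{k,\bar r}|\psi|^2+|M\partial\K^{\le k-1}\psi|^2+M^4|(M\partial)^{\le k-2}\Box\psi|^2$. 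The main obstacle I anticipate is bookkeeping the $r$-weights so that nothing unbounded in $r$ sneaks in — the statement is uniform on the \emph{unbounded} set $r\ge\bar r$, so the normalization $\frac1r\Omega_i$ (rather than $\Omega_i$) and the scale-invariant combination $\D\,\partial_r^2$ versus $M^{-2}\slashed\Delta\sim r^{-2}\Lambda$ must be tracked carefully; every commutator must be checked to cost only powers of $r^{-1}$, never $r$. The wave-operator substitution itself and the induction are routine once that weighting discipline is in place.
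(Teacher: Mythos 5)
Your proposal is correct and follows essentially the same route as the paper: induct on $k$, handle strings led by $T$ or $\frac1r\Omega_i$ via the inductive hypothesis (keeping the $r^{-1}$ weight on the angular fields), and eliminate repeated radial derivatives by substituting the wave operator, trading $\D\partial_r^2$ (equivalently $(\underline{L}')^2$ in the paper's $(v,R)$-coordinate form) for $\Box$ plus one radial, $T$- and angular derivatives, with all $\D^{\pm1}$ factors bounded by $\bar r$-dependent constants. The only simplification to note is that the base case $k=1$ is trivial with no wave-equation or ODE input at all, since $\underline{L}'$ is itself a member of $\partial$ and so $|M\partial u|^2$ already appears on the right-hand side as $|M\partial\K^{\leq 0}u|^2$.
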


\begin{proof}
As $k=1$, \eqref{killing_to_partial} holds trivially. Now we prove \eqref{killing_to_partial} by induction. Assume \eqref{killing_to_partial} holds for $k-1$. By the commutation relation in $\partial$,
\begin{align*}
|(M\partial)^{\leq k}\psi|^2\lesssim  |(M\partial)^{\leq k-1}\cdot  MT \psi|^2+\frac{M^2}{r^2}|(M\partial)^{\leq k-1}\cdot \Omega \psi|^2+|(M\underline{L}')^{k}\psi|^2+|(M\partial)^{\leq k-1}\psi|^2 .
\end{align*}
Applying the induction hypothesis to $MT u,$ we have
\begin{align*}
\left|(M\partial)^{\leq k-1}\cdot MT\psi\right|^2\lesssim&  |M\partial\mathcal{K}^{\leq k-2}\cdot M T\psi|^2+M^4|(M\partial)^{\leq k-3}\Box\cdot MT\psi|^2+|MT \psi|^2 \\
                                      \lesssim & |M\partial\mathcal{K}^{\leq k-1} \psi|^2+M^4|(M\partial)^{\leq k-2}\Box \psi|^2.
\end{align*}
Similarly,
\begin{align*}
\frac{M^2}{r^2}\left|(M\partial)^{\leq k-1}\cdot \Omega \psi\right|^2\lesssim &\frac{M^2}{r^2} \left( |M\partial\mathcal{K}^{\leq k-2}\cdot \Omega \psi|^2+M^4 |(M\partial)^{\leq k-3}\Box\cdot  \Omega \psi|^2+|\Omega \psi|^2 \right)\\
                                      \lesssim&  |M\partial\mathcal{K}^{\leq k-1} \psi|^2+M^4|(M\partial)^{\leq k-2}\Box \psi|^2.
\end{align*}
To deal with $(M\underline{L}')^{k}\psi$, we note that from \eqref{Box_horizon},

\begin{equation*}
(\underline{L}')^2=\D^{-1}\left( \Box +2T\underline{L}'+\frac{2}{r}\left(1-\frac{M}{r}\right)\underline{L}'-\frac{2}{r}T-\frac{1}{r^2}\sum_{i=1}^3\Omega_i^2 \right).
\end{equation*}
Therefore as $r\geq \bar{r}$,

\begin{align*}
|(M\underline{L}')^k \psi|^2\lesssim_{\bar{r}} &  M^4|(M\underline{L}')^{k-2}\Box \psi|^2+|(M\underline{L}')^{ k-1} MT \psi|^2+\frac{M^4}{r^4}|(M\partial)^{\leq k-2}\Omega^2 \psi|^2+|(M\partial)^{\leq k-1}\psi|^2.
\end{align*}
The first term appears on the right hand side of \eqref{killing_to_partial}, the second and the third terms are already estimated above and the last term can be controlled by induction hypothesis. Adding the above finishes the proof.
\end{proof}

In using the divergence theorem \eqref{div_thm}, it's often to only have $\dot{H}^1$ type norm without the zeroth order term and we rely on the Hardy inequality to recover such base term.

\begin{lemma}\label{lem:Hardy}
For any positive number $q>0$ and any Lipschitz function $\psi$, we have

\begin{equation}\label{Hardy}
\begin{split}
&2Mq^{-1}\sup_{r\geq R_{\textup{null}}} \int_{\mathbb{S}^2(\tau,r)} s^{-q}|\psi|^2 dvol_{\mathbb{S}^2}+M^{-2}\int_{\Sigma_\tau'}s^{-q-3}|\psi|^2dvol_3\\
\leq  &4Mq^{-1}\int_{\mathbb{S}^2(\tau,R_{\textup{null}})} s^{-q}|\psi|^2 dvol_{\mathbb{S}^2}+8q^{-2}\int_{\Sigma'_\tau} s^{-q-1}|\nabla_L \psi|^2 dvol_3.
\end{split}
\end{equation}
For any negative number $q<0$, we have

\begin{equation}\label{Hardy_reverse}
\begin{split}
&2M|q|^{-1}\int_{\mathbb{S}^2(\tau,R_{\textup{null}})} s^{-q}|\psi|^2 dvol_{\mathbb{S}^2}+M^{-2}\int_{\Sigma_\tau'}s^{-q-3}|\psi|^2dvol_3\\
\leq  &4M|q|^{-1} \liminf_{r\to\infty} \int_{\mathbb{S}^2(\tau,r)} s^{-q}|\psi|^2 dvol_{\mathbb{S}^2}+8|q|^{-2}\int_{\Sigma'_\tau} s^{-q-1}|\nabla_L \psi|^2 dvol_3.
\end{split}
\end{equation}

\end{lemma}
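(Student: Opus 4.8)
The statement to prove is Lemma \ref{lem:Hardy}, which consists of two Hardy-type inequalities \eqref{Hardy} (for $q > 0$) and \eqref{Hardy_reverse} (for $q < 0$) on the null hypersurface portion $\Sigma_\tau'$ where $r \ge R_{\textup{null}}$ and $n = L$.

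The plan is to derive both inequalities from a single one-dimensional integration-by-parts identity along the null generators of $\Sigma_\tau'$. On $\Sigma_\tau' = \Sigma_\tau \cap \{r \ge R_{\textup{null}}\}$ the hypersurface is null with $n = L = \partial_t + \D\partial_r$, and since $\tau$ is constant along $\Sigma_\tau'$ in the coordinate $\tau = t + r + 2M\log(r/2M-1) - g(r)$ with $g$ chosen so that $\Sigma_\tau'$ is null, the vector $L$ is (proportional to) the coordinate vector $\partial_r$ restricted to $\Sigma_\tau'$; in particular $dvol_3$ on $\Sigma_\tau'$ reduces to (a multiple of) $r^2\, dr\, dvol_{\mathbb{S}^2}$. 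So the estimate decouples on each generator $r \mapsto (\text{fixed }\omega \in \mathbb{S}^2)$ and on each such ray it is the classical weighted Hardy inequality $\int_{R_{\textup{null}}}^\infty r^{-q-1} |\psi|^2\, dr \lesssim |\psi(R_{\textup{null}})|^2 + \int r^{-q+1}|\partial_r \psi|^2\, dr$ with the appropriate constants $2/q$ and $8/q^2$.

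Concretely, I would start from $\frac{d}{dr}\big( r^{-q}|\psi|^2 \big) = -q r^{-q-1}|\psi|^2 + 2 r^{-q}\,\psi\,\partial_r\psi$ (writing $s = r/M$ so powers of $M$ appear as stated), integrate in $r$ from $R_{\textup{null}}$ to some $r_0$, and integrate over $\mathbb{S}^2$. For $q > 0$ this gives, after rearranging, $q \int_{R_{\textup{null}}}^{r_0} s^{-q-1}|\psi|^2 + s^{-q}|\psi|^2\big|_{r_0} \le s^{-q}|\psi|^2\big|_{R_{\textup{null}}} + 2\int s^{-q}|\psi||\partial_r\psi|$, and then Cauchy--Schwarz (or Young's inequality) on the last term, splitting $2|\psi||\partial_r\psi| \le (q/2) s^{-1}|\psi|^2 + (2/q) s |\partial_r \psi|^2$, absorbs half the Hardy term on the left: $(q/2)\int s^{-q-1}|\psi|^2 + s^{-q}|\psi|^2\big|_{r_0} \le s^{-q}|\psi|^2\big|_{R_{\textup{null}}} + (2/q)\int s^{-q+1}|\partial_r\psi|^2$. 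Taking the supremum over $r_0$ on the boundary term and sending $r_0 \to \infty$ in the bulk term (noting $\partial_r$ on a generator equals $\D^{-1}L \approx L$ up to the factor $\D \to 1$, which only affects implied constants but here the paper wants clean constants — I'd absorb $\D$-factors into the inequality by using that $1 \le \D^{-1} \le 2$ for $r \ge R_{\textup{null}} \ge 10M$), multiplying through by $4M q^{-1}$ and adjusting, yields \eqref{Hardy}; the $M^{-2}\int s^{-q-3}|\psi|^2 dvol_3$ term on the left comes from rewriting $\int s^{-q-1}|\psi|^2\, r^2\, dr\, dvol_{\mathbb{S}^2} = M^2 \int s^{-q+1}|\psi|^2\, dvol_3 \cdot (\text{something})$ — more carefully, $dvol_3 \approx r^2\, dr\, dvol_{\mathbb{S}^2}$ so $\int_{\Sigma_\tau'} s^{-q-3}|\psi|^2 dvol_3 \approx M^3 \int s^{-q-1}|\psi|^2 dr\, dvol_{\mathbb{S}^2}$, matching after bookkeeping. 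For $q < 0$, the sign of the $-q r^{-q-1}|\psi|^2$ term flips, so one integrates from $r_0$ out to $\infty$ instead, the roles of the inner and outer boundary terms swap, the boundary term at infinity appears as a $\liminf$ (it need not vanish but is nonnegative and can only help), and the same Young's inequality step with $|q|$ in place of $q$ closes the estimate to give \eqref{Hardy_reverse}.

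The main (and really the only) subtlety is bookkeeping: making sure the geometry of $\Sigma_\tau'$ (the precise form of $dvol_3$, the relation between $\partial_r$ along a generator and $\nabla_L$, the factors of $M$ from $s = r/M$, and the region $r \ge R_{\textup{null}}$) is handled so that the stated sharp constants $4Mq^{-1}$, $8q^{-2}$, $2Mq^{-1}$, and $M^{-2}$ come out exactly rather than up to an implied constant. Since $\Sigma_\tau'$ is null with $n = L$, and $L \cdot dvol_3$ together with $L$ being null means the induced "volume" is degenerate in the usual metric sense, I would take care to use the convention fixed earlier ($-n_\flat \wedge dvol_3 = dvol$) consistently; the cleanest route is to observe that along $\Sigma_\tau'$ the function $r$ is a good parameter, $L = \partial_r$ in the $(\tau, r, \theta, \phi)$ chart restricted to $\{\tau = \text{const}\}$, and $dvol_3 = r^2 \sin\theta\, dr\wedge d\theta\wedge d\phi$, so everything reduces to the elementary one-variable computation above. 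No step presents a genuine mathematical obstacle; the inequality is standard once the geometric reduction is made.
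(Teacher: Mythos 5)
Your proposal is correct and follows essentially the same route as the paper: a one-dimensional integration by parts in $r$ along the null generators of $\Sigma_\tau'$, Cauchy--Schwarz to absorb half the Hardy term, integration over $\mathbb{S}^2$, and conversion of $r^2\,dr\,dvol_{\mathbb{S}^2}$ to $dvol_3$ (the paper uses the exact relations $r^2 dvol_{\mathbb{S}^2}dr=\D\, dvol_3$ and $\partial_\rho=\D^{-1}L$ on the null portion, then lands the stated constants via $\D^{-1}\le 5/4$, $\D^{-2}\le 25/16<2$ from $R_{\textup{null}}\geq 10M$, which is slightly sharper than your $\D^{-1}\le 2$ but the same bookkeeping). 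The $q<0$ case is handled exactly as you describe, with the boundary roles swapped and the $\liminf$ at infinity.
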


\begin{proof}
For any Lipchitz function $f(r)$, $a<b$ and $q\neq 0$ we have

\begin{align*}
\int_a^b r^{-q-1}f(r)^2 dr=q^{-1}r^{-q}f(r)^2\bigg|^{r=a}_{r=b}+q^{-1}\int_a^b 2r^{-q}f(r) f'(r)dr.
\end{align*}
As $q>0$, by Cauchy-Schwarz, 

\begin{align*}
2q^{-1}b^{-q}f(b)^2+\int_a^b r^{-q-1}f(r)^2 dr\leq 2q^{-1}a^{-q}f(a)^2+4q^{-2}\int_a^b r^{-q+1}(f'(r))^2 dr.
\end{align*}
Applying the above inequality with $f(r)=\psi(\tau,r,\theta,\phi)$, $a=R_{\textup{null}}$, $b\in (R_{\textup{null}}, \infty)$ and integrating along $\mathbb{S}^2$, we have

\begin{align*}
 &2q^{-1}\sup_{r\geq R_{\textup{null}}} \int_{\mathbb{S}^2(\tau,r)} r^{-q}|\psi|^2 dvol_{\mathbb{S}^2}+\int_{R_{\textup{null}}}^\infty \int_{\mathbb{S}^2(\tau,r)} r^{-q-3}|\psi|^2\cdot  r^2dvol_{\mathbb{S} ^2} dr  \\
                          \leq &2q^{-1}\int_{\mathbb{S}^2(\tau,R_{\textup{null}})} r^{-q}|\psi|^2 dvol_{\mathbb{S}^2} +4q^{-2}\int_{R_{\textup{null}}}^\infty \int_{\mathbb{S}^2(\tau,r)} r^{-q-1}|\nabla_L  \psi|^2\cdot \D^{-2} r^2dvol_{\mathbb{S}^2} dr.
\end{align*}
From $R_\textup{null}\geq 10M$,

\begin{align*}
r^2dvol_{\mathbb{S}^2}dr=&\D dvol_3\geq \frac{4}{5}dvol_3,\\
\D^{-2} r^2dvol_{\mathbb{S}^2}dr=&\D^{-1} dvol_3 \leq \frac{5}{4}  dvol_3.
\end{align*}
Then \eqref{Hardy} follows as $5/4,25/16<2$. Similar argument yields \eqref{Hardy_reverse}.

\end{proof}
This lemma has several useful corollaries.

\begin{corollary}\label{cor:E^p_base}

Let $\psi$ be a smooth function and $\tilde{\psi}=\eta_{\textup{null}}(r)\psi$ with the cut-off function $\eta_{\textup{null}}(r)$ defined in \eqref{cutoff_rp}. For any $p < 1$,

\begin{equation*}
\begin{split}
E^{p,(0)}[\tilde{\psi}](\tau)\geq &\frac{(1-p)}{4}\sup_{r\geq R_{\textup{null}}}\int_{\mathbb{S}^2(\tau,r)} M s^{p-1}|\tilde{\psi}^{(0)}|^2  dvol_{\mathbb{S}^2} +\frac{(1-p)^2}{8}\int_{\Sigma_\tau'}M^{-2}s^{p-4}|\tilde{\psi}^{(0)}|^2 dvol.
\end{split}
\end{equation*}

\end{corollary}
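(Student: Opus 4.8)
The plan is to obtain this as a direct consequence of the Hardy inequality \eqref{Hardy} in Lemma \ref{lem:Hardy}, applied not to $\psi$ itself but to the rescaled cut-off function $\tilde{\psi}^{(0)}=s\,\tilde{\psi}=s\,\eta_{\textup{null}}(r)\psi$, which is Lipschitz (indeed smooth). The key choice is to take the parameter $q$ in \eqref{Hardy} to be $q:=1-p$: the hypothesis $p<1$ is exactly the condition $q>0$ needed to invoke \eqref{Hardy}, and with this choice the weights line up since $s^{-q}=s^{p-1}$, $s^{-q-3}=s^{p-4}$ and $s^{-q-1}=s^{p-2}$. Under this substitution the last integral on the right-hand side of \eqref{Hardy} is $8(1-p)^{-2}\int_{\Sigma_\tau'}s^{p-2}|\nabla_L\tilde{\psi}^{(0)}|^2\,dvol_3$, which is precisely $8(1-p)^{-2}E^{p,(0)}_L[\tilde{\psi}](\tau)$ by the definition \eqref{def:rpL_energy} (here $E^{p,(0)}$ in the statement is read as $E^{p,(0)}_L$).

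The only thing that needs checking is that the inner boundary term $4Mq^{-1}\int_{\mathbb{S}^2(\tau,R_{\textup{null}})}s^{-q}|\tilde{\psi}^{(0)}|^2\,dvol_{\mathbb{S}^2}$ on the right-hand side of \eqref{Hardy} drops out. This is immediate from the choice of cut-off \eqref{cutoff_rp}: $\eta_{\textup{null}}(r)=\phi((r-R_{\textup{null}})/M)$ with $\phi(0)=0$, so $\tilde{\psi}$, and hence $\tilde{\psi}^{(0)}=s\tilde{\psi}$, vanishes identically on the sphere $r=R_{\textup{null}}$. (The same cut-off also makes $\tilde{\psi}^{(0)}$ supported in $\{r\ge R_{\textup{null}}\}$, where $\Sigma_\tau'$ lives, so Lemma \ref{lem:Hardy}, stated over $\Sigma_\tau'$, applies with nothing lost.) With this boundary term gone, \eqref{Hardy} collapses to
\[
2M(1-p)^{-1}\sup_{r\ge R_{\textup{null}}}\int_{\mathbb{S}^2(\tau,r)}s^{p-1}|\tilde{\psi}^{(0)}|^2\,dvol_{\mathbb{S}^2}+M^{-2}\int_{\Sigma_\tau'}s^{p-4}|\tilde{\psi}^{(0)}|^2\,dvol_3\le 8(1-p)^{-2}E^{p,(0)}_L[\tilde{\psi}](\tau),
\]
and multiplying through by $(1-p)^2/8$ and rearranging gives the asserted inequality.

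I do not expect any genuine obstacle: this is a bookkeeping corollary of Lemma \ref{lem:Hardy}. The two points that call for a moment's care are the sign bookkeeping in the substitution $q=1-p$ (which turns the generic hypothesis $q>0$ of \eqref{Hardy} into the hypothesis $p<1$ of the corollary), and the observation that the cut-off $\eta_{\textup{null}}$ forces the boundary contribution at $r=R_{\textup{null}}$ to vanish — the feature that lets the $r^p$-energy of the truncated field alone control both a pointwise-in-$r$ sphere flux and a weighted bulk term, with no data needed at the inner cut.
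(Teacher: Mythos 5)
Your proposal is correct and is exactly the paper's argument: apply the Hardy inequality \eqref{Hardy} with $q=1-p>0$ to $\tilde{\psi}^{(0)}$, use $\tilde{\psi}\big|_{r=R_{\textup{null}}}=0$ (from the cut-off \eqref{cutoff_rp}) to kill the boundary term, and rescale by $(1-p)^2/8$. The weight bookkeeping and the identification of the remaining integral with $E^{p,(0)}_L[\tilde{\psi}](\tau)$ are all as in the paper's proof.
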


\begin{proof}

This follows by taking $q=1-p>0$ and $\psi=\tilde{\psi}^{(0)}$ together with $\tilde{\psi}\bigg|_{r=R_{\textup{null}}}=0$.

\end{proof}

\begin{corollary}\label{cor:F_base}

Let $\psi$ be a smooth function and $\tilde{\psi}=\eta_{\textup{null}}(r)\psi$ with the cut-off function $\eta_{\textup{null}}(r)$ defined in \eqref{cutoff_rp}. Suppose $E^{p,(0)}[\tilde{\psi}](\tau)<\infty$ for some $p>0$ and $\tau\in\mathbb{R}$, then

\begin{align*}
\int_{\Sigma'_\tau} |\nabla_L\psi|^2 dvol_3\geq \frac{1}{8}\int_{\Sigma'_\tau} M^{-2}s^{-2}|\psi|^2 dvol_3.
\end{align*}

\end{corollary}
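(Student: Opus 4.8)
The plan is to invoke the Hardy inequality \eqref{Hardy} from Lemma \ref{lem:Hardy} with a suitable exponent, applied not to $\tilde\psi^{(0)}$ but in a way that produces the plain (non-truncated) function $\psi$ along $\Sigma_\tau'$ with the weight $s^{-2}$. First I would recall the definition $\tilde\psi^{(0)} = s\cdot\tilde\psi = s\,\eta_{\textup{null}}(r)\psi$, so that on $\Sigma_\tau' = \Sigma_\tau\cap\{R_{\textup{null}}\le r\}$ — where $\eta_{\textup{null}}\equiv 1$ — we simply have $\tilde\psi^{(0)} = s\psi$ and $\nabla_L\tilde\psi^{(0)} = s\,\nabla_L\psi + (\nabla_L s)\psi$. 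Since $\nabla_L s = M^{-1}\D$ is bounded on $\Sigma_\tau'$, the two terms $\int_{\Sigma_\tau'}|\nabla_L\psi|^2$ and $\int_{\Sigma_\tau'}M^{-2}s^{-2}|\psi|^2$ combine to control $\int_{\Sigma_\tau'}s^{-2}|\nabla_L\tilde\psi^{(0)}|^2$; and conversely the assumption $E^{p,(0)}[\tilde\psi](\tau)<\infty$ guarantees the relevant boundary and bulk terms are finite so that the integration by parts underlying \eqref{Hardy} is legitimate.

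The key step is then to feed $f(r) = \psi(\tau,r,\theta,\phi)$ directly into the one-dimensional identity in the proof of Lemma \ref{lem:Hardy} with exponent $q = 2 > 0$, integrate over $\mathbb S^2$ and over $r\in[R_{\textup{null}},\infty)$, and use $R_{\textup{null}}\ge 10M$ to convert $r^2\,dvol_{\mathbb S^2}\,dr$ and $\D^{-2}r^2\,dvol_{\mathbb S^2}\,dr$ into $dvol_3$ up to the harmless constants $4/5$ and $5/4$. This yields
\begin{equation*}
\int_{\Sigma_\tau'} M^{-2}s^{-2}|\psi|^2\,dvol_3 \lesssim \int_{\mathbb S^2(\tau,R_{\textup{null}})} s^{-2}|\psi|^2\,dvol_{\mathbb S^2} + \int_{\Sigma_\tau'}|\nabla_L\psi|^2\,dvol_3,
\end{equation*}
with explicit constants $2q^{-1}=1$ and $4q^{-2}\cdot\tfrac{5}{4}<\tfrac{5}{4}$ times the rescaling — tightening these gives the stated factor $\tfrac18$. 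The remaining issue is the boundary term on $\mathbb S^2(\tau,R_{\textup{null}})$: since $\tilde\psi = \eta_{\textup{null}}(r)\psi$ vanishes at $r=R_{\textup{null}}$ by construction of $\eta_{\textup{null}}$, but $\psi$ itself need not, one cannot simply drop it. The fix is to run the Hardy estimate on $\tilde\psi$ rather than $\psi$ (where the $r=R_{\textup{null}}$ term is zero), and then observe that on $\Sigma_\tau'$ the cut-off is identically $1$, so $\tilde\psi$ and $\psi$ and their $L$-derivatives agree there; the finiteness hypothesis $E^{p,(0)}[\tilde\psi](\tau)<\infty$ is exactly what licenses this, and it also guarantees the $r\to\infty$ boundary contribution vanishes in the limit.

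The main obstacle I anticipate is purely bookkeeping: keeping the constants sharp enough to land on $\tfrac18$ rather than some larger numerical factor, and making sure the error term from differentiating the weight $s$ in $\nabla_L\tilde\psi^{(0)} = s\nabla_L\psi + M^{-1}\D\,\psi$ is absorbed correctly — one wants the $M^{-1}\D\,\psi$ piece to be reabsorbed into the left-hand side, which is possible precisely because $\D/s \to 0$ and $R_{\textup{null}}$ is large, so the coefficient it contributes is a small fraction. No genuinely hard analytic step is involved; this is a direct corollary in the sense the text advertises, and the proof should be two or three lines once the substitution $q=2$, $\psi\rightsquigarrow\tilde\psi^{(0)}$ is made and the $r=R_{\textup{null}}$ vanishing is noted.
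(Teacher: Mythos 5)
Your plan does not work as written, and the obstruction is in the choice of Hardy inequality. The weights in the statement, $M^{-2}s^{-2}|\psi|^2$ against $|\nabla_L\psi|^2$, correspond in the one--dimensional identity to the exponent $q=-1$ (zeroth--order weight $s^{-q-3}=s^{-2}$, gradient weight $s^{-q-1}=s^{0}$), i.e.\ to the \emph{reverse} Hardy inequality \eqref{Hardy_reverse}; with $q=-1$ its gradient coefficient $8|q|^{-2}=8$ is exactly the source of the factor $\tfrac18$. No positive $q$ can produce these weights: your choice $q=2$ in \eqref{Hardy} yields $M^{-2}\int_{\Sigma_\tau'} s^{-5}|\psi|^2$ controlled by the boundary term at $R_{\textup{null}}$ plus $\tfrac{8}{4}\int_{\Sigma_\tau'} s^{-3}|\nabla_L\psi|^2$, which is neither the claimed intermediate inequality nor anything that implies the corollary. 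Moreover, once you pass to negative $q$ the roles of the boundary terms flip: the term at $r=R_{\textup{null}}$ sits on the good side, and the term you must kill is the $\liminf$ at $r\to\infty$. That is precisely where the hypothesis $E^{p,(0)}[\tilde{\psi}](\tau)<\infty$ with $p>0$ enters: by Corollary \ref{cor:E^p_base}, $\sup_{r\ge R_{\textup{null}}}\int_{\mathbb{S}^2(\tau,r)} s^{p-1}|\tilde{\psi}^{(0)}|^2\,dvol_{\mathbb{S}^2}$ is finite, and since $s|\psi|^2=s^{-p}\cdot s^{p-1}|\tilde{\psi}^{(0)}|^2$ for $r\ge R_{\textup{null}}+M$, one gets $\liminf_{r\to\infty}\int_{\mathbb{S}^2(\tau,r)} s|\psi|^2\,dvol_{\mathbb{S}^2}=0$, and \eqref{Hardy_reverse} with $q=-1$ applied to $\psi$ itself gives the stated bound. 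This is the paper's (two--line) argument; your proposal never invokes \eqref{Hardy_reverse} and so never reaches the correct weight structure.

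A second, smaller but genuine error: you assert both that $\tilde{\psi}$ vanishes at $r=R_{\textup{null}}$ and that $\tilde{\psi}=\psi$ on all of $\Sigma_\tau'$. These are incompatible. By \eqref{cutoff_rp}, $\eta_{\textup{null}}$ interpolates between $0$ at $r=R_{\textup{null}}$ and $1$ at $r=R_{\textup{null}}+M$, so $\tilde{\psi}$ agrees with $\psi$ only for $r\ge R_{\textup{null}}+M$, while $\Sigma_\tau'$ begins at $r=R_{\textup{null}}$. Since the corollary is an estimate for $\psi$ (not $\tilde{\psi}$) on all of $\Sigma_\tau'$, you cannot transfer a Hardy estimate for $\tilde{\psi}$ to $\psi$ by this identification; in the correct argument the truncated function is used only through Corollary \ref{cor:E^p_base} to verify the decay at infinity, and the Hardy inequality itself is run on $\psi$.
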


\begin{proof}
By \eqref{Hardy_reverse} with $q=-1$, it's sufficient to check 

\begin{align*}
\liminf_{r\to\infty} \int_{\mathbb{S}^2(\tau,r)} s|\psi|^2 dovl_{\mathbb{S}^2}=0,
\end{align*}
which follows from the assumption $E^{p,(0)}[\tilde{\psi}](\tau)<\infty$ and Corollary \ref{cor:E^p_base}.
\end{proof}

\begin{corollary}

Let $\psi$ be a smooth function and $\tilde{\psi}=\eta_{\textup{null}}(r)\psi$ with the cut-off function $\eta_{\textup{null}}(r)$ defined in \eqref{cutoff_rp}. We have 

\begin{equation}\label{energy_by_rp}
\begin{split}
F[\psi](\tau)\lesssim_{R_{\textup{null}}} &B[\mathcal{K}^{\leq 1}\psi](\tau)+E^{0,(0)}_{L,\nablas}[\tilde{\psi}](\tau),\\
F[\psi](\tau)\lesssim_{R_{\textup{null}}} &\bar{B}[\psi](\tau)+E^{0,(0)}_{L,\nablas}[\tilde{\psi}](\tau),\\
\end{split}
\end{equation}

\end{corollary}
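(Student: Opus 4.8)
The plan is to reduce the estimate \eqref{energy_by_rp} to two ingredients: control of $F[\psi]$ on the interior region $\Sigma_\tau''$ by the degenerate (or non-degenerate) Morawetz bulk $B[\psi]$ (or $\bar B[\psi]$), and control of $F[\psi]$ on the far region $\Sigma_\tau'$ by the $r^p$-energy $E^{0,(0)}_{L,\nablas}[\tilde\psi]$. First I would split the $F$-norm along the two pieces of $\Sigma_\tau$ according to its definition \eqref{def:Fenergy}: on $\Sigma_\tau''$ the norm is $\int |\partial\Phi|^2+M^{-2}s^{-2}|\Phi|^2$, while on $\Sigma_\tau'$ it is $\int |\nabla_L\Phi|^2+|\nablas\Phi|^2+M^{-2}s^{-2}|\Phi|^2$.

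For the far region I would observe that $\eta_{\textup{null}}(r)\equiv 1$ for $r\geq R_{\textup{null}}+M$, so $\tilde\psi=\psi$ there and $\psi^{(0)}=s\psi$. The $E^{0,(0)}_L$-energy controls $\int_{\Sigma_\tau'} s^{-2}|\nabla_L(s\psi)|^2$, which after expanding $\nabla_L(s\psi)=s\nabla_L\psi+(\nabla_L s)\psi$ and using $\nabla_L s=M^{-1}\D$ bounds $\int_{\Sigma_\tau'}|\nabla_L\psi|^2$ modulo a zeroth-order term of the form $\int_{\Sigma_\tau'}M^{-2}s^{-2}|\psi|^2$; on the transition annulus $R_{\textup{null}}\leq r\leq R_{\textup{null}}+M$ one picks up finitely many $\lesssim_{R_{\textup{null}}}$ terms involving $|\partial\psi|^2$ and $|\psi|^2$ there, which are all dominated by $B[\mathcal K^{\le 1}\psi](\tau)$ (or $\bar B[\psi](\tau)$) since $s\approx 1$ makes $B$ non-degenerate on that compact set. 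The zeroth-order term $\int_{\Sigma_\tau'}M^{-2}s^{-2}|\psi|^2$ is handled by Corollary \ref{cor:F_base}, which gives it $\lesssim \int_{\Sigma_\tau'}|\nabla_L\psi|^2$ precisely under the hypothesis $E^{p,(0)}[\tilde\psi](\tau)<\infty$ (here with $p=0$), closing the loop. The $|\nablas\psi|^2$ piece on $\Sigma_\tau'$ is directly the $E^{0,(0)}_{\nablas}$-energy up to the same bookkeeping.

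For the interior region I would use that $B[\psi](\tau)$ restricted to $r\leq R_{\textup{null}}$ is, away from the photon sphere $r=3M$, comparable to $\int_{\Sigma_\tau''}|\partial\psi|^2+M^{-2}s^{-2}|\psi|^2$ with constants depending on $R_{\textup{null}}$, since the weights $s^{-2}$, $(1-3s^{-1})^2$, $(1-2s^{-1})^2$ are all bounded above and below on any compact subset of $\{2M<r<3M\}\cup\{3M<r\leq R_{\textup{null}}\}$; the only failure is the degeneracy factor $(1-3s^{-1})^2$ on $|\partial\psi|^2$ at $r=3M$. To recover the lost derivative there I would commute with the Killing fields in $\mathcal K$: near $r=3M$ one has $r>2M$, so the missing transversal direction is recoverable, and the first displayed inequality in \eqref{energy_by_rp} carries $B[\mathcal K^{\le 1}\psi]$ rather than $B[\psi]$ precisely to absorb this; for the second inequality the input is the already non-degenerate $\bar B[\psi](\tau)=\int_{\Sigma_\tau}s^{-2}(|\partial\psi|^2+M^{-2}s^{-2}|\psi|^2)$, which on $r\leq R_{\textup{null}}$ immediately dominates $\int_{\Sigma_\tau''}|\partial\psi|^2+M^{-2}s^{-2}|\psi|^2$ up to the $R_{\textup{null}}$-dependent constant. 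Finally $\int_{\Sigma_\tau''}|\nabla_{\underline L'}\psi|^2$ and the other components of $|\partial\psi|^2$ near the horizon are handled the same way since on $[2M,R_{\textup{null}}]$ all weights are comparable to $1$.

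I expect the only mildly delicate point to be the matching across $r=R_{\textup{null}}$: one must check that the transition-region contributions generated by the cutoff $\eta_{\textup{null}}$ and by the rewriting $\nabla_L\psi\leftrightarrow\nabla_L\psi^{(0)}$ are genuinely controlled by the right-hand sides, but since that region is compact and bounded away from both $r=2M$ and $r=3M$ and from null infinity, every such term is a fixed-order $\lesssim_{R_{\textup{null}}}$ bound by $B$ or $\bar B$. The photon-sphere degeneracy is the conceptual obstacle, and it is resolved exactly by the Killing-commuted $B[\mathcal K^{\le 1}\psi]$ on one line and by using the non-degenerate $\bar B$ on the other; no new estimate is needed beyond Corollary \ref{cor:F_base} and the definitions of the energies.
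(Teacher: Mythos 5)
Your splitting of $F[\psi]$, the treatment of the interior region and of the transition annulus, and the product-rule reduction of $|\nabla_L\psi|^2$ to $s^{-2}|\nabla_L(s\psi)|^2$ plus a zeroth-order term all match the intended argument. The gap is in how you close the far-region estimate: you bound the zeroth-order term $\int_{\Sigma_\tau'}M^{-2}s^{-2}|\psi|^2$ by $\int_{\Sigma_\tau'}|\nabla_L\psi|^2$ via Corollary \ref{cor:F_base}, but $\int_{\Sigma_\tau'}|\nabla_L\psi|^2$ is itself part of $F[\psi]$, i.e.\ one of the quantities being estimated, not a term on the right-hand side of \eqref{energy_by_rp}. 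Combined with your other inequality $\int_{\Sigma_\tau'}|\nabla_L\psi|^2\lesssim E^{0,(0)}_L[\tilde\psi](\tau)+\int_{\Sigma_\tau'}M^{-2}s^{-2}|\psi|^2$, the two bounds only yield a relation of the form $X\lesssim E+X$ with no small constant available for absorption, so nothing closes. In addition, Corollary \ref{cor:F_base} is stated, and its proof via \eqref{Hardy_reverse} genuinely requires, $E^{p,(0)}[\tilde\psi](\tau)<\infty$ for some $p>0$: that hypothesis is what forces $\liminf_{r\to\infty}\int_{\mathbb{S}^2(\tau,r)}s|\psi|^2\,dvol_{\mathbb{S}^2}=0$, and with $p=0$ (as you invoke it) this need not hold.

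The repair is to control the zeroth-order term directly from the $r^p$-energy rather than from the gradient term: apply Corollary \ref{cor:E^p_base} with $p=0$, i.e.\ the Hardy inequality \eqref{Hardy} applied to $\tilde\psi^{(0)}=s\tilde\psi$, whose boundary contribution at $r=R_{\textup{null}}$ vanishes because of the cutoff $\eta_{\textup{null}}$. This gives $\int_{\Sigma_\tau'}M^{-2}s^{-2}|\tilde\psi|^2\lesssim E^{0,(0)}_L[\tilde\psi](\tau)$ with no finiteness hypothesis and no circularity; this is exactly how the paper proceeds. With that substitution the rest of your argument goes through: the zeroth-order term and then $|\nabla_L\psi|^2$ on $\Sigma_\tau'$ are bounded by $E^{0,(0)}_L[\tilde\psi]$, the term $|\nablas\psi|^2$ there is $E^{0,(0)}_{\nablas}[\tilde\psi]$ up to annulus contributions, and on $\Sigma_\tau''$ one uses $\bar B[\psi]$, respectively $B[\K^{\leq 1}\psi]$, where near $r=3M$ the non-degenerate radial term of $B[\psi]$ gives $|\nabla_r\psi|^2$ and the zeroth-order term of $B[\K\psi]$ applied to $MT\psi$ and $\Omega_i\psi$ supplies the derivatives killed by the factor $(1-3s^{-1})^2$.
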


\begin{proof}

Clearly as long as $r$ is bounded, the integrand on the left hand side is controlled by the one on the right hand side. Also, from Corollary \ref{cor:E^p_base} with $p=0$, we can control the base term from $E^{0,(0)}[\tilde{\psi}](\tau)$.  The only term on the left which is not controlled for large $r$ is $|\nabla_L\psi|^2$, which can be controlled by $M^{-2}s^{-2}|\psi|^2$ and $s^{-2}|\nabla_L (s\psi)|^2$.

\end{proof}

\section{linearized Gravity under Harmonic Gauge}\label{sec:4}

Let $h_{ab}$ be an \textbf{even} tensor in the Schwarzschild spacetime satisfying \eqref{HG} and \eqref{Main_equation}. We start by decomposing $h$ into components and use \eqref{Main_equation} to obtain red-shift and $r^p$ estimate near horizon and null infinity respectively. The result is Proposition \ref{pro:metric_far}. Next, in sebsection \ref{subsec:to_RW} the mode $h_{\ell\geq 2}$ is considered. We perform a gauge transformation to the Regge-Wheeler gauge $h^{\RW}$, which is governed by the Zerilli quantity $\psi_Z$ satisfying the Zerilli equation \eqref{equ:psi}. The mode $h_{\ell}=1$ equals a deformation tensor $-{}^W\pi$ and the relationship between $h_{\ell=1}$ and $W$ is discussed in subsection \ref{subsec:ell=1}.

\subsection{decomposition of $h$}\label{subsec:decomp_h}

The main result of this subsection is:

\begin{proposition}\label{pro:metric_far}
Let $h_{ab}$ be an \textbf{even} solution of \eqref{Main_equation}. There exist $R_{\h}\geq 10M$ and $r_{rs,\h}\in (2M,r_{rs}^+)$ such that the following statement holds. Suppose $R_{\textup{null}}\geq R_{\h}$ then for any $p\in [\delta,2-\delta]$ and $\tau_2\geq \tau_1$,

\begin{equation}\label{ineq:metric_far}
\begin{split}
&F[h](\tau_2)+E^{p,(0)}_L[\tilde{h}](\tau_2)+M^{-1}\int_{\tau_1}^{\tau_2} \bar{B}[h](\tau)+E^{p-1,(0)}_{L,\nablas}[\tilde{h}](\tau) d\tau\\
\lesssim_{R_{\textup{null}}} &F[h](\tau_1)+E^{p,(0)}_L[\tilde{h}](\tau_1)+M^{-3} \int_{D(\tau_1,\tau_2)\cap \{r_{rs,\h}\leq r\leq R_{\textup{nul}}+M\}} |(M\partial)^{\leq 1}h|^2  dvol.
\end{split}
\end{equation}
Here $\tilde{h}=\eta_{\textup{null}}h$ and $\eta_{\textup{null}}(r)$ is the cut-off function defined in \eqref{cutoff_rp}.
\end{proposition}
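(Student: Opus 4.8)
The plan is to run the vector field method directly on the Lichnerowicz equation \eqref{Main_equation}, which after componentwise decomposition becomes a coupled system of wave equations (with zeroth-order coupling from the curvature term $R_{a\ b}^{\ c\ d}$) on the scalars, $\mathcal{L}(-1)$-sections and $\mathcal{L}(-2)$-sections appearing in the decomposition of $h$. The point is that when we measure $h$ through its components with respect to the frame $\partial$ (as in the Remark in subsection 2.4), the curvature coupling terms are all of the schematic form $M^{-2}s^{-\alpha}$ times lower-order components of $h$ with $\alpha\geq 2$; that is, they behave like a source $G$ with $|G|\lesssim M^{-2}s^{-2}|h|+\dots$. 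So the strategy is: (i) derive, from \eqref{Main_equation}, that each component $h(X_1,X_2)$ with $X_1,X_2\in\partial$ satisfies $\Box\big(h(X_1,X_2)\big)=G$ with $G$ controlled pointwise by $M^{-2}s^{-2}|(M\partial)^{\leq 1}h|$ supported where it matters; (ii) plug into the standard vector-field estimates (red-shift near $\mathcal H^+$, $T$-energy, $r^p$-hierarchy near $\mathcal I^+$) recorded in Appendix \ref{sec:wave equation}; (iii) absorb the good parts of $G$ (those decaying faster than the borderline rate, i.e.\ with genuine $s^{-2-}$ smallness away from a bounded region) into the left-hand side, leaving only the contribution from a bounded $r$-region, which is exactly the last term on the right of \eqref{ineq:metric_far}.

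More concretely, here are the steps in order. First I would fix the component decomposition $h_{\alpha\beta},h_{\alpha A},\widehat h_{AB},\slashed{\mathrm{tr}}\,h$ and write out \eqref{Main_equation} componentwise; the commutators $[\Box,\nabla_{X}]$ for $X\in\partial$ and the curvature terms produce, after using the Schwarzschild curvature values, a system $\Box \Phi_I = \sum_J c_{IJ}(r)\,\Phi_J + (\text{first-order-in-}\partial\text{ terms with }r^{-1}\text{ weights})$ where each $c_{IJ}(r)=O(M^{-2}s^{-3})$ or $O(M^{-2}s^{-2})$; the key structural fact to record is that the slowest-decaying coefficient is $\lesssim M^{-2}s^{-2}$ and is supported (in the sense of dominating) in bounded $r$. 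Second, for the near-horizon region I apply the red-shift estimate \eqref{g_redshift} with multiplier $Y(\sigma)$, choosing $\sigma$ large and adding a multiple of $T$, to obtain control of $F[h]$ localized to $r\leq r_{rs,\h}$ with the source error confined to $r\in[2M, r_{rs}^+]$; here the relevant $r_{rs,\h}\in(2M,r_{rs}^+)$ and the smallness of the coupling near $r=2M$ is not needed — the red-shift handles it, and the bulk error near the horizon is just absorbed into the $D(\tau_1,\tau_2)\cap\{r_{rs,\h}\le r\le R_{\mathrm{null}}+M\}$ integral on the right. Third, for the far region $r\geq R_{\textup{null}}$ I use the $r^p$-estimates on the truncated components $\widetilde h$; since $p\in[\delta,2-\delta]$ the hierarchy is exactly the one in the statement, the source terms from the curvature coupling there carry weights $s^{p-1}\cdot s^{-2}$ relative to the natural $r^p$-flux and so are borderline-integrable — here I would use Corollary \ref{cor:E^p_base} and the Hardy inequalities of Lemma \ref{lem:Hardy} to recover the zeroth-order terms and close. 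Fourth, in the intermediate region (photon sphere), I apply a Morawetz multiplier (as in subsection \ref{subsection:Morawetz}); the degeneracy at $r=3M$ is absorbed into $\bar B[h]$ on the left by commuting with $\mathcal K^{\leq 1}$ and using \eqref{energy_by_rp}, while the zeroth-order coupling terms, having an extra $s^{-2}$, are dominated by the bulk $\bar B[h]$ up to a bounded-$r$ remainder. Finally I patch the three regions together with cutoffs; the cutoff derivatives are supported in bounded $r$-annuli and feed into the right-hand-side bulk term, and I choose $R_{\h}$ large enough that $R_{\textup{null}}\geq R_{\h}$ makes the $r^p$-region estimates valid.

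The main obstacle I expect is the bookkeeping of the coupled system: one must verify that \emph{every} curvature-coupling term, in \emph{every} component equation, really does come with the weight $s^{-2}$ or better (so that it is at worst borderline against the Morawetz/$r^p$ bulk norms $\bar B$ and $E^{p-1,(0)}_{L,\nablas}$), and that the terms which are \emph{not} better than borderline are localized to bounded $r$ so they can go into the spacetime integral on the right of \eqref{ineq:metric_far}. A secondary subtlety is that $h(X_1,X_2)$ for $X\in\partial$ is not literally a solution of a scalar wave equation — the frame $\partial$ is not parallel — so the connection coefficients of $\partial$ generate genuine first-order terms; these must be shown to be lower order (they carry $r^{-1}$ and are again either borderline-good or bounded-$r$-supported), which is where the specific choice of $\partial=\{T,r^{-1}\Omega_i,\underline L'\}$ with its $r$-weights pays off. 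Once the structural claim ``$\Box h_{\text{component}} = (\text{borderline-or-bounded-}r)$'' is pinned down, steps two through five are routine applications of the estimates already recorded; I would therefore spend most of the write-up on step one and state steps two–five by citing the appendix.
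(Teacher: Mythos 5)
Your overall architecture (componentwise decomposition, red-shift near $\mathcal H^+$, $r^p$ hierarchy near $\mathcal I^+$, dumping the intermediate region into the right-hand bulk term) matches the paper, but the structural claim your whole argument rests on is false, and this is a genuine gap. The zeroth- and first-order coupling terms produced by the curvature and by the non-parallel frame are \emph{not} of size $M^{-2}s^{-2}$ ``or better, with the non-better part supported in bounded $r$'': in the paper's system $\Box\h=A\h+B\h+U\h$ the leading matrices are $A_{\textup{main}}\sim 2/r^2$ and $B_{\textup{main}}\sim (2/r)\,\done,\donest,\dtwo,\dtwost$ at \emph{all} radii, i.e.\ exactly comparable to the angular part $\s\Delta\sim r^{-2}\ell(\ell+1)$ of the wave operator, hence borderline against both the $r^p$ bulk $E^{p-1,(0)}_{L,\nablas}$ and the energy fluxes, globally in $r$. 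Such terms cannot be absorbed by smallness or pushed into the bounded-$r$ remainder; they must be carried inside the currents with a sign. The paper does this by choosing the \emph{specific} components $\h_1,\dots,\h_7$ (not arbitrary frame components $h(X_1,X_2)$, $X_i\in\partial$) precisely so that the borderline coupling $A_{\textup{main}}+B_{\textup{main}}$ is self-adjoint, and then proving the key positivity statement (Lemma \ref{lem:metric_coefficient_positivity}, via an explicit eigenvalue computation mode by mode: eigenvalues $\Lambda,(\ell-1)(\ell-2),(\ell+2)(\ell+3),\ell(\ell-1),(\ell+1)(\ell+2)$) that $|\nablas\h|^2+\h\cdot(A_{\textup{main}}+B_{\textup{main}})\cdot\h\geq_s 0$, together with the scaling identity $[\nabla_r,r^2(A_{\textup{main}}+B_{\textup{main}})]=0$. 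This is what makes the boundary terms at null infinity and the $r^p$ bulk non-negative; only the genuinely subleading pieces $A_{\textup{sub}},B_{\textup{sub}},U$ (which gain an extra power of $s^{-1}$) are absorbed by taking $R_{\h}$ large, and even then, because the positivity is only semi-definite for $\ell=1,2$ (zero eigenvalues), the missing angular control has to be recovered through the Hardy inequality (Corollary \ref{cor:E^p_base}), not Poincar\'e. Your proposal never identifies a decomposition with this self-adjointness/positivity structure, and for generic frame components it simply is not available, so step (iii) of your plan would fail at large $r$.

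Two smaller discrepancies. First, the components adapted to the far region are not regular at the horizon; the paper switches near $\mathcal H^+$ to $\h_6'=\D^{-2}(\h_6-\h_1)$, $\h_7'=\D^{-1}(\h_7-\h_4)$ and checks that the coefficient of $\nabla_{\underline L'}$ in their equations has the favorable sign for the red-shift estimate — some analogous regularity/sign check is unavoidable and is missing from your sketch. Second, no Morawetz estimate and no commutation with $\mathcal K^{\leq 1}$ is needed (or used) at this stage: the photon-sphere region is handled trivially by placing the entire intermediate-$r$ bulk on the right-hand side of \eqref{ineq:metric_far}, which is exactly why the proposition is stated with that term; your step four adds machinery the statement is designed to avoid.
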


This will be proved using standard $r^p$ and red-shift estimate of Dafermos-Rodnianski to a suitable decomposition of $h$. From simplicity, we only prove the case $h=h_{\ell\geq 2}$. The lower modes $\ell=1$  and $\ell=0$ can be proved similarly. See the remark at the end of this subsection.\\

We split the \textbf{even} tensor $h_{ab}$ into seven components as following. There are four scalar functions

\begin{align*}
\h_1=&\frac{1}{2}\left( h_{tt}+\D^2h_{rr}\right)+\frac{M}{r}\slashed{tr}h,\\
\h_2=&\frac{1}{2}\slashed{tr}h,\\
\h_3=&\frac{1}{2}\left(\D^{-1} h_{tt}-\D h_{rr}\right),\\
\h_6=&\D h_{tr}+\frac{M}{r}\slashed{tr}h,\\
\end{align*}
two spherical 1-forms

\begin{align*}
\h_4=&\D h_{rA}dx^A,\\
\h_7=&h_{tA}dx^A,
\end{align*}
and one spherical symmetric traceless 2-tensor

\begin{align*}
\h_5=\frac{1}{\sqrt{2}}\hat{h}_{AB}dx^Adx^B.
\end{align*}

\begin{proposition}

Suppose $h_{ab}$ satisfies \eqref{HG} and \eqref{Main_equation}, then $\h=( \h_1,\dots, \h_7)$ satisfies

\begin{equation}\label{equation_far}
\Box \h=A\h+B\h+U\h.
\end{equation}
Here $A$ is a seven by seven matrix:

\begin{align*}
A=&\frac{2}{r^2}\left[
\begin{array}{ccccccc}
\left(1-6s^{-1}\right) & -1+8s^{-1}-10s^{-2} & -1+10s^{-1}-20s^{-2} & 0 & 0 & 0 & 0\\
-1 & 1-2s^{-1} & 1-4s^{-1} & 0 & 0 & 0 & 0\\
-1 & 1-2s^{-1} & 1-4s^{-1} & 0 & 0 & 0 & 0\\
0  &  0  & 0 & 5/2-9s^{-1} & 0 & 0 & 0\\
0  &  0  & 0 & 0 & 1 & 0 & 0\\
-4s^{-1}  &  4s^{-1}-6s^{-2}  & 6s^{-1}-16s^{-2} & 0 & 0 & 1-2s^{-1} & 0\\
0  &  0  & 0 & -2s^{-1} & 0 & 0 & 1/2-3s^{-1}
\end{array}
\right].
\end{align*}
$B$ is a first order angular operator:

\begin{align*}
B=&\frac{2}{r}\left[
\begin{array}{ccccccc}
0  &  0  & 0 & (1-6s^{-1})\done & 0 & 0 & 0\\
0  &  0  & 0 & -\done & 0 & 0 & 0\\
0  &  0  & 0 & -\done & 0 & 0 & 0\\
\donest  &  -\donest  & (-1+3s^{-1})\donest & 0 & \sqrt{2}(1-3s^{-1})\dtwo & 0 & 0\\
0  &  0  & 0 & \sqrt{2}\dtwost & 0 & 0 & 0\\
0  &  0  & 0 & -4s^{-1}\done & 0 & 0 & (1-s)\done\\
0  &  -2s^{-1}\donest  & s^{-1}\donest & 0 & -\sqrt{2}s^{-1}\dtwo & \donest & 0
\end{array}
\right].
\end{align*}
And $U$ is a first order derivative along $L$:

\begin{align*}
U=&\frac{2M}{r^3}\D^{-1}\left[
\begin{array}{ccccccc}
0  &  0  & 0 & 0 & 0 & 0 & 0\\
0  &  0  & 0 & 0 & 0 & 0 & 0\\
0  &  0  & 0 & 0 & 0 & 0 & 0\\
0  &  0  & 0 & 0 & 0 & 0 & 0\\
0  &  0  & 0 & 0 & 0 & 0 & 0\\
-2 \nabla_L\cdot r  &  0  & 0 & 0 & 0 & 2 \nabla_L\cdot r & 0\\
0  &  0  & 0 & - \nabla_L\cdot r & 0 & 0 &  \nabla_L\cdot r
\end{array}
\right].
\end{align*}

\end{proposition}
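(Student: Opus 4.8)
The plan is a direct computation: substitute the Lichnerowicz equation \eqref{Main_equation} together with the harmonic gauge condition \eqref{HG} into each of the seven scalar/tensorial combinations $\h_1,\dots,\h_7$ and read off the resulting system. First I would write out \eqref{Main_equation} componentwise in the $(t,r,\theta,\phi)$ coordinate, using the even/odd decomposition of Subsection \ref{subsec:evenodd} to split $h_{ab}$ into the scalar pieces $h_{tt},h_{rr},h_{tr},\s{tr}h$, the spherical one-forms $h_{tA},h_{rA}$, and the traceless spherical two-tensor $\hat h_{AB}$. This requires the explicit Schwarzschild Christoffel symbols and the components of the curvature tensor $R_{a\ b}^{\ c\ d}$ (all of which are standard and can be assembled from the metric given in Subsection \ref{subsec:Schwarzschild}), together with the expression \eqref{Box_horizon} or \eqref{Box decom} for $\Box$. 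The operators $\done,\donest,\dtwo,\dtwost$ from Subsection on the bundles $\mathcal{L}(-1),\mathcal{L}(-2)$ are used to package all angular derivatives, and the commutation relations \eqref{comm_r},\eqref{comm_t} are used to move $\Box$ (on scalars/sections) past these operators when forming the combinations $\h_i$. The key structural point to verify is that the zeroth-order couplings organize into the constant-in-$s$-plus-$O(s^{-1})$ matrix $A$, the first-order angular couplings into $B/r$, and the genuinely first-order (non-symmetric, $\nabla_L$) couplings only into the last two rows, giving $U$.

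The nontrivial organizational step is the use of the gauge condition. The raw componentwise form of \eqref{Main_equation} will contain terms like $\nabla_r\nabla_t h_{\cdot\cdot}$ and $\nabla_r^2 h_{\cdot\cdot}$ that do not fit the stated schematic form; these must be eliminated by differentiating \eqref{HG} and substituting. Concretely, $\Gamma_b[h]=0$ gives, for each value of $b\in\{t,r,A\}$, a relation expressing one combination of first derivatives of $h$ in terms of others; taking $\nabla^a$ again (or $\nabla_t,\nabla_r$ of it) converts unwanted second derivatives in the Lichnerowicz equation into lower-order terms. I would track carefully that, after this substitution, the principal part of the equation for each $\h_i$ is exactly $\Box\h_i$ with no remaining cross second-derivatives, and that the coefficient of the surviving $\nabla_L$ terms reduces to the $2M/r^3$ factors appearing in $U$ (the $\D^{-1}$ there reflecting that $\nabla_L\cdot r$ is the natural regular operator near $\mathcal H^+$).

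The main obstacle I anticipate is purely bookkeeping: there are seven components, each equation has many terms, and one must simultaneously (i) pick the right linear combinations $\h_1,\dots,\h_7$ — note the somewhat unusual shifts by $\tfrac{M}{r}\s{tr}h$ in $\h_1$ and $\h_6$, which are precisely what diagonalizes/triangularizes the bad terms — and (ii) keep the normalizations ($\tfrac1{\sqrt2}$ on $\h_5$, the $\D$-weights on $\h_4,\h_6,\h_7$) so that $A,B$ come out in the stated symmetric-looking form. A sensible way to reduce error is to first do the computation mode by mode in spherical harmonics (using Lemma \ref{lem:d_mode} to replace $\done,\dtwo$ by multiplication by $-\Lambda r^{-2}$, etc.), match against known Regge–Wheeler/Zerilli reductions as a consistency check on the $\ell\geq 2$ rows, and only then repackage into the bundle-valued form \eqref{equation_far}. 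Finally I would remark that the lower modes $\ell=1$ (where some of $\h_4,\h_5,\h_7$ degenerate) and $\ell=0$ are obtained by restriction, since the operators $\done,\dtwo$ annihilate the absent harmonics, so no separate computation is needed.
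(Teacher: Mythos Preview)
Your proposal is correct and essentially matches the paper's approach: the derivation is carried out in Appendix~\ref{sec:equation_metirc} via mode decomposition (following Berndtson), writing out the seven Lichnerowicz components (ME1)--(ME7) and the three gauge conditions (HG1)--(HG3) in terms of the Regge--Wheeler variables $H_0,H_1,H_2,K,h_0,h_1,G$, and then taking explicit linear combinations of these to obtain each ${}^j\Box_\ell\h_{i,m\ell}$, finally summing over modes via Lemma~\ref{lem:d_mode}. One small correction to your description: the gauge conditions are used \emph{undifferentiated}. Once the principal part of each (ME$k$) is organized as ${}^0\Box_\ell(\text{component})$, the leftover unwanted terms are first-order (e.g.\ $\partial_t H_1$, $\partial_r h_1$, and the product-rule terms from $\Box(\D H_j)$), not second-order; the first-order relations (HG1)--(HG3) are exactly what is needed to remove them, with no differentiation of \eqref{HG} required.
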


The derivation of this equation can be found in Appendix \ref{sec:equation_metirc}. The leading terms in $A$ and $B$ are self-adjoint and we denote them by $A_{\textup{main}}$ and $B_{\textup{main}}$ respectively. 

\begin{align*}
A_{\textup{main}}=&\frac{2}{r^2}\left[
\begin{array}{ccccccc}
1 & -1 & -1 & 0 & 0 & 0 & 0\\
-1 & 1 & 1 & 0 & 0 & 0 & 0\\
-1 & 1 & 1 & 0 & 0 & 0 & 0\\
0  &  0  & 0 & 5/2 & 0 & 0 & 0\\
0  &  0  & 0 & 0 & 1 & 0 & 0\\
0 & 0 & 0 & 0 & 0 & 1 & 0\\
0 & 0 & 0 & 0 & 0 & 0 & 1/2
\end{array}
\right].
\end{align*}

\begin{align*}
B_{\textup{main}}=&\frac{2}{r}\left[
\begin{array}{ccccccc}
0 & 0 & 0 & \done & 0 & 0 & 0\\
0 & 0 & 0 & -\done & 0 & 0 & 0\\
0 & 0 & 0 & -\done & 0 & 0 & 0\\
\donest  &  -\donest  & -\donest & 0 & \sqrt{2}\dtwo & 0 & 0\\
0  &  0  & 0 & \sqrt{2}\dtwost & 0 & 0 & 0\\
0 & 0 & 0 & 0 & 0 & 0 & \done\\
0 & 0 & 0 & 0 & 0 & \done & 0
\end{array}
\right].
\end{align*}
Denote the remaining terms as $A_{\textup{sub}}=s(A-A_{\textup{main}})$ and $B_{\textup{sub}}=s(B-B_{\textup{main}})$. They decay faster in $r$ as

\begin{equation}\label{metric_low}
\begin{split}
|A_{\textup{sub}}\h|^2\lesssim &M^{-4}s^{-4} |\h|^2,\\
|B_{\textup{sub}}\h|^2\lesssim &M^{-2}s^{-2}|\nablas\h|^2,\\
|U\h|^2\lesssim &M^{-2}s^{-6}|\nabla_L(s\h)|^2.
\end{split}
\end{equation}

\begin{lemma}\label{lem:metric_coefficient_positivity}

Suppose $\h=\h_{\ell\geq 2}$, then

\begin{equation*}
|\nablas \h|^2+\h\cdot (A_{\textup{main}}+B_{\textup{main}})\cdot\h\geq_s 0.
\end{equation*}

Further assume $\h=\h_{\ell\geq 3}$, then we have

\begin{equation*}
|\nablas \h|^2+\h\cdot (A_{\textup{main}}+B_{\textup{main}})\cdot\h\gtrsim_s |\nablas \h|^2.
\end{equation*}

\end{lemma}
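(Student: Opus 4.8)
The statement is a pointwise (on each orbit sphere, i.e.\ an $\leq_s$ statement) coercivity estimate for the quadratic form $Q(\h):=|\nablas\h|^2+\h\cdot(A_{\mathrm{main}}+B_{\mathrm{main}})\cdot\h$. My plan is to decouple the seven components into blocks according to the block structure already visible in $A_{\mathrm{main}}$ and $B_{\mathrm{main}}$, expand each block in spherical harmonics, and reduce everything to finitely many scalar inequalities in $\ell$ (via the substitutions $\done,\donest,\dtwo,\dtwost\leftrightarrow$ multiplication by $\lambda,\Lambda$ from Lemma~\ref{lem:d_mode}), which then hold for $\ell\ge 2$ (resp.\ strictly for $\ell\ge 3$) by elementary algebra.

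First I would read off the block decomposition. The coupled scalar/tensor pieces are: (i) the $3\times 3$ scalar block $(\h_1,\h_2,\h_3)$ together with the spherical one-form $\h_4$, which mix through $A_{\mathrm{main}}$ (the $3\times3$ rank-one matrix $\begin{smallmatrix}1&-1&-1\\-1&1&1\\-1&1&1\end{smallmatrix}$) and through the $B_{\mathrm{main}}$ entries $\done$ (acting on $\h_4$, in rows $1,2,3$) and $\donest$ (acting on $\h_1,\h_2,\h_3$, in row $4$); (ii) the pair $(\h_4,\h_5)$ coupled by $\dtwo,\dtwost$; (iii) the decoupled pair $(\h_6,\h_7)$ coupled only by $\done,\donest$; actually (i) and (ii) share $\h_4$, so the genuine block is $\{\h_1,\h_2,\h_3,\h_4,\h_5\}$ and a separate block $\{\h_6,\h_7\}$. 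For the $\{\h_6,\h_7\}$ block: projecting to a fixed mode $\ell$, writing $\h_7=\sum\xi_{m\ell}\nablas Y_{m\ell}$ etc., the contribution is $\tfrac{2}{r^2}\big(|\nablas\h_6|^2\cdot\tfrac{r^2}{2}\ \text{placeholder}\big)$ — more precisely the relevant scalar form is $|\nablas\h_6|^2+|\nablas\h_7|^2+\tfrac{2}{r^2}(\h_6\cdot\done\h_7 + \h_7\cdot\donest\h_6)$; using $\done\nablas Y=-\Lambda r^{-2}Y$ and $\donest Y=-\nablas Y$ and Poincaré $|\nablas\psi|^2=_s\Lambda r^{-2}|\psi|^2$ on mode $\ell$, this becomes $\Lambda r^{-2}(|\h_6|^2+|\h_7\text{-coeff}|^2\Lambda r^{-2}) - 2\Lambda r^{-2}(\dots)$ which is a manifestly nonnegative (a perfect square) for $\Lambda\ge 0$, and strictly positive for $\ell\ge 1$ — so this block is harmless and actually coercive already for $\ell\ge 2$. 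The serious work is the $5\times 5$ block.

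For the main $\{\h_1,\ldots,\h_5\}$ block I would: project onto a fixed $(\ell,m)$, write $\h_4=a(t,r)\nablas Y_{m\ell}$, $\h_5=\tfrac{1}{\sqrt2}b(t,r)\hat\nablas Y_{m\ell}$, and use Lemma~\ref{lem:d_mode} plus the normalizations $|\nablas Y|^2=_s\Lambda r^{-2}|Y|^2$, $|\hat\nablas Y|^2=_s 2\lambda\Lambda r^{-2}\cdot$(appropriate factor) to turn $Q$ into an explicit quadratic form $q_\ell(\h_1,\h_2,\h_3,a,b)$ with coefficients rational in $\lambda=\lambda(\ell)$. The matrix of $q_\ell$ is $r^{-2}$ times a fixed symmetric matrix $M(\lambda)$; the claim $Q\ge_s 0$ for $\ell\ge2$ becomes $M(\lambda)\succeq 0$ for $\lambda\ge 2$, and $Q\gtrsim_s|\nablas\h|^2$ for $\ell\ge 3$ becomes $M(\lambda)\succeq c(\lambda)\,\mathrm{diag}$ for $\lambda\ge 5$ with $c(\lambda)>0$ uniformly. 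I would verify $M(\lambda)\succeq 0$ by completing the square: the $(\h_1,\h_2,\h_3)$ part is the rank-one form $(\h_1-\h_2-\h_3)^2$ (up to the $\tfrac{2}{r^2}$), the $\dtwost,\dtwo$ coupling between $a$ and $b$ together with the $\tfrac{2}{r^2}\cdot\tfrac52|\h_4|^2$ diagonal term and the angular $|\nablas\h_5|^2$ term forms a positive pair once $\lambda\ge 2$ (the $5/2$ is exactly what makes this work at $\ell=2$), and the $\donest$ coupling of $(\h_1,\h_2,\h_3)$ to $a$ is absorbed by the square $(\h_1-\h_2-\h_3)^2$ together with the $|\nablas\h_4|^2\approx\Lambda r^{-2}a^2$ term — this is where $\Lambda\ge 6$, i.e.\ $\ell\ge2$, is used. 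The refinement for $\ell\ge3$ follows because every inequality used has slack that is bounded below when $\lambda\ge5$.

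The main obstacle is purely computational bookkeeping: pinning down the exact numerical coefficients in $q_\ell$ (there are several $\sqrt2$'s floating around from the $\h_5=\tfrac1{\sqrt2}\hat h$ convention and from $\dtwost,\dtwo$), and then finding the right completion-of-squares decomposition of $M(\lambda)$ — equivalently checking a couple of Sylvester/principal-minor positivity conditions that degenerate precisely at $\lambda=2$. I expect the statement is sharp at $\ell=2$ (hence only $\ge_s0$, not coercive there), which is consistent with the Zerilli operator's structure, so the argument must be tight at that endpoint and cannot afford any wasteful Cauchy–Schwarz there; at $\ell\ge3$ one has room to spare. I would organize the writeup as: (1) block split and disposal of $\{\h_6,\h_7\}$; (2) modal reduction of the $5\times5$ block to $M(\lambda)$; (3) explicit nonnegativity of $M(\lambda)$ for $\lambda\ge2$ by completing squares, with the $\ell\ge3$ bound read off from the resulting expression.
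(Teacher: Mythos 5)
Your plan is essentially the paper's proof: project onto a fixed spherical harmonic mode, use Lemma \ref{lem:d_mode} to convert $\done,\donest,\dtwo,\dtwost$ into multiplication by powers of $\lambda,\Lambda$, and verify positivity of the resulting finite symmetric matrix, which is indeed sharp at $\ell=2$. The paper carries out the same reduction but simply records the eigenvalues of $r^2(A_{\textup{main}}+B_{\textup{main}}+\mathcal{O})$, namely $\Lambda$ (with multiplicity three), $(\ell-1)(\ell-2)$, $(\ell+2)(\ell+3)$, $\ell(\ell-1)$, $(\ell+1)(\ell+2)$, instead of completing squares block by block, so the two arguments differ only in bookkeeping.
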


\begin{proof}

Through spherical harmonic decomposition, it's sufficient to prove the lemma for $\h$ supported on one spherical harmonic mode $Y_{m\ell}$. In this situation, angular operators become multiplication. From Lemma \ref{lem:d_mode}, in terms of the basis $Y_{m\ell}$, $r\Lambda^{-1/2}\nablas_A Y_{m\ell}$ and $r^2\lambda^{-1/2}\Lambda^{-1/2} \hat{\nablas}_{AB} Y_{m\ell}$, $B_{\textup{main}}$ is of the form

\begin{align*}
B_{\textup{main}}=&\frac{2}{r^2}\left[
\begin{array}{ccccccc}
0 & 0 & 0 & -\Lambda^{-1/2} & 0 & 0 & 0\\
0 & 0 & 0 & \Lambda^{-1/2} & 0 & 0 & 0\\
0 & 0 & 0 & \Lambda^{-1/2} & 0 & 0 & 0\\
-\Lambda^{-1/2}  &  \Lambda^{-1/2}  & \Lambda^{-1/2} & 0 & -(2\lambda)^{-1/2} & 0 & 0\\
0  &  0  & 0 & -(2\lambda)^{-1/2} & 0 & 0 & 0\\
0 & 0 & 0 & 0 & 0 & 0 & -\Lambda^{-1/2}\\
0 & 0 & 0 & 0 & 0 & -\Lambda^{-1/2} & 0
\end{array}
\right],
\end{align*}
Here $\Lambda$ and $\lambda$ are defined in \eqref{def:lambda_ell}. Similarly, $|\nablas\h|^2=_s\h\cdot \mathcal{O}\cdot\h$, where

\begin{align*}
\mathcal{O}=&\frac{1}{r^2}\left[
\begin{array}{ccccccc}
\Lambda & 0 & 0 & 0 & 0 & 0 & 0\\
0 & \Lambda & 0 & 0 & 0 & 0 & 0\\
0 & 0 & \Lambda & 0 & 0 & 0 & 0\\
0 & 0 & 0 & \Lambda-1 & 0 & 0 & 0\\
0 & 0 & 0 & 0 & \Lambda-4 & 0 & 0\\
0 & 0 & 0 & 0 & 0 & \Lambda & 0\\
0 & 0 & 0 & 0 & 0 & 0 & \Lambda-1
\end{array}
\right].
\end{align*}
The eigenvalues of $r^2(A_{\textup{main}}+B_{\textup{main}}+\mathcal{O})$ are

\begin{align*}
\{ \Lambda,\ (\ell-1)(\ell-2),\ (\ell+2)(\ell+3),\ \ell(\ell-1),\ (\ell+1)(\ell+2) \},
\end{align*}
where $\Lambda$ has multiplicity $3$. Hence $A_{\textup{main}}+B_{\textup{main}}+\mathcal{O}$ is non-negative definite for $\ell\geq 2$ and is comparable to $\mathcal{O}$ as $\ell\geq 3$.

\end{proof}

\begin{lemma}

Let $\h=\h_{\ell\geq 2}$ be a solution of \eqref{equation_far}. There exists $R_{\h}\geq 10M$ such that if $R_{\textup{null}}\geq R_{\h}$, for any $p\in[\delta,2-\delta]$ and $\tau_2\geq\tau_1\geq 0$ we have

\begin{equation}\label{metric_rp}
\begin{split}
&E^{p,(0)}_L[\tilde{\h}](\tau_2)+M^{-1}\int_{\tau_1}^{\tau_2}E^{p-1,(0)}_{L,\nablas}[\tilde{\h}](\tau)d\tau+M^{-1}\int_{D(\tau_1,\tau_2)} s^{-1-\delta}|\nabla_{\underline{L}}\tilde{\h}|^2\ dvol\\
\lesssim_{R_{\textup{null}}} &F[h](\tau_1)+E^{p,(0)}_L[\tilde{\h}](\tau_1)+M^{-3}\int_{D(\tau_1,\tau_2)\cap\{R_{\textup{null}}\leq r\leq R_{\textup{null}}+M\}} |(M\partial)^{\leq 1}\h|^2 dvol.
\end{split}
\end{equation}
Here $\tilde{\h}=\eta_{\textup{null}}\h$ and $\eta_{\textup{null}}(r)$ is the cut-off function defined in \eqref{cutoff_rp}.

\end{lemma}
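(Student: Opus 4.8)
The plan is to apply the $r^p$-hierarchy of Dafermos-Rodnianski to the system \eqref{equation_far} after cutting off to the far region with $\eta_{\textup{null}}$. First I would multiply \eqref{equation_far} by $\eta_{\textup{null}}$ and commute the cut-off through $\Box$; since $\eta_{\textup{null}}$ is a function of $r$ supported in $\{R_{\textup{null}}\leq r\leq R_{\textup{null}}+M\}$, the commutator $[\Box,\eta_{\textup{null}}]\h$ is a first-order operator in $\partial$ supported in that compact shell, which is exactly the source of the last term on the right-hand side of \eqref{metric_rp}. Thus $\tilde{\h}$ satisfies $\Box\tilde{\h}=A\tilde{\h}+B\tilde{\h}+U\tilde{\h}+(\text{compactly supported, controlled by }|(M\partial)^{\leq 1}\h|)$, and in $r\geq R_{\textup{null}}+M$ it satisfies the homogeneous system exactly.

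Next I would run the standard $r^p$-weighted energy estimate of Dafermos-Rodnianski for the scalar/bundle wave operator $\Box$ componentwise, applied to $\tilde{\h}^{(0)}=s\tilde{\h}$ with multiplier $s^{p-2}\nabla_L(\cdot)$ (this is the structure encoded in the energies $E^{p,(0)}_L$ and $E^{p,(0)}_{\nablas}$, cf.\ \eqref{def:rpL_energy}--\eqref{def:rps_energy}). The zeroth-order matrix $A$ and the first-order angular operator $B$ are of the right order in $r^{-1}$ that $A_{\textup{main}}+B_{\textup{main}}$ is precisely the curvature-type potential appearing after the substitution $\h\mapsto s\h$; the key algebraic input is Lemma \ref{lem:metric_coefficient_positivity}, which tells us that the quadratic form $|\nablas\tilde{\h}|^2+\tilde{\h}\cdot(A_{\textup{main}}+B_{\textup{main}})\cdot\tilde{\h}$ is nonnegative (and coercive for $\ell\geq 3$, though nonnegativity suffices here) — this guarantees that the potential term produced in the $r^p$ estimate has a good sign and can be absorbed, rather than destroying the hierarchy. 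The subleading pieces $A_{\textup{sub}}$, $B_{\textup{sub}}$, $U$ obey \eqref{metric_low}, so they decay by an extra power of $s^{-1}$ and are borrowed from the bulk terms $M^{-1}\int E^{p-1,(0)}_{L,\nablas}[\tilde{\h}]$ and the $\nabla_{\underline L}$-term for $R_{\textup{null}}$ chosen large enough — this is where the threshold $R_{\h}$ and the largeness condition $R_{\textup{null}}\geq R_{\h}$ are used. The $U\h$ term involves $\nabla_L(s\h)$ which is $\tilde{\h}^{(1)}$ up to the $\D^{-1}Ms^2$ weight, so it must be reabsorbed into $E^{p-1,(0)}_L$ after a Cauchy-Schwarz with a small constant; again this forces $R_{\textup{null}}$ large. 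The boundary term at $r=R_{\textup{null}}$ coming from integration by parts in the $r^p$ identity (the flux through $\Sigma_\tau''\cap\{r=R_{\textup{null}}\}$) is controlled by $F[h](\tau_1)$ together with the compactly supported commutator term, matching the right-hand side of \eqref{metric_rp}; I would use the Hardy inequalities of Lemma \ref{lem:Hardy} / Corollary \ref{cor:E^p_base} to recover the zeroth-order term in the far region.

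The main obstacle, and the point requiring real care, is the coupling in the system: the matrices $A$ and $B$ are not diagonal, so one cannot simply sum $r^p$ estimates for seven decoupled scalars — the cross terms $\h\cdot B_{\textup{main}}\cdot\h$ must be integrated by parts against $\nablas$ to expose the coercive form of Lemma \ref{lem:metric_coefficient_positivity}, and the bookkeeping of which derivative lands where (so that everything reduces to $|\nablas\tilde{\h}|^2+\tilde{\h}\cdot(A_{\textup{main}}+B_{\textup{main}})\cdot\tilde{\h}$ with a nonnegative leftover) is delicate, in particular because the $\ell=2$ mode sits exactly at the borderline of positivity. A secondary subtlety is that for $p$ near $2-\delta$ the $r^p$ estimate degenerates (the standard loss), which is why the range is $p\in[\delta,2-\delta]$ and the bulk term comes with weight $p-1$; I would handle the endpoint by the usual argument of running the estimate for a slightly smaller $p$ and interpolating, or by absorbing the borderline term using the $s^{-1-\delta}|\nabla_{\underline L}\tilde{\h}|^2$ spacetime term that appears on the left. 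Once \eqref{metric_rp} is established, combining it with the red-shift estimate near the horizon and the Morawetz/$T$-energy estimate in the bounded region (which give $F[h](\tau_2)$ and $\bar B[h]$ on the left and reduce everything to the compactly supported bulk term) yields Proposition \ref{pro:metric_far}; but that combination is the content of the subsequent lemmas, not of the statement at hand.
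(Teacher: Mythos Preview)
Your approach matches the paper's: define the $r^p$-weighted current $J^p_{\h,a}=s^{p-2}\D^{-1}\big(T_{ab}[s\tilde{\h}]L^b-\tfrac12(s\tilde{\h})\cdot(A_{\textup{main}}+B_{\textup{main}}+2M/r^3)\cdot(s\tilde{\h})\,L_a\big)$, use Lemma~\ref{lem:metric_coefficient_positivity} for the sign of the flux at $\mathcal{I}^+$ and of the bulk potential term, exploit self-adjointness of $A_{\textup{main}},B_{\textup{main}}$ so the cross terms combine cleanly, and absorb $A_{\textup{sub}},B_{\textup{sub}},U$ via \eqref{metric_low} for $R_{\textup{null}}$ large. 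Two points in your sketch need correction.

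First, the $s^{-1-\delta}|\nabla_{\underline L}\tilde{\h}|^2$ spacetime term on the left does \emph{not} come from the $r^p$-with-$L$-multiplier current; that current controls only $\nabla_L$ and $\nablas$. The paper obtains it by adding a second current $(J_{\h,\delta})_a=s^{-\delta}\big(T_{ab}[\tilde{\h}]-\tfrac12\tilde{\h}\cdot(A_{\textup{main}}+B_{\textup{main}})\cdot\tilde{\h}\,g_{ab}\big)T^b$, whose divergence produces $+\tfrac{\delta}{4}M^{-1}s^{-1-\delta}|\nabla_{\underline L}\tilde{\h}|^2$ together with a negative $|\nabla_L\tilde{\h}|^2$ term of the same weight; the latter is absorbed into the $r^p$ bulk since $s^{p-3}\geq s^{-3+\delta}>s^{-3-\delta}$, after possibly enlarging $R_{\h}$. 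Without this auxiliary current the lemma as stated does not follow.

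Second, your claim that for $\ell=2$ ``nonnegativity suffices'' is not quite right for the $E^{p-1,(0)}_{\nablas}$ portion of the conclusion. Nonnegativity of $|\nablas\h|^2+\h\cdot(A_{\textup{main}}+B_{\textup{main}})\cdot\h$ makes the relevant bulk contribution $\geq_s 0$, but it does not make it dominate $M^{-1}s^{p-3}|\nablas(s\tilde{\h})|^2$; coercivity only holds for $\ell\geq 3$. The paper recovers the $\ell=2$ piece separately via Corollary~\ref{cor:E^p_base}: since $|\nablas\h_{\ell=2}|^2\approx_s M^{-2}s^{-2}|\h_{\ell=2}|^2$, the Hardy inequality gives $E^{p-1,(0)}_{\nablas}[\tilde{\h}_{\ell=2}]\lesssim E^{p-1,(0)}_L[\tilde{\h}_{\ell=2}]$. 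Your invocation of Hardy (``to recover the zeroth-order term'') is for a different purpose and does not address this gap.
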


\begin{proof}

Consider the current

\begin{align*}
J^p_{\h,a}:=s^{p-2}\D^{-1}\left( T_{ab}[s\tilde{\h}]L^b-\frac{1}{2} (s\tilde{\h})\cdot (A_{\textup{main}}+B_{\textup{main}}+2M/r^3)\cdot (s\tilde{\h})L_a \right).
\end{align*}

Clearly,

\begin{align*}
\int_{\Sigma_\tau'} J^p_{\h}\cdot L\  dvol_3\approx E^{p,(0)}_L[\tilde{\h}](\tau).
\end{align*}

We will apply the divergence theorem \ref{div_thm} to $J^p_{\h}$ in the region $D(\tau_1,\tau_2)$. To make sure the contribution along null infinity is non-negative , we compute

\begin{align*}
J^p_{\h}\cdot \underline{L}=s^{p-2} \left( |\nablas (s\tilde{H})|^2+(s\tilde{H})\cdot (A_{\textup{main}}+B_{\textup{main}}+2M/r^3)\cdot (s\tilde{H}) \right),
\end{align*}
which is non-negative (after integrated along $\mathbb{S}^2$) from Lemma \ref{lem:metric_coefficient_positivity}. We start to estimate $\Div J^p_{\h}$. By direct calculation,

\begin{equation}\label{div_H_rp}
\begin{split}
\Div J^p_{\h}=&M^{-1}s^{p-3}\left(\frac{p}{2}\D -\frac{M}{r}\right)\D^{-2} |\nabla_L(s\tilde{\h})|^2+M^{-1}s^{p-3}\left(1-\frac{p}{2}\right)|\nablas (s\tilde{\h})|^2\\
- &M^{-1}\frac{1}{2}s^{p-3} (s\tilde{\h})\cdot \left(r\left[\nabla_r, (A_{\textup{main}}+B_{\textup{main}}+2M/r^3)\right]+p(A_{\textup{main}}+B_{\textup{main}}+2M/r^3) \right)\cdot (s\tilde{\h})\\
+&s^{p-1}\D^{-1}\nabla_L(s\tilde{\h}) \cdot \Box \tilde{\h}\\
-&\frac{1}{2} s^{p-1}\D^{-1}\nabla_L(s\tilde{\h}) \cdot (A_{\textup{main}}+B_{\textup{main}}) \cdot \tilde{\h}\\
-&\frac{1}{2} s^{p-1}\D^{-1}  \tilde{\h} \cdot (A_{\textup{main}}+B_{\textup{main}})\cdot\nabla_L(s\tilde{\h}).
\end{split}
\end{equation}
From $[\nabla_r,r^2(A_{\textup{main}}+B_{\textup{main}})]=0$, we have

\begin{align*}
&r\left[\nabla_r, (A_{\textup{main}}+B_{\textup{main}}+2M/r^3)\right]+p(A_{\textup{main}}+B_{\textup{main}}+2M/r^3)\\
=&-(2-p)(A_{\textup{main}}+B_{\textup{main}})-\frac{(6-2p)M}{r^3}.
\end{align*}
Therefore by Lemma \ref{lem:metric_coefficient_positivity}, after integrated along $\mathbb{S}^2$,

\begin{align*}
M^{-1}s^{p-3}\left(1-\frac{p}{2}\right)|\nablas (s\tilde{\h})|^2-M^{-1}\frac{1}{2}s^{p-3}& (s\tilde{\h})\cdot \bigg(r\left[\nabla_r, (A_{\textup{main}}+B_{\textup{main}}+2M/r^3)\right]\\
+&p((A_{\textup{main}}+B_{\textup{main}}+2M/r^3)) \bigg)\cdot (s\tilde{\h})
\end{align*}
is non-negative and is bounded below by $M^{-1}s^{p-3}|\nablas (s\tilde{\h})|^2$ if $\h=\h_{\ell\geq 3}$. For $\h=\h_{\ell=2}$, we can still get $M^{-1}s^{p-3}|\nablas (s\tilde{\h})|^2$ through Corollary \ref{cor:E^p_base} as

\begin{align*}
M^{-1}E^{p-1,(0)}_L[\tilde{\h}](\tau)\geq &\frac{(2-p)^2}{8}M^{-3}\int_{\Sigma_\tau'} s^{p-4}|s\tilde{\h}|^2 svol\gtrsim  M^{-1}E^{p-1,(0)}_{\nablas}[\tilde{\h}](\tau).
\end{align*}
Since $A_{\textup{main}}$ and $B_{\textup{main}}$ are self-adjoint, the last three terms in \eqref{div_H_rp} can be combined as

\begin{align*}
&s^{p-1}\D^{-1}\nabla_L(s\tilde{\h})\cdot (\Box-A_{\textup{main}}-B_{\textup{main}})\cdot\tilde{\h}\\
=&s^{p-1}\D^{-1}\nabla_L(s\tilde{\h})\cdot \left( s^{-1}(A_{\textup{sub}}+B_{\textup{sub}})\tilde{\h}+\eta_{\textup{null}}\cdot U\h+{\cpur \Box\eta_{\textup{null}} \cdot \h+2\nabla\eta_{\textup{null}}\cdot\nabla\h} \right).
\end{align*}
The $A_{\textup{sub}}$ and $B_{\textup{sub}}$ terms decay faster in $r$ and can be estimated by \eqref{metric_low} and Cauchy-Schwarz:

\begin{align*}
&s^{p-1}\D^{-1}\nabla_L(s\tilde{\h})\cdot s^{-1}(A_{\textup{sub}}+B_{\textup{sub}})\tilde{\h}\\
=&s^{p-3}\D^{-1}\nabla_L(s\tilde{\h})\cdot (A_{\textup{sub}}+B_{\textup{sub}})(s\tilde{\h})\\
\lesssim &\epsilon\cdot M^{-1}s^{p-3}|\nabla_L (s\tilde{\h})|^2+{\cblue \frac{1}{\epsilon}\cdot M^{-3}s^{p-7}|s\tilde{\h}|^2+\frac{1}{\epsilon}\cdot M^{-1}s^{p-5}|\nablas (s\tilde{\h})|^2},
\end{align*}
where $\epsilon$ is any positive real. Similarly,

\begin{align*}
&s^{p-1}\D^{-1}\nabla_L(s\tilde{\h})\cdot \eta_{\textup{null}}(r) U\h\\
\lesssim & \epsilon \cdot M^{-1}s^{p-3}|\nabla_L(s\tilde{\h})|^2+{\cblue \frac{1}{\epsilon} \cdot M^{-1}s^{p-5}|\nabla_L(s\tilde{\h})|^2}+{\cpur \frac{1}{\epsilon} \cdot M^{-1}s^{p-5}|\nabla_L\eta_{\textup{null}}|^2|s\h|^2}.
\end{align*}
We take $\epsilon>0$ small enough to absorb the term $\epsilon \cdot M^{-1} s^{p-3}|\nabla_L(s\tilde{\h})|^2$, and then pick $R_{\h}$ large enough to absorb the terms $\epsilon^{-1} \cdot M^{-1}s^{p-5}|\nabla_L(s\tilde{\h})|^2, \epsilon^{-1} \cdot M^{-1}s^{p-5}|\nablas (s\tilde{\h})|^2$ and $ \epsilon^{-1} \cdot M^{-3}s^{p-7}| s\tilde{\h}|^2$ above. Therefore we have for some constant $C>0$,

\begin{align*}
\int_{D(\tau_1,\tau_2)} \Div J^p_{\h}\ dvol+&{C\cpur M \int_{D(\tau_1,\tau_2)} s^{p+1}\left(|\Box\eta_{\textup{null}} \cdot \h|^2+|\nabla\eta_{\textup{null}} \cdot\nabla\h|^2+M^{-2}s^{-4}|\nabla_L\eta_{\textup{null}} |^2 |\h|^2 \right) dvol}\\
\gtrsim &M^{-1}\int_{D(\tau_1,\tau_2)} s^{p-3}|\nabla_L(s\tilde{\h})|^2+s^{p-3}|\nablas(s\tilde{\h})|^2  dvol.
\end{align*}
Applying the divergence theorem \eqref{div_thm} yields 

\begin{equation*}
\begin{split}
&E^{p,(0)}_L[\tilde{\h}](\tau_2)+M^{-1}\int_{\tau_1}^{\tau_2}E^{p-1,(0)}_{L,\nablas}[\tilde{\h}](\tau)d\tau\\
\lesssim_{R_{\textup{null}}} & E^{p,(0)}_L[\tilde{\h}](\tau_1)+M^{-3}\int_{D(\tau_1,\tau_2)\cap\{R_{\textup{null}}\leq r\leq R_{\textup{null}}+M\}} |(M\partial)^{\leq 1}\h|^2 dvol.
\end{split}
\end{equation*}

To add $s^{-1-\delta}|\nabla_{\underline{L}}\tilde{\h}|^2$ term, we consider

\begin{align}
(J_{\h,\delta})_a:=s^{-\delta}\left(T_{ab}[\tilde{\h}]-\frac{1}{2}\tilde{\h}\cdot (A_{\textup{main}}+B_{\textup{main}})\cdot \tilde{\h} g_{ab}  \right)T^b.
\end{align}
From Lemma \ref{lem:metric_coefficient_positivity}, $J_{\h,\delta}\cdot n_\tau$ and $J_{\h,\delta}\cdot \underline{L}$ are non-negative after integrated along $\mathbb{S}^2$. Moreover, 

\begin{align*}
0\leq \int_{\Sigma_\tau} J_{\h,\delta}\cdot n dvol_3\lesssim F[\h](\tau).
\end{align*}
The divergence of $J_{\h,\delta}$ is

\begin{align*}
\Div J_{\h,\delta}=_s&\frac{\delta}{4}M^{-1}s^{-1-\delta}|\nabla_{\underline{L}}\tilde{\h}|^2-{\cpur \frac{\delta}{4}M^{-1}s^{-1-\delta}|\nabla_{L}\tilde{\h}|^2}\\
+&{\cpur \eta_{\textup{null}}(r)  s^{-\delta}\nabla_t\h\cdot (\Box\eta_{\textup{null}}(r)\cdot\h+2\nabla\eta_{\textup{null}}(r)\cdot\nabla\h)}\\
+&\eta_{\textup{null}}(r) s^{-\delta} \nabla_t\tilde{\h} \cdot ( s^{-1}A_{\textup{sub}}+s^{-1}B_{\textup{sub}}+U)\h.
\end{align*}
The second line will be grouped into $M^{-3}|(M\partial)^{\leq 1}\h|^2\cdot\chi_{[R_{\textup{null}},R_{\textup{null}}+M]}$. The last line, by Cauchy-Schwarz, can be bounded as

\begin{align*}
&\big|\eta_{\textup{null}}(r) s^{-\delta} \nabla_t\tilde{\h} \cdot ( s^{-1}A_{\textup{sub}}+s^{-1}B_{\textup{sub}}+U)\h\big|\\\lesssim_{R_{\textup{null}}} &\epsilon\cdot M^{-1}s^{-1-\delta}(|\nabla_L\tilde{\h}|^2+|\nabla_{\underline{L}}\tilde{\h}|^2)\\
+&\frac{1}{\epsilon}\cdot M^{-1}s^{-5-\delta}(|\nabla_L (s\tilde{\h})|^2+|\nablas (s\tilde{\h})|^2+M^{-2}s^{-2}|s\tilde{\h}|^2)\\
+&\frac{1}{\epsilon}\cdot M^{-3}|(M\partial)^{\leq 1}\h|^2\cdot\chi_{[R_{\textup{null}},R_{\textup{null}}+M]}.
\end{align*}
By taking $\epsilon>0$ small, we have for some $C>0$,

\begin{align*}
\Div J_{\h,\delta}\geq &\frac{1}{C}\cdot M^{-1}s^{-1-\delta} |\nabla_{\underline{L}}\tilde{\h}|^2\\
-&C\cdot M^{-1}s^{-3-\delta} (|\nabla_L (s\tilde{\h})|^2+|\nablas (s\tilde{\h})|^2+M^{-2}s^{-2}|s\tilde{\h}|^2)\\
-&C\cdot M^{-3}|(M\partial)^{\leq 1}\h|^2\cdot\chi_{[R_{\textup{null}},R_{\textup{null}}+M]}.
\end{align*}
Because for any $p\in [\delta,2-\delta]$, $p-3\geq -3+\delta>-3-\delta$, the second line can be absorbed into the integrand of $E^{p-1,(0)}_{L,\nablas}[\tilde{\h}]$ through further enlarging the value of $R_{\h}$. Then the result follows by applying divergence theorem \eqref{div_thm} to $J^p_{\h}+\epsilon J_{\h,\delta}$ with $\epsilon>0$ small enough.

\end{proof}

We now turn to the red-shift estimate near the horizon $\mathcal{H}^+$. To apply the red-shift estimate, the only thing we need to check is that the coefficient of $\nabla_{\underline{L}'} \h$ in $\Box\h$ is non-negative (actually, negative part bounded by the surface gravity of Schwarzschild is enough.) The above decomposition is actually not regular on the horizon as they don't generate the full vector space $T^*\mathcal{M}\otimes_s T^*\mathcal{M}$. We instead consider $\h_6' :=\D^{-2}(\h_6-\h_1)$ and $\h_7' :=\D^{-1}(\h_7-\h_4)$. They satisfy the equation

\begin{align*}
\Box \h_6'= &\frac{4M}{r^2}\nabla_{\underline{L}'} \h_6'+\frac{2}{r}\done \h_7'+\frac{2}{r^2}\h_2+\frac{2}{r^2}\h_3+\frac{2}{r^2}\h_6',\\
\Box \h_7'= &\frac{2M}{r^2}\nabla_{\underline{L}'} \h_7'-\frac{2}{r}\donest\h_2+\frac{2}{r}\donest\h_3+\frac{2\sqrt{2}}{r}\dtwo\h_5+\frac{2}{r}\D\donest\h_6'\\
-&\frac{4}{r^2}\h_4+\frac{1}{r^2}\left(1-\frac{4M}{r}\right)\h_7'.
\end{align*}
Let $\h'=(\h_1,\h_2,\h_3,\h_4,\h_5,\h_6',\h_7')$ and consider the red-shift current

\begin{align}
J_{\h',a}:=T_{ab}[\h']\cdot (Y(\sigma)+\eta_{rs} T)^b-\frac{1}{2r^2}|\h '|^2\cdot(Y(\sigma)+\eta_{rs} T)_a,
\end{align}
where $\sigma$ is a positive number to be determined, $\eta_{rs}$ is the red-shift cut-off function and $Y(\sigma)$ is the red-shift vector defined in \eqref{cutoff_rs} and \eqref{vector_rs}. From \eqref{g_redshift}, on the horizon $r=2M$ we have

\begin{align*}
\Div J_{\h'}\bigg|_{r=2M}=&\frac{\sigma}{2}|\nabla_v\h'|^2+\frac{1}{2M}|\nabla_R\h'|^2+\frac{2}{M}\nabla_R\h'\cdot\nabla_v\h'+\frac{\sigma}{2}|\slashed{\nabla}\h'|^2+\frac{\sigma}{8M^2}|\h'|^2\\
+&\nabla_{T+Y}\h'\cdot (\Box\h'-\frac{1}{r^2}\h').
\end{align*}
Since the coefficient of $\nabla_{\underline{L}'}\h$ in $\Box\h'$ is non-negative and $Y(\sigma)=2\underline{L}'$ on the horizon, we have from Cauchy-Schwarz for some constant $C>0$,

\begin{align*}
\nabla_{Y+T}\h'\cdot(\Box\h'-\frac{1}{r^2}\h')+ \frac{1}{4M}|\nabla_R\h'|^2 \bigg|_{r=2M} \geq -C\bigg( M^{-1}|\nabla_v\h'|^2+M^{-1}|\nablas\h'|^2+M^{-3}|\h'|^2\bigg),
\end{align*}
By choosing $\sigma>>M^{-1}$, we obtain

\begin{align*}
\Div J_{\h'}\bigg|_{r=2M}\gtrsim M^{-3}|(M\partial)^{\leq 1}\h'|^2.
\end{align*}
Because of continuity, there exists $r_{rs,\h}\in (2M,r_{rs}^+)$ such that the above estimate holds in $[2M,r_{rs,\h}]$. Therefore through the divergence theorem \eqref{div_thm} of $J_{\h'}$, we obtain

\begin{equation}\label{metric_rs}
\begin{split}
&M^{-2}\int_{\Sigma_{\tau_2}} |(M\partial)^{\leq 1}\h'|^2\cdot\chi_{[2M, r_{rs,\h}]} dvol_3+M^{-3}\int_{D(\tau_1,\tau_2)} |(M\partial)^{\leq 1}\h'|^2\cdot\chi_{[2M, r_{rs,\h}]} dvol\\
\lesssim &M^{-2}\int_{\Sigma_{\tau_1}} |(M\partial)^{\leq 1}\h'|^2\cdot\chi_{[2M, r_{rs}^+]} dvol_3+M^{-3}\int_{D(\tau_1,\tau_2)}|(M\partial)^{\leq 1}\h'|^2\cdot\chi_{[ r_{rs,\h}, r_{rs}^+]} dvol.
\end{split}
\end{equation}


\begin{proof}[proof of Proposition \ref{pro:metric_far}]

From the view of \eqref{metric_rp}, we have controlled $E^{p,(0)}_L[\tilde{h}](\tau_2)$ and $M^{-1}\int_{\tau_1}^{\tau_2} E^{p,(0)}_L[\tilde{h}](\tau) d\tau$. From \eqref{metric_rs}, the part of $F[h](\tau_2)$ and the integral of $M^{-1}\int_{\tau_1}^{\tau_2}\bar{B}[h](\tau)d\tau$ near the horizon $r\in [2M, r_{rs,\h}]$ is also bounded by the right hand side of \eqref{ineq:metric_far}. Furthermore, from Corollary \ref{cor:E^p_base}, 
\begin{align*}
&\int_{\tau_1}^{\tau_2}E^{p-1,(0)}_{L,\nablas}[\tilde{\h}](\tau)d\tau+\int_{D(\tau_1,\tau_2)} s^{-1-\delta}|\nabla_{\underline{L}}\tilde{\h}|^2 dvol\\
\approx &\int_{D(\tau_1,\tau_2)} s^{p-1}(|\nabla_L\tilde{\h}|^2+|\nablas\tilde{\h}|^2+M^{-2}s^{-2}|\tilde{\h}|^2)+s^{-1-\delta}|\nabla_{\underline{L}}\tilde{\h}|^2 dvol,
\end{align*}
which bounds the integrand of $\bar{B}[h](\tau)$ in $r\geq R_{\textup{null}}+M$. As the integrand of $\bar{B}[h](\tau)$ in $r\in [r_{rs,\h},R_{\textup{null}}+M]$ can be bounded by the right hand side of \eqref{ineq:metric_far}, the only term on the left hand side of \eqref{ineq:metric_far} which is not controlled yet is
\begin{align*}
&\int_{\Sigma_{\tau_2}''\cap\{r\geq r_{rs,\h} \}} |\partial h|^2+M^{-2}s^{-2}|h|^2 dvol_3+\int_{\Sigma_{\tau_2}'} |\nabla_L h|^2+|\nablas h|^2+M^{-2}s^{-2}|h|^2 dvol_3.
\end{align*}
This can be bounded by $F^T[\h]$ defined in \eqref{def:FTenergy}. We can add this term by considering the current

\begin{align}
(J_{\h,T})_a:=\left( T_{ab}[\h]-\frac{1}{2r^2} |\h|^2 g_{ab}\right) T^b.
\end{align}
Clearly, 

\begin{align*}
\int_{\Sigma_{\tau_2}} J_{\h,T}\cdot n dvol_3\approx F^T[\h].  
\end{align*}
We estimate its divergence as

\begin{align*}
\left|\Div J_{\h,T}\right|\lesssim& |\nabla_t\h|\left( \frac{1}{M^2s}|M\partial\h|+\frac{1}{M^2s^2}|\h|\right)\\
\lesssim &M^{-3}|(M\partial)^{\leq 1}\h|^2\cdot \chi_{[2M,R_{\textup{null}}+M]}+\bigg(M^{-1} s^{-1-\delta}|\nabla_t \h|^2\\
+&M^{-1} s^{-3+\delta} (|\nabla_L (s\h)|^2+|\nablas (s\h)|^2+M^{-2}s^{-2} |s\h|^2)\bigg)\cdot \chi_{[R_{\textup{null}}+M,\infty)}.
\end{align*}
The second term already appears in the the bulk term of \eqref{metric_rp}. Then we apply the divergence theorem \eqref{div_thm} to $\epsilon J_{\h,T}$ with $\epsilon>0$ small enough such that $\left|\Div J_{\h,T}\right|$ can be absorbed into \eqref{metric_rs} and \eqref{metric_rp} in $r\in [2M,r_{rs,\h}]$ and in $ r\in[R_{\textup{null}}+M,\infty)$ respectively. Adding this estimate together with \eqref{metric_rs} and \eqref{metric_rp} yields the result.

\end{proof}

\begin{remark}

The equation for $\h_{\ell=1}$ is the same as the $\ell\geq 2$ case except that there is no $\h_5$. In particular, the leading term of $A_{\ell=1}$ and $B_{\ell=1}$ are

\begin{align*}
\frac{2}{r^2}\left[
\begin{array}{cccccc}
1 & -1 & -1 & 0 &  0 & 0\\
-1 & 1 & 1 & 0 &  0 & 0\\
-1 & 1 & 1 & 0 &  0 & 0\\
0  &  0  & 0 & 5/2 &  0 & 0\\
0 & 0 & 0 & 0 &  1 & 0\\
0 & 0 & 0 & 0 &  0 & 1/2
\end{array}
\right].
\end{align*}

\begin{align*}
&\frac{2}{r}\left[
\begin{array}{ccccccc}
0 & 0 & 0 & \done &  0 & 0\\
0 & 0 & 0 & -\done &  0 & 0\\
0 & 0 & 0 & -\done &  0 & 0\\
\donest  &  -\donest  & -\donest & 0  & 0 & 0\\
0 & 0 & 0 & 0 & 0 & \done\\
0 & 0 & 0 & 0 & \done & 0
\end{array}
\right].
\end{align*}
We verify that in this case the eigenvalue of $r^2(A_{\ell=1}+B_{\ell=1}+\mathcal{O}_{\ell=1})$ is $\{0,2,2,2,6,12\}$. In particular, $A_{\ell=1}+B_{\ell=1}+\mathcal{O}_{\ell=1}$ is semi-positive definite. Therefore, \eqref{metric_rp} and \eqref{metric_rs} and then Proposition \ref{pro:metric_far} can be proved by the same argument.\\
Similarly, as $\ell=0$ there's no $\h_4$ and $\h_7$. $B_{\ell=0}=0$ and the leading term of $A_{\ell=0}$ is

\begin{align*}
&\frac{2}{r^2}\left[
\begin{array}{cccccc}
1 & -1 & -1  &  0 \\
-1 & 1 & 1  &  0 \\
-1 & 1 & 1  &  0 \\
0 & 0 & 0  &  1 
\end{array}
\right],
\end{align*}
which is still semi-positive definite with eigenvalues $\{0,0,1,6\}$.

\end{remark}

\subsection{relation to Regge-Wheeler gauge}\label{subsec:to_RW}

Throughout this subsection we assume $h=h_{\ell\geq 2}$. An \textbf{even} solution of \eqref{linear_gravity}, denoted by $h^{\RW}$, is said to be in the Regge-Wheeler gauge if it is supported on $\ell\geq 2$ and

\begin{equation}
\hat{h}^{\RW}_{AB}=0,\ h^{\RW}_{tA}=h^{\RW}_{rA}=0.
\end{equation}
For any \textbf{even} solution of \eqref{linear_gravity} supported on $\ell\geq 2$, denoted by $h_{ab}$, there exists a unique \textbf{even} vector field $W$, which is also supported on $\ell\geq 2$, such that 

\begin{align*}
h+{}^W\pi=h^{\RW}.
\end{align*}
We decompose $W$ into components as

\begin{align*}
W=W_0dt+W_1 \D^{-1} dr+W_{2,A}dx^A,
\end{align*}
where $W_0$ and $W_1$ are scalars and $W_2$ is an \textbf{even} spherical one form. As $\donest$ and $\dtwost$ are invertible for $\ell\geq 2$, They are uniquely determined through \eqref{comm_r}, \eqref{comm_t} and \eqref{pi}  as

\begin{align}
\dtwost\donest W_0=&\dtwost h_{tA}+\frac{1}{2} {}^2\nabla_t \hat{h}_{AB}, \label{def:W0} \\
\dtwost\donest W_1=&\D\dtwost h_{rA}+\frac{1}{2}\D {}^2\nabla_r \hat{h}_{AB}, \label{def:W1}\\
\dtwost W_2=&\frac{1}{2}\hat{h}_{AB}.\label{def:W2}
\end{align}

\begin{remark}
For the mode $\ell=1$, any even solution $h_{\ell=1}$ of \eqref{linear_gravity} equals $-{}^W\pi$ for a unique even vector field $W$ supported on $\ell=1$. This decomposition was done in the work of Zerilli \cite{Zerilli_2}. See also \cite{Martel-Poisson}. In this case $W$ satisfies the equation \eqref{wave_equation_vector} provided $h_{\ell=1}$ solves \eqref{HG}. However, the relation between $h$ and $W$ is \textbf{not} given by \eqref{def:W0}, \eqref{def:W1} and \eqref{def:W2}. We will treat this case separately in subsection \ref{subsec:ell=1}. For the mode $\ell=0$, any solution $h_{\ell=0}$ of \eqref{linear_gravity} is a linear combination of the mass perturbation $K$ (defined in \eqref{def:K}) and a deformation tensor ${}^W\pi$. We will discuss this case in section \ref{sec:ell=0}.
\end{remark}
Define the operator $\s{\Delta}_Z$ as in \cite{Johnson1,Johnson_2} by

\begin{align}
\s{\Delta}_Z:=\s{\Delta}+\frac{2}{r^2}-\frac{6M}{r^3}.
\end{align}
The Zerilli quantity is defined as

\begin{equation}\label{def:psi}
\begin{split}
\psi_Z:=&-2r^{-2}\D\s{\Delta}^{-1}\s{\Delta}_Z^{-1} \partial_r\s{tr}h-r^{-1}\s{\Delta}^{-1}\s{tr}h+4r^{-3}\D^2\s{\Delta}^{-1}\s{\Delta}_Z^{-1}h_{rr}\\
&-4r^{-2}\D\s{\Delta}_Z^{-1}(\donest)^{-1}h_{rA}+r^{-1}(\donest)^{-1}(\dtwost)^{-1}\hat{h}_{AB}.
\end{split}
\end{equation}
Zerilli \cite{Zerilli} first defined this quantity and derived the so called Zerilli equation, a wave equation with potential depending on $\ell$. Precisely, as $h_{ab}$ satisfies linearized Einstein equation \eqref{linear_gravity}, $r^{-1}\psi_Z$ satisfies. 

\begin{equation}\label{equ:psi}
\Box(r^{-1}\psi_{Z,\ell})-V_{Z,\ell}(r^{-1}\psi_{Z,\ell})=0,
\end{equation}
where

\begin{align*}
V_{Z,\ell}=-\frac{2M}{r^3}\frac{(2\lambda+3)(2\lambda+3M/r)}{(\lambda+3M/r)^2},
\end{align*}
and $\lambda(\ell)$ is defined as \eqref{def:lambda_ell}. Moreover, $\psi_Z$ is a gauge-invariant quantity as for any vector field $W=W_{\ell\geq 2}$, $\psi_Z[{}^W \pi]=0$. Furthermore, $h^{\RW}$ can be expressed in terms of $\psi_Z$ as




\begin{equation*}
g^{\alpha\beta}h^{\RW}_{\alpha\beta}=0,\ \hat{h}^{\RW}_{\alpha\beta}=\hat{\nabla}^2_{\alpha\beta} (r\psi_Z)+\frac{12M}{r^3}\s{\Delta}_Z^{-1}{\nabla}_\alpha r\hat{\otimes}_s {\nabla}_\beta\psi_Z,
\end{equation*}
and

\begin{equation*}
\begin{split}
\slashed{tr}h^{\RW}=-r\s{\Delta}\psi_Z+2\D\partial_r\psi_Z+\frac{12M}{r^4}\D\s{\Delta}_Z^{-1}\psi_Z.
\end{split}
\end{equation*}
These identities was derived in \cite{Zerilli_2}. See also \cite[Theorem 7.2.3]{Johnson1} for a modern treatment. In particular, 

\begin{align}\label{RW_by_psi}
|(M\partial)^{\leq m} h^{\RW}|\lesssim_m |s^2(M\partial)^{\leq m+2} r^{-1}\psi_Z|.
\end{align}


Suppose $\psi_Z=0$, then $h^{\RW}=0$ and $h=-{}^W\pi$. In this case \eqref{HG} is equivalent to tensorial wave equation \eqref{wave_equation_vector}. As $\psi_Z\neq 0$, the Zerilli quantity $\psi_Z$ becomes the source term in \eqref{wave_equation_vector}. In the following we rewrite \eqref{wave_equation_vector} in terms of three components $W_0, W_1$ and $W_2$. We fix an integer $\ell\geq 2$ and focus on this mode. To simplify the notation, we drop subscript $\ell$ and denote $\psi_{Z,\ell},W_{0,\ell}$, etc. simply by $\psi_Z,W_{0}$, etc.\\

Define the quantities

\begin{align}
S_W:=&-2\D^{-1}\nabla_t W_0+2\nabla_r W_1+2\done W_2+\frac{4}{r}W_1, \label{def:SW}\\
P_{\even}:=&-r\nabla_r W_0+\D^{-1}\cdot r\nabla_t W_1-W_0, \label{def:P}
\end{align}
and

\begin{equation}\label{def:Q}
\begin{split}
Q_{\even}:=&\donest W_1+\D {}^1\nabla_r W_2+\D \frac{1}{r}W_2-\frac{r}{2}\donest S_W-\frac{r^3}{3}  \donest  \slashed{\Delta}_Z( r^{-1}\psi_Z).
\end{split}
\end{equation}
Here $S_W$ is the trace of the deformation tensor ${}^W\pi$ and $P_{\even}$ as well as $Q_{\even}$ are components of the two form $dW$ provided $S_W$ and $\psi_Z$ are zero. Let $S:=trh$ be the trace of $h_{ab}$ and we note that $S=trh^{\RW}-S_W$.

\begin{proposition}

The quantities $W_0,W_1,W_2,S,P_{\even}$ and $Q_{\even}$ (to be precise, their projection on the mode $\ell$) satisfy the wave equations:

\begin{equation}\label{equ:W_0}
\Box W_0+\frac{2M}{r^3}W_0=-\frac{2M}{r^3}P_{\even}+\frac{2\lambda r^2-6\lambda Mr-6M^2}{r^2(\lambda r+3M)}\nabla_t\psi_Z +2\D\nabla_t\nabla_r\psi_Z.
\end{equation}

\begin{equation}\label{equ:W_1}
\begin{split}
&\Box W_1-\frac{2}{r^2}\left(1-\frac{4M}{r}\right)W_1-\frac{2}{r}\left(1-\frac{3M}{r}\right)\done W_2=-\frac{M}{r^2}S -\frac{6M(r-2M)^2}{r^3(\lambda r+3M)}\nabla_r\psi_Z+2\nabla_t\nabla_t\psi_Z\\ 
-&\frac{2M}{r^4(\lambda r+3M)^2}\bigg(\lambda(\lambda-3)(\lambda+1)r^3+3\lambda(4\lambda+1)M r^2+9(3\lambda-1)M^2 r+36M^3\bigg) \psi_Z.
\end{split}
\end{equation}

\begin{equation}\label{equ:W_2}
{}^1\Box W_2-\frac{1}{r^2}\D W_2-\frac{2}{r}\donest W_1=0.
\end{equation}

\begin{equation}\label{equ:S}
\Box S=0.
\end{equation}

\begin{equation}\label{equ:P}
\begin{split}
\Box P_{\even}=&-\nabla_t  S -6\D\nabla_t\nabla_r\psi_Z-\frac{2\lambda ((\lambda+3)r-3M  ) }{r(\lambda r+3M)}\nabla_t\psi_Z.
\end{split}
\end{equation}

\begin{equation}\label{equ:Q}
{}^1\Box Q_{\even}-\frac{1}{r^2}\D Q_{\even}=0.
\end{equation}

\end{proposition}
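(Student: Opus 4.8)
The plan is to derive the whole system from the single tensorial identity $\Box W_a=\Gamma_a[h^{\RW}]$, regarded as a wave equation for the one form $W$ with a source that is explicit in $\psi_Z$. Since $h$ satisfies \eqref{Main_equation} and \eqref{HG} it solves the linearized Einstein equation \eqref{linear_gravity}; a deformation tensor always solves \eqref{linear_gravity}, hence so does $h^{\RW}=h+{}^W\pi$, and $h^{\RW}$ is in Regge--Wheeler gauge by the construction of $W$ in \eqref{def:W0}, \eqref{def:W1}, \eqref{def:W2}. On a Ricci flat background one has $\Gamma_a[{}^W\pi]=\Box W_a$, because the only term by which $\Gamma_a$ of a deformation tensor differs from $\Box W_a$ is a contraction of $W$ with $\mathrm{Ric}$; hence $\Gamma_a[h^{\RW}]=\Gamma_a[h]+\Gamma_a[{}^W\pi]=\Box W_a$. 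Equation \eqref{equ:S} is an immediate byproduct of \eqref{Main_equation}: contracting with $g^{ab}$ and using $\mathrm{Ric}=0$ removes the curvature term and leaves $\Box S=0$, where $S=\mathrm{tr}\,h$ (equivalently $S=\mathrm{tr}\,h^{\RW}-S_W$ with $S_W$ as in \eqref{def:SW}).

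Next I would compute $\Gamma_a[h^{\RW}]$ from the explicit expressions for $g^{\alpha\beta}h^{\RW}_{\alpha\beta}$, $\hat h^{\RW}_{\alpha\beta}$ and $\s{tr}\,h^{\RW}$ in terms of $\psi_Z$ recorded before \eqref{RW_by_psi}. On a fixed mode $\ell\geq 2$ the operator $\s{\Delta}_Z$ becomes multiplication by $-\tfrac{2}{r^3}(\lambda r+3M)$, with $\lambda$ as in \eqref{def:lambda_ell}; this is the origin of the factors $\lambda r+3M$ in the denominators of \eqref{equ:W_0}--\eqref{equ:P}. The angular component comes out as $\Gamma_A[h^{\RW}]=0$: in Regge--Wheeler gauge $h^{\RW}_{\alpha A}=0$ and $\hat h^{\RW}_{AB}=0$, and $\nabla^B(\tfrac12\s{tr}\,h^{\RW}\,\s{g}_{AB})$ reduces to $\tfrac12\nablas_A\s{tr}\,h^{\RW}$, which cancels the remaining term; this is exactly why \eqref{equ:W_2} carries no source. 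The $t$ and $r$ components are rational functions of $r$ times $\psi_Z,\nabla_r\psi_Z,\nabla_t\psi_Z$ and $\nabla_t\nabla_r\psi_Z$; here the Zerilli equation \eqref{equ:psi} is used to trade $\nabla_t\nabla_t\psi_Z$ for radial derivatives and potential terms, producing the precise coefficients in \eqref{equ:W_0} and \eqref{equ:W_1} (in particular the cubic polynomial multiplying $\psi_Z$).

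With the source in hand I would decompose $\Box W_a=\Gamma_a[h^{\RW}]$ into scalar and $\mathcal{L}(-1)$ parts relative to $W=W_0\,dt+W_1\D^{-1}dr+W_{2,A}dx^A$. The discrepancy between the tensorial $\Box$ applied to $W$ and the scalar (resp.\ $\mathcal{L}(-1)$) wave operator applied to $W_0,W_1$ (resp.\ $W_2$) is a finite list of first and zeroth order connection terms of the Schwarzschild metric; these assemble into exactly the couplings $\tfrac{2M}{r^3}W_0$, $\tfrac{1}{r^2}\D W_2$, $\tfrac{2}{r}\done W_2$, $\tfrac{2}{r}\donest W_1$ and $\tfrac{M}{r^2}S$ appearing in \eqref{equ:W_0}--\eqref{equ:W_2}, once the leftover first order $(t,r)$ terms are repackaged through the combination $P_{\even}$ of \eqref{def:P} (so the $P_{\even}$ on the right of \eqref{equ:W_0} is merely a convenient name for that connection correction, not an extra input); the commutation relations \eqref{comm_r}, \eqref{comm_t} are what make this repackaging clean. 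For the auxiliary quantities I would apply the scalar wave operator to \eqref{def:P} and the $\mathcal{L}(-1)$ wave operator to \eqref{def:Q}, substitute \eqref{equ:W_0}, \eqref{equ:W_1}, \eqref{equ:W_2}, \eqref{equ:S} and the Zerilli equation \eqref{equ:psi}, and commute derivatives past $\Box$ using \eqref{comm_r}, \eqref{comm_t} and past $\done,\donest,\dtwo$; the definitions of $S_W$, $P_{\even}$ and $Q_{\even}$ are engineered so that every $\psi_Z$ term and every $W$ term cancels, leaving \eqref{equ:P} and \eqref{equ:Q}. Equivalently, modulo the $S_W$ and $\psi_Z$ corrections, $P_{\even}$ and $Q_{\even}$ are components of the two form $dW$, which satisfies a Maxwell type system forced by $d^2=0$ together with $\Box W_a=\Gamma_a[h^{\RW}]$, and the component equations of that system are \eqref{equ:P} and \eqref{equ:Q}.

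The only serious obstacle is bookkeeping: carrying the rational-in-$r$ coefficients through the computation of $\Gamma_a[h^{\RW}]$ and through the many substitutions needed for \eqref{equ:P} and \eqref{equ:Q}, and checking that the cancellations designed into $S_W$, $P_{\even}$ and $Q_{\even}$ actually occur. The conceptual steps — the identity $\Box W_a=\Gamma_a[h^{\RW}]$, the trace relation $\Box S=0$, the vanishing $\Gamma_A[h^{\RW}]=0$, and the reduction of the source via \eqref{equ:psi} — are routine; the work is algebraic and is cleanest carried out mode by mode, where $\done,\donest,\dtwo$ and $\s{\Delta}_Z$ all become multiplication by explicit functions of $\lambda$ and $r$.
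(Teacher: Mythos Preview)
Your approach is correct and is a genuinely different route from the paper's. The paper (following Berndtson, Appendix~D) works entirely at the level of the mode coefficients $H_0,H_1,H_2,K,h_0,h_1,G$: it writes out the seven equations \eqref{MainEquation_1}--\eqref{MainEquation_7} and the three gauge conditions \eqref{HarmonicGauge_1}--\eqref{HarmonicGauge_3}, then takes carefully chosen linear combinations to express $H_1,K,H_2$ in terms of $h_0,h_1,G,\psi_Z$, substitutes these back into \eqref{MainEquation_4}, \eqref{MainEquation_5}, \eqref{MainEquation_7}, and finally trades $h_0,h_1,G$ for $W_0,W_1,W_2$. The equations for $P_{\even}$ and $Q_{\even}$ are obtained, as you also propose, by direct substitution from the $W_i$ equations and \eqref{equ:psi}.

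Your organizing identity $\Box W_a=\Gamma_a[h^{\RW}]$ (which follows from $\Gamma_a[{}^W\pi]=\Box W_a$ on a Ricci--flat background) gives a much cleaner explanation of \emph{why} the system looks the way it does: the $\psi_Z$ sources are just the harmonic--gauge defect of $h^{\RW}$, and the couplings among $W_0,W_1,W_2$ are the connection terms in the tensorial $\Box$. One point you should make explicit: your claim $\Gamma_A[h^{\RW}]=0$ uses not only $h^{\RW}_{\alpha A}=0$ and $\hat h^{\RW}_{AB}=0$ but also $g^{\alpha\beta}h^{\RW}_{\alpha\beta}=0$, which is a consequence of the linearized Einstein equations in Regge--Wheeler gauge (recorded just before \eqref{RW_by_psi}) rather than of the gauge condition alone; without it the cancellation you describe leaves $-\tfrac12\nablas_A(g^{\alpha\beta}h^{\RW}_{\alpha\beta})$. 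The paper's brute-force method has the advantage that every coefficient is written down explicitly and can be checked line by line against Berndtson; your method has the advantage of making the structure transparent and of separating the geometric input (Ricci-flatness, the form of $h^{\RW}(\psi_Z)$) from the purely algebraic bookkeeping.
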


The derivation of these equations was done in \cite{Berndtson} and we present the computations in Appendix \ref{sec:equation_vector} for completeness. Our plan is to estimate $r^{-1}\psi_Z$ and $S$ first. Then it gives bound on the right hand side of \eqref{equ:P}, the equation of $P_{\even}$. However, the right hand side of \eqref{equ:P} has bad weight in $r$ and we can't directly apply vector field methods. To resolve this issue, we consider the substitution:

\begin{align}
\bar{P}_{\even}:=&P_{\even}+\frac{3}{2}r\nabla_t\psi_Z- \frac{u+r}{2r}(2\lambda+3)\cdot \psi_Z-\frac{u+r}{2}S,\\
\bar{W_0}:=&W_0-\frac{r}{2}\nabla_t\psi_Z+\frac{(u+r)}{2r}\psi_Z,\\
\bar{W_1}:=&W_1-\frac{r}{2}\D\nabla_r\psi_Z+\frac{3M}{2r}\D\left(\lambda+\frac{3M}{r}\right)^{-1}\psi_Z,\\
\bar{W_2}:=&W_2-\frac{r}{2}\donest\psi_Z.
\end{align}
Here $u$ is defined as

\begin{align*}
u:=t-r-2M\log\left(\frac{r}{2M}-1\right)+R_{\textup{null}}+2M\log\left(\frac{R_{\textup{null}}}{2M}-1\right)
\end{align*}
such that $u=\tau$ for $r\geq R_{\textup{null}}$. From \eqref{equ:psi}, \eqref{equ:S}, these quantities satisfy the following wave equations.

\begin{equation}\label{equ:bP}
\begin{split}
\Box \bar{P}_{\even}=&\D^{-1}\cdot\frac{2M}{r}\nabla_L S+\frac{M}{r^2}S+\D^{-1}\cdot (2\lambda+3)\frac{2M}{r}\nabla_L (r^{-1}\psi_Z)\\
+&(2\lambda+3)\cdot\frac{9M^2}{(\lambda r+3M)^2}\nabla_t(r^{-1}\psi_Z)-\frac{2\lambda+3}{2}\left( (u+r)V_Z-\frac{2M}{r^2} \right)(r^{-1}\psi_Z).
\end{split}
\end{equation}

\begin{equation}\label{equ:bW0}
\begin{split}
&\Box \bar{W}_0+\frac{2M}{r^3}\bar{W}_0=-\frac{2M}{r^3}\bar{P}_{\even}-\left(1+\frac{u}{r}\right)\frac{M}{r^2}S\\
-&\frac{3M^2}{(\lambda r+3M)^2}\cdot(2\lambda+3)\nabla_t (r^{-1}\psi_Z)-\frac{2M}{r}\D^{-1}\nabla_L(r^{-1}\psi_Z)\\
+&\left(\left( -(2\lambda+3)\frac{M}{r^2}+\frac{r}{2}V_Z \right)+\frac{u}{r}\left( -(2\lambda+2)\frac{M}{r^2}+\frac{r}{2}V_Z \right)\right)(r^{-1}\psi_Z).
\end{split}
\end{equation}

\begin{equation}\label{equ:bW1}
\begin{split}
&\Box\bar{W}_1-\frac{2}{r^2}\left(1-\frac{4M}{r}\right)\bar{W}_1-\frac{2}{r}\left(1-\frac{3M}{r}\right)\done \bar{W}_2
=-\frac{M}{r^2}S-\frac{\lambda(\lambda+1)(2\lambda+3)M}{(\lambda r+3M)^2}(r^{-1}\psi_Z).
\end{split}
\end{equation}

\begin{equation}\label{equ:bW2}
\begin{split}
{}^1\Box \bar{W}_2-\D \bar{W}_2-\frac{2}{r^2}\donest \bar{W}_1=\frac{\lambda(2\lambda+3)M r}{(\lambda r+3M)^2}\donest (r^{-1}\psi_Z).
\end{split}
\end{equation}

We used the above substitution to eliminate the terms with bad weights in $r$ under the price that weight in $u$ appears for $r^{-1}\psi_Z$ and $S$. In order to deal with this, we will need the rapid decay result of Angelopoulos-Aretakis-Gajic \cite{Angelopoulos-Aretakis-Gajic1}, recorded in Appendix \ref{sec:wave equation} as Proposition \ref{cor:free_price}. Moreover, $u$ is not regular on the horizon $\mathcal{H}^+$, so we perform an interpolation. Let $b>0$ with $bM\geq 2R_{\textup{null}}$ to be a large number. Its value will be determined in subsection \ref{subsec:estimate_W12} and we keep the dependence of $b$ in estimates until the value is fixed. Let $\eta_b(r)$ be a cutoff function with

\begin{align*}
\eta_b (r)=\left\{ \begin{array}{cc}
0 & r\in [2M, bM],\\
1 & r\in [2bM,\infty).
\end{array}
\right.
\end{align*}
Define 

\begin{align}
\hat{P}_{\even}:=&P_{\even}+\eta_b\cdot\left(\frac{3}{2}r\nabla_t\psi_Z- \frac{u+r}{2r}(2\lambda+3)\cdot \psi_Z-\frac{u+r}{2}S\right),\label{def:hP} \\
\hat{W}_0:=&W_0+\eta_b\cdot\left(-\frac{r}{2}\nabla_t\psi_Z+\frac{(u+r)}{2r}\psi_Z\right),\label{def:hW0} \\
\hat{W}_1:=&W_1+\eta_b\cdot\left(-\frac{r}{2}\D\nabla_r\psi_Z+\frac{3M}{2r}\D\left(\lambda+\frac{3M}{r}\right)^{-1}\psi_Z\right),\label{def:hW1}\\
\hat{W}_2:=&W_2+\eta_b\cdot\left(-\frac{r}{2}\donest\psi_Z\right).\label{def:hW2}
\end{align}
And denote

\begin{align}
G_P:=&\Box \hat{P}_{\even}, \label{equ:hP}\\
G_0:=&\Box \hat{W}_0+\frac{2M}{r}\hat{W}_0, \label{equ:hW0}\\
G_1:=&\Box \hat{W}_1-\frac{2}{r^2}\left(1-\frac{4M}{r}\right)-\frac{2}{r}\left(1-\frac{3M}{r}\right)\done \hat{W}_2, \label{equ:hW1}\\
G_2:=&{}^1\Box \hat{W}_2-\D \hat{W}_2-\frac{2}{r}\donest \hat{W}. \label{equ:hW2}
\end{align}

\begin{lemma}\label{lem:GP_bound}

Fix an integer $m\geq 0$. As $r\in [2M,bM]$, we have

\begin{align*}
|(M\partial)^{\leq m}G_P|\lesssim_{m,b} M^{-1}\left(|(M\partial)^{\leq m+1} S|+|(M\partial)^{\leq m+3}(r^{-1}\psi_Z)|\right).
\end{align*}
As $r\in [bM,2bM]$, we have

\begin{align*}
|(M\partial)^{\leq m}G_P|\lesssim_{m,b} M^{-1}\left( 1+\frac{\tau}{M} \right) \left(|(M\partial)^{\leq m+1}S|+ |(M\partial)^{\leq m+3}(r^{-1}\psi_Z)| \right).
\end{align*}
And as $r\in [2bM,\infty)$,

\begin{align*}
|(M\partial)^{\leq m}G_P|\lesssim_m &s^{-1}|(M\partial)^{\leq m} \nabla_L S|+M^{-1}s^{-2}|(M\partial)^{\leq m}S|\\
              +&s^{-1}|\nabla_L ((M\partial)^{\leq m} \K^{\leq 2} r^{-1}\psi_Z)|+s^{-2}|(M\partial)^{\leq m} \nabla_t (r^{-1}\psi_Z)|\\
              +&\left(1+\frac{\tau}{M}\right)M^{-1}s^{-2	}|(M\partial)^{\leq m} \K^{\leq 2} r^{-1}\psi_Z|.
\end{align*}

\end{lemma}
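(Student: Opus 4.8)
The plan is to compute $G_P = \Box \hat P_{\even}$ directly from the definition \eqref{def:hP} and the already-established wave equation \eqref{equ:P} for $P_{\even}$, keeping careful track of the cut-off $\eta_b$ and the weight $u$. Write $\hat P_{\even} = P_{\even} + \eta_b \cdot N$, where $N := \tfrac{3}{2} r \nabla_t \psi_Z - \tfrac{u+r}{2r}(2\lambda+3)\psi_Z - \tfrac{u+r}{2} S$ is exactly the correction that turns $P_{\even}$ into $\bar P_{\even}$. Then
\begin{align*}
G_P = \Box P_{\even} + \eta_b \Box N + (\Box \eta_b) N + 2 \nabla^a \eta_b \nabla_a N,
\end{align*}
and on the region $r \geq 2bM$ where $\eta_b \equiv 1$ the first two terms combine to $\Box \bar P_{\even}$, which is given by the clean formula \eqref{equ:bP}; the cut-off derivative terms vanish there. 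On $[2M, bM]$ we have $\eta_b \equiv 0$, so only $\Box P_{\even}$ survives, and we use \eqref{equ:P} together with the d'Alembertian \eqref{equ:psi} of $\psi_Z$ to rewrite everything in terms of $S$, $\psi_Z$ and their derivatives; since $r$ is bounded there, every power of $r$ is harmless and one simply reads off the stated bound with constants depending on $b$. On the transition region $[bM, 2bM]$ all four terms above are present; here $r$ is still bounded (by $2bM$), but $u = \tau$ on $r \geq R_{\textup{null}}$ and the correction $N$ contains the linear-in-$u$ pieces $-\tfrac{u+r}{2r}(2\lambda+3)\psi_Z$ and $-\tfrac{u+r}{2}S$, which is the origin of the extra factor $(1+\tau/M)$ in the stated estimate. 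One bounds $|\Box \eta_b|$, $|\nabla \eta_b|$ by constants depending on $b$, bounds $|u| \lesssim_b M(1+\tau/M)$ on this region, and counts derivatives: $N$ costs up to two derivatives of $\psi_Z$ (from $r\nabla_t\psi_Z$ and $\donest\psi_Z$), so $(M\partial)^{\leq m} G_P$ needs $(M\partial)^{\leq m+1} S$ and $(M\partial)^{\leq m+3}(r^{-1}\psi_Z)$; the extra angular derivative in $\donest$ is why $\K^{\leq 2}$ or $(M\partial)^{\leq m+3}$ rather than $\leq m+2$ appears.

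For the large-$r$ region $r \geq 2bM$ the work is to unpack the right-hand side of \eqref{equ:bP} and assign to each term the correct $r$-weight and derivative count. The terms $\D^{-1}\tfrac{2M}{r}\nabla_L S$ and $\D^{-1}(2\lambda+3)\tfrac{2M}{r}\nabla_L(r^{-1}\psi_Z)$ give the $s^{-1}|\nabla_L S|$ and $s^{-1}|\nabla_L(\cdots r^{-1}\psi_Z)|$ terms (using $\D^{-1} \lesssim 1$ for $r \geq 2bM \geq 2R_{\textup{null}} \geq 20M$ and $2M/r \lesssim s^{-1}$); the term $\tfrac{M}{r^2}S$ gives $M^{-1}s^{-2}|S|$; the term $(2\lambda+3)\tfrac{9M^2}{(\lambda r+3M)^2}\nabla_t(r^{-1}\psi_Z)$ gives, after bounding $(2\lambda+3)M^2/(\lambda r + 3M)^2 \lesssim M^2/r^2 \lesssim M^2 s^{-2}$ uniformly in $\ell$, the term $s^{-2}|\nabla_t(r^{-1}\psi_Z)|$; and the last term $-\tfrac{2\lambda+3}{2}\big((u+r)V_Z - \tfrac{2M}{r^2}\big)(r^{-1}\psi_Z)$ is where the $(1+\tau/M)$ weight reappears — one uses $|V_Z| \lesssim M/r^3$ with the factor $2\lambda+3$ absorbed (so $(2\lambda+3)|V_Z| \lesssim \lambda M/r^3$, and the $\lambda$ is converted to angular derivatives $\K^{\leq 2}$ acting on $\psi_Z$ via the elliptic estimate $\lambda |\psi_Z| \lesssim_s |\K^{\leq 2}\psi_Z|$), together with $|u+r| \lesssim r + M(1+\tau/M) \lesssim r(1+\tau/M)$ on $r\geq R_{\textup{null}}$, yielding $(1+\tau/M)M^{-1}s^{-2}|\K^{\leq 2}(r^{-1}\psi_Z)|$. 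Commuting $(M\partial)^{\leq m}$ through \eqref{equ:bP} produces only lower-order and comparable terms since the coefficients are smooth functions of $r$ with the right decay, and the commutators $[\nabla_{(M\partial)}, \nabla_L]$ are controlled by Killing vector fields plus $\Box$; this accounts for the $\K^{\leq 2}$ decorating the $\psi_Z$ terms.

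The main obstacle, and the place requiring the most care, is the bookkeeping in the transition region $[bM, 2bM]$: one must verify that the linear-in-$u$ terms in $\Box N$ (which arise from $\Box(uS) = u\Box S + 2\nabla u \cdot \nabla S + (\Box u) S = 2\nabla u \cdot \nabla S + (\Box u)S$ using $\Box S = 0$, and similarly for $u\psi_Z$ where $\Box(u \psi_Z) = u V_Z \psi_Z + 2\nabla u \cdot \nabla \psi_Z + (\Box u)\psi_Z$) do not conspire to produce a worse-than-linear growth in $\tau$, and that the pieces of $\Box N$ which would have bad $r$-weight (the whole point of the substitution) are genuinely absent because $r$ is bounded there. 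Concretely, $\Box u$ and $\nabla^a u \nabla_a(\,\cdot\,)$ are smooth bounded operators on $[bM, 2bM]$, so every term in $\Box N$ is $\lesssim_b M^{-2}$ times $(1+\tau/M)$ times at most two derivatives of $S$ or $\psi_Z$, exactly matching the claim; I do not expect any genuine difficulty beyond this careful term-by-term accounting, since all the hard analytic content (the equations \eqref{equ:P}, \eqref{equ:psi}, \eqref{equ:S}, \eqref{equ:bP}) is already in hand.
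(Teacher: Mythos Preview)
Your proposal is correct and follows essentially the same approach as the paper: decompose $\hat P_{\even} = P_{\even} + \eta_b(\bar P_{\even}-P_{\even})$, use \eqref{equ:P} directly on $[2M,bM]$, use \eqref{equ:bP} directly on $[2bM,\infty)$, and on the transition strip $[bM,2bM]$ expand $\Box\hat P_{\even}$ via the product rule and bound each piece using the boundedness of $r$ and the linear-in-$u$ growth of the correction. One small slip: you attribute the extra derivative count to ``$\donest\psi_Z$'' in $N$, but there is no $\donest$ in $N$; the extra two derivatives come from the factor $(2\lambda+3)$ multiplying $\psi_Z$, exactly as you later (correctly) explain in the large-$r$ paragraph.
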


\begin{proof} 

We will only show the case $m=0$ and $m\geq 1$ can be proved by the same way. In $[2M,bM]$, $G_P=\Box P_{\even}$ and then

\begin{align*}
|\Box P_{\even}|\leq &\left|\nabla_t S\right|+\left|6\D\nabla_t\nabla_r\psi_Z\right|+\left|\frac{2\lambda ((\lambda+3)r-3M  ) }{r(\lambda r+3M)}\nabla_t\psi_Z\right|\\
\lesssim_b &M^{-1}\left(|(M\partial)^{\leq 1} S|+|(M\partial)^{\leq 3}(r^{-1}\psi_Z)|\right).
\end{align*}
Similarly, in $[2bM,\infty)$ $G_P=\Box \bar{P}_{\even}$ and from \eqref{equ:bP}

\begin{align*}
\begin{split}
|\Box \bar{P}_{\even}|\leq &\D^{-1}\cdot\frac{2M}{r}|\nabla_L S|+\frac{M}{r^2}|S|+\D^{-1}\cdot (2\lambda+3)\frac{2M}{r}|\nabla_L (r^{-1}\psi_Z)|\\
+&(2\lambda+3)\cdot\frac{9M^2}{(\lambda r+3M)^2}|\nabla_tr^{-1}\psi_Z|+\frac{2\lambda+3}{2}\left| (\tau+r)V_Z-\frac{2M}{r^2} \right||r^{-1}\psi_Z|\\
\lesssim &s^{-1}|\nabla_L S|+M^{-1}s^{-2}|S|\\
              +&s^{-1}|\nabla_L (\K^{\leq 2} r^{-1}\psi_Z)|+s^{-2}|\nabla_t (r^{-1}\psi_Z)|+\left(1+\frac{\tau}{M}\right)M^{-1}s^{-2}|\K^{\leq 2} r^{-1}\psi_Z|.
\end{split}
\end{align*} 
In $[bM,2bM]$, we compute

\begin{align*}
\Box\hat{P}_{\even}=&\Box P_{\even}+\Box\eta_b\cdot (\bar{P}_{\even}-P_{\even})\\
                   +&2\nabla\eta_b\cdot\nabla(\bar{P}_{\even}-P_{\even})+\eta_b\Box(\bar{P}_{\even}-P_{\even}).
\end{align*}
Since $bM\leq r\leq 2bM$, from \eqref{equ:P} we obtain

\begin{align*}
|\Box P_{\even}|\leq &\left|\nabla_t S\right|+\left|6\D\nabla_t\nabla_r\psi_Z\right|+\left|\frac{2\lambda ((\lambda+3)r-3M  ) }{r(\lambda r+3M)}\nabla_t\psi_Z\right|\\
\lesssim_b &M^{-1}|(M\nabla_t)S|+M^{-1}|(M\partial)^{\leq 3}(r^{-1}\psi_Z)|.
\end{align*}
Also, from the definition of $P_{\even}$ and $\bar{P}_{\even}$

\begin{align*}
|\bar{P}_{\even}-{P}_{\even}|\leq &\frac{3}{2}r^2|\nabla_tr^{-1}\psi_Z|+ \frac{u+r}{2}(2\lambda+3)\cdot |r^{-1}\psi_Z|+\frac{\tau+r}{2}|S|\\
\lesssim_b& M\left( 1+\frac{\tau}{M} \right)|(M\partial)^{\leq 2}r^{-1}\psi_Z|+M\left( 1+\frac{\tau}{M} \right)|S|.
\end{align*}
Similarly,

\begin{align*}
|\partial(\bar{P}_{\even}-{P}_{\even})| \lesssim_b& \left( 1+\frac{\tau}{M} \right)|(M\partial)^{\leq 3}r^{-1}\psi_Z|+\left( 1+\frac{\tau}{M} \right)|(M\partial)^{\leq 1}S|.
\end{align*}
Even though $\Box(\bar{P}_{\even}-{P}_{\even})$ involves one more derivative, since $\psi_Z$ and $S$ satisfy wave equations, we still have

\begin{align*}
|\Box(\bar{P}_{\even}-{P}_{\even})| \lesssim_b& M^{-1}\left( 1+\frac{\tau}{M} \right)|(M\partial)^{\leq 3}r^{-1}\psi_Z|+M^{-1}\left( 1+\frac{\tau}{M} \right)|(M\partial)^{\leq 1}S|.
\end{align*}
Together with $|\Box\eta_b|\lesssim_b M^{-2}, |\partial\eta_b|\lesssim_b M^{-1}$, the assertion follows by putting these inequalities together.

\end{proof}

\begin{lemma}\label{lem:G0_bound}

Fix an integer $m\geq 0$. In $r\in [2M,bM]$,

\begin{align*}
|(M\partial)^{\leq m}G_0|\lesssim_{m,b} M^{-2}|(M\partial)^{\leq m}\hat{P}_{\even}|+M^{-1}|(M\partial)^{\leq m+ 2}(r^{-1}\psi_Z)|.
\end{align*}
In $r\in [bM,2bM]$,

\begin{align*}
|G_0|\lesssim_{m,b} M^{-2}|(M\partial)^{\leq m}\hat{P}_{\even}|+M^{-1}\left(1+\frac{\tau}{M}\right)|(M\partial)^{\leq m}S|+M^{-1}\left(1+\frac{\tau}{M}\right)|(M\partial)^{\leq m+2}(r^{-1}\psi_Z)|.
\end{align*}
In $[2bM,\infty)$,

\begin{align*}
|(M\partial)^{\leq m}G_0|\lesssim_m &M^{-2}s^{-3}|(M\partial)^{\leq m}\hat{P}_{even}|+\left(1+\frac{\tau}{M}\right)M^{-1}s^{-2}|(M\partial)^{\leq m}S|\\
              +&s^{-2}|(M\partial)^{\leq m}\nabla_t ( r^{-1}\psi_Z)|+s^{-1}|(M\partial)^{\leq m}\nabla_L(r^{-1}\psi_Z)|\\
              +&M^{-1}s^{-2}\left(1+\frac{\tau}{M}\right)|(M\partial)^{\leq m}\K^{\leq 2} r^{-1}\psi_Z|.
\end{align*}

\end{lemma}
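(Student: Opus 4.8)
The plan is to mimic the three–region argument already carried out for $G_P$ in the proof of Lemma~\ref{lem:GP_bound}, exploiting that $\hat W_0$ interpolates between $W_0$ (on $\{r\le bM\}$, where $\eta_b\equiv 0$) and $\bar W_0$ (on $\{r\ge 2bM\}$, where $\eta_b\equiv 1$) in exactly the same way that $\hat P_{\even}$ interpolates between $P_{\even}$ and $\bar P_{\even}$. As there, I would only treat $m=0$; the case $m\ge 1$ follows verbatim after commuting with $(M\partial)^{\le m}$ and using that $(M\partial)$ applied to any of the coefficients appearing below is again bounded by that coefficient up to a $b$–dependent constant.

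On $\{2M\le r\le bM\}$ one has $\hat W_0=W_0$ and $\hat P_{\even}=P_{\even}$, so $G_0$ is literally the left–hand side of \eqref{equ:W_0}, and I would estimate its right–hand side term by term: since here $r\approx M$ and $\D^{-1}\lesssim 1$ up to $b$–dependent constants, $\tfrac{2M}{r^3}P_{\even}$ produces the $M^{-2}|\hat P_{\even}|$ term, while for the two $\psi_Z$–terms I would substitute $\nabla_t\psi_Z=r\nabla_t(r^{-1}\psi_Z)$ and $\nabla_r\psi_Z=(r^{-1}\psi_Z)+r\nabla_r(r^{-1}\psi_Z)$ and observe that, after this substitution, the coefficient $\tfrac{2\lambda r^2-6\lambda Mr-6M^2}{r^2(\lambda r+3M)}$ is bounded by $M^{-1}$ uniformly in $\ell$ (the $\lambda$–dependence cancels), leaving an $\ell$–uniform contribution $\lesssim_b M^{-1}|(M\partial)^{\le 2}(r^{-1}\psi_Z)|$. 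Symmetrically, on $\{r\ge 2bM\}$ one has $\hat W_0=\bar W_0$, $\hat P_{\even}=\bar P_{\even}$, and $G_0$ is the left–hand side of \eqref{equ:bW0}; I would bound its right–hand side term by term using $r=Ms$, $u=\tau$, $\D^{-1}\lesssim 1$, $V_Z\approx Mr^{-3}$ and $\lambda r+3M\gtrsim\lambda r$, the one twist being that the zeroth–order $\psi_Z$–coefficients carry a factor $\lambda$, which I would absorb by the Poincar\'e inequality $\lambda\,|r^{-1}\psi_Z|\lesssim|\K^{\le 2}(r^{-1}\psi_Z)|$ on a fixed mode — this is the source of the $\K^{\le 2}$ in the last line of the claimed bound for $r\ge 2bM$.

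On the transition region $\{bM\le r\le 2bM\}$ I would write, using $\hat W_0-W_0=\eta_b(\bar W_0-W_0)$,
\[
G_0=(1-\eta_b)\Bigl(\Box W_0+\tfrac{2M}{r^3}W_0\Bigr)+\eta_b\Bigl(\Box\bar W_0+\tfrac{2M}{r^3}\bar W_0\Bigr)+(\Box\eta_b)(\bar W_0-W_0)+2\nabla\eta_b\cdot\nabla(\bar W_0-W_0),
\]
estimate the first two brackets via \eqref{equ:W_0} and \eqref{equ:bW0} as above — now additionally invoking $\hat P_{\even}-P_{\even}=\eta_b(\bar P_{\even}-P_{\even})$ together with the bound $|\bar P_{\even}-P_{\even}|\lesssim_b M(1+\tau/M)\bigl(|(M\partial)^{\le 2}(r^{-1}\psi_Z)|+|S|\bigr)$ already derived inside the proof of Lemma~\ref{lem:GP_bound}, so that $|P_{\even}|,|\bar P_{\even}|\lesssim|\hat P_{\even}|+M(1+\tau/M)(\cdots)$ — and estimate the last two terms via $|\Box\eta_b|\lesssim_b M^{-2}$, $|\partial\eta_b|\lesssim_b M^{-1}$, $u+r\lesssim_b M(1+\tau/M)$, and the explicit form $\bar W_0-W_0=-\tfrac r2\nabla_t\psi_Z+\tfrac{u+r}{2r}\psi_Z$, which yields $|\bar W_0-W_0|\lesssim_b M(1+\tau/M)|(M\partial)^{\le 1}(r^{-1}\psi_Z)|$ and $|\partial(\bar W_0-W_0)|\lesssim_b (1+\tau/M)|(M\partial)^{\le 2}(r^{-1}\psi_Z)|$. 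Collecting the three regions then gives the stated inequality.

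There is no conceptual obstacle: the argument is a routine repetition of Lemma~\ref{lem:GP_bound}, with \eqref{equ:W_0}, \eqref{equ:bW0} in place of \eqref{equ:P}, \eqref{equ:bP}. The only points demanding care are (i) verifying that the coefficients of \eqref{equ:W_0}--\eqref{equ:bW0} are uniform in $\ell$ in each region — by genuine cancellation near the horizon, and by trading a factor $\lambda$ for $\K^{\le 2}$ near infinity — and (ii) bookkeeping the $(u+r)$–weights (hence the $(1+\tau/M)$ factors) and the replacement of $P_{\even}$, $\bar P_{\even}$ by $\hat P_{\even}$ through the transition region.
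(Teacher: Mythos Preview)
Your proposal is correct and follows essentially the same approach as the paper: a three--region split using $\hat W_0=W_0$ on $[2M,bM]$, $\hat W_0=\bar W_0$ on $[2bM,\infty)$, and the Leibniz decomposition of $\Box\hat W_0$ in between, with the replacement $P_{\even}\to\hat P_{\even}$ handled via the already--established bound on $|\bar P_{\even}-P_{\even}|$. The only cosmetic difference is that you write the transition--region decomposition symmetrically as $(1-\eta_b)\,(\cdots W_0)+\eta_b\,(\cdots\bar W_0)+\text{commutator terms}$, whereas the paper expands around $W_0$ alone and then estimates $\eta_b(\Box+2M/r^3)(\bar W_0-W_0)$ separately; the two are algebraically identical.
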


\begin{proof}

Again, we will only show that case $m=0$. As $r\in [2M,bM]$, $\hat{W}_0=W_0$ and $\hat{P}_{\even}=P_{\even}$. From \eqref{equ:W_0} we have

\begin{align*}
|G_0|\leq&\frac{2M}{r^3}|\hat{P}_{\even}|+\left|\frac{2\lambda r^2-6\lambda Mr-6M^2}{r^2(\lambda r+3M)}\right| |\nabla_t\psi_Z|+2\D|\nabla_t\nabla_r\psi_Z|\\
\lesssim_b &M^{-2}|\hat{P}_{\even}|+M^{-1}|(M\partial)^{\leq 2}(r^{-1}\psi_Z)|.
\end{align*}
As $r\in [2bM,\infty)$, $\hat{W}_0=\bar{W}_0$ and $\hat{P}_{\even}=\bar{P}_{\even}$. From \eqref{equ:bW0} we have

\begin{align*}
|G_0|\leq &\frac{2M}{r^3}|\hat{P}_{\even}|+ \left(1+\frac{\tau}{r}\right)\frac{M}{r^2}|S|\\
+&\frac{3M^2}{(\lambda r+3M)^2}\cdot(2\lambda+3)|\nabla_t (r^{-1}\psi_Z)|+\frac{2M}{r}\D^{-1}|\nabla_L(r^{-1}\psi_Z)|\\
+&\left|\left( -(2\lambda+3)\frac{M}{r^2}+\frac{r}{2}V_Z \right)+\frac{u}{r}\left( -(2\lambda+3)\frac{M}{r^2}+\frac{r}{2}V_Z \right)\right||(r^{-1}\psi_Z)|\\
\lesssim &M^{-2}s^{-3}|\hat{P}_{\textup{even}}|+\left(1+\frac{\tau}{M}\right)M^{-1}s^{-2}|S|\\
              +&s^{-2}|\nabla_t ( r^{-1}\psi_Z)|+s^{-1}|\nabla_L(r^{-1}\psi_Z)|+M^{-1}s^{-2}\left(1+\frac{\tau}{M}\right)|\K^{\leq 2}\cdot r^{-1}\psi_Z|.
\end{align*}
In $r\in [bM,2bM]$, we have

\begin{align*}
G_0=&\left(\Box+\frac{2M}{r^3}\right)\cdot W_0+\Box \eta_b\cdot (\bar{W}_0-W_0)\\
   +&2\nabla\eta_b\cdot\nabla (\bar{W}_0-W_0)+\eta_b \cdot (\Box+2M/r^3)(\bar{W}_0-W_0).
\end{align*}
From \eqref{equ:W_0},

\begin{align*}
\left(\Box+\frac{2M}{r^3}\right)\cdot W_0=&-\frac{2M}{r^3}P_{\even}+\frac{2\lambda r^2-6\lambda Mr-6M^2}{r^2(\lambda r+3M)}\nabla_t\psi_Z+2\D\nabla_t\nabla_r\psi_Z	\\
                                         =&-\frac{2M}{r^3}\hat{P}_{\even}+\frac{2\lambda r^2-6\lambda Mr-6M^2}{r^2(\lambda r+3M)}\nabla_t\psi_Z+2\D\nabla_t\nabla_r\psi_Z\\
                                         &+\eta_b\cdot\frac{2M}{r^3}\left(\frac{3}{2}r\nabla_t\psi_Z- \frac{u+r}{2r}(2\lambda+3)\cdot \psi_Z-\frac{u+r}{2}S\right).
\end{align*}
Therefore

\begin{align*}
\left|\left(\Box+\frac{2M}{r^3}\right)W_0\right| \lesssim_b M^{-2}|\hat{P}_{\even}|+M^{-1}\left(1+\frac{\tau}{M}\right)|S|+M^{-1}\left(1+\frac{\tau}{M}\right)|(M\partial)^{\leq 2}(r^{-1}\psi_Z)|.
\end{align*}
From the definition of $\bar{W}_0$, in $r\in[bM,2bM]$

\begin{align*}
|\bar{W}_0-W_0|\leq &r|\nabla_t\psi_Z|+\frac{(\tau+r)}{2r}|\psi_Z|\lesssim_b M\left(1+\frac{\tau}{M}\right)|(M\partial)^{\leq 1}(r^{-1}\psi_Z)|.
\end{align*}
Similarly, in $r\in[bM,2bM]$

\begin{align*}
M|\Box(\bar{W}_0-W_0)|,|\nabla(\bar{W}_0-W_0)|\lesssim_b &\left(1+\frac{\tau}{M}\right)|(M\partial)^{\leq 2}(r^{-1}\psi_Z)|
\end{align*}
Hence the assertion follows by putting these estimates together.

\end{proof}

Similarly, we have the bound on $G_1$ and $G_2$ below.

\begin{lemma}\label{lem:G12_bound}

In $r\in [2M,2bM]$,

\begin{align*}
|(M\partial)^{\leq m} G_1|\lesssim_{m,b} &M^{-1}|(M\partial)^{\leq m}S|+M^{-1}|(M\partial)^{\leq m+2}(r^{-1}\psi_Z)|,\\
|(M\partial)^{\leq m}G_2|\lesssim_{m,b} &M^{-1}|(M\partial)^{\leq m+2}(r^{-1}\psi_Z)|.
\end{align*}
In $r\in [2bM,\infty)$,

\begin{align*}
|(M\partial)^{\leq m}G_1|\lesssim_m M^{-1}s^{-2}|(M\partial)^{\leq m}S|+M^{-1}s^{-2}|(M\partial)^{\leq m}\K^{\leq 2}\cdot r^{-1}\psi_Z|,\\
|(M\partial)^{\leq m}G_2|\lesssim_m M^{-1}s^{-2}|(M\partial)^{\leq m}S|+M^{-1}s^{-2}|(M\partial)^{\leq m}\K^{\leq 1}\cdot r^{-1}\psi_Z|.
\end{align*}

\end{lemma}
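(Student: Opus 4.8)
The plan is to follow, essentially verbatim, the proofs of Lemmas \ref{lem:GP_bound} and \ref{lem:G0_bound}, now carried out simultaneously for the pair $(\hat W_1,\hat W_2)$; as in those lemmas it suffices to treat $m=0$, the case $m\ge 1$ following by commuting the whole argument with up to $m$ vector fields from $\partial$, whose commutators only reproduce terms already estimated. Throughout, every quantity is projected onto a fixed mode $\ell\ge 2$, so that $\s{\Delta}$ acts by multiplication by $-\Lambda r^{-2}$. The two elementary facts I will use repeatedly are: on any range of $r$ bounded away from $0$ and $\infty$ one has $|(M\partial)^{\le k}\psi_Z|\lesssim_b M\,|(M\partial)^{\le k}(r^{-1}\psi_Z)|$; and the mode relations $|\lambda\,(r^{-1}\psi_Z)|\lesssim|\K^{\le 2}(r^{-1}\psi_Z)|$ and $r\,|\nablas(r^{-1}\psi_Z)|\lesssim|\K^{\le 1}(r^{-1}\psi_Z)|$, which convert the polynomial-in-$\lambda$ coefficients occurring in \eqref{equ:W_1}, \eqref{equ:bW1} and \eqref{equ:bW2} into the $\K$-derivatives appearing on the right-hand sides.

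In the near region $r\in[2M,bM]$ we have $\eta_b\equiv 0$, hence $\hat W_1=W_1$ and $\hat W_2=W_2$; by \eqref{equ:W_2} the quantity $G_2$ vanishes identically there, while $G_1$ is exactly the right-hand side of \eqref{equ:W_1}. Since $r\approx M$ in this region each coefficient is $\lesssim_b M^{-1}$ after extracting the appropriate power of $M$, and the two facts above turn every term into $M^{-1}|(M\partial)^{\le 1}S|+M^{-1}|(M\partial)^{\le 2}(r^{-1}\psi_Z)|$. In the far region $r\in[2bM,\infty)$ we have $\eta_b\equiv 1$, so $\hat W_1=\bar W_1$, $\hat W_2=\bar W_2$, and $G_1$, $G_2$ are the right-hand sides of \eqref{equ:bW1}, \eqref{equ:bW2}. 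Here one only observes that for $r\ge 2bM$ and $\ell\ge 2$ we have $(\lambda r+3M)^{-2}\le\lambda^{-2}r^{-2}$, so that the coefficient of $(r^{-1}\psi_Z)$ in \eqref{equ:bW1} is $\lesssim\lambda M r^{-2}=\lambda M^{-1}s^{-2}$ and the coefficient of $\donest(r^{-1}\psi_Z)$ in \eqref{equ:bW2} is $\lesssim M r^{-1}$; together with $\frac{M}{r^2}S=M^{-1}s^{-2}S$ for the $S$-term of \eqref{equ:bW1} and the mode relations, this gives the stated far-region bounds (the $S$-term in the bound for $G_2$ being vacuous there).

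The only genuine work is in the transition region $r\in[bM,2bM]$, and this is the step I expect to be the main obstacle. Writing $\hat W_i=W_i+\eta_b\,\delta W_i$ with $\delta W_i:=\bar W_i-W_i$ and expanding the defining operators of $G_1$, $G_2$ as in the corresponding step of Lemma \ref{lem:GP_bound}, the $W_i$-parts recombine into the right-hand sides of \eqref{equ:W_1}, \eqref{equ:W_2} (handled as in the near region, and this is the only source of the $S$-term in the bound for $G_1$), and what remains is: $\Box\eta_b\cdot\delta W_i$ with $|\Box\eta_b|\lesssim_b M^{-2}$; $\nabla\eta_b\cdot\nabla\delta W_i$ with $|\nabla\eta_b|\lesssim_b M^{-1}$; $\eta_b$ times the angular terms $\done\delta W_2$ and $\donest\delta W_1$; and $\eta_b\,\Box\delta W_i$. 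From the definitions of $\bar W_1,\bar W_2$ one has, on $bM\le r\le 2bM$, the pointwise bounds $|(M\partial)^{\le k}\delta W_i|\lesssim_b M\,|(M\partial)^{\le k+1}(r^{-1}\psi_Z)|$, and $\delta W_1,\delta W_2$ contain no $S$ (which is why $G_2$'s bound on $[2M,2bM]$ has none). The subtlety is that $\eta_b\,\Box\delta W_i$ naively carries one derivative more than allowed; as in Lemma \ref{lem:GP_bound} one removes it by invoking the equations satisfied by the building blocks: $\Box S=0$ from \eqref{equ:S}, and the Zerilli equation \eqref{equ:psi} in the form
\begin{equation*}
\Box\psi_Z=r\,V_Z\,(r^{-1}\psi_Z)+2\D\,\nabla_r(r^{-1}\psi_Z)+(\Box r)(r^{-1}\psi_Z),
\end{equation*}
so that $\Box\delta W_i$ reduces to at most two derivatives of $r^{-1}\psi_Z$ and at most one of $S$. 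For $G_2$ one first commutes $\donest$ past $\Box$ using \eqref{comm_r}, \eqref{comm_t} and the fixed curvature of the orbit spheres — a first-order, hence harmless, correction — before applying this reduction. Collecting the contributions and using $bM\le r\le 2bM$ yields the bounds on $[2M,2bM]$, which together with the near- and far-region estimates completes the proof.
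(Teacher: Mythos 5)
Your proof is correct and is essentially the intended argument: the paper gives no separate proof of this lemma (it is stated as following ``similarly'' from Lemmas \ref{lem:GP_bound} and \ref{lem:G0_bound}), and your three-region treatment --- $G_2\equiv 0$ and $G_1$ equal to the right-hand side of \eqref{equ:W_1} on $[2M,bM]$, the cut-off/commutation analysis on $[bM,2bM]$ with the Zerilli equation \eqref{equ:psi} and \eqref{equ:S} removing the apparent extra derivative in $\Box\delta W_i$, and the $\lambda$-to-$\K$ conversion of the coefficients on $[2bM,\infty)$ --- is exactly that argument. The only blemish is the near-region bound you record, $M^{-1}|(M\partial)^{\le 1}S|$: since the right-hand side of \eqref{equ:W_1} contains $S$ undifferentiated, your own computation yields $M^{-1}|(M\partial)^{\le m}S|$ as the lemma requires, so the extra derivative on $S$ should be dropped.
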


\subsection{The $\ell=1$ mode}\label{subsec:ell=1}

In this subsection, we discuss the case that the \textbf{even} two tensor $h_{ab}$ is supported on $\ell=1$. It is well known that in this mode, any solution of linear gravity \eqref{linear_gravity} is of the form $-{}^W\pi$ \cite{Zerilli_2,Martel-Poisson}. We put a minus sign before ${}^W\pi$ to be consistent with the notation in $\ell\geq 2$. If $h_{ab}$ further satisfies the harmonic gauge \eqref{HG}, then the vector field $W$ is a solution of \eqref{wave_equation_vector}. As in the mode $\ell=1$, there is no spherical symmetric traceless two tensor, equation \eqref{def:W2} holds trivially and doesn't determine $W_2$. In stead, we rely on the equalities below to express $W_a$ in terms of $h_{ab}$. From \eqref{pi} and the commutation relation \eqref{comm_r}, \eqref{comm_t}, one derives that on any fixed mode $\ell\geq 1$

\begin{equation}\label{def:W0_ell=1}
\begin{split}
&\frac{r^3}{12M}\nabla_t\s{tr}h-\frac{r^3}{3\Lambda M} \done h_{tA}-\frac{r^2}{6M}\D h_{tr}\\
+&\frac{r^4}{6\Lambda M}\D \nabla_t\done h_{rA}-\frac{r^4}{6\Lambda M}\nabla_r \D\done h_{tA}\\
=&-W_0-\frac{r^3}{6M}(1-2\Lambda^{-1})\nabla_t \done W_2,
\end{split}
\end{equation}

\begin{equation}\label{def:W1_ell=1}
\begin{split}
&\frac{r^3}{12M}\D\nabla_r \s{tr}h-\frac{r^3}{3\Lambda M}\D  \done h_{rA}-\frac{r^2}{6M}\D^2 h_{rr}\\
=& -W_1-\frac{r^3}{6M}\D (1-2\Lambda^{-1})\cdot \nabla_r \done W_2,
\end{split}
\end{equation}

\begin{equation}\label{def:W2_ell=1}
\begin{split}
&-\frac{r^4}{12M}\D {}^1\nabla_r\donest \s{tr}h-\frac{r^3}{12M}\left(1-\frac{5M}{r}\right)\donest \s{tr}h\\
+&\frac{r^3}{6M}\D^2\donest h_{rr}+\frac{r^2}{3M}\D h_{rA}\\
=&-W_2-(\Lambda-2)\left(-\frac{r^2}{6M}\D{}^1\nabla_r W_2+\frac{r}{6M}\left(1+\frac{M}{r}\right)W_2 \right).
\end{split}
\end{equation}
Here $\Lambda$ is given by \eqref{def:lambda_ell}. In particular, as $\ell=1$ and $\Lambda=2$, the above three equalities give the expression of $W_a$ in terms of $h_{ab}$.

\section{Analysis of $\hat{W}_0$}\label{sec:W0}

In this and the next sections, we study in solution of the wave equations \eqref{equ:W_0}, \eqref{equ:W_1} and \eqref{equ:W_2} with $\psi_Z$ satisfying the wave equation \eqref{equ:psi}. We note that by putting $\psi_Z\equiv 0$, these three equations is the \textbf{even} part of the equation \eqref{wave_equation_vector}. We further restrict ourselve in the case $W=W_{\ell\geq 1}$ in this and the next sections. The case $W_{\ell=0}$ will be discussed in the section \ref{sec:ell=0}.\\

We will prove the decay estimate for $\hat{W}_0$, Proposition \ref{pro:X0}, based on the vector field method in Appendix \ref{sec:wave equation}. We start with $S$ and $r^{-1}\psi_Z$, which satisfy wave equations without source term, \eqref{equ:psi} and \eqref{equ:S}. The bound on $S$ and $r^{-1}\psi_Z$ then  is translated to the one for $G_P$, the source term in \eqref{equ:hP}. After that, we have the bound on $G_0$, the source term in \eqref{equ:hW0}, and then get the estimate for $\hat{W}_0$. The main result is the following:

\begin{proposition}\label{pro:X0} 

Let $W_0$, $r^{-1}\psi_Z$, $P_{\even}$ and $S$ be solutions of \eqref{equ:W_0}, \eqref{equ:psi}, \eqref{equ:P} and \eqref{equ:S} respectively and are supported on $\ell\geq 1$. We further assume $\psi_{Z,\ell=1}=0$. Let $\hat{W}_0$ and $\hat{P}_{\even}$ be defined as in \eqref{def:hW0} and \eqref{def:hP}. Suppose $\textup{Decay}[\hat{W}_0]$ defined below is finite, then for any $p\in [\delta,2-3\delta]$ and $\tau \geq 0$ we have,

\begin{equation}
\begin{split}
&F[\hat{W}_0](\tau)+E^{p,(0)}_L[\hat{W}_0](\tau)+M^{-1}\int_{\tau}^{\infty} B[\hat{W}_0](\tau')+E^{p-1,(0)}_{L,\nablas}[\hat{W}_0](\tau')d\tau'\\
&\lesssim_{b,R_{\textup{nul}}} \left(1+\frac{\tau}{M}\right)^{-2+p+3\delta}\cdot \textup{Decay}[\hat{W}_0].
\end{split}
\end{equation}
Here

\begin{equation}
\begin{split}
\textup{Decay}[\hat{W}_0]:=&I^{(0)}[\hat{W}_0,\K^{\leq 2}\hat{P}_{\textup{even}}]+M^2 I^{(0)}[\f^{\leq 1}\K^{\leq 4}S,\f^{\leq 3}\K^{\leq 4}S]+M^2 I^{(1)}[\K^{\leq 4}S,\K^{\leq 6}r^{-1}\psi_Z].
\end{split}
\end{equation}

\end{proposition}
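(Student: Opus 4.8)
The plan is to run a bootstrap/cascade argument through the three coupled quantities $S$, $r^{-1}\psi_Z$, and $\hat P_{\even}$, finishing with $\hat W_0$, using at each stage the wave-equation decay machinery recorded in Appendix~\ref{sec:wave equation} (in particular the no-source decay estimate Proposition~\ref{cor:free_price}, the source-term decay estimate, and the higher-order commuted version Proposition~\ref{cor:source_high_order}). First I would handle $S$: by \eqref{equ:S} it satisfies the free tensorial wave equation $\Box S=0$, so commuting with $\K^{\leq k}$ and applying the free decay estimate gives, for every $p\in[\delta,2-\bar\delta]$, a bound of the form $F[\K^{\leq k}S](\tau)+E^{p,(0)}_L[\widetilde{\K^{\leq k}S}](\tau)+\text{(spacetime integral)}\lesssim (1+\tau/M)^{-2+p+\bar\delta}$ times the corresponding initial norm $I^{(0)}$ and $I^{(1)}$ data for $S$ (these are exactly the $M^2 I^{(0)}[\f^{\leq 1}\K^{\leq 4}S,\f^{\leq 3}\K^{\leq 4}S]+M^2 I^{(1)}[\K^{\leq 4}S]$ contributions appearing in $\textup{Decay}[\hat W_0]$). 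Likewise $r^{-1}\psi_Z$ solves the Zerilli equation \eqref{equ:psi}, which for $\ell\geq 2$ is a wave equation with a non-negative potential (of order $M^2 r^{-3}$ and decaying), so the same vector-field toolbox applies and yields identical decay rates with the $M^2 I^{(1)}[\K^{\leq 6}r^{-1}\psi_Z]$ data; the hypothesis $\psi_{Z,\ell=1}=0$ is what makes this legitimate since the Zerilli quantity is only defined on $\ell\geq 2$.

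Next I would pass these bounds to $G_P=\Box\hat P_{\even}$ via Lemma~\ref{lem:GP_bound}, which expresses $|(M\partial)^{\leq m}G_P|$ pointwise in terms of $(M\partial)^{\leq m+1}S$, $(M\partial)^{\leq m+3}(r^{-1}\psi_Z)$ and their $\K^{\leq 2}$-commuted/$\nabla_L$ versions, with the crucial point that the weights in $r$ are favourable in $[2bM,\infty)$ (this is precisely why the substitution \eqref{def:hP} was introduced) while in $[2M,2bM]$ one pays only a harmless factor $(1+\tau/M)$ against a compact $r$-region. Integrating $|G_P|^2$ with the weight $s^{p+1}$ over $D(\tau_1,\tau_2)$ and over the initial/final slices, and using the already-established decay of $S$ and $r^{-1}\psi_Z$ — here one must carefully check that the $(1+\tau/M)$ loss on the compact region combines with the $\tau$-decay of $S,\psi_Z$ so that $\textup{I}_{\textup{source},\bar\delta}[G_P]$ as in \eqref{source_assumption} is finite and controlled by $\textup{Decay}[\hat W_0]$ — gives the hypothesis of the source-term decay estimate for $\hat P_{\even}$. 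That estimate then yields $F[\K^{\leq 2}\hat P_{\even}](\tau)+E^{p,(0)}_L[\K^{\leq 2}\hat P_{\even}](\tau)+\int_\tau^\infty B[\K^{\leq 2}\hat P_{\even}]+E^{p-1,(0)}_{L,\nablas}[\K^{\leq 2}\hat P_{\even}]\,d\tau'\lesssim (1+\tau/M)^{-2+p+3\delta}\,\textup{Decay}[\hat W_0]$, the $\K^{\leq 2}$ being needed because $G_0$ below involves $\K^{\leq 2}$ of $\hat P_{\even}$'s ingredients, and the $I^{(0)}[\K^{\leq 2}\hat P_{\even}]$ term appears in $\textup{Decay}[\hat W_0]$.

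Finally I would run the same step for $\hat W_0$ itself: $G_0=\Box\hat W_0+\tfrac{2M}{r^3}\hat W_0$ is a wave equation with a benign decaying zeroth-order potential, and Lemma~\ref{lem:G0_bound} bounds $|(M\partial)^{\leq m}G_0|$ pointwise by $(M\partial)^{\leq m}\hat P_{\even}$ (with weight $s^{-3}$ at infinity), $(M\partial)^{\leq m}S$, and $(M\partial)^{\leq m+2}(r^{-1}\psi_Z)$ with $\K^{\leq 2}$-commutators. Feeding in the decay bounds just obtained for $\hat P_{\even}$, $S$, and $r^{-1}\psi_Z$ shows $\textup{I}_{\textup{source},3\delta}[G_0]\lesssim \textup{Decay}[\hat W_0]$, and one last application of the source decay estimate (the version handling a decaying potential, or absorbing $\tfrac{2M}{r^3}\hat W_0$ as part of the operator) produces the claimed inequality for $F[\hat W_0]+E^{p,(0)}_L[\hat W_0]+\int_\tau^\infty B[\hat W_0]+E^{p-1,(0)}_{L,\nablas}[\hat W_0]$; the $I^{(0)}[\hat W_0]$ data term matches the first summand of $\textup{Decay}[\hat W_0]$. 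The main obstacle I anticipate is bookkeeping the $r$-weights and the $\tau$-losses through the cascade: one must verify at each stage that the pointwise source bounds of Lemmas~\ref{lem:GP_bound}--\ref{lem:G0_bound}, after squaring and integrating against $s^{p+1}\,dvol$, actually land inside the range of admissible $p$ required by \eqref{source_assumption} (so that the $(1+\tau_1/M)^{2-p-\bar\delta}$ prefactor closes), and that the $(1+\tau/M)$ factors localized to $r\in[2M,2bM]$ do not destroy integrability — this is where the precise decay exponents $-2+p+3\delta$ versus $-2+p+\bar\delta$ must be tracked, and where the choice of $b$ (still free at this point) plays no role beyond entering the implied constants.
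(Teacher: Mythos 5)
Your overall cascade $S,\ r^{-1}\psi_Z \rightarrow G_P \rightarrow \hat P_{\even} \rightarrow G_0 \rightarrow \hat W_0$ is exactly the paper's route, but there is a genuine gap at the first stage that propagates through the whole argument: the decay rate you assign to $S$ and $r^{-1}\psi_Z$ is too weak to close the source estimates. You quote the generic rate $(1+\tau/M)^{-2+p+\bar\delta}$ for these quantities, whereas the paper's Propositions \ref{pro:S} and \ref{pro:psi} give the enhanced rate $(1+\tau/M)^{-4+p+3\delta}$, obtained from Proposition \ref{cor:free_price} with $k=1$ (the $r^2L$-commuted rapid decay of Angelopoulos--Aretakis--Gajic); this is precisely why the $I^{(1)}$ norms of $S$ and $r^{-1}\psi_Z$ appear in $\textup{Decay}[\hat W_0]$. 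The enhancement is not a refinement but a necessity: in Lemma \ref{lem:GP_bound} the source $G_P$ carries an explicit factor $(1+\tau/M)$ against $S$ and $\K^{\leq 2}r^{-1}\psi_Z$ (in $[bM,2bM]$ and in the zeroth-order term at large $r$), so $Ms^{p+1}|G_P|^2$ contains $(1+\tau/M)^2$ times the bulk integrands of $B[S]$, $E^{p-1,(0)}_{L,\nablas}[\K^{\leq 2}r^{-1}\psi_Z]$, etc. With only $\int_{\tau_1}^\infty B[S]\,d\tau \lesssim (1+\tau_1/M)^{-2+2\delta}$ the weighted integral $\int_{\tau_1}^\infty (1+\tau/M)^2 B[S]\,d\tau$ does not decay at all, and the prefactor $(1+\tau_1/M)^{2-p-3\delta}$ in \eqref{source_assumption} then makes $\textup{I}_{\textup{source},3\delta}[G_P]$ infinite; the same issue recurs for $G_0$. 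With the $-4+4\delta$ integrated decay the weighted integral decays like $(1+\tau_1/M)^{-2+4\delta}$ and the prefactor is absorbed for all $p\geq\delta$, which is how Lemmas \ref{lem:GP_decay} and \ref{lem:G0_decay} actually close.

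A second, smaller but real, problem: your justification for treating $r^{-1}\psi_Z$ and $\hat W_0$ as ``wave equations with benign potentials'' is not correct as stated. The Zerilli potential $V_{Z,\ell}$ is \emph{negative}, not non-negative, so neither the $T$-energy positivity nor the Morawetz estimate is automatic; the paper must verify $V_{Z,\ell}\in\mathcal{V}(\delta,R_{\textup{null}},\ell,1)$ (including the $r^p$-commutation conditions \eqref{potential_pk_1}--\eqref{potential_pk_4}, needed exactly for the $k=1$ enhanced decay above) with constants uniform in $\ell\geq 2$, citing explicit Morawetz constructions. Similarly, for $\hat W_0$ the potential $-2M/r^3$ is negative and the paper proves Lemma \ref{lem:V0} by exhibiting the multiplier $f(r)=(1-3M/r)(1+2M/r+2M^2/r^2)$ and checking the pointwise positivity \eqref{potential_Mor} for $\ell\geq 1$ (this is also where the restriction to $\ell\geq 1$ and the hypothesis $\psi_{Z,\ell=1}=0$ matter). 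These verifications are needed before Proposition \ref{cor:source_high_order} can be invoked; ``absorbing the potential into the operator'' without them leaves the Morawetz and $r^p$ bulk positivity unproven.
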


\subsection{Analysis of $S$ and $r^{-1}\psi_Z$}\label{subsec:psi_estimate}

We begin with $S=trh$, which satisfies the wave equation \eqref{equ:S}. From remark \ref{rem:0} we have

\begin{proposition}\label{pro:S}
Let $S$ be an solution of \eqref{equ:S}. Suppose $I^{(1)}[S]$ is finite, then for any $p\in [\delta,2-\delta]$ and $\tau\geq 0$, we have

\begin{equation}
\begin{split}
&\left( F[S]+E^{p,(0)}_L[\tilde{S}]\right)(\tau)+M^{-1}\int_{\tau}^{\infty}\left( B[S]+E^{p-1,(0)}_L[ \tilde{S}]\right)(\tau ')d\tau '\\
\lesssim &\left( 1+\frac{\tau}{M} \right)^{-4+p+3\delta}I^{(1)}[S].
\end{split}
\end{equation}
Here $\tilde{S}=\eta_{\textup{null}}S$ and $\eta_{\textup{null}}$ is the cut-off function define in \eqref{cutoff_rp}. The initial norm $I^{(1)}[\cdot]$ is defined in \eqref{initial_norm}.

\end{proposition}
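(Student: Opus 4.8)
The plan is to recognize that $S=\mathrm{tr}\,h$ solves the \emph{homogeneous} scalar wave equation $\Box S=0$, equation \eqref{equ:S}, with no potential and no inhomogeneity, so that Proposition \ref{pro:S} is the specialization to zero source of the general decay estimates for the Schwarzschild wave equation collected in Appendix \ref{sec:wave equation}. One applies those estimates at the hierarchy level attached to the norm $I^{(1)}[S]$ from \eqref{initial_norm}: the one additional level of the $r^{2}\nabla_{L}$-hierarchy (the passage from $S^{(0)}$ to $S^{(1)}$) built into $I^{(1)}$ is exactly what upgrades the generic rate $\tau^{-2+p+3\delta}$ of Theorems \ref{thm:ell_geq_2}--\ref{thm:ell_geq_1} to the faster rate $\tau^{-4+p+3\delta}$ claimed here. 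Since $\Box S=0$ there is no source term, so $\textup{I}_{\textup{source},\bar{\delta}}[0]=0$ and only $I^{(1)}[S]$ survives on the right-hand side.

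For completeness I would recall the ingredients of that machinery, all of which apply to $S$ verbatim because $\Box S=0$ holds on all of $\mathcal M$. First, boundedness and the red-shift estimate near $\mathcal H^{+}$: feeding $T_{ab}[S](Y(\sigma)+\eta_{rs}T)^{b}$ with $\sigma$ large into the divergence theorem \eqref{div_thm} and using \eqref{g_redshift} controls $F^{T}[S]$, hence $F[S]$, along $\Sigma_{\tau_{2}}$ by $F[S](\tau_{1})$ plus a bulk term supported away from the horizon. Second, an integrated local energy decay (Morawetz) estimate, with the usual degeneracy at the photon sphere $r=3M$, supplies the bulk quantity $\bar B[S]$ and, after commuting with $\mathcal K$, $B[S]$. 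Third, near null infinity the Dafermos--Rodnianski $r^{p}$-hierarchy: for $p\in[\delta,2-\delta]$ one gets $E^{p,(0)}_{L}[\tilde S](\tau_{2})+M^{-1}\int_{\tau_{1}}^{\tau_{2}}E^{p-1,(0)}_{L,\nablas}[\tilde S]\,d\tau\lesssim E^{p,(0)}_{L}[\tilde S](\tau_{1})+(\text{bulk in }r\le R_{\textup{null}}+M)$, the zeroth-order terms being recovered from Corollary \ref{cor:E^p_base} and Lemma \ref{lem:Hardy}. Commuting with the Killing fields $\mathcal K$ and with the operators defining $S^{(0)},S^{(1)}$, and combining the resulting hierarchies by the dyadic pigeonhole argument of Angelopoulos--Aretakis--Gajic (Proposition \ref{cor:free_price}), converts the $r^{p}$-fluxes into pointwise-in-$\tau$ decay of $F[S]$ and $E^{p,(0)}_{L}[\tilde S]$; retaining the $(1)$-level of the hierarchy yields the two extra powers of $\tau$, while the $3\delta$ absorbs the familiar $\delta$-losses at trapping and in closing the hierarchy at the endpoints $p=\delta$ and $p=2-\delta$.

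The only substantive point to verify — and the structural reason $S$ decays faster than the other metric components — is that $S$ really solves the clean equation \eqref{equ:S}: there is no zeroth-order term of the form $Mr^{-3}S$ (contrast \eqref{equ:W_0} and \eqref{equ:W_1}) and no driving by $\psi_Z$ or by lower components. Granting this, and noting that $I^{(1)}[S]$ by its definition in \eqref{initial_norm} controls the initial non-degenerate energy of $\mathcal K^{\le 4}S$ together with the initial $r^{2-\delta}$-fluxes of $\mathcal K^{i_1}\tilde S$ and of $(\mathcal K^{i_1}\tilde S)^{(1)}$, nothing beyond Appendix \ref{sec:wave equation} is required. I do not expect any real obstacle: the statement is essentially a corollary of the general theory, the genuinely hard analytic input — the decay theorem for the Schwarzschild wave equation — having been placed in the appendix.
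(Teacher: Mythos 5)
Your proposal is correct and follows essentially the same route as the paper: since $\Box S=0$ and the zero potential lies in $\mathcal{V}(\delta,R_{\textup{null}},\ell_0,1)$ (Remark \ref{rem:0}), the statement is exactly Proposition \ref{cor:free_price} with $k=1$, whose rate $-2(k+1)+p+(1+2k)\delta=-4+p+3\delta$ and initial norm $I^{(1)}[S]$ match the claim. The only point worth making explicit is that this requires $S$ supported on $\ell\geq 1$ (so that $\ell_0\geq k$ in Remark \ref{rem:0}), which is supplied by the standing assumption of sections \ref{sec:W0}--\ref{sec:W12}.
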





\begin{lemma}
There exists $R_Z$ large enough such that for any $R_{\textup{null}}\geq R_Z$ and $\ell\geq 2$, the potential $V_{Z,\ell}$ for the Zerilli equation \eqref{equ:psi} is a member of $\in\mathcal{V}(\delta,R_{\textup{null}},\ell,1)$ defined in \ref{def:potential2}.  Moreover, the constants in  \eqref{potential_Mor} and \eqref{potential_pk_4} for $k=1$ can be chosen independent of $\ell$.
\end{lemma}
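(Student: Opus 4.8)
The plan is to verify directly that $V_{Z,\ell}$ satisfies the defining conditions of the class $\mathcal{V}(\delta,R_{\textup{null}},\ell,1)$, exploiting the explicit formula
\[
V_{Z,\ell}=-\frac{2M}{r^3}\frac{(2\lambda+3)(2\lambda+3M/r)}{(\lambda+3M/r)^2},\qquad \ell(\ell+1)=2\lambda+2.
\]
First I would record the elementary observation that $V_{Z,\ell}<0$ for all $r>2M$ and all $\ell\geq 2$, and that $|V_{Z,\ell}|\lesssim M s^{-3}$ uniformly in $\ell$: indeed the factor $(2\lambda+3)(2\lambda+3M/r)/(\lambda+3M/r)^2$ is bounded above and below by positive constants independent of $\ell$ (as $\lambda\to\infty$ it tends to $4$, and for $\lambda=2$ — i.e. $\ell=2$ — one computes it explicitly), so $V_{Z,\ell}$ behaves like a fixed multiple of $-2M r^{-3}$ up to $\ell$-independent constants. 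This immediately gives the size bound and the correct $r^{-3}$ falloff at infinity required by $\mathcal{V}$, as well as smoothness up to and including the horizon $r=2M$ in the $(v,R)$-coordinates, since $V_{Z,\ell}$ is a smooth function of $r$ on $[2M,\infty)$ with no singularity at $r=2M$ (the factor $1-2M/r$ does not appear).

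Next I would check the Morawetz-type condition \eqref{potential_Mor}, which is the key structural requirement: one needs $V_{Z,\ell}$ together with its radial derivative to have a definite sign in an appropriate weighted sense near the photon sphere $r=3M$, so that the Morawetz multiplier produces a coercive bulk term after accounting for the potential. Concretely I would compute $\frac{d}{dr}(r^3 V_{Z,\ell})$ or the relevant combination $2V_{Z,\ell}+r\frac{dV_{Z,\ell}}{dr}$ (the exact combination dictated by Definition \ref{def:potential2}) and show it has the right sign on $[2M,R_{\textup{null}}]$, or is controlled by the principal part of the Morawetz current with a small enough constant, uniformly in $\ell$. Because the $\ell$-dependence enters only through $\lambda$ and the function $\lambda\mapsto (2\lambda+3)(2\lambda+3M/r)/(\lambda+3M/r)^2$ is monotone (or at least has bounded variation) in $\lambda$ for each fixed $r$, the bounds are uniform; this monotonicity in $\lambda$ is the point I would verify carefully. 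The condition \eqref{potential_pk_4} for $k=1$ amounts to a bound on one more derivative of $V_{Z,\ell}$ weighted appropriately at infinity, which follows from the same explicit differentiation since $V_{Z,\ell}=-2M\,c_\ell(M/r)\,r^{-3}$ with $c_\ell$ a bounded smooth function of $M/r$ having bounded derivatives uniformly in $\ell$.

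The main obstacle I anticipate is the uniformity in $\ell$ of the near-photon-sphere positivity: one must ensure that the potential term does not degrade the coercivity of the Morawetz estimate as $\ell\to\infty$, i.e. that the "good" part of the potential dominates or that its "bad" part is uniformly small relative to the angular Laplacian term $r^{-2}\Lambda$ which grows with $\ell$. Since $|V_{Z,\ell}|\lesssim M r^{-3}$ is bounded independently of $\ell$ while the angular term grows, this is in fact favorable, and the genuine work is bookkeeping: tracking that the constants produced in \eqref{potential_Mor} and \eqref{potential_pk_4} can be taken independent of $\ell$, which reduces to the uniform bounds on $c_\ell$ and $c_\ell'$ noted above. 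Finally I would fix $R_Z$ to be large enough that the $r^{-3}$-decay and derivative bounds hold in the asymptotic region with the precise constants demanded by $\mathcal{V}(\delta,R_{\textup{null}},\ell,1)$, and conclude.
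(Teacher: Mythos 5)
Your proposal has two genuine gaps. First, the conditions that actually distinguish $\mathcal{V}(\delta,R_{\textup{null}},\ell,1)$ from the base class $\mathcal{V}(\delta,R_{\textup{null}},\ell)$ are \eqref{potential_pk_1} and \eqref{potential_pk_3}, and you never address them. They involve the shifted potential $V^{(1)}_{Z,\ell}=V_{Z,\ell}-\frac{2}{r^2}+\frac{14M}{r^3}$ coming from one commutation with $r^2L$; because of the $-2/r^2$ term, $V^{(1)}_{Z,\ell}$ does \emph{not} decay like $r^{-3}$, so your observation that ``$V_{Z,\ell}$ behaves like a fixed multiple of $-2Mr^{-3}$'' says nothing about \eqref{potential_pk_1}. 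That inequality hinges on a near-cancellation at order $r^{0}$: the angular term contributes $(1-p/2)\ell(\ell+1)=(2-p)(\lambda+1)$, while $-\tfrac12 r^3\big(\tfrac{d}{dr}+\tfrac{p}{r}\big)\big(-\tfrac{2}{r^2}\big)=-(2-p)$, so positivity rests on the leftover term of size $(2-p)\lambda$ dominating the $O(s^{-1})$ corrections from $V_{Z,\ell}+\tfrac{14M}{r^3}$; this is exactly the explicit computation in the paper (the expansion $(2-p)(\lambda-1)+s^{-1}Q_1[\lambda^{-1},s^{-1},p]$ after multiplying by $(1+3M/(\lambda r))^3$), and it is what determines $R_Z$. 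Likewise \eqref{potential_pk_3} requires computing $\lim_{r\to\infty}r^2V^{(1)}_{Z,\ell}=-2\geq-\ell(\ell+1)$. Your plan only treats \eqref{potential_pk_4}, which is the easy one.

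Second, your treatment of the Morawetz condition \eqref{potential_Mor} is a plan rather than a proof, and the heuristic offered is not sound. The comparison ``$|V_{Z,\ell}|\lesssim Mr^{-3}$ uniformly while the angular term grows with $\ell$'' ignores that the angular term $f(1-3M/r)\ell(\ell+1)/r^3$ degenerates quadratically at the photon sphere, whereas the potential contributions $-\tfrac12 f\D\tfrac{dV}{dr}-\tfrac{M}{r^2}fV$ vanish only linearly there and have an unfavorable sign in $2M<r<3M$ (where $f<0$ and $V_{Z,\ell}<0$); moreover the low modes $\ell=2,3$, for which $\ell(\ell+1)$ is not large, are precisely the hard cases. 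Establishing \eqref{potential_Mor} for the Zerilli potential is a genuine computation with a carefully chosen $f$, and the paper does not redo it: it cites \cite{Johnson1} and \cite{Hung-Keller-Wang} for membership in $\mathcal{V}(\delta,R_{\textup{null}},2)$ and only verifies the additional $k=1$ conditions. Your uniformity-in-$\ell$ remarks for \eqref{potential_pk_4} (boundedness of the $\lambda$-dependent factor, equivalently $V_{Z,\ell}\to-8M/r^3$ as $\ell\to\infty$) are fine and agree with the paper, but as written the core content of the lemma is missing.
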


Therefore Proposition \ref{cor:source_high_order} and Proposition \ref{cor:free_price} with $k=1$ apply to $r^{-1}\psi_{Z,\ell}$ with constants being uniform in $\ell\geq 2$.

\begin{proof}
For $R_{\textup{null}}$ large enough, $V_{Z,\ell}\in\mathcal{V}(\delta,R_{\textup{null}},2)$ is proved in \cite[Theorem 2]{Johnson1} and \cite[Theorem 16]{Hung-Keller-Wang}. Here we check $V_{Z,\ell}$ further satisfies \eqref{potential_pk_1}, \eqref{potential_pk_3} and \eqref{potential_pk_4}. Recall that

\begin{align*}
V_{Z,\ell}^{(1)}=V_{Z,\ell}-\frac{2}{r^2}+\frac{14M}{r^3}.
\end{align*}
Through direct computation,

\begin{align*}
&\left( (1-p/2)(2\lambda+2)-\frac{1}{2}r^3\left( \frac{d V_{Z,\ell}^{(1)}}{dr}+\frac{p}{r}V_{Z,\ell}^{(1)} \right) \right)\times\left(1+\frac{3M}{\lambda r}\right)^3\\
=& (2-p)(\lambda-1)+s^{-1}\cdot Q_1[\lambda^{-1},s^{-1},p],
\end{align*}
where $Q_1[\lambda^{-1},s^{-1},p]$ is a polynomial in $\lambda^{-1},s^{-1},p$. Hence as $R_{Z}$ is large enough, one has for any $p\in [\delta,2-\delta]$, $\lambda\geq 2$ and $r\geq R_{\text{null}}$

\begin{align*}
(2-p)(\lambda-1)+s^{-1}\cdot Q_1[\lambda^{-1},s^{-1},p]\geq \frac{1}{2}(2-p)(\lambda-1),
\end{align*}
which verifies \eqref{potential_pk_1}. For \eqref{potential_pk_3}, one checks that

\begin{align*}
\lim_{r\to\infty}r^2V^{(1)}_{Z,\ell}=-2.
\end{align*}
Finally, \eqref{potential_pk_4} clearly holds for each $V_{Z,\ell}$. The uniformness comes from 

\begin{align*}
\lim_{\ell\to\infty} V_{Z,\ell}=-\frac{8M}{r^3}.
\end{align*}

\end{proof}

Therefore from Proposition \ref{cor:free_price} we obtain

\begin{proposition}\label{pro:psi}
Let $\psi_Z=\psi_{Z,\ell\geq 2}$ be a solution of \eqref{equ:psi}. Suppose that $I^{(1)}[r^{-1}\psi_Z]$ is finite, then for any $p\in [\delta,2-\delta]$ and $\tau\geq 0$, we have

\begin{equation}
\begin{split}
&\left( F[ r^{-1}\psi_Z ]+E^{p,(0)}_L[r^{-1}\tilde{\psi}_Z]\right)(\tau)+M^{-1}\int_{\tau}^{\infty}\left( B[r^{-1}\psi_Z]+E^{p-1,(0)}_L[r^{-1}\tilde{\psi}_Z]\right)(\tau')d\tau'\\
\lesssim &\left( 1+\frac{\tau}{M} \right)^{-4+p+3\delta}I^{(1)}[r^{-1}\psi_Z],
\end{split}
\end{equation}
Here $\tilde{\psi}_Z=\eta_{\textup{null}} \psi_Z$ and $\eta_{\textup{null}}$ is the cut-off function define in \eqref{cutoff_rp}. The initial norm $I^{(1)}[\cdot]$ is defined in \eqref{initial_norm}.

\end{proposition}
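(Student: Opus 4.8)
The goal is to upgrade the Zerilli equation \eqref{equ:psi}, which fits the abstract framework of potentials $\mathcal{V}(\delta,R_{\textup{null}},\ell,1)$ by the preceding lemma, into the quantitative decay statement of Proposition \ref{pro:psi}. The plan is to invoke Proposition \ref{cor:free_price} (the rapid-decay result of Angelopoulos--Aretakis--Gajic recorded in Appendix \ref{sec:wave equation}) for the scalar $r^{-1}\psi_{Z,\ell}$, applied with $k=1$. Since $\psi_Z=\psi_{Z,\ell\geq 2}$ is supported away from the low modes, the potential satisfies all the hypotheses needed: by the preceding lemma $V_{Z,\ell}\in\mathcal{V}(\delta,R_{\textup{null}},\ell,1)$ for $R_{\textup{null}}\geq R_Z$, and crucially the Morawetz and $r^p$-hierarchy constants in \eqref{potential_Mor} and \eqref{potential_pk_4} are uniform in $\ell$. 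This uniformity is what lets us sum the modewise estimates back into a statement about the full $\psi_Z$ with a single implied constant.

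First I would fix $p\in[\delta,2-\delta]$ and $\tau\geq 0$, decompose $\psi_Z$ into spherical harmonics, and for each mode $\ell\geq 2$ apply Proposition \ref{cor:free_price} with $k=1$ to the rescaled quantity $r^{-1}\psi_{Z,\ell}$, which solves $\Box(r^{-1}\psi_{Z,\ell})-V_{Z,\ell}(r^{-1}\psi_{Z,\ell})=0$ with zero source term. This yields, for each $\ell$,
\begin{equation*}
\left( F[r^{-1}\psi_{Z,\ell}]+E^{p,(0)}_L[r^{-1}\tilde\psi_{Z,\ell}]\right)(\tau)+M^{-1}\int_\tau^\infty\left( B[r^{-1}\psi_{Z,\ell}]+E^{p-1,(0)}_L[r^{-1}\tilde\psi_{Z,\ell}]\right)(\tau')\,d\tau'\lesssim\left(1+\frac{\tau}{M}\right)^{-4+p+3\delta}I^{(1)}[r^{-1}\psi_{Z,\ell}],
\end{equation*}
with an implied constant independent of $\ell$ thanks to the uniformity just noted. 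The exponent $-4+p+3\delta$ rather than $-2+p+3\delta$ comes from the extra decay built into the $k=1$ version of Proposition \ref{cor:free_price}: the potential $V_{Z,\ell}^{(1)}$ has the correct sign and asymptotic behavior (verified via $\lim_{r\to\infty}r^2 V^{(1)}_{Z,\ell}=-2$ in the lemma), which is exactly the condition that allows one more step of the $r^p$-hierarchy and hence two extra powers of $\tau$-decay.

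Then I would sum over $\ell\geq 2$ and $|m|\leq\ell$. Because the energy norms $F$, $E^{p,(0)}_L$, $B$ are defined as $L^2$-type quantities on $\Sigma_\tau$ and the spherical harmonics are orthogonal (with all angular operators acting diagonally, cf. Lemma \ref{lem:d_mode}), the modewise estimates add up cleanly: $\sum_{\ell,m}F[r^{-1}\psi_{Z,\ell}](\tau)=F[r^{-1}\psi_Z](\tau)$ and similarly for the other norms, while $\sum_{\ell,m}I^{(1)}[r^{-1}\psi_{Z,\ell}]\approx I^{(1)}[r^{-1}\psi_Z]$ up to harmless combinatorial factors absorbed by the finitely many Killing fields in $\K$. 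Monotone convergence justifies exchanging the sum with the time integral. This gives the claimed inequality, and the identification $\tilde\psi_Z=\eta_{\textup{null}}\psi_Z$ is just the notation fixed in \eqref{cutoff_rp}.

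The only genuine subtlety — and the step I would flag as the main obstacle — is the $\ell$-uniformity of all constants, without which the sum over modes would not converge to a finite multiple of $I^{(1)}[r^{-1}\psi_Z]$. This is precisely why the preceding lemma was stated with the explicit claim that the constants in \eqref{potential_Mor} and \eqref{potential_pk_4} can be chosen independent of $\ell$; the proof there reduces uniformity to the limiting behavior $\lim_{\ell\to\infty}V_{Z,\ell}=-8M/r^3$ together with the polynomial structure of $Q_1[\lambda^{-1},s^{-1},p]$, so that for $R_Z$ large the lower-order corrections are dominated uniformly. Granting that lemma, as we may, the remainder is bookkeeping, and Proposition \ref{pro:psi} follows.
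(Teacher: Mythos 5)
Your proposal is correct and follows essentially the same route as the paper: the paper's proof consists precisely of invoking the preceding lemma ($V_{Z,\ell}\in\mathcal{V}(\delta,R_{\textup{null}},\ell,1)$ with constants uniform in $\ell\geq 2$) and then applying Proposition \ref{cor:free_price} with $k=1$ to $r^{-1}\psi_{Z,\ell}$, the modewise summation being implicit there and made explicit, harmlessly, in your write-up.
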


\subsection{Analysis of $\hat{P}_{\textup{even}}$}

We move to equation \eqref{equ:hP} and estimate $\hat{P}_{\even}$. To begin we translate the estimate of $S$ and $r^{-1}\psi_Z$ above to the one for $G_P$ through Lemma \ref{lem:GP_bound}.

\begin{lemma}\label{lem:GP_decay}

\begin{align*}
 \textup{I}_{\textup{source},3\delta}[G_P]\lesssim_{b,R_{\textup{null}}} &M^2 I^{(0)}[\f^{\leq 1}\K^{\leq 2}S,\f^{\leq 3}\K^{\leq 2}r^{-1}\psi_Z]+M^2 I^{(1)}[\K^{\leq 2}S,\K^{\leq 4}r^{-1}\psi_Z].
\end{align*}
Here $\textup{I}_{\textup{source},3\delta}[\cdot]$ and $I^{(1)}[\cdot]$ are define in \eqref{source_assumption} and \eqref{initial_norm}.

\end{lemma}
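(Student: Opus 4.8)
The plan is to unravel the definition of $\textup{I}_{\textup{source},3\delta}[G_P]$ in \eqref{source_assumption} and bound each of the constituent pieces of $\textup{E}^p_{\textup{source}}[\K^{\leq 2}G_P](\tau_1,\tau_2)$ by feeding the pointwise estimates of Lemma \ref{lem:GP_bound} into the decay statements for $S$ and $r^{-1}\psi_Z$ proved in Propositions \ref{pro:S} and \ref{pro:psi}. Since $\textup{I}_{\textup{source},3\delta}[G_P]$ is a supremum over $p\in[\delta,2-3\delta]$ and over $\tau_2\geq\tau_1\geq 0$ of $(1+\tau_1/M)^{2-p-3\delta}\,\textup{E}^p_{\textup{source}}[\K^{\leq 2}G_P](\tau_1,\tau_2)$, it suffices to fix such $p,\tau_1,\tau_2$ and estimate the four types of terms making up $\textup{E}^p_{\textup{source}}$: the bulk integral $M\int_{D(\tau_1,\tau_2)}s^{p+1}|\K^{\leq 2}G_P|^2$, the two initial/final slice integrals $M^2\int_{\Sigma''_{\tau_i}}|\K^{\leq 2}G_P|^2$, and the near-region higher-derivative bulk term $M^3\int_{D''(\tau_1,\tau_2)}((1-3M/r)^2|\partial\K^{\leq 2}G_P|^2+\D^2|\partial_r\K^{\leq 2}G_P|^2)$. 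Because $\K$ consists of Killing vectors it commutes with $\Box$, $\done$, $\donest$, and with the coefficient functions of $r$ appearing in the definition of $G_P$; hence $\K^{\leq 2}G_P$ obeys exactly the same pointwise bounds as $G_P$ in Lemma \ref{lem:GP_bound} but with two extra $\K$'s landing on $S$ and $r^{-1}\psi_Z$, i.e. with $S$ replaced by $\K^{\leq 2}S$ and $r^{-1}\psi_Z$ by $\K^{\leq 4}r^{-1}\psi_Z$ in the worst terms (one extra from the $\K^{\leq 2}$ on the bound itself already present in the far-region estimate).

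First I would handle the far region $r\geq 2bM$. There the pointwise bound of Lemma \ref{lem:GP_bound} gives $|\K^{\leq 2}G_P|\lesssim s^{-1}|\nabla_L\K^{\leq 2}S|+M^{-1}s^{-2}|\K^{\leq 2}S|+s^{-1}|\nabla_L\K^{\leq 4}r^{-1}\psi_Z|+s^{-2}|\nabla_t\K^{\leq 2}r^{-1}\psi_Z|+(1+\tau/M)M^{-1}s^{-2}|\K^{\leq 4}r^{-1}\psi_Z|$; squaring and integrating $s^{p+1}$ against $dvol$ converts the $\nabla_L$ terms weighted by $s^{p-1}$ into $E^{p-1,(0)}_L$-type quantities for $\tilde S$ and $r^{-1}\tilde\psi_Z$, while the zeroth-order terms weighted by $s^{p-3}$ (or $s^{p-3}(1+\tau/M)^2$) are bounded by the Morawetz/$\bar B$ bulk of $S$ and $r^{-1}\psi_Z$, using that $\int_\tau^\infty$ of those bulks has decay rate $-4+p+3\delta$ from Propositions \ref{pro:S} and \ref{pro:psi}; the extra factor $(1+\tau/M)^2$ from the last term costs exactly two powers of $\tau$, downgrading the $-4+p+3\delta$ rate to $-2+p+3\delta$, which is precisely what $\textup{I}_{\textup{source},3\delta}$ demands after multiplying by $(1+\tau_1/M)^{2-p-3\delta}$. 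Second, in the bounded regions $r\in[2M,bM]$ and $r\in[bM,2bM]$, all weights $s$ are comparable to constants (depending on $b$), the bounds of Lemma \ref{lem:GP_bound} read $|\K^{\leq 2}G_P|\lesssim_b M^{-1}(1+\tau/M)(|(M\partial)^{\leq 1}\K^{\leq 2}S|+|(M\partial)^{\leq 3}\K^{\leq 2}r^{-1}\psi_Z|)$, and one bounds the spacetime integral of the square over $D''(\tau_1,\tau_2)$ (and the slice integrals over $\Sigma''_{\tau_i}$, and the extra derivative term, which adds one more $M\partial$ and hence is absorbed by raising the $\f$- and $\K$-budget by one) by the $B[\cdot]$ and $F[\cdot]$ quantities of $S$ and $r^{-1}\psi_Z$; these enjoy decay rates $-4+p+3\delta$ (for the bulk over $D''$, using $\int_\tau^\infty$) and $-4+p+3\delta$ (for the slices, using the pointwise-in-$\tau$ estimate), and the $(1+\tau/M)^2$ factor again brings both down to the required $-2+p+3\delta$. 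To turn derivatives of order up to $(M\partial)^{\leq 3}$ into the $\f$-norm appearing on the right side I would invoke inequality \eqref{killing_to_partial} together with the wave equations \eqref{equ:psi}, \eqref{equ:S}, trading any $(M\partial)^{\leq k}$ on $S$ or $r^{-1}\psi_Z$ for $M\partial\K^{\leq k-1}$ plus lower-order terms controlled by the equation — this accounts for the precise budget $\f^{\leq 1}\K^{\leq 2}S$, $\f^{\leq 3}\K^{\leq 2}r^{-1}\psi_Z$ in the $I^{(0)}$ term and $\K^{\leq 2}S$, $\K^{\leq 4}r^{-1}\psi_Z$ in the $I^{(1)}$ term.

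The main obstacle, and the step requiring the most care, is the bookkeeping of the $(1+\tau/M)$ weights that appear in Lemma \ref{lem:GP_bound} from the substitutions in $\bar P_{\even}$ (the factors of $u+r$ and $u$). One must verify that squaring these weighted terms and then multiplying by $(1+\tau_1/M)^{2-p-3\delta}$ exactly matches the $-4+p+3\delta$ decay rate available for the bulk/energy of $S$ and $r^{-1}\psi_Z$ — i.e. that $2 + (2-p-3\delta) = 4-p-3\delta$, which is $-(-4+p+3\delta)$, so the product is bounded uniformly in $\tau_1$; this is a tight but consistent balance, and it is the reason the hypothesis requires the stronger initial norm $I^{(1)}$ (decay rate $-4+\cdots$) for $S$ and $r^{-1}\psi_Z$ rather than $I^{(0)}$. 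A secondary technical point is that $u$ is not smooth across $\mathcal{H}^+$, but this is already circumvented by the cutoff $\eta_b$ in the definitions \eqref{def:hP}–\eqref{def:hW2}, so in $r\leq bM$ one simply has $G_P=\Box P_{\even}$ with no $u$-weight at all, and in the transition region $[bM,2bM]$ the commutator terms $\Box\eta_b\cdot(\bar P_{\even}-P_{\even})$, $\nabla\eta_b\cdot\nabla(\bar P_{\even}-P_{\even})$ carry $b$-dependent constants but only bounded powers of $r$, as already recorded inside the proof of Lemma \ref{lem:GP_bound}. Assembling the three regional estimates and taking the supremum over $p$ and $\tau_1$ yields the claimed bound.
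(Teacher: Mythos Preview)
Your proposal is correct and follows essentially the same approach as the paper: split $D(\tau_1,\tau_2)$ into the three $r$-regions dictated by Lemma \ref{lem:GP_bound}, convert the resulting pointwise bounds into $B[\cdot]$, $F[\cdot]$ and $E^{p-1,(0)}_{L,\nablas}[\cdot]$-type quantities for $S$ and $r^{-1}\psi_Z$, and then feed in the decay of Propositions \ref{pro:S}, \ref{pro:psi} (and Proposition \ref{cor:source_high_order} for the $\f$-commuted versions), balancing the $(1+\tau/M)^2$ weight against the extra two powers of decay supplied by $I^{(1)}$. The only minor difference is that the paper keeps the inner region $[2M,bM]$ separate from the transition region $[bM,2bM]$ (since there is no $(1+\tau/M)$ factor in $[2M,bM]$, $I^{(0)}$ alone suffices there via Proposition \ref{cor:source_high_order}), whereas you treat both at once with the worse bound; this is harmless since $I^{(1)}$ appears on the right-hand side anyway.
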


\begin{proof}

We first deal with the spacetime integrand $Ms^{p+1}|G_P|^2$. For $r\in [2M,bM]$, from Lemma \ref{lem:GP_bound}, we have

\begin{align*}
Ms^{p+1}|G_P|^2\lesssim_b M^{-1}|(M\partial)^{\leq 1} S|^2+M^{-1}|(M\partial)^{\leq 3} r^{-1}\psi_Z|^2,
\end{align*}
which is bounded by the integrand of $M B[\f^{\leq 1}S]$ and $M B[\f^{\leq 3} r^{-1}\psi_Z]$. Note that we lose one derivative here because of the photon sphere. Then from Proposition \ref{cor:source_high_order},

\begin{align*}
\int_{D(\tau_1,\tau_2)} Ms^{p+1}|G_P|^2\cdot\chi_{[2M, bM]} dvol\lesssim_{b,R_{\textup{null}}} &M^2 F[\f^{\leq 1}S,\f^{\leq 3} r^{-1}\psi_Z](\tau_1)\\
\lesssim_{R_{\textup{null}}} &\left(1+\frac{\tau_1}{M}\right)^{-2+2\delta}\cdot M^2 I^{(0)}[\f^{\leq 1}S,\f^{\leq 3}r^{-1}\psi_Z].
\end{align*}
For $r\in [bM,2bM]$, from Lemma \ref{lem:GP_bound}, we have

\begin{align*}
Ms^{p+1}|G_P|^2\lesssim_b M^{-1}\left(1+\frac{\tau}{M}\right)^2|(M\partial)^{\leq 1} S|^2+M^{-1}\left(1+\frac{\tau}{M}\right)^2|(M\partial)^{\leq 3} r^{-1}\psi_Z|^2,
\end{align*}
which is bounded by the integrand of $\left(1+\frac{\tau}{M}\right)^2\cdot M B[S]$ and $\left(1+\frac{\tau}{M}\right)^2\cdot M B[\f^{\leq 2} r^{-1}\psi_Z]$.  Note that in this region $T$ is strictly timelike and from \eqref{killing_to_partial} we can replace $M B[\f^{\leq 2} r^{-1}\psi_Z]$ by $M B[\K^{\leq 2} r^{-1}\psi_Z]$. From Proposition \ref{pro:S} and \ref{pro:psi}, we have

\begin{align*}
M^{-1}\int_{\tau_1}^{\infty}B[S](\tau)+B[\K^{\leq 2} r^{-1}\psi_Z](\tau)d\tau\lesssim_{R_{\textup{null}}} \left(1+\frac{\tau_1}{M}\right)^{-4+4\delta}I^{(1)}[S,\K^{\leq 2}r^{-1}\psi_Z].
\end{align*}
Therefore

\begin{align*}
\int_{D(\tau_1,\infty)} Ms^{p+1}|G_P|^2\cdot\chi_{[bM,2bM]}\ dvol \lesssim_{b,R_{\textup{null}}} \left(1+\frac{\tau_1}{M}\right)^{-2+4\delta}\cdot M^2 I^{(1)}[S,\K^{\leq 2}r^{-1}\psi_Z].
\end{align*}
For $r\in [2bM,\infty]$, from Lemma \ref{lem:G0_bound} we have

\begin{align*}
Ms^{p+1}|G_P|^2\lesssim &M s^{p-1}|\nabla_L S|^2+M^{-1}s^{p-3}|S|^2+M s^{p-1}|\nabla_L \K^{\leq 2} r^{-1}\psi_Z|^2\\
+&M s^{p-3}|\nabla_t  r^{-1}\psi_Z|^2+\left(1+\frac{\tau}{M}\right)^2M^{-1}s^{p-3}|\K^{\leq 2}r^{-1}\psi_Z|^2.
\end{align*}
The terms involving $S$ are bounded by the integrand of $M E^{p-1,(0)}_{L,\nablas}[S]$ and the terms involving $r^{-1}\psi_Z$ are bounded by

\begin{align*}
\left(1+\frac{\tau}{M}\right)^2\cdot M E^{p-1,(0)}_{L,\nablas}[\K^{\leq 2} r^{-1}\psi_Z].
\end{align*}
By Proposition \ref{pro:S} and Proposition \ref{pro:psi}, we obtain

\begin{align*}
\int_{D(\tau_1,\tau_2)\cap\{r\geq 2bM\}} Ms^{p+1}|G_P|^2 dvol \lesssim_{R_{\textup{null}}} &\left(1+\frac{\tau_1}{M}\right)^{-2+p+\delta}\cdot M^2 I^{(0)}[S]\\
+&\left(1+\frac{\tau_1}{M}\right)^{-2+p+3\delta}\cdot M^2 I^{(1)}[\K^{\leq 2}r^{-1}\psi_Z].
\end{align*}

Next we estimate the spacetime integrand $\left(1-\frac{3M}{r}\right)^2|\partial G|^2+\D^2 |\partial_r G|^2$. Note that $bM\geq R_{\textup{null}}$. In this region

\begin{align*}
M^3|\partial G_P|^2\lesssim &M|\partial \f^{\leq 1} S|^2+M|\partial \f^{\leq 3} r^{-1}\psi_Z|^2,\\
M^3|\partial_r G_P|^2\lesssim &M|\partial_r \f^{\leq 1} S|^2+M|\partial_r \f^{\leq 3} r^{-1}\psi_Z|^2,
\end{align*}
which are bounded by the integrand of $M B[\f^{\leq 1} S]$ and $M B[\f^{\leq 3} S]$. By Proposition \ref{cor:source_high_order}, we have

\begin{align*}
M^3\int_{D'(\tau_1,\tau_2)} \left(1-\frac{3M}{r}\right)^2|\partial G_P|^2+\D^2|\nabla_r G_P|^2 dvol&\lesssim M\int_{\tau_1}^{\tau_2} B[\f^{\leq 1} S,\f^{\leq 3} r^{-1}\psi_Z](\tau) d\tau\\
&\lesssim_{R_{\textup{null}}} \left(1+\frac{\tau_1}{M}\right)^{-2+2\delta} I^{(0)}[\f^{\leq 1} S,\f^{\leq 3} r^{-1}\psi_Z].
\end{align*}

Finally, the integrals along $\Sigma_{\tau_1}''$ is bounded as

\begin{align*}
\int_{\Sigma''_{\tau_1}}M^2|G_P|^2 dvol_3&\lesssim \int_{\Sigma''_{\tau_1}} |(M\partial)^{\leq 1} S|^2+|(M\partial)^{\leq 3} r^{-1}\psi_Z|^2 dvol_3\\
&\lesssim_{R_{\textup{null}}} \left(1+\frac{\tau_1}{M}\right)^{-2+2\delta}\cdot M^2 I^{(0)}[ S,\f^{\leq 2} r^{-1}\psi_Z].
\end{align*}
Putting these together, we have for any $p\in [\delta,2-3\delta]$,

\begin{align*}
\left(1+\frac{\tau_1}{M}\right)^{2-p-3\delta} \textup{E}_{\textup{source}}^p[G_P](\tau_1,\tau_2)\lesssim_{b,R_{\textup{null}}} M^2I^{(0)}[\f^{\leq 1}S,\f^{\leq 3}r^{-1}\psi_Z]+M^2I^{(1)}[S,\K^{\leq 2}r^{-1}\psi_Z].
\end{align*}

Then the result follows form the definition of $\textup{I}_{\textup{source},3\delta}[\cdot]$ in \eqref{source_assumption}.

\end{proof}

Now we can apply Proposition \ref{cor:source_high_order} to show that $\hat{P}_{\even}$ decays.

\begin{proposition}\label{pro:P}

Let $P_{\even}$, $S$ and $r^{-1}\psi_Z$ be solutions of \eqref{equ:P}, \eqref{equ:S} and \eqref{equ:psi} respectively. Let $\hat{P}_{\even}$ be defined as in \eqref{def:hP}. Suppose $\textup{Decay}[\hat{P}_\even]$ defined below is finite, then for any $p\in [\delta,2-3\delta]$ and $\tau\geq 0$ we have

\begin{equation}
\begin{split}
&F[\hat{P}_{\even}](\tau)+E^{p,(0)}_L[\hat{P}_{\even}](\tau)+M^{-1}\int_{\tau}^{\infty} B[\hat{P}_{\even}](\tau)+E^{p-1,(0)}_{L,\nablas}[\hat{P}_{\even}](\tau)d\tau\\
& \lesssim_{b,R_{\textup{null}}} \left(1+\frac{\tau}{M}\right)^{-2+p+3\delta}\cdot \textup{Decay}[\hat{P}_\even].
\end{split}
\end{equation}
Here

\begin{align}
\textup{Decay}[\hat{P}_\even]:=I^{(0)}[\hat{P}_{\textup{even}}]+M^2 I^{(0)}[\f^{\leq 1}\K^{\leq 2}S,\f^{\leq 3}\K^{\leq 2} r^{-1}\psi_Z]+M^2 I^{(1)}[\K^{\leq 2}S,\K^{\leq 4}r^{-1}\psi_Z],
\end{align}
and $I^{(k)}[\cdot]$ is defined in \eqref{initial_norm}.

\end{proposition}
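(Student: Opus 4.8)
\textbf{Proof strategy for Proposition \ref{pro:P}.}

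The plan is to apply the abstract decay estimate for wave equations with source term (Proposition \ref{cor:source_high_order}) to the scalar wave equation \eqref{equ:hP}, namely $\Box \hat{P}_{\even}=G_P$. The left-hand side of the claimed inequality is exactly the quantity controlled by that proposition (modulo the small loss $3\delta$ in the power of $\tau$ and the constant depending on $b$ and $R_{\textup{null}}$), so the only real work is to bound the two inputs: the initial norm $I^{(0)}[\hat{P}_{\even}]$, which appears directly in $\textup{Decay}[\hat{P}_{\even}]$, and the source norm $\textup{I}_{\textup{source},3\delta}[G_P]$. The first input is already present as a term in the definition of $\textup{Decay}[\hat{P}_{\even}]$, so nothing needs to be done for it.

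The second input is handled by Lemma \ref{lem:GP_decay}, which was proved precisely for this purpose: it asserts
\[
\textup{I}_{\textup{source},3\delta}[G_P]\lesssim_{b,R_{\textup{null}}} M^2 I^{(0)}[\f^{\leq 1}\K^{\leq 2}S,\f^{\leq 3}\K^{\leq 2}r^{-1}\psi_Z]+M^2 I^{(1)}[\K^{\leq 2}S,\K^{\leq 4}r^{-1}\psi_Z].
\]
Thus, invoking Proposition \ref{cor:source_high_order} with $k=0$ and the source bound from Lemma \ref{lem:GP_decay}, one obtains
\[
F[\hat{P}_{\even}](\tau)+E^{p,(0)}_L[\hat{P}_{\even}](\tau)+M^{-1}\int_\tau^\infty B[\hat{P}_{\even}](\tau')+E^{p-1,(0)}_{L,\nablas}[\hat{P}_{\even}](\tau')d\tau' \lesssim_{b,R_{\textup{null}}} \left(1+\frac{\tau}{M}\right)^{-2+p+3\delta}\big( I^{(0)}[\hat{P}_{\even}]+\textup{I}_{\textup{source},3\delta}[G_P]\big),
\]
and substituting the bound on $\textup{I}_{\textup{source},3\delta}[G_P]$ gives exactly the right-hand side $\left(1+\tfrac{\tau}{M}\right)^{-2+p+3\delta}\textup{Decay}[\hat{P}_{\even}]$ after recognizing that the three terms combine into the stated definition of $\textup{Decay}[\hat{P}_{\even}]$.

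One subtlety to verify is that the hypotheses of Proposition \ref{cor:source_high_order} are genuinely met: the equation \eqref{equ:hP} must be of the admissible form (it is, being $\Box$ applied to a scalar with no zeroth-order potential term), and the source norm appearing in that proposition must match $\textup{I}_{\textup{source},3\delta}[G_P]$ with the correct $\bar\delta=3\delta$ and the correct range of $p$; here one needs $p\in[\delta,2-3\delta]$, which is consistent with the range $[\delta,2-\bar\delta]$ in \eqref{source_assumption} once $\bar\delta=3\delta$. The main (modest) obstacle is therefore purely bookkeeping: checking that the application of Proposition \ref{cor:source_high_order} is legitimate and that the finiteness of $\textup{Decay}[\hat{P}_{\even}]$ — which by Lemma \ref{lem:GP_decay} and Proposition \ref{pro:S}, Proposition \ref{pro:psi} bundles together the initial data of $\hat{P}_{\even}$, $S$, and $r^{-1}\psi_Z$ with enough derivatives — suffices to guarantee that all the intermediate norms invoked (the decay of $S$ and $r^{-1}\psi_Z$ used inside Lemma \ref{lem:GP_decay}) are finite. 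Once these consistency checks are in place, the proof is immediate.
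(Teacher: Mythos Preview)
Your proposal is correct and follows exactly the paper's approach: the paper's entire proof is the single remark ``Now we can apply Proposition \ref{cor:source_high_order} to show that $\hat{P}_{\even}$ decays,'' relying on Lemma \ref{lem:GP_decay} for the source bound and on Remark \ref{rem:0} to justify that the potential $V=0$ is admissible. Your write-up just makes these steps explicit.
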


\subsection{proof of Proposition \ref{pro:X0}}

We turn to equation \eqref{equ:hW0} and prove the Proposition \ref{pro:X0}. Still, we begin with the source term $G_0$. Comparing Lemma \ref{lem:GP_bound} and Lemma \ref{lem:G0_bound}, we are able to bound $G_0$ with the estimate of $G_P$ and $\hat{P}_{\even}$.

\begin{lemma}\label{lem:G0_decay}

\begin{align*}
\textup{I}_{\textup{source},3\delta}[G_0] \lesssim_{b,R_{\textup{null}}} &I^{(0)}[\K^{\leq 2}\hat{P}_{\textup{even}}]+M^2 I^{(0)}[\f^{\leq 1}\K^{\leq 4}S,\f^{\leq 3}\K^{\leq 4}r^{-1}\psi_Z]+M^2 I^{(1)}[\K^{\leq 4}S,\K^{\leq 6}r^{-1}\psi_Z].
\end{align*}
Here $\textup{I}_{\textup{source},3\delta}[\cdot]$ and $I^{(k)}[\cdot]$ are define in \eqref{source_assumption} and \eqref{initial_norm}.
\end{lemma}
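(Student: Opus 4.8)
The plan is to reduce the desired bound on $\textup{I}_{\textup{source},3\delta}[G_0]$ to the already-established decay estimates for $\hat P_{\even}$ (Proposition \ref{pro:P}), for $S$ (Proposition \ref{pro:S}) and for $r^{-1}\psi_Z$ (Proposition \ref{pro:psi}), exactly mirroring the proof of Lemma \ref{lem:GP_decay}. The starting point is the pointwise bound on $G_0$ from Lemma \ref{lem:G0_bound}, split according to the three regions $r\in[2M,bM]$, $r\in[bM,2bM]$ and $r\in[2bM,\infty)$. In each region one multiplies the pointwise bound by the appropriate weight $Ms^{p+1}$ (for the spacetime bulk term in $\textup{E}^p_{\textup{source}}$) or $M^2$ (for the $\Sigma''$ boundary terms) or $M^3$ (for the derivative term in $D''$), and identifies the resulting expression with the integrand of one of the known energy/Morawetz quantities: near the horizon, $Ms^{p+1}|G_0|^2$ is controlled by the integrand of $MB[\K^{\leq 2}\hat P_{\even}]$ together with $MB[\f^{\leq 1}S]$ and $MB[\f^{\leq 2}r^{-1}\psi_Z]$; in $[bM,2bM]$ there is an extra $(1+\tau/M)^2$ factor, absorbed using the faster ($\tau^{-4+\cdots}$) decay coming from $I^{(1)}$; and for $r\geq 2bM$ the weights $s^{-3}|\hat P_{\even}|^2$, $s^{-2}|S|^2$, $s^{-1}|\nabla_L(r^{-1}\psi_Z)|^2$, etc., match the integrands of $E^{p-1,(0)}_{L,\nablas}$ of the respective quantities.

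The second step is to apply the relevant propagation results. For the $\hat P_{\even}$-terms I would invoke Proposition \ref{pro:P}, commuting with the Killing collection $\K$ up to order $2$ (so the $\K^{\leq 2}$ in the statement is exactly what feeds into $\textup{I}_{\textup{source},3\delta}$, which itself already applies $\K^{\leq 2}$ to the source); this produces the $I^{(0)}[\K^{\leq 2}\hat P_{\even}]$ contribution, which in turn, by Proposition \ref{pro:P} again (or by just accepting it as the final form), carries with it the $S$ and $r^{-1}\psi_Z$ initial data with two extra orders of $\K$. For the $S$- and $r^{-1}\psi_Z$-terms appearing directly in Lemma \ref{lem:G0_bound}, I would use Proposition \ref{pro:S} and Proposition \ref{pro:psi} (together with Proposition \ref{cor:source_high_order} for the Morawetz/bulk pieces in the fixed-$r$ regions, and the Killing-to-$\partial$ conversion \eqref{killing_to_partial} valid where $T$ is timelike, i.e.\ for $r\geq bM$). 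Bookkeeping the loss of derivatives: at the photon sphere one loses one derivative, which is why $\f^{\leq 1}$ and $\f^{\leq 3}$ appear; commuting through the two steps $G_P\rightsquigarrow\hat P_{\even}\rightsquigarrow G_0$ accounts for the jump from $\K^{\leq 2}$ in Lemma \ref{lem:GP_decay} to $\K^{\leq 4}$ (and $\K^{\leq 6}$ on $\psi_Z$) in the present statement. Summing the three regional contributions and taking the supremum over $p\in[\delta,2-\bar\delta]$ with $\bar\delta=3\delta$ and over $\tau_2\geq\tau_1\geq 0$ yields the claimed bound on $\textup{I}_{\textup{source},3\delta}[G_0]$, after matching powers $(1+\tau_1/M)^{2-p-3\delta}$ exactly as in the end of the proof of Lemma \ref{lem:GP_decay}.

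The main obstacle is the careful power-counting in the intermediate region $r\in[bM,2bM]$, where the cutoff $\eta_b$ is being turned on and the substituted quantity $\bar P_{\even}$ (hence $\hat P_{\even}$) differs from $P_{\even}$ by terms linear in $u\sim\tau$; this is precisely the source of the $(1+\tau/M)^2$ factor, and one must verify that it is more than compensated by the two extra powers of $\tau$-decay available from $I^{(1)}$-controlled quantities ($S$ and $r^{-1}\psi_Z$ decay like $\tau^{-4+p+3\delta}$ there, versus the $\tau^{-2+p+3\delta}$ budget), so that the $\chi_{[bM,2bM]}$ contribution is in fact better than required. A secondary point of care is ensuring that every term on the right side of Lemma \ref{lem:G0_bound} which carries a factor $|\hat P_{\even}|$ is routed through Proposition \ref{pro:P} rather than re-expanded, so that the $S$ and $\psi_Z$ data are not double-counted; keeping the $\K$-orders straight across this step is the only real subtlety. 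Everything else is a direct repackaging of the estimates already proved, entirely parallel to Lemma \ref{lem:GP_decay}.
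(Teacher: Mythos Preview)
Your proposal is correct and follows exactly the approach the paper indicates: the paper's ``proof'' is the single sentence ``Comparing Lemma~\ref{lem:GP_bound} and Lemma~\ref{lem:G0_bound}, we are able to bound $G_0$ with the estimate of $G_P$ and $\hat{P}_{\even}$,'' and your write-up supplies precisely the details of that comparison, parallel to the proof of Lemma~\ref{lem:GP_decay}, routing the $\hat P_{\even}$-terms through Proposition~\ref{pro:P} and the $S$, $r^{-1}\psi_Z$ terms through Propositions~\ref{pro:S} and~\ref{pro:psi}.
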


Next, we verify the potential of \eqref{equ:hW0} is a member of $\mathcal{V}(\delta,R_{\textup{null}},1)$.

\begin{lemma}\label{lem:V0}
For any $R_{\textup{null}}\geq 10M$, $-2M/r^3$ belongs to $\mathcal{V}(\delta,R_{\textup{null}},1)$ defined in \eqref{def:potential1}.
\end{lemma}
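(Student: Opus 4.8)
The proof is a direct verification that the potential $V(r):=-2M/r^3$ meets every defining condition of the class $\mathcal{V}(\delta,R_{\textup{null}},1)$ in \eqref{def:potential1}: the regularity and fall-off requirements, the Morawetz-compatibility condition \eqref{potential_Mor}, and the first-order hierarchy conditions \eqref{potential_pk_1}, \eqref{potential_pk_3} and \eqref{potential_pk_4} (for $k=1$). The plan is to check these one at a time. The one structural observation worth isolating at the outset is that, in contrast with the Zerilli potential $V_{Z,\ell}$, the potential $-2M/r^3$ is not merely smooth on all of $[2M,\infty)$ but is also globally negative and decays like $r^{-3}$; as a result every inequality below will hold \emph{without} enlarging $R_{\textup{null}}$, which is precisely the content of the phrase ``for any $R_{\textup{null}}\geq 10M$'' in the statement. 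Throughout, the relevant solutions ($\hat W_0$ and the auxiliary quantities feeding into \eqref{equ:hW0}) are supported on $\ell\geq 1$, so the Poincar\'e inequality makes available the angular eigenvalue $\Lambda=\ell(\ell+1)\geq 2$.

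First I would dispose of regularity and fall-off: $V$ is smooth on $[2M,\infty)$, equals the finite value $-1/(4M^2)$ at the horizon $r=2M$, and satisfies $r^2V=-2M/r\to 0$ and $r^3V\equiv -2M$, with all radial derivatives decaying at least this fast. For the Morawetz condition \eqref{potential_Mor} one inspects the contribution of $V$ to the bulk term generated by the Morawetz multiplier: since $r^3V$ is constant, $\frac{d}{dr}(r^3V)=0$, so the potential produces no obstruction to the usual positivity and the condition holds on the whole exterior with room to spare. For the hierarchy conditions I would first record the commuted potential $V^{(1)}:=V-\frac{2}{r^2}+\frac{14M}{r^3}=-\frac{2}{r^2}+\frac{12M}{r^3}$. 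Then $r^2V^{(1)}=-2+12M/r\to -2$, which is exactly \eqref{potential_pk_3}, and \eqref{potential_pk_4} is immediate since $r^3V$, $r^4\frac{dV}{dr}$, $r^3V^{(1)}$ and $r^4\frac{dV^{(1)}}{dr}$ are all polynomials in $M/r$ with bounded coefficients. For \eqref{potential_pk_1}, a short computation gives, for $p\in[\delta,2-\delta]$,
\[
\left(1-\frac{p}{2}\right)\Lambda - \frac{1}{2}\, r^3\left(\frac{dV^{(1)}}{dr} + \frac{p}{r}\, V^{(1)}\right) = \frac{(2-p)(\Lambda-2)}{2} + \frac{6(3-p)M}{r},
\]
which is strictly positive: the first term is nonnegative because $\Lambda\geq 2$ and $p\leq 2$, and the second is positive because $p<3$. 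This yields the required lower bound uniformly in $r\geq R_{\textup{null}}$ and in $\Lambda\geq 2$, with no largeness hypothesis.

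I do not expect a genuine obstacle, since this is a verification lemma; the only point needing a moment's care is the endpoint $\Lambda=2$ (the $\ell=1$ mode), where the angular contribution $\frac{(2-p)(\Lambda-2)}{2}$ vanishes and positivity of the hierarchy combination rests entirely on the term $\frac{6(3-p)M}{r}$. Because the $r^p$-estimates are only invoked for $p\leq 2-\delta<3$, this term is bounded below by a positive multiple of $M/r$, so the estimate still closes. Unlike in the $V_{Z,\ell}$ case there are no subleading $s^{-1}$-terms to absorb, which is why the conclusion holds for every $R_{\textup{null}}\geq 10M$, completing the check.
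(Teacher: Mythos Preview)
There is a genuine gap in your treatment of the Morawetz condition \eqref{potential_Mor}. Your argument is that ``since $r^3V$ is constant, $\frac{d}{dr}(r^3V)=0$, so the potential produces no obstruction.'' But the $V$-dependent part of the left-hand side of \eqref{potential_Mor} is $-\tfrac12 f\bigl(1-\tfrac{2M}{r}\bigr)\frac{dV}{dr}-\tfrac{M}{r^2}fV$, which for $V=-2M/r^3$ (using $\frac{dV}{dr}=-3V/r$) equals $\tfrac{fV}{2r}\bigl(3-\tfrac{8M}{r}\bigr)$. For $r>3M$ one has $f>0$, $V<0$, and $3-8M/r>0$, so this contribution is strictly \emph{negative}; the potential does obstruct. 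Moreover, \eqref{potential_Mor} is not merely a nonnegativity statement---it demands a quantitative lower bound $\gtrsim r^{-3}$ all the way down to $r=2M$, and this requires an explicit choice of Morawetz multiplier $f$ together with a verification that the full combination (including $-\tfrac14\Box\omega$ and the angular term) stays positive. The paper carries this out: with $f(r)=(1-3M/r)(1+2M/r+2M^2/r^2)$ and $\omega=(1-2M/r)(2f/r+f')$ one computes the left-hand side of \eqref{potential_Mor} for $\ell_0=1$ to be $r^{-3}\bigl(2-\tfrac{15}{2}s^{-1}-4s^{-2}+7s^{-3}+130s^{-4}-198s^{-5}\bigr)$, and checks this polynomial in $s^{-1}$ is positive for $s\geq 2$. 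Without some argument of this type, the Morawetz requirement is not established.

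A secondary point: you have partly confused the two potential classes. The lemma asserts membership in $\mathcal{V}(\delta,R_{\textup{null}},1)$, whose defining conditions are \eqref{potential_T}, \eqref{potential_Mor}, \eqref{potential_rp_1} and \eqref{potential_high}. The ``hierarchy'' conditions \eqref{potential_pk_1}, \eqref{potential_pk_3}, \eqref{potential_pk_4} that you verify belong to the refined class $\mathcal{V}(\delta,R_{\textup{null}},1,k)$ of Definition~\ref{def:potential2}, which is not what is claimed here. Conversely you do not explicitly check \eqref{potential_T} (i.e.\ $\tfrac{2}{r^2}-\tfrac{2M}{r^3}\geq 0$ for $r\geq 2M$) or \eqref{potential_rp_1}; both are indeed immediate, but they---not the $V^{(1)}$ computations---are the ones needed.
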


\begin{proof}
The requirements other than the Morawetz one \eqref{potential_Mor} can be verified through direct computation. The choice of Moratwez function $f(r)$ is 

\begin{align*}
f(r):=&\left(1-\frac{3M}{r}\right)\left(1+\frac{2M}{r}+\frac{2M^2}{r^2}\right),\\
\omega(r)=&\D\left(\frac{2f}{r}+\frac{df}{dr}\right).
\end{align*}
Note that $\frac{d f}{dr}=\frac{M}{r^2}+\frac{8M^2}{r^3}+\frac{18M^3}{r^4}>0.$ And we calculate for $V(r)=-2M/r^3$,

\begin{align*}
&f\left(1-\frac{3M}{r}\right)\cdot \frac{2}{r^3}  +\left(-\frac{1}{4}\Box\omega{\cblue-\frac{1}{2}f\D\frac{d V}{d r}-\frac{M}{r^2}fV}\right)\\
=&r^{-3} \left(2 - \frac{15}{2} s^{-1} - 4s^{-2} + 7s^{-3}+ 130s^{-4}  - 198s^{-5}\right),
\end{align*}
which is positive in $r\geq 2M$. Hence \eqref{potential_Mor} holds for $\ell=1$.

\end{proof}

\begin{proof}[proof of Proposition \ref{pro:X0}]
From the view of Lemma \ref{lem:G0_decay} and Lemma \ref{lem:V0}, Proposition \ref{cor:source_high_order} yields the desired result.
\end{proof}

Recall that $Q_{\even}$ satisfies the wave equation \eqref{equ:Q}. 

\begin{proposition}\label{pro:Q}

Let $Q_{\even}$ be a solution of \eqref{equ:Q}. Suppose $I^{(0)}[Q_{\even}]$ is finite, then for any $p\in [\delta,2-\delta]$ and $\tau\geq 0$, we have

\begin{equation}
\begin{split}
&F[Q_{\even}](\tau)+E^{p,(0)}_L[Q_{\even}](\tau)+M^{-1}\int_{\tau}^{\infty} B[Q_{\even}](\tau)+E^{p-1,(0)}_{L,\nablas}[Q_{\even}](\tau)d\tau\\
& \lesssim_{R_{\textup{null}}} \left(1+\frac{\tau}{M}\right)^{-2+p+\delta}\cdot I^{(0)}[Q_{\even}].
\end{split}
\end{equation}
Here $I^{(k)}[\cdot]$ is defined in \eqref{initial_norm}.

\end{proposition}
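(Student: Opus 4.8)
The plan is to reduce the homogeneous tensorial equation \eqref{equ:Q} to a scalar wave equation with a potential that has already been analyzed in the paper, and then to quote the vector field estimates of Appendix \ref{sec:wave equation}. First, since $Q_{\even}$ is an even spherical one-form supported on $\ell\geq 1$, the Hodge decomposition on $\mathbb{S}^2$ produces a unique scalar function $\mathfrak{q}$, also supported on $\ell\geq 1$ and smooth up to the horizon, with $Q_{\even}=-\donest(r\mathfrak{q})$; on each fixed mode $\ell$ this is nothing but $r\mathfrak{q}=-r^2\Lambda^{-1}\done Q_{\even}$. The point of inserting the radial weight $r$ is that $r\,\nablas_A Y_{m\ell}$ is ${}^1\nabla_r$-parallel, so that mode by mode ${}^1\Box(\hat{f}\, r\nablas_A Y_{m\ell})=(\Box\hat{f}+r^{-2}\hat{f})\, r\nablas_A Y_{m\ell}$ for any scalar coefficient $\hat f$, where $\Box\hat f$ is the scalar wave operator on the $\ell$-mode; this uses that the induced connection ${}^1\nabla$ on $\mathcal{L}(-1)$ has angular Laplace eigenvalue $-(\Lambda-1)/r^2$ on $r\nablas_A Y_{m\ell}$, i.e.\ one less than the scalar eigenvalue $-\Lambda/r^2$, together with the commutation relations \eqref{comm_r}, \eqref{comm_t}. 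Hence the equation ${}^1\Box Q_{\even}-\tfrac{1}{r^2}\D Q_{\even}=0$ is equivalent to
\begin{align*}
\Box\mathfrak{q}+\frac{2M}{r^3}\mathfrak{q}=0 ,
\end{align*}
that is, $\Box\mathfrak{q}-V\mathfrak{q}=0$ with $V=-2M/r^3$.

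Next I would invoke Lemma \ref{lem:V0}, which says precisely that $V=-2M/r^3\in\mathcal{V}(\delta,R_{\textup{null}},1)$, so that $\mathfrak{q}$ is governed by the scalar estimates of Appendix \ref{sec:wave equation}. Applying Proposition \ref{cor:source_high_order} with vanishing source $G\equiv 0$ (hence $\bar\delta=\delta$) yields, for every $p\in[\delta,2-\delta]$ and $\tau\geq 0$,
\begin{align*}
F[\mathfrak{q}](\tau)+E^{p,(0)}_L[\mathfrak{q}](\tau)+M^{-1}\int_\tau^\infty B[\mathfrak{q}](\tau')+E^{p-1,(0)}_{L,\nablas}[\mathfrak{q}](\tau')\,d\tau'\lesssim_{R_{\textup{null}}}\left(1+\frac{\tau}{M}\right)^{-2+p+\delta}I^{(0)}[\mathfrak{q}] .
\end{align*}

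Finally I would transfer this back to $Q_{\even}$. The correspondence $Q_{\even}=-\donest(r\mathfrak{q})$, together with its inverse $r\mathfrak{q}=-r^2\Lambda^{-1}\done Q_{\even}$ on each mode, commutes with $T$ and the rotations $\Omega_1,\Omega_2,\Omega_3$ and intertwines $\nabla_r$ with ${}^1\nabla_r$ up to an $r^{-1}$-term; hence the energy norms $F$, $B$, $E^{p,(j)}_L$, $E^{p,(j)}_{\nablas}$ of $\mathfrak{q}$ and of $Q_{\even}$ are comparable after commuting with finitely many Killing fields from $\K$, and in particular $I^{(0)}[\mathfrak{q}]\lesssim I^{(0)}[Q_{\even}]$ while the left-hand side of the displayed inequality dominates the left-hand side of the asserted estimate for $Q_{\even}$. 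Putting these together yields Proposition \ref{pro:Q}.

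Given the machinery of Appendix \ref{sec:wave equation}, this argument is short; the only step that needs genuine care is the reduction in the first paragraph, i.e.\ checking that the factor $r$ in $Q_{\even}=-\donest(r\mathfrak{q})$ is exactly the one that converts ${}^1\Box-\tfrac{1}{r^2}\D$ into $\Box+\tfrac{2M}{r^3}$ with no leftover first-order radial term (had one instead written $Q_{\even}=-\donest\mathfrak q$, a spurious $\frac{2\D}{r}\nabla_r$ term would appear). This is immediate from the mode computation above, or coordinate-free from \eqref{comm_r}, so no real analytic obstacle arises.
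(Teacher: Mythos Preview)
Your proposal is correct and follows essentially the same route as the paper. The paper's own proof reduces \eqref{equ:Q} to the scalar equation $(\Box+2M/r^3)\psi=0$ via the commutation relation $r\done\cdot{}^1\Box=(\Box+r^{-2})\cdot r\done$, which gives $(\Box+2M/r^3)(r\done Q_{\even})=0$, and then invokes Lemma \ref{lem:V0} and Proposition \ref{cor:source_high_order}; your scalar $\mathfrak q$ is just $-\Lambda^{-1}$ times $r\done Q_{\even}$ mode by mode, so the two arguments are the same up to normalization, with your version making the transfer of norms between $Q_{\even}$ and the scalar potential more explicit.
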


\begin{proof}
This actually a consequence of $-2M/r^3\in\mathcal{V}(\delta,R_{\textup{null}},1)$ proved in Lemma \ref{lem:V0}. Note that $Q_{\even}$ is a section of $\mathcal{L}(-1)$ and is automatically supported on $\ell\geq 1$. This is also the reason that we consider the potential $-2M/r^3$ in stead of $\frac{1}{r^2}\D$. The equation \eqref{equ:Q} is equivalent to \eqref{equ:hW0} without $G_0$ in terms application of Proposition \ref{cor:source_high_order}. This can be seen through the communication relation

\begin{align*}
r\done \cdot {}^1\Box=\left(\Box+\frac{1}{r^2} \right)\cdot r\done,
\end{align*}
which implies $r\done Q_{\even}$ satisfies the equation $(\Box+2M/r^3)\cdot r\done Q_{\even}=0$. Then the result follows from Lemma \ref{lem:V0} and Proposition \ref{cor:source_high_order}

\end{proof}

\section{Analysis of $\hat{W}_1$ and $\hat{W}_2$}\label{sec:W12}

In this section we study the remaining two quantities $\hat{W}_1$ and $\hat{W}_2$ in the vector field $\hat{W}$. Recall that $\hat{W}_1$ and $\hat{W}_2$ satisfy a coupled wave equation \eqref{equ:hW1} and \eqref{equ:hW2}. We denote $\W:=(\hat{W}_1,\hat{W}_2)$ and $\mathsf{G}:=(G_1,G_2)$. The equations \eqref{equ:hW1} and \eqref{equ:hW2} can be rewritten as

\begin{align}\label{equ:W12}
\Box\W=A_{\W}\W+\mathsf{G},
\end{align}
where

\begin{align*}
A_{\W}=\frac{1}{r^2}\left[\begin{array}{cc}
2\left(1-\frac{4M}{r}\right) & 2r\left(1-\frac{3M}{r}\right)\done\\
2r\donest & \D
\end{array}\right].
\end{align*}
The main result of this section is

\begin{proposition}\label{pro:X12}

Let $W_0, W_1, W_2$, $r^{-1}\psi_Z$, and $S$ be solutions of \eqref{equ:W_0}, \eqref{equ:W_1}, \eqref{equ:W_2}, \eqref{equ:psi}, and \eqref{equ:S} respectively and are supported on $\ell\geq 1$. We further assume that $\psi_{Z,\ell=1}=0$ and that $W_2$ is \textbf{even}. Let $\hat{W_0}$, $\hat{W}_1$, $\hat{W}_2$ and $\hat{P}_{\even}$ be defined as in \eqref{def:hW0}, \eqref{def:hW1}, \eqref{def:hW2} and \eqref{def:hP}. Suppose $\textup{Decay}[\W]$ defined below is finite. Then for any $p\in [\delta,p-3\delta]$ and $\tau\geq 0$, we have

\begin{equation}
\begin{split}
&F[\W](\tau)+E^{p,(0)}_L[\tilde{\W}](\tau)+M^{-1}\int_{\tau}^{\infty} \bar{B}[\W](\tau')+E^{p-1,(0)}_{L,\nablas}[\tilde{\W}](\tau')d\tau'\\
\lesssim &\left(1+\frac{\tau}{M}\right)^{-2+p+3\delta} \textup{Decay}[\W].
\end{split}
\end{equation}
Here $\W=(\hat{W}_1,\hat{W}_2)$, $\tilde{\W}=\eta_{\textup{null}}\W$ with $\eta_{\textup{null}}$ defined in \eqref{cutoff_rp},

\begin{align*}
\textup{Decay}[\W]:=& F[\W](0)+E^{2-\delta,(0)}_L[\tilde{\W}](0)+ I^{(0)}[\K^{\leq 1}\hat{W}_0,\K^{\leq 3}\hat{P}_{\even}]\\
+&M^2I^{(0)}[\f^{\leq 1}\K^{\leq 5}S,\f^{\leq 3}\K^{\leq 5}r^{-1}\psi_Z,Q_{\even}]+M^2 I^{(1)}[\K^{\leq 5}S,\K^{\leq 7}r^{-1}\psi_Z],
\end{align*}
and $I^{(k)}[\cdot]$ is defined in \eqref{initial_norm}.

\end{proposition}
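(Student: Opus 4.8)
### Proof Proposal for Proposition \ref{pro:X12}

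The plan is to treat the coupled system \eqref{equ:W12} for $\W = (\hat W_1, \hat W_2)$ by the same vector field machinery already used for $\hat W_0$ in Proposition \ref{pro:X0}, the key point being that the coupling matrix $A_{\W}$ is self-adjoint in the appropriate bundle metric, so that combined with the angular energy it produces a non-negative contribution at null infinity and a controllable bulk term. First I would diagonalize, or at least estimate, the operator $-|\nablas \cdot|^2 + (\cdot) \cdot A_{\W} \cdot (\cdot)$ on each spherical mode $\ell \geq 1$: using Lemma \ref{lem:d_mode}, in the basis $Y_{m\ell}$ and $r\Lambda^{-1/2}\nablas_A Y_{m\ell}$ the relevant $2\times 2$ matrix has entries built from $\Lambda$, and one checks (exactly as in Lemma \ref{lem:metric_coefficient_positivity}) that the combination $\mathcal{O} + A_{\W}$ is non-negative definite for $\ell \geq 1$, and strictly positive (comparable to $\mathcal{O}$) for $\ell \geq 2$; for the mode $\ell=1$ one expects a zero eigenvalue, which is handled via Corollary \ref{cor:E^p_base} as was done for $\h_{\ell=2}$ in the proof of \eqref{metric_rp}. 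This establishes the analogue of Lemma \ref{lem:metric_coefficient_positivity} for $A_{\W}$, which is the structural fact that makes the $r^p$-estimate and the red-shift estimate go through.

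Next I would estimate the source $\mathsf{G} = (G_1, G_2)$. Lemma \ref{lem:G12_bound} already bounds $(M\partial)^{\leq m} G_1$ and $(M\partial)^{\leq m} G_2$ pointwise in terms of derivatives of $S$ and $r^{-1}\psi_Z$, with the extra weight in $\tau$ only appearing (as in Lemmas \ref{lem:GP_bound}, \ref{lem:G0_bound}) through the cutoff region $r \in [bM, 2bM]$ — but here the coefficients in Lemma \ref{lem:G12_bound} already carry the full $s^{-2}$ decay near infinity and no $\tau$-growth, so the source analysis is in fact cleaner than for $G_P$ and $G_0$. Following the template of Lemma \ref{lem:GP_decay} and Lemma \ref{lem:G0_decay} I would show
\begin{equation*}
\textup{I}_{\textup{source},3\delta}[\mathsf{G}] \lesssim_{b,R_{\textup{null}}} M^2 I^{(0)}[\f^{\leq 1}\K^{\leq 5}S, \f^{\leq 3}\K^{\leq 5} r^{-1}\psi_Z] + M^2 I^{(1)}[\K^{\leq 5}S, \K^{\leq 7} r^{-1}\psi_Z],
\end{equation*}
using Proposition \ref{pro:S} and Proposition \ref{pro:psi} to convert the spacetime integrals of $S$ and $r^{-1}\psi_Z$ into the initial norms, and Proposition \ref{cor:source_high_order} for the higher-order Killing-commuted versions. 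The number of commutations (up to $\K^{\leq 5}$ on $S$ and $\K^{\leq 7}$ on $r^{-1}\psi_Z$) is dictated by the $\K^{\leq 2}$ already present in Lemma \ref{lem:G12_bound} together with the extra $\K^{\leq 2}$ needed for the system \eqref{equ:W12} — one extra derivative compared to the $\hat W_0$ case because \eqref{equ:W12} is a genuinely coupled system whose angular operator must be differentiated; this accounts for why $\textup{Decay}[\W]$ asks for $\K^{\leq 5}S$ rather than $\K^{\leq 4}S$. The terms $\K^{\leq 3}\hat P_{\even}$ and $\K^{\leq 1}\hat W_0$ in $\textup{Decay}[\W]$ do not enter through $\mathsf{G}$ — they enter only if one wishes to reconstruct $W$ itself — so for the statement as written they are harmless extra assumptions that are automatically finite once Proposition \ref{pro:P} and Proposition \ref{pro:X0} hold; I would simply carry them along.

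Finally, with the coercivity of $A_{\W} + \mathcal{O}$ and the source bound in hand, I would run the three standard estimates for the system: (i) the $r^p$-estimate near null infinity, copying the current $J^p_{\h}$ from the proof of \eqref{metric_rp} with $A_{\textup{main}} + B_{\textup{main}}$ replaced by $A_{\W}$ (plus the lower-order $2M/r^3$ shift), where the commutator $[\nabla_r, r^2 A_{\W}]$ must be computed and absorbed — it is lower order in $r$, so large $R_{\textup{null}}$ absorbs it; (ii) the red-shift estimate near $\mathcal{H}^+$, where the only thing to check is that the coefficient of $\nabla_{\underline L'}$ in $\Box \W$ is non-negative (or at least dominated by the surface gravity), which is visible from \eqref{equ:hW1}–\eqref{equ:hW2} after the same regularization $\h_7' = \D^{-1}(\h_7 - \h_4)$ type substitution was used in subsection \ref{subsec:decomp_h} — here $\hat W_2$ near $r=2M$ needs the analogous rescaling since $\donest W_1$ degenerates, but $\hat W_1, \hat W_2$ are already the regular components; (iii) the interior Morawetz/$T$-current estimate supplied by Proposition \ref{cor:source_high_order} once the potential-type term $A_{\W}$ is recognized as sitting in the admissible class (the scalar pieces $\frac{2}{r^2}(1-4M/r)$ and $\frac{1}{r^2}\D$ are of the form handled in Lemma \ref{lem:V0} and the $\mathcal{V}(\delta,R_{\textup{null}},\ell,k)$ framework, and the off-diagonal angular operators $r\done, r\donest$ are antisymmetric and contribute the $\mathcal{O}$-improvement). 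Summing the three currents with small relative coefficients, exactly as in the proof of Proposition \ref{pro:metric_far} and Proposition \ref{pro:X0}, and inserting the source decay $\textup{I}_{\textup{source},3\delta}[\mathsf{G}]$, yields the stated $(1+\tau/M)^{-2+p+3\delta}$ decay.

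The main obstacle I anticipate is item (ii): verifying that the coupled pair $(\hat W_1, \hat W_2)$ admits a good red-shift estimate at the horizon despite the $\done, \donest$ coupling and despite the fact that the original components $W_1$ and $W_2$ (as opposed to the hatted ones) are tied to a decomposition that is not manifestly regular across $\mathcal{H}^+$ — one must confirm that the regularized variables genuinely extend smoothly and that the bad-signed terms in $\Box\W$ at $r=2M$ are of lower order, mirroring the argument around \eqref{metric_rs} but now for the one-form/scalar pair rather than the seven-component $\h'$. The coercivity computation in step one for the borderline mode $\ell=1$ is a secondary technical point, but it is genuinely needed since $W$ is only assumed supported on $\ell \geq 1$.
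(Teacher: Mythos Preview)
Your proposal misses the central mechanism of the paper's proof and, as a result, misidentifies the role of several terms in $\textup{Decay}[\W]$.

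First, $A_{\W}$ is \emph{not} self-adjoint: the off-diagonal entries are $2r(1-3M/r)\done$ and $2r\donest$, which differ by the factor $(1-3M/r)$. The paper singles this out explicitly (subsection~\ref{subsection:T}): the non-self-adjoint part produces an uncontrolled bulk term $Err_{1,[2M,4R_T]}[\W,\mathsf{F}]$ in the $T$-current, of the schematic form $\nabla_t\hat W_1\cdot\done\hat W_2$ and $\nabla_t\hat W_1\cdot\hat W_1$, supported in a bounded $r$-region. This cannot be absorbed by choosing $R_{\textup{null}}$ large, and it is \emph{not} handled by treating $A_{\W}$ as a potential in the class $\mathcal{V}(\delta,R_{\textup{null}},\ell_0)$. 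Your item (iii) therefore does not go through.

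Second, and relatedly, the terms $I^{(0)}[\K^{\leq 1}\hat W_0,\K^{\leq 3}\hat P_{\even}]$ and $I^{(0)}[Q_{\even}]$ in $\textup{Decay}[\W]$ are \emph{not} ``harmless extra assumptions''. They are essential, and they enter precisely to close the Morawetz and $T$-current estimates. The paper's key idea is to use the algebraic relations \eqref{X1r_X2o_substitution}--\eqref{X1t_substitution}, which express $\nabla_r W_1+\done W_2$, $\donest W_1+\D\nabla_r W_2$, and $\nabla_t W_1$ in terms of auxiliary quantities $F_1,F_2,F_3$ built from $\hat W_0$, $\hat P_{\even}$, $Q_{\even}$, $S$, $r^{-1}\psi_Z$. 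These substitutions are used twice: once in the Morawetz current (subsection~\ref{subsection:Morawetz}) to replace $|\nabla_r W_1|^2$ and $|\nabla_r W_2|^2$ by angular terms plus controlled errors, which both eliminates the photon-sphere degeneracy (the paper obtains $\bar B[\W]$, not $B[\W]$) and makes the resulting $2\times2$ matrix $\mathbb{M}_{[2M,bM]}$ positive-definite; and once (subsection~\ref{subsec:estimate_W12}) to rewrite $\nabla_t\hat W_1 = F_3$ in the non-self-adjoint error $Err_{1,[2M,4R_T]}$. The decay of $F_1,F_2,F_3$ then comes from Propositions~\ref{pro:X0}, \ref{pro:P}, \ref{pro:Q}, which is exactly why $\hat W_0$, $\hat P_{\even}$, $Q_{\even}$ appear in $\textup{Decay}[\W]$.

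In short: the proof is not a routine rerun of the $\hat W_0$ argument with a matrix potential. The structural input you are missing is the use of the first-order identities \eqref{X1r_X2o_substitution}--\eqref{X1t_substitution} to trade bad derivatives of $\W$ for already-controlled quantities.
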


From now on we assume that for all $\tau\geq 0$,

\begin{equation}\label{assumption}
\begin{split} 
E^{2-\delta,(0)}[\tilde{\W}](\tau)<&\infty,\\
\int_{0}^\tau E^{2-\delta,(0)}[\tilde{\W}](\tau') d\tau'<&\infty.
\end{split}
\end{equation}
The purpose of the assumption is to add a zeroth order term into the boundary term of $T$-current in subsection \ref{subsection:T} through Corollary \ref{cor:F_base}. This assumption is actually satisfied for any smooth solution with $E^{2-\delta,(0)}[\tilde{\W}](0)<\infty$. In subsections below, we construct for all $p\in [\delta,2-\delta]$ a current $J^p_{\W}$ such that for any $\tau\geq 0$,

\begin{align}\label{boundary}
\int_{\Sigma_\tau} J^p_{\W}\cdot n dvol_3 \approx &F[\W](\tau)+E^{p,(0)}[\tilde{\W}](\tau).
\end{align}
And for any $r\geq R_{\textup{null}}+M$,

\begin{align}
\int_{\mathbb{S}^2(\tau,2M)} J^p_{\W}\cdot L\ dvol_{\mathbb{S}^2}\geq &0, \label{boundary_horizon}\\
\left(\int_{\mathbb{S}^2(\tau,r)} J^p_{\W}\cdot \underline{L}\  r^2dvol_{\mathbb{S}^2}\right)_{-}\lesssim &s^{-(2-\delta)} M^{-1}E^{2-\delta,(0)}_L[\tilde{\W}](\tau) \label{boundary_null},
\end{align}
where $(\cdot)_{-}$ stands for the negative part. For we want to show that

\begin{align}\label{bulk}
\int_{\Sigma_\tau} \Div J^p_{\W}-\textup{Err}[\W,\mathsf{G}]-{Err}[\W,\mathsf{F}] dvol_3 \gtrsim_{b,R_{\textup{null}}}  M^{-1}\big( \bar{B}[\W](\tau)+E^{p-1,(0)}_{L,\nablas}[\tilde{\W}](\tau)\big).
\end{align}
Here $\textup{Err}[\W,\mathsf{G}]$ is of the form $\nabla\W\cdot G$. The term ${Err}[\W,\mathsf{F}]$ is a quadratic form involving $\W$ and its first order derivatives. The reason we group them together is that they can be rewritten in terms of $S$, $r^{-1}\psi_Z$, $\hat{P}_{\even}$, $Q_{\even}$ and $\hat{W}_0$, which we already controlled. We record below the relations we need to do so.\\

By rewriting the definition of $S_W$, $Q_{\even}$ and $P_{\even}$, \eqref{def:SW}, \eqref{def:Q} and \eqref{def:P}, we have

\begin{align}
\nabla_r W_1+\done W_2+\frac{2}{r}W_1=&F_1, \label{X1r_X2o_substitution}\\
\donest W_1+\D\nablas_r W_2+\D\cdot\frac{1}{r}W_2=&F_2, \label{X1o_X2r_substitution}\\
\nabla_t W_1=&F_3.\label{X1t_substitution}
\end{align}
Here

\begin{equation}\label{def:F}
\begin{split}
F_1:=&-\frac{1}{2}S+\frac{1}{2} tr h^{\RW}+\D^{-1}\nabla_t W_0,\\
F_2:=&Q_{\even}-\frac{r}{2}\donest S+\frac{r}{2}\donest trh^{\RW}+\frac{r^3}{3}  \donest  \slashed{\Delta}_Z( r^{-1}\psi_Z),\\
F_3:=&\D\nabla_r W_0+\D\cdot\frac{1}{r}W_0+\D\cdot\frac{1}{r} P_{\even}.
\end{split}
\end{equation}

\subsection{$T$-current}\label{subsection:T}

In this subsection we construct a current $J_1$ which satisfies the following two properties. First, there exists a constant $C_T>0$ such that

\begin{align}\label{T_boundary}
\frac{1}{C_T} F^T[\W](\tau)\leq \int_{\Sigma_\tau} J_1\cdot n dvol_3 \leq C_T F^T[\W](\tau).
\end{align}
Second,

\begin{align}\label{T_bulk}
\Div J_1=\textup{Err}_1[\W,\mathsf{G}]+Err_{1,[2M,4R_T]}[\W]+Err_{1,[R_T,\infty)}[\W].
\end{align}
Here

\begin{align*}
\textup{Err}_{1}[\W,\mathsf{G}]:={\cred \nabla_t\W\cdot\mathsf{G}+\eta_T\cdot(9\nabla_t W_1\cdot G_1+\nabla_t W_2\cdot G_2)},
\end{align*}

\begin{align*}
Err_{1,[R_T,\infty)}[\W]:=\left(-\frac{1}{4}\eta_T'\right)\left( -9{\cpur |\nabla_L \hat{W}_1|^2}-{\cpur |\nabla_L \hat{W}_2|^2} \right)+{\cpur\chi_{[R_T,\infty)}(r)\cdot \frac{6M}{r^2}\nabla_t \hat{W}_2\donest \hat{W}_1},
\end{align*}

\begin{align*}
Err_{1,[2M,4R_T]}[\W,\mathsf{F}]:=& {\cblue\left[ -\chi_{[2M,R_T]}(r)\cdot \frac{6M}{r^2} + \eta_T(r)\cdot\left(\frac{16}{r}-\frac{54M}{r^2}\right)\right]\nabla_t \hat{W}_1\done \hat{W}_2}\\
+&{\cblue \eta_T\cdot \frac{18}{r^2}\left(1-\frac{4M}{r}\right)\nabla_t \hat{W}_1\cdot \hat{W}_1}.
\end{align*}
The cut-off function $\eta_T(r)$ and a large number $R_T$ will be defined below. Here $\textup{Err}_1[\W,\mathsf{G}]$ is of the form $\nabla_t\W\cdot \mathsf{G}$ and will be part of $\textup{Err}[\W,\mathsf{G}]$. The term $Err_{1,[R_T,\infty)}[\W]$ falls off rapidly in $r$ and can be absorbed by using $K^p_4$ in subsection \ref{subsec:rp}. The term $Err_{1,[2M,4R_T]}[\W,\mathsf{F}]$ comes from the feature that $A_{\W}$ is not self-adjoint. We will rely on \eqref{X1t_substitution} and \eqref{X1o_X2r_substitution} to estimate it.

Another difficulty in constructing $J_1$ is to make $J_1\cdot n$ positive definite. Let 

\begin{align*}
A_{\W,\textup{main}}:=\frac{1}{r^2}\left[\begin{array}{cc}
2 & 2r\done\\
2r\donest & 1
\end{array}\right]
\end{align*}
be the leading term of $A_{\W}$. We focus on a fixed mode $\ell\geq 1$ and $|m|\leq \ell$. By using $Y_{m\ell}$ and $r{\Lambda}^{-1/2} \nablas_A Y_{m\ell}$ as basis, we calculate that

\begin{align*}
|\nablas\W|^2+ W\cdot A_{\W,\textup{main}} \cdot W=_s \W\cdot
r^{-2}\left[ \begin{array}{cc}
\Lambda+2 & -2\sqrt{\Lambda}\\
-2\sqrt{\Lambda} & \Lambda
\end{array}\right]\cdot \W.
\end{align*} 

The worst case is as $\ell=1$ and $\Lambda=2$, the above matrix is only semi-positive definite with eigenvalues $\{0,6\}$. Hence we need to be careful about the lower order term even as $s$ is large.\\

Now we start the construction. We use $C_T$ to stand for a constant which may increase from line to line. Define two self-adjoint operators

\begin{equation}
A_{\W,\textup{ex}}:=\frac{1}{r^2}\left[ \begin{array}{cc}
2\left(1-\frac{4M}{r}\right) & 2r\left(1-\frac{3M}{r}\right)\done\\
2r\left(1-\frac{3M}{r}\right)\donest & \D
\end{array} \right],
\end{equation}

\begin{equation}
A_{\W,\textup{in}}:=\frac{1}{r^2}\left[ \begin{array}{cc}
2\left(1-\frac{4M}{r}\right) & 2r\done\\
2r\donest & \D
\end{array} \right].
\end{equation}
Let $R_T\geq 2R_{\textup{null}}$ be a large number. It will be fixed in subsection \ref{subsec:rp} and we remark that $C_T$ will \textbf{not} depend on $R_T$. Let

\begin{equation}
A_{\W,T}:=\left\{ \begin{array}{cc}
A_{\W,\textup{in}} & r\in [2M,R_T],\\
A_{\W,\textup{ex}} & r\in (R_T,\infty).
\end{array}\right.
\end{equation}
Consider 

\begin{equation}
(J_{1,1})_a:=\bigg(T_{ab}[\W]-\frac{1}{2}\W\cdot A_{\W,T}\cdot \W g_{ab}\bigg)T^b.
\end{equation}
Note $J_{1,1}$ is not continuous across $r=R_T$. But the it has no contribution as we apply divergence theorem \eqref{div_thm} since $T\cdot\nabla r=0$. The reason to use $A_{\W,\textup{ex}}$ for large $r$ is to make sure $J_{1,1}\cdot n$ is positive definite for $r\geq R_T$. To ensure the positivity in $r\in [2M,R_T]$ we need an auxiliary current. Let $\eta_T(r)$ be a cut-off function such that

\begin{equation*}
\begin{split}
\eta_T(r):=&\left\{ \begin{array}{cc}
1 & r\in [2M,R_T],\\
0 & r\in [4R_T,\infty),
\end{array} \right.\\
&|d\eta_T/dr|\leq r^{-1}. 
\end{split}
\end{equation*}
Define

\begin{align*}
A_{\W,\textup{aux}}:=\frac{1}{r^2}\left[ \begin{array}{cc}
0 & 2r\done\\
2r\donest & \D
\end{array} \right],
\end{align*}
and

\begin{align}
(J_{1,2})_a=\eta_T(r) \left( 9T_{ab}[W_1] +T_{ab}[W_2] -\frac{1}{2}\W\cdot A_{\W,\textup{aux}}\cdot\W g_{ab} \right)T^b.
\end{align}
Let $J_{1}:=J_{1,1}+J_{1,2}$. It's easy to check that

\begin{align*}
\int_{\Sigma_\tau} J_1\cdot n\ dvol_3\leq C_T F^T[\W](\tau),
\end{align*}
with the constant $C_T$ doesn't depend on $R_T$. We start to verify the positivity of $J_1\cdot n$. In $[R_T,\infty)$, as $n=L$,

\begin{align*}
J_1\cdot n=&\frac{1}{2}|\nabla_L\W|^2+\frac{9\eta_T(r)}{2}|\nabla_L W_1|^2+\frac{\eta_T(r)}{2}|\nabla_L W_2|^2\\
+& \frac{1}{2}\D\bigg( |\nablas \W|^2+{9\eta_T(r)}|\nablas W_1|^2+{\eta_T(r)}|\nablas W_2|^2\bigg).\\
 &+\frac{1}{2}\D \W\cdot( A_{\W,\textup{ex}}+\eta_T(r)A_{\W,\textup{aux}} )\cdot\W.                
\end{align*}
On a fixed mode $\ell$, the positivity of the sumand in the second and the third line (after integrated along $\mathbb{S}^2$) is equivalent to the positivity of the matrix

\begin{align*}
\left[\begin{array}{cc}
\Lambda+2-8s^{-1} & -2\sqrt{\Lambda}(1-3s^{-1})\\
-2\sqrt{\Lambda}(1-3s^{-1}) & \Lambda-2s^{-1}
\end{array}
\right]
+\eta_T
\left[\begin{array}{cc}
9\Lambda & -2\sqrt{\Lambda}\\
-2\sqrt{\Lambda} & \Lambda-2s^{-1}
\end{array}
\right].
\end{align*}
Here $\Lambda=\Lambda(\ell)$ is defined in \eqref{def:lambda_ell}. The first matrix has eigenvalues $\{\Lambda+1-5s^{-1}\pm(1-3s^{-1}) \sqrt{4\Lambda+1}\}$, which is positive for $\ell\geq 1$ ($\Lambda\geq 2$) and $s\geq 10$. Moreover, for $\ell\geq 3$ ($\Lambda\geq 6$), the two eigenvalues are both comparable to $\Lambda$. Together with the positivity of the second matrix, we finish the verification for $r\in [R_T,\infty)$. Similarly, the positivity of $J_1\cdot n$ in $r\in [2M,R_T]$ is equivalent to the positivity of the matrix

\begin{align*}
&\left[\begin{array}{cc}
\Lambda+2-8s^{-1}  & -2\sqrt{\Lambda}\\
-2\sqrt{\Lambda} & \Lambda-2s^{-1}
\end{array}
\right]+\left[\begin{array}{cc}
9\Lambda & -2\sqrt{\Lambda}\\
-2\sqrt{\Lambda} & \Lambda-2s^{-1}
\end{array}
\right]\\
=&\left[\begin{array}{cc}
10\Lambda+2-8s^{-1} & -4\sqrt{\Lambda}\\
-4\sqrt{\Lambda} & 2\Lambda-4s^{-1}
\end{array}
\right]\geq \left[\begin{array}{cc}
10\Lambda-2 & -4\sqrt{\Lambda}\\
-4\sqrt{\Lambda} & 2\Lambda-2
\end{array}
\right].
\end{align*}
In the last inequality we used $s\geq 2$. The last matrix has eigenvalues $6\Lambda-2\pm 4	\sqrt{\Lambda (\Lambda+1)}$, which are positive and comparable to $\Lambda$ for $\ell\geq 1$ and $\Lambda\geq 2$. We now have checked that for $\ell\geq 1$, 

\begin{align*}
\int_{\Sigma_\tau} J_1\cdot n dvol_3\geq &\frac{1}{C_T}\int_{\Sigma_\tau''} (1-2s^{-1})|\nabla_{\underline{L}'} \W|^2+|\nabla_L \W|^2+|\nablas \W|^2+M^{-2}s^{-2}|\W|^2 dvol_3\\
+&\frac{1}{C_T}\int_{\Sigma_\tau'} |\nabla_L \W|^2+|\nablas \W_{\ell\geq 2}|^2 dvol_3.
\end{align*}
By the assumption \eqref{assumption} and Corollary \ref{cor:F_base}, we can add zeroth order term along $\Sigma_{\tau}'$ and finish the proof of \eqref{T_boundary}.\\

We proceed to compute the divergence of $J_1$ and to verify \eqref{T_bulk}. From \eqref{equ:W12}, we calculate

\begin{equation}
\Div J_{1,1}=_s\left\{\begin{array}{cc}
\nabla_t\W\cdot\mathsf{G}-6M/r^2\cdot\nabla_t \hat{W}_1 \done \hat{W}_2 & r\in [2M,R_T],\\
\nabla_t\W\cdot\mathsf{G}+6M/r^2\cdot\nabla_t \hat{W}_2 \donest \hat{W}_1 & r\in (R_T,\infty).
\end{array}\right.
\end{equation}
and

\begin{align*}
\Div J_{1,2}=&\left(-\frac{1}{4}\eta_T'\right)\left( 9(|\nabla_{\underline{L}}\hat{W}_1|^2-|\nabla_L \hat{W}_1|^2)+(|\nabla_{\underline{L}}\hat{W}_2|^2-|\nabla_L \hat{W}_2|^2) \right)\\
            +&\eta_T\cdot\left(9\nabla_t \hat{W}_1\cdot G_1+\nabla_t \hat{W}_2\cdot G_2\right)\\
            +&\eta_T\cdot\left(\frac{16}{r}-\frac{54M}{r^2}\right)\nabla_t \hat{W}_1\cdot\done \hat{W}_2+\eta_T\cdot \frac{18}{r^2}\left(1-\frac{4M}{r}\right)\nabla_t \hat{W}_1\cdot \hat{W}_1.
\end{align*}
Therefore in $r\in[2M,R_T),$

\begin{align*}
\Div J_1=&{\cblue \left[-\frac{6M}{r^2}+\left(\frac{16}{r}-\frac{54M}{r^2}\right)\cdot\eta_T\right]\nabla_t \hat{W}_1\done \hat{W}_2}+{\cblue \eta_T\cdot \frac{18}{r^2}\left(1-\frac{4M}{r}\right)\nabla_t \hat{W}_1\cdot \hat{W}_1}\\
        +&{\cred \nabla_t\W\cdot\mathsf{G}+\eta_T\cdot(9\nabla_t \hat{W}_1\cdot G_1+\nabla_t \hat{W}_2\cdot G_2)}.        
\end{align*}
In $(R_T,\infty)$,

\begin{align*}
\Div J_1\geq &\left(-\frac{1}{4}\eta_T'\right)\left( -9{\cpur |\nabla_L \hat{W}_1|^2}-{\cpur |\nabla_L \hat{W}_2|^2} \right)+{\cpur \frac{6M}{r^2}\nabla_t \hat{W}_2\donest \hat{W}_1}\\
        +&{\cblue \left(\frac{16}{r}-\frac{54M}{r^2}\right)\cdot\eta_T\nabla_t \hat{W}_1\done \hat{W}_2}+{\cblue \eta_T\cdot \frac{18}{r^2}\left(1-\frac{4M}{r}\right)\nabla_t \hat{W}_1\cdot \hat{W}_1}\\
        +&{\cred \nabla_t\W\cdot\mathsf{G}+\eta_T\cdot(9\nabla_t \hat{W}_1\cdot G_1+\nabla_t \hat{W}_2\cdot G_2)}.        
\end{align*}
Then \eqref{T_bulk} follows by collecting terms.

\subsection{red-shift current}\label{subsection:red-shift}

In this subsection, we construct the red-shift current $J_2$ which satisfies the following two properties. First, there exists a constant $C_{rs}>0$ such that for any $0<\epsilon\leq 1$,

\begin{align}\label{redshift_boundary}
\frac{\epsilon}{C_{rs}}F[\W](\tau) \leq \epsilon\int_{\Sigma_\tau} J_2\cdot n\ dvol_3+F^T[\W](\tau) \leq C_{rs} F[\W](\tau).
\end{align}
Second, with the same constant $C_{rs}$, we have

\begin{align}\label{redshift_bulk}
\Div J_2= K_2 +\textup{Err}_2[\W,\mathsf{G}]+Err_{2,[r_{rs,\W},r_{rs}^+]}[\W],
\end{align}
Here

\begin{align*}
K_2&\geq \frac{1}{C_{rs}} M^{-3}|(M\partial)^{\leq 1}\W|^2\cdot\chi_{[2M,r_{rs,\W}]},\\
\textup{Err}_2[\W,\mathsf{G}]&=\nabla_Y\W\cdot \mathsf{G},\\
\big|Err_{2,[r_{rs,\W},r_{rs}^+]}[\W]\big| &\leq C_{rs}M^{-3}|(M\partial)^{\leq 1}\W|^2\cdot\chi_{[r_{rs,\W},r_{rs}^+]}.
\end{align*}
The vector field $Y$ and $r_{rs,\W}\in (2M,r_{rs}^+)$ will be defined below. The term $\textup{Err}_2[\W,\mathsf{G}]$ will be part of $\textup{Err}[\W,\mathsf{G}]$. The term $Err_{2,[r_{rs,\W},r_{rs}^+]}[\W]$ will be absorbed into $K_3$ in subsection \ref{subsection:Morawetz}.\\

We start the construction of $J_2$. We use $C_{rs}$ to stand for a constant that may increase from line to line. Let $\sigma>0$ be a number to be determined and $Y(\sigma)$ be the red-shift vector define in \eqref{vector_rs}. Consider

\begin{align*}
J_{2,a}=\left( T_{ab}[\W]-\frac{1}{r^2}|\W|^2 g_{ab}\right) Y^b.
\end{align*}
Since $Y(\sigma)\big|_{r=2M}=2\underline{L}'$, we deduce \eqref{redshift_boundary} with the constant $C_{rs}$ depending on $\sigma$. \\

We now compute $\Div J_2$. From \eqref{g_redshift}, on the horizon $r=2M$ we have

\begin{align*}
\Div J_{2}\bigg|_{r=2M}=&\frac{\sigma}{2}|\nabla_v\W|^2+\frac{1}{2M}|\nabla_R\W|^2+\frac{2}{M}\nabla_R\W\cdot\nabla_v\W+\frac{\sigma}{2}|\slashed{\nabla}\W|^2+\frac{\sigma}{8M^2}|\W|^2\\
+&\nabla_{Y}\W\cdot (\Box\W-\frac{1}{r^2}\W).
\end{align*}
By choosing $\sigma$ large enough depending on $A_{\W}$, one can make

\begin{align*}
\Div J_{2}\bigg|_{r=2M}- \nabla_Y\W\cdot \mathsf{G}\geq \frac{1}{C_{rs}} M^{-3}|(M\partial)^{\leq 1}\W|^2.
\end{align*}
Through continuity, there exists $r_{rs,\W}^-\in (2M,r_{rs}^+)$ such that

\begin{equation*}
\begin{split}
\Div J_2-{\cred \nabla_Y\W\cdot \mathsf{G}}  \geq &\frac{1}{C_{rs}} \left( M^{-1}|\partial\W|^2+M^{-3}|\W|^2\right)\ \ \textup{in}\ [2M,r_{rs,\W}^-],\\
\big| \Div J_2-{\cred \nabla_Y\W\cdot \mathsf{G}}\big|  \leq& C_{rs} \left( M^{-1}|\partial\W|^2+M^{-3}|\W|^2\right)\ \ \textup{in}\ [r_{rs,\W}^-,r_{rs}^+],
\end{split}
\end{equation*}
which yields \eqref{redshift_bulk}.

\subsection{Morawetz current}\label{subsection:Morawetz}

In this subsection, we construct a current $J_3$ which satisfies the following two properties. First, there exists a constant $C_{Mor}$ such that

\begin{equation}\label{Morawetz_boundary}
\left|\int_{\Sigma_\tau} J_3\cdot n\ dvol_3\right|\leq C_{Mor}F^T[\W]. 
\end{equation}
Second, with the same constant $C_{Mor}$,

\begin{equation}\label{Morawetz_bulk}
\Div J_3=K_3+\textup{Err}_3[\W,\mathsf{G}]+Err_{3,[2M,bM]}[\W,\mathsf{F}]+Err_{3,[bM,\infty)}[\W].
\end{equation}
Here

\begin{align*}
K_3\geq_s &\frac{1}{C_{Mor}} M^{-1}s^{-2}\bigg( (1-2s^{-2})^2|\nabla_r\W|^2+|\nabla_t \W|^2+|\nablas \W|^2+ M^{-2}s^{-2}|\W|^2 \bigg),\\
\textup{Err}_3[\W,\mathsf{G}]:=&\left(\nabla_X \W+\left(\frac{1}{2}\omega-2\bar{f}\right)\W\right)\cdot\mathsf{G},\\
Err_{3,[bM,\infty)}[\W]:= &\left(-\frac{2M}{r^2}\nabla_X\hat{W}_1\cdot\done \hat{W}_2+\frac{4M}{r^2}\nabla_X\hat{W}_2\cdot\donest \hat{W}_1\right)\cdot\chi_{[bM,\infty)}.
\end{align*}
The number $b$ is the same constant we used to subsection \ref{subsec:to_RW} and its value will be determined in subsection \ref{subsec:estimate_W12}. The functions $f(r),\bar{f}(r),\omega(r)$ and the vector field $X$ will be defined below. The term $\textup{Err}_3[\W,\mathsf{G}]$ will be part of $\textup{Err}[\W,\mathsf{G}]$. The term $Err_{3,[bM,\infty)}[\W]$ will be treated as perturbation and will be absorbed into $K^p_4$ in subsection \ref{subsec:rp}. The term $Err_{3,[2M,bM]}[\W,\mathsf{F}]$ is more involved and will be defined later in this subsection. Compared to the scalar case, $K_3$ doesn't degenerate at photon sphere because of \eqref{X1o_X2r_substitution} and \eqref{X1r_X2o_substitution}.\\

We start the construction of $J_3$. We use $C_{Mor}$ to stand for a constant that may increase from line to line. Let


\begin{equation*}
A_{\W, Mor}:=\frac{1}{r^2}\left[ \begin{array}{cc}
2-8s^{-1}  & (2r-4M)\done\\
(2r-4M)\donest & 1-2s^{-1}
\end{array} \right],
\end{equation*}
and

\begin{align*}
f(r):=&\left(1-\frac{3M}{r}\right)\left(1+\frac{2M}{r}+\frac{2M^2}{r^2}\right),\\
\omega(r)=&\D\left(\frac{2f}{r}+\frac{d f}{dr}\right)\\
X:=&f(r)\D\frac{\partial}{\partial r}.
\end{align*}
Note that $\frac{df}{dr}=\frac{M}{r^2}(1+8s^{-1}+18s^{-2})\geq \frac{M}{r^2}>0$. We consider the current

\begin{align}
(J_{3,1})_a=&T_{ab}[\W]X^b-\frac{1}{4}\nabla_a\omega |\W|^2+\frac{1}{4}\omega \nabla_a |\W|^2-\frac{1}{2}(\W\cdot A_{\W, Mor}\cdot \W) X_a.
\end{align}
Through direct computation,

\begin{align*}
\Div J_{3,1}=_s&\left(1-\frac{2M}{r}\right)^2\frac{d f}{d r}|\nabla_{r}\W|^2+\frac{f}{r}\left(1-\frac{3M}{r}\right)|\slashed{\nabla}\W|^2\\
&+\W\cdot \left(-\frac{1}{4}\Box\omega -\frac{1}{2}[X, A_{\W, Mor}]-\frac{M}{r^2}f A_{\W, Mor} \right)\cdot \W\\
&-\frac{2M}{r^2}\nabla_X\hat{W}_1\cdot\done \hat{W}_2+\frac{4M}{r^2}\nabla_X\hat{W}_2\cdot\donest \hat{W}_1+\frac{M\omega}{r^2}\hat{W}_1\cdot\donest \hat{W}_2\\
&+{\cred\left(\nabla_X \W+\frac{1}{2}\omega \W\right)\cdot\mathsf{G}}.
\end{align*}
Here we used

\begin{align*}
A_{\W}-A_{\W, Mor}=\frac{1}{r^2} \left[
\begin{array}{cc}
0 & -2M\done\\
4M\donest & 0
\end{array}
\right].
\end{align*}
The last line comes from the source term $\mathsf{G}$ and will be part of $\textup{Err}_3[\W,\mathsf{G}]$. Let

\begin{align*}
Err_{3,[bM,\infty)}[\W]:=\left(-\frac{2M}{r^2}\nabla_X\hat{W}_1\cdot\done \hat{W}_2+\frac{4M}{r^2}\nabla_X\hat{W}_2\cdot\donest \hat{W}_1\right)\cdot\chi_{[bM,\infty)}.
\end{align*}
We will take $b$ large enough and show that the remaining terms are positive definite. To simplify the computation, we focus on a fixed mode $\ell\geq 1$, $|m|\leq \ell$. Using the basis $Y_{m\ell}$, $r\Lambda^{-1/2}\nablas Y_{m\ell}$ with $\Lambda$ given by \eqref{def:lambda_ell}, we calculate

\begin{align*}
\W\cdot \mathbb{M}_{[bM,\infty)}  \cdot \W =_s&\frac{f}{r}\left(1-\frac{3M}{r}\right)|\slashed{\nabla}\W|^2+\frac{M\omega}{r^2}\hat{W}_1\cdot\donest \hat{W}_2\\
+&\W\cdot \left(-\frac{1}{4}\Box\omega -\frac{1}{2}[X, A_{\W, Mor}]-\frac{M}{r^2}f A_{\W, Mor} \right)\cdot \W,
\end{align*}  
where $\mathbb{M}_{[bM,\infty)}$ is the matrix

\begin{align*}
\mathbb{M}_{[bM,\infty)}:=& \frac{f(1-3s^{-1})}{r^3}\left[ 
\begin{array}{cc}
\Lambda & 0\\
0 & \Lambda-1
\end{array}
\right]+\frac{M\omega}{2r^3}\left[ 
\begin{array}{cc}
0 & -\sqrt{\Lambda}\\
-\sqrt{\Lambda} & 0
\end{array}
\right]\\
+&\left( -\frac{1}{4} \Box \omega\right) \cdot \left[ 
\begin{array}{cc}
1 & 0\\
0 & 1
\end{array}\right]+\frac{f(1-2s^{-1})}{r^3}\left[ 
\begin{array}{cc}
2-12s^{-1} & (-2+6s^{-1})\sqrt{\Lambda}\\
(-2+6s^{-1})\sqrt{\Lambda} & 1-3s^{-1}
\end{array}\right] \\
-&\frac{f}{sr^3}\left[ 
\begin{array}{cc}
2-8s^{-1} & (-2+4s^{-1})\sqrt{\Lambda}\\
(-2+4s^{-1})\sqrt{\Lambda} & 1-2s^{-1}
\end{array}
\right]
\end{align*}
We claim that $\mathbb{M}_{[bM,\infty)}$ is positive definite for $\ell\geq 1$ as $s$ is large enough. The asymptotic of $\mathbb{M}_{[bM,\infty)}$ reads, 

\begin{align*}
r^3 \mathbb{M}_{[bM,\infty)}=\left[\begin{array}{cc}
(\Lambda+2)-(4\Lambda+33/2)s^{-1} & (-2+13s^{-1})\sqrt{\Lambda}\\
(-2+13 s^{-1})\sqrt{\Lambda} & \Lambda-(4\Lambda-1/2)s^{-1}
\end{array}
\right]+\left[\begin{array}{cc}
O(s^{-2})\Lambda & O(s^{-2})\sqrt{\Lambda}\\
O(s^{-2})\sqrt{\Lambda} & O(s^{-2})\Lambda
\end{array}
\right].
\end{align*}
Here the constant in $O(s^{-2})$ term doesn't depend on $\ell$. The determinant has the expansion

\begin{align*}
\textup{det} \big(r^3 \mathbb{M}_{[bM,\infty)}\big)=\Lambda(\Lambda-2)-(8\Lambda^2-28\Lambda-1)s^{-1}+O(\Lambda^2s^{-2}).
\end{align*}
For $\ell\geq 2$ and $\Lambda\geq 6$, the above determinant is positive and comparable to $\Lambda^2$ for $s$ large enough. For $\ell=1$ and $\Lambda=2$, the above determinant has the asymptotic $25s^{-1}+O(s^{-2})$, which is also positive for $s$ large enough. Therefore, there exists a constant $s_{Mor}$ such that for all $s\geq s_{Mor}$ and $\ell\geq 1$,

\begin{align*}
\W\cdot \mathbb{M}_{[bM,\infty)}  \cdot \W\ \gtrsim  M^{-3}s^{-4}\Lambda|\W|^2 \approx_s  M^{-1}s^{-2}|\nablas\W|^2+ M^{-3}s^{-4}|\W|^2.
\end{align*}
We from now on require $b\geq s_{Mor}$ and obtain that up to $Err_{3,[bM,\infty)}[\W]$ and $\textup{Err}_3[\W,\mathsf{G}]$, $\Div J_{3,1}$ is positive definite in $[bM,\infty)$.\\
 
We turn to the region $[2M,bM]$. Since $r\leq bM$, we can use \eqref{X1r_X2o_substitution} to replace $\nabla_r\hat{W}_1=\nabla_r {W}_1$ by $-\done W_2-\frac{2}{r}W_1+F_1$ as

\begin{align*}
&\D^{2}\frac{d f}{d r}|\nabla_r W_1|^2 -\frac{2M}{r^2}\nabla_X W_1\cdot \done W_2\\
=&\D^{2}\frac{d f}{d r}\left| \done W_2+\frac{2}{r}W_1 \right|^2+\D\frac{2M}{r^2} f\left(|\done W_2|^2+ \frac{2}{r}W_1\cdot \done W_2 \right)\\
+&{\cblue \D^2\frac{d f}{d r}\left(|F_1|^2-2\left(\done W_2+\frac{2}{r}W_1\right)\cdot F_1\right)} -{\cblue \D\frac{2M}{r^2} f F_1\cdot\done W_2}.
\end{align*}
We denote the last line by $Err'_{3,[2M,bM]}[\W,\mathsf{F}]$. Similarly, using the substitution \eqref{X1o_X2r_substitution} to replace $\D\nabla_r W_2$ by $-\donest W_1-\D\frac{1}{r}W_2+F_2$, we obtain

\begin{align*}
&\left(1-\frac{2M}{r}\right)^2\frac{d f}{d r}|\nabla_{r} W_2|^2+\frac{4M}{r^2}\nabla_X {W}_2 \cdot \donest \hat{W}_1 \\
=&\frac{d f}{d r}\left| \donest W_1+\D\frac{1}{r}W_2 \right|^2- \frac{4M}{r^2}f \left( |\donest W_1|^2+\D\frac{1}{r}W_2\cdot\donest W_1 \right)\\
+&{\cblue \frac{d f}{d r}\left(|F_2|^2-2\left(\donest W_1+\D\frac{1}{r}W_2\right)\cdot F_2\right)+\frac{4M}{r^2}f F_2\cdot \donest W_1}.
\end{align*}
We denote the last line by $Err''_{3,[2M,bM]}[\W,\mathsf{F}]$. Then up to $Err'_{3,[3M,bM]}[\W,\mathsf{F}]$, $Err''_{3,[3M,bM]}[\W,\mathsf{F}]$ and $\textup{Err}_{3}[\W,\mathsf{G}]$, $\Div J_{3,1}$ becomes a quadratic from involving $W_1,W_2,\donest W_1$ and $\done W_2$:

\begin{align*}
&\frac{f}{r}\left(1-\frac{3M}{r}\right)|\slashed{\nabla}\W|^2+\frac{M\omega}{r^2}\hat{W}_1\cdot\donest \hat{W}_2+\W\cdot \left(-\frac{1}{4}\Box\omega -\frac{1}{2}[X, A_{\W, Mor}]-\frac{M}{r^2}f A_{\W, Mor} \right)\cdot \W\\
+&\D^{2}\frac{d f}{d r}\left| \done W_2+\frac{2}{r}W_1 \right|^2+\D\frac{2M}{r^2} f\left(|\done W_2|^2+ \frac{2}{r}W_1\cdot \done W_2 \right)\\
+&\frac{d f}{d r}\left| \donest W_1+\D\frac{1}{r}W_2 \right|^2- \frac{4M}{r^2}f \left( |\donest W_1|^2+\D\frac{1}{r}W_2\cdot\donest W_1 \right).
\end{align*}
On a fixed mode $\ell\geq 1$ and $|m|\leq \ell$, after integrated along $\mathbb{S}^2$, it equals $
\W\cdot\mathbb{M}_{[3M,bM]} \cdot\W$, where $\mathbb{M}_{[3M,bM]}$ is a two by two matrix

\begin{align*}
\mathbb{M}_{[3M,bM]}=&\mathbb{M}_{[bM,\infty)}+\frac{df}{dr}\frac{1}{r^2} \left[ 
\begin{array}{cc}
\Lambda & -(1-2s^{-1})\sqrt{\Lambda}\\
-(1-2s^{-1})\sqrt{\Lambda} & (1-2s^{-1})^2 
\end{array} \right]\\
+&(1-2s^{-1})^2\frac{df}{dr} \frac{1}{r^2}\left[ 
\begin{array}{cc}
4 & -2\sqrt{\Lambda}\\
-2\sqrt{\Lambda} & \Lambda 
\end{array} \right]+ \frac{2M}{r^4}f\left[ 
\begin{array}{cc}
-2\Lambda & 0\\
0 & (1-2s^{-1})\Lambda 
\end{array} \right].
\end{align*}
Through computation, $\textup{det}[r^3\mathbb{M}_{[2M,bM]}]=p_1(s)\Lambda^2+p_2(s)\Lambda+p_3(s)$, where

\begin{align*}
p_1(s)=&1-8s^{-1}+15s^{-2}+42s^{-3}+7s^{-4}-28s^{-5}-1316s^{-6}+3084s^{-8}+4032s^{-9},\\
p_2(s)=&-2+27s^{-1}-183s^{-2}+561s^{-3}-\frac{1573}{4}s^{-4}-2845s^{-5}+6834s^{-6}+3564s^{-7}\\
       &-14337s^{-8}-2412s^{-9}-6948s^{-10},\\
p_3(s)=&3s^{-1}-\frac{59}{4}s^{-2}+\frac{25}{2}s^{-3}+354s^{-4}-1752s^{-5}+4095s^{-6}-5832s^{-7}\\
&-432s^{-8}+6156s^{-9}+6804s^{-10}.
\end{align*}
We check that for $s\geq 2$, $4p_1(s)+2p_2(s)+p_3>0$, hence $\mathbb{M}_{[2M,bM]}$ is positive definite for $\ell=1$. We further check that for $s\geq 2$, $p_1(s)$, $4p_1(s)+p_2(s)$ and $4p_1(s)+p_3(s)$ are positive, hence $\mathbb{M}_{[2M,bM]}$ is positive definite for $\ell\geq 2$ as well. As $\mathbb{M}_{[2M,bM]}$ is positive definite, we obtain for all $r\geq 2M$,

\begin{align*}
\Div J_{3,1} \geq & \frac{1}{C_{Mor}} M^{-1} \bigg( s^{-2} |\nabla_r\W|^2+ (1-3s^{-1})^2 s^{-2}|\nablas\W|^2+M^{-2}s^{-4}|\W|^2 \bigg)\\
                          +&\textup{Err}_3[\W,\mathsf{G}]+Err_{3,[bM,\infty)}[\W]+Err'_{3,[2M,bM)}[\W,\mathsf{F}]+Err''_{3,[2M,bM)}[\W,\mathsf{F}].
\end{align*}
Moreover, from the view of \eqref{X1o_X2r_substitution} and \eqref{X1r_X2o_substitution}, we can remove the degeneracy of $|\nablas\W|^2$ at $r=3M$ with the help of $F_1$ and $F_2$. Thus we manage to show that for all $r\geq 2M$,

\begin{align*}
\Div J_{3,1} \geq & \frac{1}{C_{Mor}} M^{-1} \bigg( s^{-2} (1-2s^{-1})^2 |\nabla_r\W|^2+ s^{-2}|\nablas\W|^2+M^{-2}s^{-4}|\W|^2 \bigg)\\
                          +&\textup{Err}_3[\W,\mathsf{G}]+Err_{3,[bM,\infty)}[\W]+Err_{3,[2M,bM)}[\W,\mathsf{F}].
\end{align*}
with

\begin{align*}
Err_{3,[2M,bM)}[\W,\mathsf{F}]:=Err'_{3,[2M,bM)}[\W,\mathsf{F}]+Err''_{3,[2M,bM)}[\W,\mathsf{F}]+C_{Mor} M^{-1}(|F_1|^2+|F_2|^2)\chi_{[2M,bM)}.
\end{align*}

We proceed to add $|\nabla_t\W|^2$ into the bulk. Let $\bar{f}(r)\geq  0$ be a non-negative function in $r$. We consider

\begin{align}
(J_{3,2})_a=-\bar{f}\nabla_a|\W|^2+\nabla_a \bar{f}|\W|^2.
\end{align}
We compute

\begin{align*}
\Div J_{3,2}=&-2\bar{f} \nabla^a\W\cdot\nabla_a \W-2\bar{f} \W\cdot\mathsf{G}+\frac{1}{4}\Box \bar{f}\cdot |\W|^2-2\bar{f} \W\cdot A_{\W}\cdot W\\
=&2\bar{f}\big( (1-2s^{-1})^{-1}|\nabla_t\W|^2-(1-2s^{-1})|\nabla_r\W|^2-|\nablas\W|^2  \big)\\
-&2\bar{f} \W\cdot\mathsf{G}+\frac{1}{4}\Box \bar{f}\cdot |\W|^2-2\bar{f} \W\cdot A_{\W}\cdot W
\end{align*}
By taking 

\begin{align*}
\bar{f}(r)=\epsilon_{3,2}\frac{M}{r^2}\D,
\end{align*}
we can add $|\nabla_t\W|^2$ into the divergence by defining 

\begin{align}
J_3:=J_{3,1}+J_{3,2}.
\end{align}
Now \eqref{Morawetz_bulk} follows from the above discussion. The requirement \eqref{Morawetz_boundary} can be verified by the asymptotics of $f(r),\omega(r),\bar{f}(r)$ and $A_{\W, Mor}$. We finish this subsection by recording $Err_{3,[2M,bM)}[\W,\mathsf{F}]$, which is supported in $r\in [2M,bM)$.

\begin{equation}\label{def:err3}
\begin{split}
Err_{3,[2M,bM)}[\W,\mathsf{F}]=\bigg\{ &{\cblue \D^2\frac{d f}{d r}\left(|F_1|^2-2\left(\done W_2+\frac{2}{r}W_1\right)\cdot F_1\right)} -{\cblue \D\frac{2M}{r^2} f F_1\cdot\done W_2}\\
+&{\cblue \frac{d f}{d r}\left(|F_2|^2-2\left(\donest W_1+\D\frac{1}{r}W_2\right)\cdot F_2\right)+\frac{4M}{r^2}f F_2\cdot \donest W_1}\\
+&C_{Mor} M^{-1}(|F_1|^2+|F_2|^2)\bigg\}\cdot\chi_{[2M,bM)}.
\end{split}
\end{equation}

\subsection{$r^p$-current}\label{subsec:rp}

In this subsection, we construct a current $J^p_4$ for $p\in [\delta,2-\delta]$, which satisfies the following two properties. First, for any $p\in [\delta,2-\delta]$

\begin{align}\label{rp_boundary}
E^{p,(0)}_L[\tilde{\W}](\tau)\leq  \int_{\Sigma_\tau} J_4\cdot n\ dvol_3 \leq  E^{p,(0)}_L[\tilde{\W}](\tau)+F[\W](\tau).
\end{align}
Here $\tilde{\W}=\eta_{\textup{null}}\W$ and $\eta_{\textup{null}}$ is the cut-off function defined in \eqref{cutoff_rp}. Second, there exists a constant $C_{rp}>0$ such that for any $\tau_2\geq \tau_1$ and $p\in [\delta,2-\delta]$,

\begin{align}\label{rp_bulk}
\int_{D(\tau_1,\tau_2)} \Div J^p_4\ dvol = &\int_{D(\tau_1,\tau_2)} K^p_4+ \textup{Err}_4[\W,\mathsf{G}]+Err_{4,[R_{\textup{null}},R_{\textup{null}}+M]} \ dvol.
\end{align}
Here

\begin{align*}
K^p_4\geq &\frac{1}{C_{rp}} M^{-1}\bigg(s^{p-3}|\nabla_L (s\tilde{\W})|^2+s^{p-3}|\nablas (s\tilde{\W})|^2+s^{-1-\delta}|\nabla_{\underline{L}} \tilde{\W}|^2+M^{-2}s^{p-5}|s\tilde{\W}|^2\bigg)\\
     \geq &\frac{ 1}{4C_{rp}} M^{-1}\bigg(s^{p-1}|\nabla_L \tilde{\W}|^2+s^{p-1}|\nablas \tilde{\W}|^2+s^{-1-\delta}|\nabla_{\underline{L}} \tilde{\W}|^2+M^{-2}s^{p-3}|\tilde{\W}|^2\bigg),\\
\textup{Err}^p_4[\W,\mathsf{G}]=&{\cred \left(s^{p-1}\eta_{\textup{null}} \D^{-1}\nabla_L(s\tilde{\W})+\epsilon_{4,2} \eta_{\textup{null}}  s^{-\delta}\nabla_t \tilde{\W} \right)\cdot \mathsf{G}},\\
Err^p_{4,[R_{\textup{null}},R_{\textup{null}}+M]}=&{\cpur \left(s^{p-1}\D^{-1}\nabla_L(s \tilde{\W})+\epsilon_{4,2}  s^{-\delta}\nabla_t\tilde{\W}  \right)\cdot (\Box\eta_{\textup{null}} \cdot\W+2\nabla\eta_{\textup{null}} \cdot\nabla\W)}.\\
\end{align*}
The term $\textup{Err}^p_4[\W,\mathsf{G}]$ will be part of $\textup{Err}^p[\W,\mathsf{G}]$. The term $Err^p_{4,[R_{\textup{null}},R_{\textup{null}}+M]}$ comes from the derivative of the cut-off function $\eta_{\textup{null}}$ and will be absorbed into $K_3$.\\

Now we start the construction. We use $C_{rp}$ to stand for a constant which may increase from line to line. Recall that 

\begin{align*}
A_{\W,\textup{main}}=\frac{1}{r^2}\left[ \begin{array}{cc}
2 & 2r\done\\
2\donest & 1
\end{array} \right]
\end{align*}
is the leading part of $A_{\W}$. Let $A_{\W,\textup{sub}}=s(A_{\W}-A_{\W,\textup{main}})$ be the subleading part and $\tilde{\W}=\eta_{\textup{null}}(r)\W$ with $\eta_{\textup{null}}$ defined in \eqref{cutoff_rp}. Define

\begin{align}
(J^p_{4,1})_a:=s^{p-2}\D^{-1} \left( T_{ab}[ s\tilde{\W} ]L^b- \frac{1}{2}(s\tilde{\W})\cdot (A_{\W,\textup{main}}+2M/r^3) \cdot (s\tilde{\W})|^2 L_a\right).
\end{align}
We compute

\begin{align*}
\Div J^p_{4,1}=_s&M^{-1}s^{p-3}\left(\frac{p}{2}\D -\frac{M}{r}\right)\D^{-2} |\nabla_L(s\tilde{\W})|^2+M^{-1}s^{p-3}\left(1-\frac{p}{2}\right)|\nablas (s\tilde{\W})|^2\\
- &M^{-1}\frac{1}{2}s^{p-3} (s\tilde{\W})\cdot \left(r\left[\partial_r, (A_{\W,\textup{main}}+2M/r^3)\right]+p((A_{\W,\textup{main}}+2M/r^3)) \right)\cdot (s\tilde{\W})\\
+&s^{p-1}\D^{-1}\nabla_L(s\tilde{\W}) \cdot (\Box \tilde{\W}-A_{\W,\textup{main}}\tilde{\W}).
\end{align*}
Since 
 
\begin{align*}
|\nablas \W|^2+\W\cdot A_{\W,\textup{main}}\cdot \W
\end{align*}
is at least non-negative for $\ell\geq 1$ and is comparable to $|\nablas\W|^2$ for $\ell\geq 2$, there exists a constant $C_{rp}$ such that

\begin{align*}
&M^{-1}s^{p-3}\left(\frac{p}{2}\D -\frac{M}{r}\right)\D^{-2} |\nabla_L(s\tilde{\W})|^2+M^{-1}s^{p-3}\left(1-\frac{p}{2}\right)|\nablas (s\tilde{\W})|^2\\
- &M^{-1}\frac{1}{2}s^{p-3} (s\tilde{\W})\cdot \left(r\left[\nabla_r, (A_{\W,\textup{main}}+2M/r^3)\right]+p((A_{\W,\textup{main}}+2M/r^3)) \right)\cdot (s\tilde{\W})\\
\geq_s &\frac{1}{C_{rp}} M^{-1}s^{p-3}\left( |\nabla_L (s\tilde{\W})|^2+|\nablas (s\tilde{\W}_{\ell\geq 2})|^2+M^{-2}|\tilde{\W}|^2\right).
\end{align*}
By Corollary \ref{cor:E^p_base}, we can recover $s^{p-3}|\nablas (s\tilde{\W}_{\ell= 1})|^2\approx_s s^{p-3} M^{-2}|\tilde{\W}_{\ell=1}|^2 $ through Hardy inequality. Therefore, after integrated in $D(\tau_1,\tau_2)$, the above is bounded from below as

\begin{align*}
\frac{1}{C_{rp}} M^{-1}s^{p-3}\left( |\nabla_L (s\tilde{\W})|^2+|\nablas (s\tilde{\W})|^2+M^{-2}|\tilde{\W}|^2\right),
\end{align*}
with a larger constant $C_{rp}$ depending on $\delta$. For the remaining terms we compute 

\begin{align*}
&s^{p-1}\D^{-1}\nabla_L(s\tilde{\W}) \cdot (\Box \tilde{\W}-A_{\W,\textup{main}}\tilde{\W})\\
=&s^{p-1}\D^{-1}\nabla_L(s\tilde{\W}) \cdot ({\cpur s^{-1}A_{\W,\textup{sub}}\tilde{\W}}+{\cred\eta_{\textup{null}} \mathsf{G}})\\
+&s^{p-1}\D^{-1}\nabla_L(s\tilde{\W}) \cdot ({\cpur \Box\eta_{\textup{null}}(r)\cdot \W+2\nabla\eta_{\textup{null}}(r)\cdot\W}).
\end{align*}
The contribution of $s^{p-2} (1-2s^{-1})^{-1}\nabla_L(s\tilde{\W}) \cdot {\cpur  A_{\W,\textup{sub}}\tilde{\W}}$ has higher power in $s^{-1}$ and can be absorbed by 

\begin{align*}
\epsilon\cdot M^{-1}s^{p-3}|\nabla_L (s\tilde{\W})|^2+\frac{1}{\epsilon} \cdot M^{-1}s^{p-5}|\nablas (s\tilde{\W})|^2+\frac{1}{\epsilon}\cdot M^{-3} s^{p-7}|  s\tilde{\W}|^2.
\end{align*}
Hence there exists $R_{\W}$ such that for $r\geq R_{\W}$, we can absorb $s^{p-2} (1-2s^{-1})^{-1}\nabla_L(s\tilde{\W}) \cdot {\cpur  A_{\W,\textup{sub}}\tilde{\W}}$ into the positive terms above. We require $R_{\textup{null}}\geq R_{\W}$. From now on we fix $R_{\textup{null}}=\max \{ R_{\h}, R_{Z} ,R_{\W} \}$ and drop the dependence of $R_{\textup{null}}$ in estimates. After integrated along $D(\tau_1,\tau_2)$, $\Div J^p_{4,1}$ is bounded from below by

\begin{align*}
\geq &\frac{1}{C_{rp}} M^{-1}s^{p-3}\left( |\nabla_L (s\tilde{\W})|^2+|\nablas (s\tilde{\W})|^2+M^{-2}|\tilde{\W}|^2\right)\\
+&{\cpur s^{p-1}\D^{-1}\nabla_L(s\tilde{\W}) \cdot ( \Box\eta_{\textup{null}}(r)\cdot \W+2\nabla\eta_{\textup{null}}(r)\cdot\W})\\
                  +&{\cred s^{p-1}\eta_{\textup{null}} \D^{-1}\nabla_L(s\tilde{\W}) \cdot \mathsf{G}}.
\end{align*}
The last line will be part of $\textup{Err}_4[\W,\mathsf{G}]$ and the second last line will be part of $Err_{4,[R_{\textup{null}},R_{\textup{null}+M}]}[\W]$.

To add the $|\nabla_{\underline{L}}\W|^2$ term into the divergence, we consider

\begin{align}
(J_{4,2})_a:=s^{-\delta}\left( T_{ab}[\tilde{\W}]T^b-\frac{1}{2}\tilde{\W}\cdot A_{\W,\textup{main}}\cdot 
\tilde{\W}T_a  \right),
\end{align}
and compute

\begin{align*}
\Div J_{4,2}=&\frac{\delta}{4}M^{-1}s^{-1-\delta}|\nabla_{\underline{L}}\tilde{\W}|^2-{\cpur \frac{\delta}{4}M^{-1}s^{-1-\delta}|\nabla_{L}\tilde{\W}|^2}\\
+&{\cpur \eta_{\textup{null}}(r)  s^{-\delta}\nabla_t\psi\cdot (\Box\eta_{\textup{null}}(r)\cdot\W+2\nabla\eta_{\textup{null}}(r)\cdot\nabla\W)}\\
+&{\cred \eta_{\textup{null}}(r)^2s^{-\delta}\nabla_t\W\cdot \mathsf{G}}+ s^{-1-\delta}\nabla_t\tilde{\W} \cdot A_{\W,\textup{sub}}\tilde{\W}.
\end{align*}
The term involving $A_{\W,\textup{sub}}$ has higher power of $s^{-1}$ and can be estimated as

\begin{align*}
 s^{-1-\delta}\nabla_t\tilde{\W} \cdot A_{\W,\textup{sub}}\tilde{\W}\geq & -\frac{\delta}{4} M^{-1} s^{-1-\delta}|\nabla_t\tilde{\W}|^2-\delta^{-1} M s^{-1-\delta}|A_{\W,\textup{sub}}\tilde{\W}|^2\\
 \geq& -\frac{\delta}{8}s^{-1-\delta} M^{-1} (|\nabla_L\tilde{\W}|^2+|\nabla_{\underline{L}}\tilde{\W}|^2)-C_{rp}M^{-1}s^{-3-\delta}|\nablas\tilde{\W}|^2.
\end{align*}
Hence

\begin{align*}
\Div J_{4,2}\geq &\frac{\delta}{8}M^{-1}s^{-1-\delta}|\nabla_{\underline{L}}\tilde{\W}|^2-{\cpur \frac{3\delta}{8}M^{-1}s^{-1-\delta}|\nabla_{L}\tilde{\W}|^2}\\
+&{\cpur \eta_{\textup{null}}(r)  s^{-\delta}\nabla_t\psi\cdot (\Box\eta_{\textup{null}}(r)\cdot\W+2\nabla\eta_{\textup{null}}(r)\cdot\nabla \W)}\\
-&{\cpur C_{rp}M^{-1}s^{-\delta-3}|\nablas\tilde{\W}|^2}+{\cred \eta_{\textup{null}}(r)^2s^{-\delta}\nabla_t\W\cdot \mathsf{G}}.
\end{align*}
Take $\epsilon_{4,2}$ small enough such that

\begin{align*}
&\frac{1}{C_{rp}} M^{-1}s^{p-3}\left( |\nabla_L (s\tilde{\W})|^2+|\nablas (s\tilde{\W})|^2+M^{-2}|\tilde{\W}|^2\right)\\
\geq &2\epsilon_{4,2}\cdot\left(\frac{3\delta}{8}M^{-1}s^{-1-\delta}|\nabla_{L}\tilde{\W}|^2+C_{rp}M^{-1}s^{-\delta-3}|\nablas\tilde{\W}|^2  \right).
\end{align*}
Then define

\begin{align}
J^p_4=J^p_{4,1}+\epsilon_{4,2}J_{4,2}.
\end{align}
We have, after integrated along $D(\tau_1,\tau_2)$, $\Div J^{p}_4$ is bounded from below by

\begin{align*}
&\frac{1}{C_{rp}} M^{-1}s^{p-3}\left( |\nabla_L (s\tilde{\W})|^2+|\nablas (s\tilde{\W})|^2+M^{-2}|\tilde{\W}|^2\right)+\frac{1}{C_{rp}} M^{-1}  s^{-1-\delta}|\nabla_{\underline{L}}\tilde{\W}|^2\\
+&{\cpur \left(s^{p-1}\D^{-1}\nabla_L(s\tilde{\W})+\epsilon_{4,2}   s^{-\delta}\nabla_t \tilde{\W}  \right)\cdot (\Box\eta_{\textup{null}}(r)\cdot\W+2\nabla\eta_{\textup{null}}(r)\cdot\nabla\W)}\\
+&{\cred \left(s^{p-1}\eta_{\textup{null}}(r)\D^{-1}\nabla_L(s\tilde{\W})+\epsilon_{4,2} \eta_{\textup{null}}(r) s^{-\delta}\nabla_t \tilde{\W}\right)\cdot \mathsf{G}}.
\end{align*}
We denote the last line by $\textup{Err}_4[\W,\mathsf{G}]$ and the second last one by $Err_{4,[R_{\textup{null}},R_{\textup{null}}+M]}[\W]$ and obtain \eqref{rp_bulk}. The estimate \eqref{redshift_boundary} holds easily from the form of $J^p_{4}$.

\subsection{Proof of Proposition \ref{pro:X12}}\label{subsec:estimate_W12}

In this subsection we combine the currents above to construct $J^p_{\W}$. We define

\begin{align}
J^p_{\W}:=&C_1 J_1+\epsilon_2 J_2+J_3+\epsilon_4 J^p_4,\\
\textup{Err}^p[\W,\mathsf{G}]:=&C_1 \textup{Err}_1^p[\W,\mathsf{G}]+\epsilon_2 \textup{Err}_2[\W,\mathsf{G}]+\textup{Err}_3[\W,\mathsf{G}]+\epsilon_4 \textup{Err}^p_4[\W,\mathsf{G}],
\end{align}
with constants $C_1,\epsilon_2,\epsilon_4>0$ determined below. We fix $\epsilon_2$ and $\epsilon_4$ small enough such that

\begin{align*}
K_3+\epsilon_2 Err_{2,[r_{rs,\W},r_{rs}^+]}[\W]+\epsilon_4 Err_{4,[R_{\textup{null}},R_{\textup{null}}+M]}[\W]\geq \frac{1}{2}K_3.
\end{align*}
We take $C_1$ large to ensure \eqref{boundary}, \eqref{boundary_horizon} and \eqref{boundary_null} as follows. We ask $C_1\geq 2C_{Mor}C_T$ to get \eqref{boundary} from \eqref{T_boundary}, \eqref{redshift_boundary}, \eqref{Morawetz_boundary} and \eqref{rp_boundary}. To obtain \eqref{boundary_horizon}, we calculate at $r=2M$,

\begin{align*}
J_1\cdot L=&\frac{1}{2}|\nabla_L \W|^2+\frac{9}{2}|\nabla_L \hat{W}_1|^2+\frac{1}{2}|\nabla_L \hat{W}_2|^2,\\
J_3\cdot L=&\frac{f}{2}|\nabla_L \W|^2.
\end{align*}
Here we used $T=(1-2s^{-1})\frac{\partial}{\partial r}=\frac{1}{2}L$ and $\omega, \bar{f}, \nabla_L\omega,\nabla_L\bar{f}$ all vanish at $r=2M$. Together with  $J_2\cdot L\geq 0$ and $J_4=0$ at $r=2M$, \eqref{boundary_horizon} follows by requiring $C_1\geq |f(2M)|$. To guarantee \eqref{boundary_null}, we calculate for any $r\geq 2M$,

\begin{align*}
\int_{\mathbb{S}^2(\tau,r)} J_1\cdot \underline{L}\ r^2 dvol_{\mathbb{S}^2} \gtrsim  &\int_{\mathbb{S}^2(\tau,r)}\big(|\nabla_{\underline{L}}\W|^2+|\nablas\W_{\ell\geq 2}|\big) r^2 dvol_{\mathbb{S}^2},\\
\left|\int_{\mathbb{S}^2(\tau,r)} J_3\cdot \underline{L}\ r^2 dvol_{\mathbb{S}^2}\right| \lesssim & \int_{\mathbb{S}^2(\tau,r)}\big(|\nabla_{\underline{L}}\W|^2+|\nablas\W|\big) r^2 dvol_{\mathbb{S}^2}.
\end{align*}
The only obstruction to make whole thing positive is $|\nablas\W_{\ell=1}|^2$, which can be controlled as

\begin{align*}
\int_{\mathbb{S}^2(\tau,r)} |\nablas\W_{\ell=1}|^2\ r^2 dvol_{\mathbb{S}^2}\approx &\int_{\mathbb{S}^2(\tau,r)} |\W_{\ell=1}|^2\ dvol_{\mathbb{S}^2}
\lesssim  s^{-(2-\delta)} M^{-1} E^{1-\delta,(0)}_L[ \tilde{\W}](\tau).
\end{align*}
for any $r\geq R_{\textup{null}}+M$. Here we used Corollary \ref{cor:E^p_base} with $p=1-\delta$. Therefore this term goes to zero as $r$ goes to infinity by assumption \ref{assumption}. Together with $J^p_4\cdot\underline{L}\geq 0$ and $J_2=0$ for $r\geq R_{\textup{null}}+M$, \eqref{boundary_null} follows for $C_1$ large enough and the value of $C_1$ is then determined.\\

We can now fix the value of $R_T$, which will be large enough such that the $Err_{1,[R_T,\infty)}[\W]$ term in $\Div J_1$ can be absorbed into $K^p_4$. To begin, we require

\begin{align*}
\frac{R_T}{M}\geq \left(\frac{36 C_1C_{rp}}{\epsilon_4}\right)^{1/\delta}.
\end{align*}
Then for $r\geq R_T$ and $p\in [\delta,2-\delta]$,

\begin{align*}
C_1\left| \frac{9}{4}\eta_T'|\nabla_L \W|^2 \right|\leq  \frac{9C_1}{4}M^{-1}s^{-1}|\nabla_L \W|^2\leq  \frac{1}{16 C_{rp}} M^{-1}s^{\delta-1}|\nabla_L \W|^2\leq \frac{\epsilon_4}{4}K^p_4.
\end{align*}
Next, we absorb the term $\nabla_t \hat{W}_1\done \hat{W}_2$ into $K^p_4$ for $r$ large.

\begin{align*}
\left| \frac{6M}{r^2} \nabla_t\hat{W}_2\cdot\donest \hat{W}_1\right|\lesssim & M^{-1}s^{-2}|\nabla_t\hat{W}_2|^2+M^{-1}s^{-2}|\nablas\hat{W}_1|^2,
\end{align*}
which falls off faster than $K^p_4$ for any $p\geq \delta$. We further require $R_T$ to be large enough such that for $r\geq R_T$,

\begin{align*}
C_1\left|\frac{6M}{r^2}\nabla_t\hat{W}_2\cdot\donest \hat{W}_1\right|\leq \frac{\epsilon_4}{4} K^p_4.
\end{align*}
The value of $R_T$ is then determined. We proceed to fix the value of $b$. Besides $b\geq \max\{s_{Mor},R_T/4\}$, We further require that for all $r\geq bM$ and $p\in [\delta,2-\delta]$,

\begin{align*}
|Err_{3,[bM,\infty)}[\W]|\leq \frac{\epsilon_4}{4}K^p_4.
\end{align*}
This can be done since $|Err_{3,[bM,\infty)}[\W]|$ is bounded by 

\begin{align*}
M^{-1}\bigg(s^{-2}|\nabla_r \W|^2+s^{-2}|\nablas \W|^2+M^{-2}s^{-2}|\W|^2\bigg),
\end{align*}
which decays faster than $K^p_4$. The value of $b$ is now determined and from now on we drop the dependence of $b$ in estimates. We have managed to show that $\Div J^p_{\W}$, after integrated along $D(\tau_1,\tau_2)$, is bounded from below by 

\begin{align*}
\int_{D(\tau_1,\tau_2)} \epsilon_2 K_2+\frac{1}{2}K_3+\frac{\epsilon_4}{4}K^p_4 dvol+\int_{D(\tau_1,\tau_2)}              \textup{Err}^p [\W,\mathsf{G}]+Err_{1,[2M,4R_T]}[\W,\mathsf{F}]+Err_{3,[2M,bM]}[\W,\mathsf{F}] dvol.
\end{align*}

We turn to estimating $Err_{1,[2M,4R_T]}[\W,\mathsf{F}]$ and $Err_{3,[2M,bM]}[\W,\mathsf{F}]$ define in subsection \ref{subsection:T} and subsection \ref{subsection:Morawetz}. Note that $\hat{W}_1=W_1$ for $r\leq bM$. From \eqref{X1t_substitution}, in $[2M,4R_T]\subset [2M,bM]$,  

\begin{align*}
|\nabla_t\hat{W}_1|^2=|\nabla_t W_1|^2= |F_3|^2.
\end{align*}
Similarly, from \eqref{X1r_X2o_substitution}, 

\begin{align*}
|\donest \hat{W}_2|^2=|\donest {W}_2|^2\lesssim |\nabla_rW_1|^2+M^{-2}s^{-2}|W_1|^2+|F_1|^2.
\end{align*}
Thus through Cauchy-Schwarz, one has for any $\epsilon>0$,

\begin{align*}
\left| \int_{\Sigma_\tau} Err_{1,[2M,4R_T]}[\W,\mathsf{F}] dvol_3 \right| \lesssim \epsilon B[\W](\tau)+\frac{1}{\epsilon} \int_{\Sigma_\tau} M^{-1}(|F_1|^2+|F_3|^2)\cdot\chi_{[2M,4R_T]} dvol_3 .
\end{align*}
From \eqref{def:err3} and Cauchy-Schwarz, we have

\begin{align*}
\left| \int_{\Sigma_\tau} Err_{3,[2M,bM]}[\W] dvol_3 \right| \lesssim \epsilon B[\W]+\frac{1}{\epsilon} \int_{\Sigma_\tau} M^{-1}(|F_1|^2+|F_2|)\cdot\chi_{[2M,bM]} dvol_3.
\end{align*}
Thus, by choosing $\epsilon>0$ small enough, we bound the integral of $\Div J^p_{\W}$ in $D(\tau_1,\tau_2)$ by

\begin{align*}
\int_{D(\tau_1,\tau_2)} \frac{1}{2}\bigg(\epsilon_2 K_2+\frac{1}{2}K_3+\frac{\epsilon_4}{4}K^p_4\bigg) dvol+\int_{D(\tau_1,\tau_2)}              \textup{Err}^p [\W,\mathsf{G}] dvol\\
-C \int_{D(\tau_1,\tau_2)} M^{-1}(|F_1|^2+|F_2|^2+|F_3|^2)\cdot\chi_{[2M,bM]}  dvol.
\end{align*}

Next we deal with the term $\textup{Err}^p [\W,\mathsf{G}]$.

\begin{align*}
\textup{Err}^p[\W,\mathsf{G}]=&C_1\cdot({\cred \nabla_t\W\cdot\mathsf{G}+\eta_T\cdot(9\nabla_t W_1\cdot G_1+\nabla_t W_2\cdot G_2)})+\epsilon_2\cdot({\cred \nabla_Y\W\cdot \mathsf{G}})\\
+&{\cred\left(\nabla_X \W+\frac{1}{2}\omega \W-2\bar{f}\W\right)\cdot\mathsf{G}}\\
+&\epsilon_4\cdot{\cred \left(s^{p-1}\eta_{\textup{null}}(r)\D^{-1}\nabla_L(\tilde{s\W})+\epsilon_{4,2} \eta_{\textup{null}}(r)s^{-\delta}\nabla_t \tilde{\W} \right)\cdot \mathsf{G}}.
\end{align*}
From Cauchy-Schwarz, we have for any $p\in [\delta,2-\delta]$ and $\epsilon>0$,

\begin{align*}
\textup{Err}^p[\W,\mathsf{G}]\lesssim \epsilon \bigg(\epsilon_2 K_2+\frac{1}{2}K_3+\frac{\epsilon_4}{4}K^p_4\bigg)+\frac{1}{\epsilon} Ms^{p+1}|G|^2. 
\end{align*}
Again through picking $\epsilon>0$ small enough, we have the integral of $\Div J^p_{\W}$ in $D(\tau_1,\tau_2)$ is bounded from below by

\begin{align*}
\int_{D(\tau_1,\tau_2)} \frac{1}{2}\bigg(\epsilon_2 K_2+\frac{1}{2}K_3+\frac{\epsilon_4}{4}K^p_4\bigg) dvol-C \int_{D(\tau_1,\tau_2)}              Ms^{p+1}|\mathsf{G}|^2 dvol\\
-C \int_{D(\tau_1,\tau_2)} M^{-1}(|F_1|^2+|F_2|^2+|F_3|^2)\cdot\chi_{[2M,bM]}  dvol.
\end{align*}

\begin{proposition}

Let $\W$ be a solution of \eqref{equ:W12} which also satisfies \eqref{X1r_X2o_substitution}, \eqref{X1o_X2r_substitution} and \eqref{X1t_substitution}. Further assume that

\begin{align*}
F[\W](0)+E^{2-\delta,(0)}_L[\W](0) <\infty,
\end{align*}
and 

\begin{align*}
\int_{D(\tau_1,\tau_2)} Ms^{3-\delta}|\mathsf{G}|^2 dvol<\infty,
\end{align*}
for any $\tau_2\geq\tau_1\geq 0$. Then we have for any $p\in [\delta,2-\delta]$ 

\begin{equation}\label{X12_rp_int}
\begin{split}
&F[\W](\tau)+E^{p,(0)}_L[\W](\tau_2)+M^{-1}\int_{\tau_1}^{\tau_2} \bar{B}[\W](\tau)+E^{p-1,(0)}_{L,\nablas}[\W](\tau)d\tau\\
\lesssim &F[\W](\tau)+E^{p,(0)}_L[\W](\tau_1)+\int_{D(\tau_1,\tau_2)} Ms^{p+1}|\mathsf{G}|^2 dvol\\
+& \int_{D(\tau_1,\tau_2) }M^{-1} (|F_1|^2+|F_2|^2+|F_3|^2)\cdot\chi_{[2M,bM]} dvol. 
\end{split}
\end{equation}

\end{proposition}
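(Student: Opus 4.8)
The plan is to derive \eqref{X12_rp_int} by applying the divergence theorem \eqref{div_thm} to the combined current $J^p_{\W}=C_1J_1+\epsilon_2J_2+J_3+\epsilon_4J^p_4$ of subsection \ref{subsec:estimate_W12} (with the constants $C_1,\epsilon_2,\epsilon_4$ and the parameters $R_T,b$ fixed there) over the region $D(\tau_1,\tau_2)$, and then reading off the conclusion from the boundary and bulk bounds already established in subsections \ref{subsection:T}--\ref{subsec:estimate_W12}. The one preliminary remark is that the hypotheses $F[\W](0)+E^{2-\delta,(0)}_L[\W](0)<\infty$ and $\int_D Ms^{3-\delta}|\mathsf G|^2<\infty$ guarantee that the standing assumption \eqref{assumption} is in force and that the right-hand side of \eqref{X12_rp_int} is finite for every $p\in[\delta,2-\delta]$ (since $s\ge 2$ and $p+1\le 3-\delta$). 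For \eqref{assumption} itself, the cleanest justification is to run the whole argument first on the spacetime truncated by a large timelike hypersurface $\{r=R\}$, where all $r$-weights are manifestly finite, obtain the estimate with a constant uniform in $R$, and let $R\to\infty$ --- equivalently, to close a continuity argument in $\tau$; with \eqref{assumption} in hand, Corollary \ref{cor:F_base} supplies the zeroth-order term needed in \eqref{boundary} and the $\ell=1$ obstruction at null infinity is handled by \eqref{boundary_null}.

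First I would dispose of the boundary terms. The divergence theorem \eqref{div_thm} applied to $J^p_{\W}$ reads
\[
\int_{\Sigma_{\tau_1}}J^p_{\W}\cdot n\,dvol_3=\int_{\Sigma_{\tau_2}}J^p_{\W}\cdot n\,dvol_3+\int_{\mathcal{H}^+(\tau_1,\tau_2)}J^p_{\W}\cdot L\,dvol_{\mathcal{H}^+}+\int_{\mathcal{I}^+(\tau_1,\tau_2)}J^p_{\W}\cdot\underline{L}\,dvol_{\mathcal{I}^+}+\int_{D(\tau_1,\tau_2)}\Div J^p_{\W}\,dvol.
\]
By \eqref{boundary} the $\Sigma_{\tau_1}$-integral is $\lesssim F[\W](\tau_1)+E^{p,(0)}_L[\tilde{\W}](\tau_1)$ and the $\Sigma_{\tau_2}$-integral is $\gtrsim F[\W](\tau_2)+E^{p,(0)}_L[\tilde{\W}](\tau_2)$; by \eqref{boundary_horizon} the $\mathcal{H}^+$-integral is nonnegative, hence may be discarded from a lower bound on the right. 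For the $\mathcal{I}^+$-integral, \eqref{boundary_null} shows that the negative part of $\int_{\mathbb{S}^2(\tau,r)}J^p_{\W}\cdot\underline{L}\,r^2\,dvol_{\mathbb{S}^2}$ is $\lesssim s^{-(2-\delta)}M^{-1}E^{2-\delta,(0)}_L[\tilde{\W}](\tau)$, which vanishes as $r\to\infty$ thanks to \eqref{assumption}; therefore the defining limit of that integral is $\ge 0$ and the $\mathcal{I}^+$-term may likewise be discarded.

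For the bulk term I would quote the estimate assembled at the end of subsection \ref{subsec:estimate_W12}: having applied Cauchy--Schwarz to $\textup{Err}^p[\W,\mathsf G]$ and to the error currents $Err_{1,[2M,4R_T]}[\W,\mathsf F]$ and $Err_{3,[2M,bM]}[\W,\mathsf F]$ (the latter two reduced to $|F_1|^2+|F_2|^2+|F_3|^2$ via the substitutions \eqref{X1r_X2o_substitution}, \eqref{X1o_X2r_substitution}, \eqref{X1t_substitution}) and having absorbed the cut-off and $\ell=1$ remainders into $K_3$ and $K^p_4$, one obtains
\[
\int_{D(\tau_1,\tau_2)}\Div J^p_{\W}\,dvol\ \ge\ \tfrac12\!\int_{D(\tau_1,\tau_2)}\Big(\epsilon_2K_2+\tfrac12K_3+\tfrac{\epsilon_4}{4}K^p_4\Big)dvol\ -\ C\!\int_{D(\tau_1,\tau_2)}Ms^{p+1}|\mathsf G|^2\,dvol\ -\ C\!\int_{D(\tau_1,\tau_2)}M^{-1}\big(|F_1|^2+|F_2|^2+|F_3|^2\big)\chi_{[2M,bM]}\,dvol.
\]
Substituting this together with the boundary bounds into the divergence identity and rearranging gives, for every $p\in[\delta,2-\delta]$,
\[
F[\W](\tau_2)+E^{p,(0)}_L[\tilde{\W}](\tau_2)+\int_{D(\tau_1,\tau_2)}\Big(\epsilon_2K_2+\tfrac12K_3+\tfrac{\epsilon_4}{4}K^p_4\Big)dvol\ \lesssim\ F[\W](\tau_1)+E^{p,(0)}_L[\tilde{\W}](\tau_1)+\int_D Ms^{p+1}|\mathsf G|^2\,dvol+\int_D M^{-1}\big(|F_1|^2+|F_2|^2+|F_3|^2\big)\chi_{[2M,bM]}\,dvol.
\]
Finally I would identify the bulk on the left with $M^{-1}\int_{\tau_1}^{\tau_2}\big(\bar B[\W]+E^{p-1,(0)}_{L,\nablas}[\tilde{\W}]\big)(\tau')\,d\tau'$: the non-degenerate Morawetz density $K_3\gtrsim M^{-1}s^{-2}\big((1-2s^{-1})^2|\nabla_r\W|^2+|\nabla_t\W|^2+|\nablas\W|^2+M^{-2}s^{-2}|\W|^2\big)$ controls the $\bar B$-integrand wherever $r$ is bounded away from $\mathcal{H}^+$, the red-shift density $K_2\gtrsim M^{-3}|(M\partial)^{\le1}\W|^2\chi_{[2M,r_{rs,\W}]}$ removes its degeneracy at the horizon, and on $\Sigma_\tau'$ the density $K^p_4\gtrsim M^{-1}\big(s^{p-3}|\nabla_L(s\tilde{\W})|^2+s^{p-3}|\nablas(s\tilde{\W})|^2+s^{-1-\delta}|\nabla_{\underline{L}}\tilde{\W}|^2+M^{-2}s^{p-5}|s\tilde{\W}|^2\big)$ reproduces $E^{p-1,(0)}_{L,\nablas}[\tilde{\W}]$ and, since $p\ge\delta$ and $0<\delta<1/10$, also majorizes the $\bar B$-integrand there. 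Foliating $D(\tau_1,\tau_2)$ by the leaves $\Sigma_{\tau'}$, whose induced volume element is comparable to $dvol_3\,d\tau'$, and integrating this pointwise comparison then yields \eqref{X12_rp_int}.

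As for the main obstacle: its substance has already been absorbed into subsections \ref{subsection:T}--\ref{subsec:rp}, namely establishing the positivity of the four currents on $\Sigma_\tau$, $\mathcal{H}^+$ and $\mathcal{I}^+$ --- in particular the two-by-two matrix computations at the borderline mode $\ell=1$ --- and the precise accounting of the error currents; within the present step the only delicate point is the propagation of the $r^{2-\delta}$-weighted energy, i.e.\ the standing assumption \eqref{assumption}, which is secured by the truncation-and-limit (or continuity-in-$\tau$) argument indicated above.
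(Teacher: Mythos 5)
Your argument is essentially the paper's: apply the divergence theorem \eqref{div_thm} to the assembled current $J^p_{\W}$, use \eqref{boundary}, \eqref{boundary_horizon}, \eqref{boundary_null} for the boundary fluxes, quote the bulk lower bound obtained at the end of subsection \ref{subsec:estimate_W12} after Cauchy--Schwarz on $\textup{Err}^p[\W,\mathsf{G}]$ and on the $F_i$-errors, and observe that $\epsilon_2K_2+\tfrac12 K_3+\tfrac{\epsilon_4}{4}K^p_4$ dominates the integrands of $\bar B[\W]$ and $E^{p-1,(0)}_{L,\nablas}[\tilde{\W}]$. All of that matches the paper's proof.

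The one place where you deviate is the verification of the standing assumption \eqref{assumption}, which is precisely what the paper's proof identifies as the only thing left to check. You propose a truncation-and-limit (or continuity-in-$\tau$) argument on $\{r\le R\}$, but as stated this is incomplete: truncating at a \emph{timelike} hypersurface $\{r=R\}$ produces a flux $J^p_{\W}\cdot\partial_r$ which is not signed (it contains $T_{ab}L^a\partial_r^{\,b}\sim \D^{-1}(T(L,L)-T(L,\underline L))$), so uniformity in $R$ does not follow without an additional device such as truncating along outgoing null cones or cutting off the current itself. The paper instead proves \eqref{assumption} directly: it applies the divergence theorem to $J^{2-\delta}_4$ alone on $D(0,\tau)$, absorbs $\textup{Err}^{2-\delta}_4[\W,\mathsf{G}]$ by Cauchy--Schwarz into $K^{2-\delta}_4$ plus $Ms^{3-\delta}|\mathsf{G}|^2$ (finite by hypothesis), and controls the remaining cut-off error $Err^{2-\delta}_{4,[R_{\textup{null}},R_{\textup{null}}+M]}[\W]$, which is supported in a compact $r$-range, by a separate energy estimate with the multiplier $e^{-Ct}T$, $C$ large; the conclusion then follows from \eqref{rp_boundary}. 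If you keep your route, you should either switch the truncation to outgoing null hypersurfaces $\{v=v_0\}$ (where the flux of $J^{2-\delta}_4$ is signed, as in the definition of the $\mathcal{I}^+$ integral) or supply the paper's $e^{-Ct}T$ argument; otherwise the propagation of the $r^{2-\delta}$-weighted energy, and hence \eqref{boundary} and \eqref{boundary_null}, is not justified.
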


\begin{proof}
Clearly the integrand of $\bar{B}[\W]$ and $E^{p-1,(0)}[\W]$ is bounded by $\epsilon_{2}K_2+\frac{1}{2}K_3+\frac{\epsilon_4}{4}K^p_4$. By applying divergence theorem \eqref{div_thm} to $J^p_{\W}$, it's sufficient to show that the assumption \eqref{assumption} holds. By applying the divergence theorem \eqref{div_thm} to $J^{2-\delta}_4$ in $D(0,\tau)$, we have

\begin{align*}
&\int_{\Sigma_{\tau}} J^{2-\delta}_4\cdot n\ dvol_3+\int_{D(0,\tau)} K^{2-\delta}_4 dvol\\
\leq &\int_{\Sigma_{0}} J^{2-\delta}_4\cdot n\ dvol_3+\int_{D(0,\tau)}-\textup{Err}^{2-\delta}_4[\W,\mathsf{G}]-Err^{2-\delta}_{4,[R_{\textup{null}},R_{\textup{null}}+M]}[\W] dvol.
\end{align*}
Here we drop the boundary term along null infinity, which is non-negative. Through Cauchy-Schwarz, we have for any $\epsilon>0$,

\begin{align*}
\textup{Err}^{2-\delta}_4[\W,\mathsf{G}]=&{\cred \left(s^{1-\delta}\eta_{\textup{null}}(r)\D^{-1}\nabla_L(s\tilde{\W})+\epsilon_{4,2} \eta_{\textup{null}}(r)^2s^{-\delta}\nabla_t\W\right)\cdot \mathsf{G}}\\
\lesssim &\epsilon M^{-1}\big( s^{-1-\delta}|\nabla_L(s\tilde{\W})|^2+ s^{-1-\delta}|\nabla_t \tilde{\W}|^2\big)+ \frac{1}{\epsilon} M(s^{3-\delta}+s^{-\delta+1})|\mathsf{G}^2|\\
\lesssim &\epsilon K^{2-\delta}_4 +\frac{1}{\epsilon} Ms^{3-\delta}|\mathsf{G}^2|.
\end{align*}
By taking $\epsilon>0$, we get

\begin{align*}
&\int_{\Sigma_{\tau}} J^{2-\delta}_4\cdot n\ dvol_3+\int_{D(0,\tau)} K^{2-\delta}_4 dvol\\
\lesssim &\int_{\Sigma_{0}} J^{2-\delta}_4\cdot n\ dvol_3+\int_{D(0,\tau)}Ms^{3-\delta}|G|-Err^{2-\delta}_{4,[R_{\textup{null}},R_{\textup{null}}+M]}[\W] dvol.
\end{align*}
The term $Err^{2-\delta}_{4,[R_{\textup{null}},R_{\textup{null}}+M]}[\W]$ is supported in $[R_{\textup{null}},R_{\textup{null}}+M]$ and can be shown to have bounded integral by using $e^{-Ct}T$ as multiplier with $C>0$ large enough. Hence from \eqref{rp_boundary}, we conclude that $E^{2-\delta,(0)}[\tilde{\W}](\tau)$ and its integral is finite for any $\tau\geq 0$, which is the assumption \eqref{assumption}.\\

\end{proof}

\begin{proof}[proof of Proposition \ref{pro:X12}]
From the view of \eqref{X12_rp_int}, it suffices to estimate 

\begin{align*}
\int_{D(\tau_1,\tau_2)} Ms^{p+1}|\mathsf{G}|^2 dvol+ \int_{D(\tau_1,\tau_2) }M^{-1} (|F_1|^2+|F_2|^2+|F_3|^2)\cdot\chi_{[2M,bM]} dvol. 
\end{align*}
From the view of Lemma \ref{lem:G12_bound}, the term $Ms^{p+1}|\mathsf{G}|^2$ can be bounded by the integrand of $B[S,\f^{\leq 2}r^{-1}\psi_Z]$ and $E^{p-1,(0)}_{L,\nablas}[S,\f^{\leq 2}r^{-1}\psi_Z]$ in the regions $r\in [2M,2bM]$ and $r\in [2bM,\infty)$ respectively. From Proposition \ref{cor:source_high_order}, 

\begin{align*}
\int_{D(\tau_1,\tau_2)} Ms^{p+1}|\mathsf{G}|^2 dvol \lesssim \left(1+\frac{\tau_1}{M}\right)^{-2+p+\delta}M^2 I^{(0)}[S,\f^{\leq 2} r^{-1}\psi_Z].
\end{align*}
We don't keep the dependence on $b$ and $R_{\textup{null}}$ since their values are already fixed. Now we turn to $|F_i|^2$ terms. Through the definition of $F_1,F_2$ and $F_3$ in \eqref{def:F} , 

\begin{align*}
M^{-1} (|F_1|^2+|F_2|^2+|F_3|^2)\cdot\chi_{[2M,bM]}\lesssim &\textup{the intrgrand of}\ \bigg(M^{-1}B[\K^{\leq 1}\hat{W}_0,\hat{P}_{\even}]\\
&+M B[\f S, \f^{\leq 3}r^{-1}\psi_Z,Q_{\even}]\bigg).
\end{align*}
Thus from Propositions \ref{pro:X0}, \ref{pro:P}, \ref{pro:Q} and Proposition \ref{cor:source_high_order}, we have

\begin{align*}
\int_{D(\tau_1,\tau_2) }M^{-1} (|F_1|^2+|F_2|^2+|F_3|^2)\cdot\chi_{[2M,bM]} dvol\lesssim \left(1+\frac{\tau_1}{M}\right)^{-2+p+3\delta}\cdot \textup{Decay}'[\W].
\end{align*}
Here

\begin{align*}
\textup{Decay}'[\W]= &I^{(0)}[\K^{\leq 1}\hat{W}_0,\K^{\leq 3}\hat{P}_{\even}]+M^2I^{(0)}[\f^{\leq 1}\K^{\leq 5}S,\f^{\leq 3}\K^{\leq 5}r^{-1}\psi_Z,Q_{\even}]\\
+&M^2 I^{(1)}[\K^{\leq 5}S,\K^{\leq 7}r^{-1}\psi_Z].
\end{align*}
This together with \eqref{X12_rp_int} and \eqref{energy_by_rp} implies that for any $p\in [1,2-3\delta]$ and $\tau_2\geq \tau_1\geq 0$, we have

\begin{equation}\label{X12_rp_int_f}
\begin{split}
&F[\W](\tau_2)+E^{p,(0)}_L[\W](\tau_2)+M^{-1}\int_{\tau_1}^{\tau_2} F[\W](\tau)+E^{p-1,(0)}_{L,\nablas}[\W](\tau)d\tau\\
\lesssim &F[\W](\tau_1)+E^{p,(0)}_L[\W](\tau_1)+ \left(1+\frac{\tau_1}{M}\right)^{-2+p+3\delta}\cdot \textup{Decay}'[\W].
\end{split}
\end{equation}
Furthermore, as $p\in [\delta,1)$, 
\begin{equation}\label{X12_rp_int_g}
\begin{split}
&F[\W](\tau_2)+E^{p,(0)}_L[\W](\tau_2)+M^{-1}\int_{\tau_1}^{\tau_2} E^{p-1,(0)}_{L,\nablas}[\W](\tau)d\tau\\
\lesssim &F[\W](\tau_1)+E^{p,(0)}_L[\W](\tau_1)+ \left(1+\frac{\tau_1}{M}\right)^{-2+p+3\delta}\cdot \textup{Decay}'[\W].
\end{split}
\end{equation}
The result then follows through an elementary argument we sketch below. By taking $\bar{\tau}_k:=2^k M$ and applying \eqref{X12_rp_int_f} with $\tau_2=\bar{\tau}_{k+1}  , \tau_1=\bar{\tau}_k$ and $p=2-3\delta$ together with the mean value theorem, we get a sequence $\tilde{\tau}_k\approx 2^k M$ such that

\begin{align*}
F[\W](\tilde{\tau}_k)+E^{1-3\delta,(0)}_L[\W](\tilde{\tau}_k)\lesssim &\left(1+\frac{\tilde{\tau}_k}{M}\right)^{-1} \left(F[\W](\bar{\tau}_k)+E^{2-3\delta,(0)}_L[\W](\bar{\tau}_k)+ \textup{Decay}'[\W]\right)\\
\lesssim &\left(1+\frac{\tilde{\tau}_k}{M}\right)^{-1} \textup{Decay}[\W].
\end{align*}
Here we used \eqref{X12_rp_int_f} with $\tau_2=\bar{\tau}_k, \tau_1=0$ and $p=2-3\delta$ to estimate $F[\W](\bar{\tau}_k)+E^{2-3\delta,(0)}_L[\W](\bar{\tau}_k)$ and $\textup{Decay}[\W]$ is defined in Proposition \ref{pro:X12}. Through \eqref{X12_rp_int_g} with $p=1-3\delta$ and $\tau_2=\tau \geq \tilde{\tau}_k=\tau_1$,

\begin{align*}
F[\W]({\tau})+E^{1-3\delta,(0)}_L[\W]({\tau})\lesssim &\left(1+\frac{\tilde{\tau}_k}{M}\right)^{-1} \textup{Decay}[\W]. 
\end{align*}
As each $\tau\geq  0$ is in some interval $[\tilde{\tau}_k,4\tilde{\tau}_k]$ or $[0,4M]$, we conclude for all $\tau\geq 0$,

\begin{align*}
F[\W]({\tau})+E^{1-3\delta,(0)}_L[\W]({\tau})\lesssim &\left(1+\frac{{\tau}}{M}\right)^{-1} \textup{Decay}[\W]. 
\end{align*}
Through interpolation we have for $p\in [1-3\delta,2-3\delta]$,

\begin{align*}
F[\W]({\tau})+E^{p,(0)}_L[\W]({\tau})\lesssim &\left(1+\frac{{\tau}}{M}\right)^{-2+p+3\delta} \textup{Decay}[\W]. 
\end{align*}
Repeating the argument one more time gives the result.

\end{proof}

\section{Proof of Theorem \ref{thm:ell_geq_2} and \ref{thm:ell_geq_1}}\label{sec:proof}

In this section we prove Theorem \ref{thm:ell_geq_2} and \ref{thm:ell_geq_1}. We first use $\psi_Z$ and $W$ to bound $h$ in the region $r\in [r_{rs,\h},R_{\textup{null}}+M]$, which leads to the decay of $F[h](\tau)$ from the view of Propositions \ref{pro:metric_far}, \ref{pro:psi}, \ref{pro:X0} and \ref{pro:X12}. Then the main results follow by expressing the initial norms of $r^{-1}\psi_Z, \hat{P}_{\even}$ and so on in terms of $h$ or $W$.

\begin{proposition}
Let $h_{ab}$ be an \textbf{even} solution of \eqref{HG} and \eqref{Main_equation} supported on $\ell\geq 1$. Further assume that $\textup{Decay}'[h]$ defined below is finite. Then for any $p\in [\delta,2-3\delta]$ and $\tau_2\geq \tau_1\geq 0$ we have

\begin{align*}
&F[h](\tau_2 )+E^{p,(0)}_L[\tilde{h}](\tau_2)+\int_{\tau_1}^{\tau_2} \bar{B}[h](\tau)+E^{p-1,(0)}_{L,\nablas}[\tilde{h}](\tau)d\tau\\
\lesssim &F[h](\tau_1 )+E^{p,(0)}_L[\tilde{h}](\tau_1)+\left(1+\frac{\tau_1}{M}\right)^{-2+3\delta} \textup{Decay}'[h].
\end{align*}
Here 

\begin{align*}
\textup{Decay}'[h]:=& M^{-2}F[\K^{\leq 1}\W](0)+M^{-2}E^{2-\delta,(0)}_L[\K^{\leq 1}\W](0)+ M^{-2}I^{(0)}[\K^{\leq 2}\hat{W}_0,\K^{\leq 4}\hat{P}_{\even}]\\
+&I^{(0)}[\f^{\leq 1}\K^{\leq 6}S,\f^{\leq 3}\K^{\leq 6}r^{-1}\psi_Z,\K^{\leq 1}Q_{\even}]+ I^{(1)}[\K^{\leq 6}S,\K^{\leq 8}r^{-1}\psi_Z],
\end{align*}
and $\tilde{h}=\eta_{\textup{null}}h$ with $\eta_{\textup{null}}(r)$ defined in \eqref{cutoff_rp} and the initial norm $I^{(k)}[\cdot]$ is defined in \eqref{initial_norm}.

\end{proposition}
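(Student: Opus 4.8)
The plan is to combine the far-region estimate of Proposition \ref{pro:metric_far} with the decay estimates for the auxiliary quantities $r^{-1}\psi_Z$, $S$, $\hat{P}_{\even}$, $Q_{\even}$, $\hat{W}_0$ and $\W=(\hat{W}_1,\hat{W}_2)$ obtained in Propositions \ref{pro:psi}, \ref{pro:S}, \ref{pro:P}, \ref{pro:Q}, \ref{pro:X0} and \ref{pro:X12}. The key point is that Proposition \ref{pro:metric_far} reduces everything to controlling the spacetime integral
\begin{align*}
M^{-3}\int_{D(\tau_1,\tau_2)\cap\{r_{rs,\h}\leq r\leq R_{\textup{null}}+M\}} |(M\partial)^{\leq 1}h|^2\ dvol,
\end{align*}
and on this compact-in-$r$ region $h$ can be reconstructed from the gauge-invariant/gauge quantities. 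First I would split $h=h_{\ell=1}+h_{\ell\geq 2}$. For $h_{\ell\geq 2}$, the relation $h+{}^W\pi=h^{\RW}$ together with \eqref{pi}, \eqref{RW_by_psi} and the definitions \eqref{def:hW0}--\eqref{def:hW2} of $\hat{W}_0,\hat{W}_1,\hat{W}_2$ (recalling that on $r_{rs,\h}\leq r\leq R_{\textup{null}}+M$ the cutoff $\eta_b$ vanishes so $\hat W_i=W_i$, and $\hat{P}_{\even}=P_{\even}$) expresses $|(M\partial)^{\leq 1}h|$ pointwise in terms of $|(M\partial)^{\leq 1}W_i|$ and $|(M\partial)^{\leq 2}(r^{-1}\psi_Z)|$; and $W_2$ is recovered from $\hat{W}_1,\hat{W}_2$ while $W_0,W_1$ are recovered using also $\hat{P}_{\even}$, $S$, $Q_{\even}$ via \eqref{def:SW}--\eqref{def:Q} and \eqref{X1r_X2o_substitution}--\eqref{X1t_substitution}. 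For $h_{\ell=1}$, the analogous reconstruction uses \eqref{def:W0_ell=1}--\eqref{def:W2_ell=1} from subsection \ref{subsec:ell=1} (with $\psi_{Z,\ell=1}=0$).

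Concretely, the second step is to prove a pointwise bound of the schematic form
\begin{align*}
M^{-3}|(M\partial)^{\leq 1}h|^2\cdot\chi_{[r_{rs,\h},R_{\textup{null}}+M]}\lesssim \text{(integrand of) } M^{-1}\bar{B}[\W,\K^{\leq 1}\hat{W}_0,\hat{P}_{\even},Q_{\even}]+M B[S,\f^{\leq 2}r^{-1}\psi_Z]
\end{align*}
on the relevant region; here the $B$ rather than $\bar B$ appears for $S$ and $\psi_Z$ because of the loss of a derivative at the photon sphere, which is harmless since we have control of all higher-order $\K$-commuted quantities. Integrating in $D(\tau_1,\tau_2)$ and applying Propositions \ref{pro:psi}, \ref{pro:S}, \ref{pro:P}, \ref{pro:Q}, \ref{pro:X0}, \ref{pro:X12} together with the higher-order source estimate Proposition \ref{cor:source_high_order} (to commute with $\K$ and convert $B$-type bulk norms into decaying quantities) gives
\begin{align*}
M^{-3}\int_{D(\tau_1,\tau_2)\cap\{r_{rs,\h}\leq r\leq R_{\textup{null}}+M\}}|(M\partial)^{\leq 1}h|^2\ dvol\lesssim \left(1+\frac{\tau_1}{M}\right)^{-2+3\delta}\textup{Decay}'[h].
\end{align*}
Feeding this into \eqref{ineq:metric_far} with the choice $p$ as in the statement yields the claimed inequality. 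One must be slightly careful to take $p$ in the intersection $[\delta,2-3\delta]$ of the ranges allowed by Proposition \ref{pro:metric_far} ($p\in[\delta,2-\delta]$) and by the auxiliary propositions, and to note that the decay rate $-2+3\delta$ in the error comes from the worst of the rates $-2+p+3\delta$ after absorbing the extra powers of $s$ present in $E^{p-1,(0)}_{L,\nablas}$ versus $\bar B$.

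The main obstacle is the bookkeeping of the reconstruction map $h\mapsto(r^{-1}\psi_Z,S,\hat{P}_{\even},Q_{\even},\hat{W}_0,\W)$ and its inverse on the region $r\in[r_{rs,\h},R_{\textup{null}}+M]$: one has to verify that the number of $\K$-derivatives and $\f$-derivatives lost in passing back and forth matches the multiplicities appearing in $\textup{Decay}'[h]$ (e.g. the $\K^{\leq 6}$, $\K^{\leq 8}$, $\f^{\leq 3}$ counts), that the Zerilli quantity $\psi_Z$ defined in \eqref{def:psi} is itself bounded by $(M\partial)^{\leq 2}h$ (hence its initial norm $I^{(1)}[\K^{\leq 8}r^{-1}\psi_Z]$ by $I^{(1)}[\K^{\leq 8}(M\partial)^{\leq 2}h]$, absorbed into a suitable $\textup{Decay}$-norm of $h$), and similarly that the initial norms of $\hat{P}_{\even},Q_{\even},\hat{W}_i$ on $\Sigma_0$ are finite multiples of initial norms of $h$. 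This is routine but lengthy; I would organize it as a single lemma ``all auxiliary quantities are controlled by $h$ near the photon sphere and initially'' and then the proof of the Proposition is a two-line application of \eqref{ineq:metric_far}. The $\ell=1$ case requires separate attention only because the reconstruction formulas differ (subsection \ref{subsec:ell=1}) and because $\bar B[h]$ is replaced by $F[h]$ in the bulk for that mode, but the argument is structurally identical.
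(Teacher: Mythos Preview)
Your approach is correct and matches the paper's strategy: reduce via Proposition~\ref{pro:metric_far} to the compact-in-$r$ spacetime integral, reconstruct $h$ there from $r^{-1}\psi_Z$ and $W$, and invoke the decay Propositions~\ref{pro:psi}, \ref{pro:X0}, \ref{pro:X12}.

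The paper's execution of the reconstruction step is, however, more direct than what you outline. Since $bM\geq 2R_{\textup{null}}$, the cutoff $\eta_b$ vanishes on $[r_{rs,\h},R_{\textup{null}}+M]$, so $W_i=\hat W_i$ there outright; no recovery of $W_0,W_1$ from $\hat P_{\even},S,Q_{\even}$ is needed. From $h=h^{\RW}-{}^W\pi$, \eqref{pi} and \eqref{RW_by_psi} one gets $M^{-3}|(M\partial)^{\leq 1}h|^2\lesssim M^{-3}|\f^{\leq 3}r^{-1}\psi_Z|^2+M^{-5}|(M\partial)^{\leq 2}W|^2$. The paper then applies Lemma~\eqref{killing_to_partial} together with the wave equations \eqref{equ:W_0}--\eqref{equ:W_2} (which supply $\Box W$) to convert the second-order term into $M^{-5}|(M\partial)^{\leq 1}\K^{\leq 1}\W|^2+M^{-5}|(M\partial_r)\K^{\leq 1}W_0|^2+M^{-5}|\K^{\leq 2}W_0|^2$ plus more $\psi_Z$ terms. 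This is bounded by the integrands of $M^{-1}B[\f^{\leq 3}r^{-1}\psi_Z]+M^{-3}\bar B[\K^{\leq 1}\W]+M^{-3}B[\K^{\leq 2}\hat W_0]$, and Propositions~\ref{pro:psi}, \ref{pro:X0}, \ref{pro:X12} finish the job. The quantities $\hat P_{\even}$ and $Q_{\even}$ enter only implicitly, through $\textup{Decay}[\hat W_0]$ and $\textup{Decay}[\W]$, not in the pointwise bound. The paper also does not split $\ell=1$ from $\ell\geq 2$ here; the $\W,\hat W_0$ estimates are already uniform in $\ell\geq 1$.
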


\begin{proof}
From Proposition \ref{pro:metric_far}, it suffices to show that

\begin{align*}
M^{-3} \int_{D(\tau_1,\tau_2)\cap \{r_{rs,\h}\leq r\leq R_{\textup{nul}}+M\}} |(M\partial)^{\leq 1}h|^2  dvol \lesssim \left(1+\frac{\tau_1}{M}\right)^{-2+3\delta} \textup{Decay}'[h].
\end{align*}
From $h=h^{\RW}-{}^W\pi$, \eqref{pi} and \eqref{RW_by_psi}, we have in $r\in [r_{rs,\h},R_{\textup{nul}}+M]$,

\begin{align*}
M^{-3}|(M\partial)^{\leq 1} h|^2\lesssim & M^{-3}|\f^{\leq 3}r^{-1}\psi_Z|^2+M^{-5}|(M\partial)^{\leq 2}W|^2.              
\end{align*}
From \eqref{killing_to_partial}, \eqref{equ:W_0}, \eqref{equ:W_1} and \eqref{equ:W_2}, the above is bounded by

\begin{align*}
M^{-3}|\f^{\leq 3}r^{-1}\psi_Z|^2+M^{-5}|(M\partial)^{\leq 1}\K^{\leq 1}\W|^2+ M^{-5}|(M\partial_r) \K^{\leq 1}{W}_0|^2+M^{-5}|\K^{\leq 2}{W}_0|^2,
\end{align*}
which is bounded by the integrand of $M^{-1}B[\f^{\leq 3}r^{-1}\psi_Z]+M^{-3} \bar{B}[\K^{\leq 1}\W]+M^{-3}{B}[\K^{\leq 2}\hat{W}_0]$. Then the result follows from Proposition \ref{pro:psi}, \ref{pro:X0} and \ref{pro:X12}.

\end{proof}

\begin{proof}[proof of Theorem \ref{thm:ell_geq_2} (1)]

From the previous proposition and the dyadic argument at the end of section \ref{sec:W12}, the result holds with $\textup{Decay}_{\ell\geq 2}[h]$ replaced by $\textup{Decay}'[h]+F[h](0)+E^{2-\delta,(0)}_L[\tilde{h}](0)$. As $F[h](0)+E^{2-\delta,(0)}_L[\tilde{h}](0)$ is clearly bounded by $\textup{Decay}_{\ell\geq 2}[h]$, it remains to show that $\textup{Decay}'[h]$ is also bounded by $\textup{Decay}_{\ell\geq 2}[h]$. Fix an integer $m\geq 0$. From the definition of $W_0$, $W_1$ and $W_2$, \eqref{def:W0}, \eqref{def:W1} and \eqref{def:W2}, we have

\begin{align*}
|(M\partial)^{\leq m}W_0|^2,\ |(M\partial)^{\leq m}W_1|^2,\ |(M\partial)^{\leq m}W_2|^2\lesssim_{s,m} |Ms^2(M\partial)^{\leq m+1}h|^2.
\end{align*}
From the definition of $\psi_Z$, $P_{\even}$ and $Q_{\even}$, \eqref{def:psi}, \eqref{def:P} and \eqref{def:Q}, we have

\begin{align*}
|(M\partial)^{\leq m} r^{-1}\psi_Z|^2\lesssim_{s,m} |s (M\partial)^{\leq m+1}h|^2,\\
|(M\partial)^{\leq m} P_{\even}|^2\lesssim_{s,m} |Ms^2(M\partial)^{\leq m+1}h|^2,\\
|(M\partial)^{\leq m} Q_{\even}|^2\lesssim_{s,m} |s^2(M\partial)^{\leq m+3}h|^2.
\end{align*}
Note that $P_{\even}$ can be bounded by one instead of two more derivative of $h$ since there's a cancellation in the definition of $P_{\even}$ \eqref{def:P} from \eqref{def:W0} and \eqref{def:W1}. Similarly, even though $Q_{\even}$ in \eqref{def:Q} involves three more angular derivative of $\psi_Z$, from \eqref{def:psi} the angular derivative will actually cancels and $Q_{\even}$ can be bounded by three more derivatives of $h$. Then along $\Sigma_0$,

\begin{align*}
|(M\partial)^{\leq m}\hat{W}_0|^2,\ |(M\partial)^{\leq m}\hat{W}_1|^2,\ |(M\partial)^{\leq m}\hat{W}_2|^2,\ |(M\partial)^{\leq m}\hat{P}_\even|^2\lesssim_{s,m} |Ms^3 (M\partial)^{\leq m+2}h|^2.
\end{align*}
Thus

\begin{align*}
\textup{Decay}'[h]=& M^{-2}F[\K^{\leq 1}\W](0)+M^{-2}E^{2-\delta,(0)}_L[\K^{\leq 1}\W](0)+ M^{-2}I^{(0)}[\K^{\leq 2}\hat{W}_0,\K^{\leq 4}\hat{P}_{\even}]\\
+&I^{(0)}[\f^{\leq 1}\K^{\leq 6}S,\f^{\leq 3}\K^{\leq 6}r^{-1}\psi_Z,\K^{\leq 1}Q_{\even}]+ I^{(1)}[\K^{\leq 6}S,\K^{\leq 8}r^{-1}\psi_Z]\\
\lesssim &I^{(0)}[s^3 (M\partial)^{\leq 2}\K^{\leq 4}h, s^2(M\partial)^{\leq 3}\K^{\leq 1}h ,s\f^{\leq 3}(M\partial)^{\leq 1}\K^{\leq 6}h]\\
+&I^{(1)}[s(M\partial)^{\leq 1}\K^{\leq 8}h]\\
=&\textup{Decay}_{\ell\geq 2}[h].
\end{align*}
And the proof is finished.

\end{proof}

\begin{proof}[proof of Theorem \ref{thm:ell_geq_2} (2)]
In this proof we assume all quantities are supported on the mode $\ell=1$. Similar to the proof of Theorem \ref{thm:ell_geq_2} (1), it suffices to show that $\textup{Decay}'[h]$ is bounded by $\textup{Decay}_{\ell=1}[h]$. From the definition of $W_0$, $W_1$ and $W_2$, \eqref{def:W0_ell=1}, \eqref{def:W1_ell=1} and \eqref{def:W2_ell=1} and noting the the $\done$, $\donest$ behaves like $r^{-1}$ on fixed mode, we have

\begin{align*}
|(M\partial)^{\leq m}W_0|^2,\ |(M\partial)^{\leq m} W_1|^2,\ |(M\partial)^{\leq m} W_2|^2\lesssim_{s,m} |Ms^3(M\partial)^{\leq m+1}h|^2.
\end{align*}
From the definition of $P_{\even}$ and $Q_{\even}$, \eqref{def:P} and \eqref{def:Q}, we have

\begin{align*}
|(M\partial)^{\leq m} P_{\even}|^2\lesssim_{s,m} &|Ms^4(M\partial)^{\leq m+2}h|^2,\\
|(M\partial)^{\leq m} Q_{\even}|^2\lesssim_{s,m} &|s^3(M\partial)^{\leq m+2}h|^2.
\end{align*}
Hence with $\psi_Z\equiv 0$, along $\Sigma_0$ we have

\begin{align*}
|(M\partial)^{\leq m} \hat{P}_\even|^2\lesssim_{s,m} |Ms^4 (M\partial)^{\leq m+2}h|^2.
\end{align*}
Thus,

\begin{align*}
\textup{Decay}'[h]=& M^{-2}F[\K^{\leq 1}\W](0)+M^{-2}E^{2-\delta,(0)}_L[\K^{\leq 1}\W](0)+ M^{-2}I^{(0)}[\K^{\leq 2} {W}_0,\K^{\leq 4}\hat{P}_{\even}]\\
+&I^{(0)}[\f^{\leq 1}\K^{\leq 6}S,\K^{\leq 1}Q_{\even}]+ I^{(1)}[\K^{\leq 6}S]\\
\lesssim &I^{(0)}[s^4(M\partial)^{\leq 2}\K^{\leq 4}h,\f^{\leq 1}\K^{\leq 6} h ]+I^{(1)}[\K^{\leq 6} h]\\
=&\textup{Decay}_{\ell=1}[h].
\end{align*}

\end{proof}

\begin{proof}[proof of Theorem \ref{thm:ell_geq_1}]

We recall that in this case $r^{-1}\psi_Z$ is zero, $\hat{W}_i=W_i$ and $S=-S_W$. From Proposition \ref{pro:X0} and Proposition \ref{pro:X12}, it suffices to show that $\textup{Decay}[\hat{W}_0]$ and $\textup{Decay}[\W]$ are bounded by $\textup{Decay}_{\ell\geq 1}[W]$. In particular, $\textup{Decay}[\hat{W}_0]\leq \textup{Decay}[\W]$ as we used $W_0$ in controlling $W_1$ and $W_2$. Hence it is enough to deal with $\textup{Decay}[\hat{W}_0]$. From the definition of $S_W$, $P_{\even},\ \hat{P}_{\even}$ and $Q_{\even}$, we have along $\Sigma_0$,

\begin{align*}
|(M\partial)^{\leq m} S_W|^2\lesssim_m &| M^{-1}(M\partial)^{\leq m+1}W|^2,\\
|(M\partial)^{\leq m}P_{\even}|^2,|(M\partial)^{\leq m}\hat{P}_{\even}|^2\lesssim_m &|s(M\partial)^{\leq m+1}W|^2,\\
|(M\partial)^{\leq m}Q_{\even}|^2\lesssim_m & |M^{-1}s(M\partial)^{\leq m+2}W|^2.
\end{align*}

Hence

\begin{align*}
\textup{Decay}[\W]=&F[\W](0)+E^{2-\delta,(0)}_L[\W](0)+ I^{(0)}[\K^{\leq 1}\hat{W}_0,\K^{\leq 3}\hat{P}_{\even}]\\
+&M^2I^{(0)}[\K^{\leq 5}\f^{\leq 1}S,Q_{\even}]+M^2 I^{(1)}[\K^{\leq 5}S]\\
\lesssim &I^{(0)}[s(M\partial)^{\leq 1}\K^{\leq 3}W,\f^{\leq 1}(M\partial)^{\leq 1}\K^{\leq 5}W]+I^{(1)}[(M\partial)^{\leq 1}\K^{\leq 5}W]\\
=&\textup{Decay}_{\ell\geq 1}[W].
\end{align*}

\end{proof}

\section{The $\ell=0$ mode}\label{sec:ell=0}

In this section we turn to the mode $\ell=0$. We start with equation \eqref{wave_equation_vector} with spherically symmetric solutions. Then we discuss the $\ell=0$ mode of equations \eqref{HG} and \eqref{Main_equation} in subsection \ref{subsec:sad}. To begin, we estimate the trace of the deformation tensor, $S_W$, and \ $P_{\even}$ as before using vector field method in Appendix \ref{sec:wave equation}. However, it's impossible to show $W_a$, a solution of \eqref{wave_equation_vector} supported on $\ell=0$, decays as there exists a stationary solution $W^*$ with finite initial energy. The explicit form of $W^*$ is given by
\begin{align}\label{def:W*}
W^*_adx^a:=\frac{1}{r}dt+\frac{2M}{r^2}\D^{-1}dr.
\end{align} 
Therefore we adapt a different approach. \\

Through equation \eqref{wave_equation_vector}, one can show that provided $P_{\even}$ and $S_W$ both vanish, $W_a$ is actually stationary (\eqref{W0t} and \eqref{W-1t}). Then the wave equation, from the view of \eqref{Box decom}, becomes a second order ODE in $\rho$. This ODE has two explicit linear independent solutions but only one is smooth upto the horizon $\rho=2M$, which produces the stationary solution we discussed.\\
With the estimate of $P_{\even}$ and $S_W$, we can follow the argument above by introducing error terms, denoted by $O_j$, that would vanish if $P_{\even}$ and $S_W$ do. Instead of showing there's no singular part, we will argue that the coefficient of the singular solution decays through the red-shift estimate.\\

Let $W_a$ be an spherically symmetric solution of \eqref{wave_equation_vector}. It has only two components $W_adx^a=W_0 dt+\D^{-1} W_1 dr$, and $W_0$ and $W_1$ are spherical symmetric functions. Similar to \eqref{equ:W_0} and \eqref{equ:W_1} The equation \eqref{wave_equation_vector} can be rewritten as

\begin{align}
\Box W_0+\frac{2M}{r^3}W_0=-\frac{2M}{r^3}P_{\even},\label{equ:W_0,l=0}\\
\Box W_1-\frac{2}{r^2}\left(1-\frac{4M}{r}\right)W_1=\frac{M}{r^2}S_W.\label{equ:W_1,l=0}
\end{align}
Here $P_{\even}$ and $S_W$ are defined as in \eqref{def:P} and in \eqref{def:SW} with $W_2\equiv 0$. Similar to \eqref{equ:P} and \eqref{equ:S}, $P_{\even}$ and $S_W$ satisfies the wave equations

\begin{equation}\label{equ:P,l=0}
\Box P_{\even}=\nabla_t S_W,
\end{equation}

\begin{equation}\label{equ:S,l=0}
\Box S_W=0.
\end{equation}
Recall that $b\geq 2R_{\textup{null}}/M$ is a large number determined in subsection \ref{subsec:estimate_W12} and $\eta_b(r)$ is a cut-off function which equals zero in $[2M,bM]$ and equals one in $[2bM,\infty)$. As in section \ref{sec:4}, define
 
\begin{align}
\hat{P}_{\even}:=&P_{\even}+\eta_b \cdot\left(\frac{u+r}{2}S_W\right).
\end{align}
Now we state the main result for the spherically symmetric solutions of \eqref{wave_equation_vector}.

\begin{theorem}\label{thm:vector,l=0}
Let $W_a$ be an \textbf{even} solution of \eqref{wave_equation_vector} and is supported on $\ell=0$. Recall that $W^*$ is a stationary solution defined in \eqref{def:W*}. Let $S_W$, $P_{\even}$ and $\hat{P}_{even}$ be defined as in \eqref{def:SW}, \eqref{def:P} and \eqref{def:hP} with $\psi_Z\equiv 0$. Then

\begin{align*}
c_1(\infty):=-2M(M S_W-  P_{\even}-{W}_0)\bigg|_{\mathcal{H}^+}
\end{align*}
is a constant. Further suppose that $\textup{Decay}_{\ell=0}[W]$ defined below is finite. Then

\begin{align}
\hat{W} :=W -c_1(\infty)W^*
\end{align}
converges to zero with the estimate

\begin{equation*}
M^{-2}\int_{\Sigma''_\tau}|(M\partial)^{\leq 1} \hat{W}|^2 dvol_3+M^{-3} \int_{D''(\tau,\infty)} |(M\partial)^{\leq 1} \hat{W}|^2 dvol \lesssim \left(1+\frac{\tau}{M}\right)^{-2+2\delta} \textup{Decay}_{\ell=0}[W].
\end{equation*}
Here

\begin{equation*}
\begin{split}
\textup{Decay}_{\ell=0}[W]:=&F[\f^{\leq 2} W](0)+\bigg(F[\f^{\leq 7}\hat{P}_{\even}]+E^{2-\delta,(0)}_L[\f^{\leq 7}\hat{P}_{\even}]\bigg)(0)\\
+&\bigg(F[\f^{\leq 7}S_W]+E^{4-\delta,(0)}_L[\f^{\leq 7}S_W]\bigg)(0).
\end{split}
\end{equation*}

\end{theorem}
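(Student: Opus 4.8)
The plan is to follow the strategy outlined in the introduction of Section \ref{sec:ell=0}: first establish the decay of the auxiliary scalars $S_W$, $P_{\even}$, $\hat{P}_{\even}$, then show that the ``would-be stationary'' structure of $W$ forces $\hat W$ to satisfy a wave equation with sources controlled by these scalars plus a red-shift-detectable singular mode whose coefficient decays.

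\textbf{Step 1: decay of $S_W$, $P_{\even}$ and $\hat P_{\even}$.} Since $S_W$ solves the free wave equation \eqref{equ:S,l=0} and is supported on $\ell=0$, apply the $r^p$/red-shift/Morawetz machinery of Appendix \ref{sec:wave equation} (as in Proposition \ref{pro:S}) to get $F[S_W](\tau)+E^{p,(0)}_L[\tilde S_W](\tau)\lesssim (1+\tau/M)^{-4+p+3\delta}I^{(1)}[S_W]$, with one extra $r^p$-weight available because the $\ell=0$ free wave has no angular energy obstruction. Then $P_{\even}$ solves \eqref{equ:P,l=0} with source $\nabla_t S_W$; as in Section \ref{sec:W0}, the bad $r$-weights are removed by passing to $\hat P_{\even}=P_{\even}+\eta_b\cdot\frac{u+r}{2}S_W$, whose source $G_P:=\Box\hat P_{\even}$ is bounded (by the $\ell=0$ analogue of Lemma \ref{lem:GP_bound}) by $\nabla_L S_W$, $S_W$ and, in $[bM,2bM]$, an extra $(1+\tau/M)$ factor times derivatives of $S_W$. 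Feeding the $S_W$ estimates into $\textup{I}_{\textup{source},3\delta}[G_P]$ exactly as in Lemma \ref{lem:GP_decay}, Proposition \ref{cor:source_high_order} yields $F[\hat P_{\even}](\tau)+E^{p,(0)}_L[\hat P_{\even}](\tau)\lesssim(1+\tau/M)^{-2+p+3\delta}\textup{Decay}_{\ell=0}[W]$.

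\textbf{Step 2: the near-stationary ODE structure and the conserved quantity.} The key algebraic observation is that if $P_{\even}\equiv 0$ and $S_W\equiv 0$ then \eqref{def:P} and \eqref{def:SW} read $r\nabla_r W_0 = -W_0 + \D^{-1}r\nabla_t W_1$ and $\D^{-1}\nabla_t W_0 = \nabla_r W_1 + \tfrac2r W_1$; combined with the two wave equations \eqref{equ:W_0,l=0}, \eqref{equ:W_1,l=0}, this forces $\nabla_t W_0 = \nabla_t W_1 = 0$ (equations \eqref{W0t}, \eqref{W-1t}), so $W$ is stationary and, by \eqref{Box decom}, solves a second-order ODE in $\rho$ with a two-dimensional solution space; only one solution is smooth up to $\mathcal H^+$, and that is $W^*$. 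In the general case, plug $W_0 = \hat W_0 + c_1(\tau) W^*_0$ etc., and read off from \eqref{equ:W_0,l=0}–\eqref{equ:S,l=0} that the combination $-2M(MS_W - P_{\even} - W_0)$ restricted to $\mathcal H^+$ is $\partial_v$-independent — i.e.\ a constant $c_1(\infty)$ — because its $\partial_v$-derivative telescopes into a total $\partial_R$-derivative that vanishes at $R=2M$ (using that the non-smooth ODE solution diverges logarithmically and is therefore excluded by regularity of $W$). This identifies the correct subtraction $\hat W = W - c_1(\infty)W^*$.

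\textbf{Step 3: red-shift estimate for $\hat W$.} Now $\hat W_0$, $\hat W_1$ satisfy \eqref{equ:W_0,l=0}, \eqref{equ:W_1,l=0} with the same potentials (since $W^*$ is a stationary solution, subtracting it changes nothing but the sources), and the sources are the already-controlled $-\tfrac{2M}{r^3}\hat P_{\even} + (\text{terms in }S_W)$ and $\tfrac{M}{r^2}S_W$. Define the error scalars $O_j$ measuring the failure of the stationarity relations of Step 2 — these are linear in $S_W$, $P_{\even}$, $\hat P_{\even}$ and their derivatives, hence decay. Run the red-shift current of Dafermos–Rodnianski, $J_a := T_{ab}[\hat W](Y(\sigma)+\eta_{rs}T)^b - \tfrac{c}{r^2}|\hat W|^2 Y_a(\sigma)$ as in \eqref{g_redshift} and subsection \ref{subsec:decomp_h}, applied to $\hat W$ on $D''(\tau,\tau')$; because $\hat W$ has the correct boundary behavior at $\mathcal H^+$ (the singular mode has been removed and the residual $c_1(\tau)-c_1(\infty)$ is itself controlled by the decaying $O_j$ along the horizon), the divergence is bounded below by $M^{-3}|(M\partial)^{\le1}\hat W|^2\chi_{[2M,r_{rs}]}$ modulo $\nabla_{Y}\hat W\cdot(\text{sources})$ and a spatially-bounded error absorbable by a standard $T$-current plus Morawetz estimate in $[r_{rs},R_{\textup{null}}+M]$ (where $T$ is timelike). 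Integrating, using Step 1 to bound the source and error integrals, and a dyadic/interpolation argument as at the end of Section \ref{sec:W12}, gives the stated $(1+\tau/M)^{-2+2\delta}$ decay of $\int_{\Sigma''_\tau}|(M\partial)^{\le1}\hat W|^2$ and its spacetime integral.

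The main obstacle I expect is Step 2 — pinning down that $c_1(\infty)$ is genuinely a constant and that $c_1(\tau)\to c_1(\infty)$ with a quantitative rate. This requires (i) a careful identification of which linear combination of $W_0$, $P_{\even}$, $S_W$ is $\partial_v$-conserved on the horizon (the telescoping must be exact, not just leading-order), and (ii) an argument that the logarithmically-divergent ODE solution cannot appear in $\hat W$, so that the red-shift estimate sees only the decaying piece; this is where the regularity of $W$ up to $\mathcal H^+$ and the precise structure of the error terms $O_j$ must be used, and it is the step with the least direct analogue in the $\ell\ge1$ analysis of the earlier sections.
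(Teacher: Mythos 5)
Your Step 1 matches the paper's Propositions \ref{pro:S,l=0} and \ref{pro:P,l=0}, but there is a genuine gap in Steps 2--3. First, the constancy of $c_1(\infty)$ along $\mathcal{H}^+$ is not obtained by a ``telescoping into a total $\partial_R$-derivative'' nor by excluding the logarithmic mode through regularity of $W$; it is the one-line consequence of the relation \eqref{W0t}, $\nabla_t W_0=\frac{r}{2}\D\nabla_r S_W-\nabla_t P_{\even}$, combined with the degeneration $\D\partial_r=\partial_t$ at $r=2M$, which gives $\partial_\tau\big(MS_W-P_{\even}-W_0\big)=0$ on the horizon. More seriously, your Step 3 --- a red-shift current for $\hat W$ supplemented by a ``standard $T$-current plus Morawetz estimate'' in the bounded region --- cannot work as stated. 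The $\ell=0$ system \eqref{equ:W_0,l=0}--\eqref{equ:W_1,l=0} admits the finite-energy stationary solution $W^*$, so no coercive Morawetz/integrated-decay estimate holds for it; subtracting the \emph{fixed} multiple $c_1(\infty)W^*$ does not remove this obstruction, since at intermediate times the solution is close to $c_1(\tau)W^*$ with $c_1(\tau)\neq c_1(\infty)$, and the zeroth-order coefficients (e.g.\ $-\frac{2}{r^2}(1-\frac{4M}{r})$ in \eqref{equ:W_1,l=0}) have the wrong sign for $\ell=0$: all the positivity checks used elsewhere in the paper require $\Lambda\geq 2$.

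What the paper actually does in the bounded region is an explicit ODE analysis in $\rho$. After passing to the horizon-regular variable $W_3=\D^{-1}(W_1-W_0)$, it rewrites \eqref{equ:W_0,l=0} as the ODE \eqref{equ:ODE_W0} and expands, by variation of parameters, $W_0=c_1(\tau)r^{-1}+c_2(\tau)W_{0,2}+O_3$ and $W_3=-c_1(\tau)r^{-1}+d(\tau)r^{-2}\D^{-1}+O_{10}$. The coefficient $c_2(\tau)$ of the logarithmic mode is pinned down \emph{algebraically}, $c_2=\frac{r^3}{2}O_6$, by differentiating the $W_3$-equation in $\tau$ --- it is not excluded by smoothness at $\mathcal{H}^+$. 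The coefficient $d(\tau)$ of the $\D^{-1}$-singular mode is shown to decay by a red-shift estimate applied to $(\bar W_0,\bar W_3)$, exploiting that near the horizon the energy density controls $\D^{-2}|d'(\tau)|^2+\D^{-4}|d(\tau)|^2$ with diverging weights, so that choosing $\underline{r}$ close to $2M$ makes the horizon contribution dominate the error terms supported in $[r_{rs,\ell=0},r_{rs}^+]$. Decay of $\hat W$ on $[\underline{r},R_{\textup{null}}]$ then follows from the explicit representations \eqref{bW0_substitute}--\eqref{bW-1_substitute} together with Lemma \ref{lem:Err_ell=02}, not from a multiplier estimate for $\hat W$ itself. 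Without this mode-by-mode quantitative control, your argument has no mechanism to rule out a slowly varying multiple of $W^*$ (or of the singular homogeneous solutions) persisting in $\hat W$.
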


\subsection{Analysis of $S_W$ and $P_{\even}$}

By applying Proposition \ref{cor:free_ell=0} (1) to \eqref{equ:S,l=0}, we obtain

\begin{proposition}\label{pro:S,l=0}

Let $m\geq 0$ be a fixed integer. For any $p\in [\delta,4-\delta]$ and $\tau\geq 0$, we have

\begin{equation*}
\begin{split}
&F[\f^{\leq m}S_W](\tau)+E^{p,(0)}_L[\f^{\leq m}\tilde{S}_W](\tau)+ M^{-1} \int_{\tau}^{\infty} \bar{B}[\f^{\leq m} S_W](\tau')+E^{p-1}_L[\f^{\leq m} \tilde{S}_W](\tau')d\tau'\\
\lesssim_m & \left(1+\frac{\tau}{M}\right)^{-4+\delta+p} \left( F[\f^{\leq m} S_W]+E^{4-\delta,(0)}_L[\f^{\leq m} \tilde{S}_W]\right)(0),
\end{split}
\end{equation*}
provided the right hand side is finite.
\end{proposition}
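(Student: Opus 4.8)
The plan is to reduce Proposition \ref{pro:S,l=0} to the $\ell=0$ wave-equation estimates recorded in Appendix \ref{sec:wave equation}, applied to the commuted quantities $\f^{\leq m}S_W$. Since $S_W$ satisfies the homogeneous scalar wave equation \eqref{equ:S,l=0}, $\Box S_W=0$, the first step is to commute with the vector fields in $\f$. The Killing vectors $MT,\Omega_1,\Omega_2,\Omega_3$ commute with $\Box$ exactly, so $\Box(\K^j S_W)=0$. The non-Killing fields $rL$ and $M\underline{L}'$ in $\f$ do not commute with $\Box$, but the commutators $[\Box, rL]$ and $[\Box, M\underline{L}']$ are first-order operators with coefficients decaying in $r$; hence $\Box(\f^{\leq m}S_W)$ is controlled by lower-order terms $\f^{\leq m}S_W$ with favorable weights. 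This is exactly the situation handled by the source-term framework of Appendix \ref{sec:wave equation}, so one expects an inductive argument on $m$: the equation for $\f^m S_W$ has a source built from $\f^{\leq m-1}S_W$, which is already controlled by the inductive hypothesis at one lower order (with an extra power of decay to spare, so the source contribution is subleading).

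The second step is to verify that the potential $0$ (equivalently, the operator $\Box$ with no zeroth-order term) falls into the class of potentials covered by the $\ell=0$ decay proposition --- referenced here as Proposition \ref{cor:free_ell=0} (1) --- and that the hypotheses on the initial data norm are met. The relevant point is that for a spherically symmetric scalar satisfying $\Box\psi=0$ one has the improved $r^p$-hierarchy reaching $p$ up to $4-\delta$ (rather than $2-\delta$), because the $\ell=0$ mode enjoys extra decay from the absence of the angular term; this is precisely why the statement allows $p\in[\delta,4-\delta]$ and why the initial norm involves $E^{4-\delta,(0)}_L$. Thus I would invoke Proposition \ref{cor:free_ell=0} (1) directly for $S_W$ itself (the $m=0$ case), obtaining
\begin{equation*}
F[S_W](\tau)+E^{p,(0)}_L[\tilde S_W](\tau)+M^{-1}\int_\tau^\infty \bar B[S_W](\tau')+E^{p-1}_L[\tilde S_W](\tau')d\tau'\lesssim \left(1+\frac{\tau}{M}\right)^{-4+\delta+p}\left(F[S_W]+E^{4-\delta,(0)}_L[\tilde S_W]\right)(0),
\end{equation*}
and then bootstrap to $\f^{\leq m}S_W$.

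For the inductive step, I would argue as follows. Assuming the estimate for $\f^{\leq m-1}S_W$, write the wave equation for each $\f^m S_W$ as $\Box(\f^m S_W)=G_m$ with $|G_m|\lesssim M^{-1}s^{-1}|\partial \f^{\leq m-1}S_W|+M^{-1}s^{-2}|\f^{\leq m-1}S_W|$ in the far region (and a bounded zeroth/first-order expression near the horizon and photon sphere). One then checks that the source norm $\textup{E}^p_{\textup{source}}[G_m]$ is dominated by the bulk integrals $B[\f^{\leq m-1}S_W]$ and $E^{p-1}_{L,\nablas}[\f^{\leq m-1}\tilde S_W]$, whose time-integrals are already bounded with a decay rate $(1+\tau/M)^{-4+\delta+p}$ or better by the inductive hypothesis; since this is at least as strong as the decay claimed for order $m$, the source contribution is absorbed. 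Feeding this into the $\ell=0$ version of Proposition \ref{cor:source_high_order} (or directly into Proposition \ref{cor:free_ell=0} applied mode-by-mode with a source) yields the claim at order $m$. The only genuine subtlety --- the main obstacle --- is bookkeeping the weights carefully near $r=2M$ and near the photon sphere $r=3M$, where the $\bar B$ (non-degenerate) versus $B$ (degenerate) distinction matters and where the commutator $[\Box,\f]$ is not small; there one uses the red-shift estimate and the loss-of-one-derivative at the photon sphere, which is precisely why the final norm is phrased with $\f^{\leq m}$ on the right-hand side and why no issue arises from the degeneracy. Everything else is a routine application of the machinery already set up in the preceding sections.
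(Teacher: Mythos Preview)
Your proof is more elaborate than necessary and contains a genuine gap in the inductive step. The paper's proof is a single line: apply Proposition~\ref{cor:free_ell=0}~(1) to $S_W$. That proposition is already stated for arbitrary $m\geq 0$ and directly gives the estimate for $\f^{\leq m}S_W$; there is no need for a separate induction on $m$.

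Your induction, as written, would fail to recover the $\tau^{-4+p+\delta}$ rate. You propose to treat $\Box(\f^m S_W)=G_m$ as an inhomogeneous equation and feed $G_m$ into Proposition~\ref{cor:source_high_order} or Proposition~\ref{cor:free_ell=0}~(2). But both of those results use only the standard $r^p$ hierarchy with $p\in[\delta,2-\bar\delta]$ and conclude decay at rate $(1+\tau/M)^{-2+p+\bar\delta}$, not $(1+\tau/M)^{-4+p+\delta}$. Even if the source $G_m$ decays at the faster rate by induction, the black-box source estimate loses two powers of $\tau$. The extended range $p\in[\delta,4-\delta]$ for $\ell=0$ relies on the vanishing of the first Newman--Penrose constant (equivalently, $E^{4-\delta,(0)}_L<\infty$), and preserving this structure through commutation with $rL$ and $M\underline{L}'$ is precisely the nontrivial content packaged into Proposition~\ref{cor:free_ell=0}~(1) for $m\geq 1$. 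A generic source argument cannot reproduce it; one must rerun the extended $r^p$ estimate on the commuted quantity itself, verifying that the commutator error terms are compatible with the higher-$p$ multipliers. The paper delegates this to the appendix (citing \cite{Angelopoulos-Aretakis-Gajic1}) rather than redoing it in the body.
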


Let $G_P:=\Box \hat{P}_{\even}.$ To apply Proposition \ref{cor:free_ell=0} (2) to $\hat{P}_{\even}$, we need a bound $\textup{I}_{\textup{source},\ell=0,\delta}[\f^{\leq m}G_P]$ defined in \eqref{def:I0}. Similar to lemma \ref{lem:GP_bound}, we have

\begin{lemma}\label{lem:GP_bound,l=0}
Fix an integer $m\geq 0$. In $r\in [2M,bM]$, we have

\begin{align*}
|\f^{\leq m}G_P|\lesssim_m M^{-1}(|(M\partial)^{\leq 1} \f^{\leq m} S_W|.
\end{align*}
As $r\in [bM,2bM]$, we have

\begin{align*}
|\f^{\leq m}G_P|\lesssim_m M^{-1}\left( 1+\frac{\tau}{M} \right) |(M\partial)^{\leq 1}\f^{\leq m}S_W|.
\end{align*}
And as $r\in [2bM,\infty)$,

\begin{align*}
|\f^{\leq m} G_P|\lesssim &s^{-1}|\nabla_L \f^{\leq m} S_W|+M^{-1}s^{-2}|\f^{\leq m} S_W|.
\end{align*}

\end{lemma}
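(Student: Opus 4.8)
The plan is to obtain Lemma \ref{lem:GP_bound,l=0} by reading off the equation \eqref{equ:hP} for $\hat P_{\even}$ explicitly in each of the three regions, exactly as in the proof of Lemma \ref{lem:GP_bound}, but with the enormous simplification that now $\psi_Z\equiv 0$ and the only source is $S_W$. First I would recall that $\hat P_{\even}=P_{\even}+\eta_b\cdot\frac{u+r}{2}S_W$ and that $\Box P_{\even}=\nabla_t S_W$ by \eqref{equ:P,l=0}, while $\Box S_W=0$ by \eqref{equ:S,l=0}. In the innermost region $r\in[2M,bM]$ the cutoff $\eta_b$ vanishes, so $\hat P_{\even}=P_{\even}$ and $\f^{\leq m}G_P=\f^{\leq m}\Box P_{\even}=\f^{\leq m}\nabla_t S_W$; since $\nabla_t=MT/M$ and $T\in\mathcal K$ commutes with $\f$, this is bounded by $M^{-1}|(M\partial)^{\leq 1}\f^{\leq m}S_W|$, which is the first claim.

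Next, for the outermost region $r\in[2bM,\infty)$ the cutoff $\eta_b\equiv 1$, so $\hat P_{\even}=\bar P_{\even}$ and $G_P=\Box\bar P_{\even}$. Here I would specialize \eqref{equ:bP} to the $\ell=0$ case, i.e. set $\psi_Z\equiv 0$ and use that $u=\tau$ for $r\ge R_{\textup{null}}$: the only surviving terms are $\D^{-1}\frac{2M}{r}\nabla_L S_W$ and $\frac{M}{r^2}S_W$. Since $\D^{-1}\lesssim 1$ and $\frac{M}{r}\lesssim s^{-1}$ on this region, this gives $|\f^{\le m}G_P|\lesssim s^{-1}|\nabla_L\f^{\le m}S_W|+M^{-1}s^{-2}|\f^{\le m}S_W|$ after commuting $\f^{\le m}$ through (the coefficients are smooth functions of $s$ bounded with all derivatives, so commuting costs nothing beyond the implicit constant). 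This is the third claim.

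Finally, in the transition region $r\in[bM,2bM]$ I would write, as in the proof of Lemma \ref{lem:GP_bound},
\[
\Box\hat P_{\even}=\Box P_{\even}+\Box\eta_b\cdot(\bar P_{\even}-P_{\even})+2\nabla\eta_b\cdot\nabla(\bar P_{\even}-P_{\even})+\eta_b\,\Box(\bar P_{\even}-P_{\even}),
\]
and bound each piece. The term $\Box P_{\even}=\nabla_t S_W$ is $O(M^{-1}|(M\partial)^{\le 1}\f^{\le m}S_W|)$ as before. For the difference $\bar P_{\even}-P_{\even}=\eta_b\cdot\frac{u+r}{2}S_W$ (with $\psi_Z\equiv 0$), on $bM\le r\le 2bM$ we have $|u+r|\lesssim_b M(1+\tau/M)$, so $|\bar P_{\even}-P_{\even}|\lesssim_b M(1+\tau/M)|S_W|$, and similarly each $\partial$-derivative costs at most an extra $(M\partial)$ on $S_W$ (plus the bounded derivatives of $\eta_b$), while $\Box(\bar P_{\even}-P_{\even})$ is controlled using that $S_W$ itself solves a wave equation — so no genuine second derivative of $S_W$ is needed, exactly the trick used in Lemma \ref{lem:GP_bound}. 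Together with $|\Box\eta_b|\lesssim_b M^{-2}$ and $|\partial\eta_b|\lesssim_b M^{-1}$ this yields $|\f^{\le m}G_P|\lesssim_b M^{-1}(1+\tau/M)|(M\partial)^{\le 1}\f^{\le m}S_W|$, the middle claim.

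The only point requiring any care — and the ``main obstacle'', though it is a mild one — is bookkeeping the commutators of $\f^{\leq m}$ with the $r$-dependent coefficients and with $\eta_b$ in the transition region, and in particular making sure that the apparent loss of one derivative coming from $\Box(\bar P_{\even}-P_{\even})$ is absorbed by invoking $\Box S_W=0$; everything else is a direct transcription of the $\ell\ge 1$ argument with the $\psi_Z$ terms deleted. I would therefore present the proof as ``identical to that of Lemma \ref{lem:GP_bound}, specializing to $\psi_Z\equiv 0$'', and spell out only the three displayed bounds above.
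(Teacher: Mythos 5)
Your proposal is correct and follows essentially the same route as the paper, which itself only states that the bound is obtained ``Similar to Lemma \ref{lem:GP_bound}'': in each region you repeat that argument with $\psi_Z\equiv 0$ and the source $S$ replaced (up to sign) by $S_W$, using $\Box P_{\even}=\nabla_t S_W$, $\Box S_W=0$, the cutoff decomposition in $[bM,2bM]$, and the cancellation built into $\hat P_{\even}$ for $r\geq 2bM$. The only cosmetic remark is that the sign conventions differ ($S=-S_W$ when $\psi_Z=0$), which is irrelevant for the absolute-value bounds, and the dependence on $b$ may be suppressed since its value is fixed by this point in the paper.
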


The direct consequence is

\begin{align*}
\textup{I}_{\textup{source},\ell=0,\delta}[\f^{\leq m}G_P] \lesssim_m & M^2\left( F[\f^{\leq m}S_W]+E^{4-\delta,(0)}_L[\f^{\leq m}\tilde{S}_W]\right)(0).
\end{align*}
Therefore, from Proposition \ref{cor:free_ell=0} (2), we have

\begin{proposition}\label{pro:P,l=0}

Let $m\geq 0$ be a fixed integer. Further assume that $\textup{Decay}_{\ell=0}[\f^{\leq m} \hat{P}_{\even}]$ defined below is finite. Then for any $p\in [\delta,2-\delta]$ and $\tau\geq 0$,

\begin{equation*}
\begin{split}
&F[\f^{\leq m}\hat{P}_{\even}](\tau)+E^{p,(0)}_L[\f^{\leq m}\hat{P}_{\even}](\tau)+\int_\tau^{\infty} \bar{B}[\f^{\leq m}\hat{P}_{\even}](\tau')+E^{p,(0)}_L[\f^{\leq m}\hat{P}_{\even}](\tau') d\tau'\\
 \lesssim_m &\left(1+\frac{\tau}{M}\right)^{-2+\delta+p}\cdot \textup{Decay}_{\ell=0}[\f^{\leq m} \hat{P}_{\even}].
\end{split}
\end{equation*}
Here
\begin{align*}
\textup{Decay}_{\ell=0}[\f^{\leq m} \hat{P}_{\even}]=\left(F[ \f^{\leq m}\hat{P}_{\even}]+E^{2-\delta,(0)}_L[\f^{\leq m}\hat{P}_{\even}]\right)(0)+ M^2\left( F[\f^{\leq m} S_W]+E^{4-\delta,(0)}_L[\f^{\leq m} S_W]\right)(0).
\end{align*}
\end{proposition}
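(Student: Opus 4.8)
The plan is to deduce Proposition \ref{pro:P,l=0} from the general wave-equation machinery of Appendix \ref{sec:wave equation}, exactly as Proposition \ref{pro:P} was deduced in the $\ell\geq 1$ case, but now using the spherically symmetric version (Proposition \ref{cor:free_ell=0}). The quantity $\hat P_{\even}$ is a scalar function supported on $\ell=0$; it satisfies $\Box\hat P_{\even}=G_P$ with $G_P$ given above, and the potential is zero, which trivially lies in the relevant class $\mathcal{V}$. So the only genuine input needed is a spacetime bound on the source $G_P$ in the norm $\textup{I}_{\textup{source},\ell=0,\delta}[\f^{\leq m}G_P]$ appearing in \eqref{def:I0}, after which Proposition \ref{cor:free_ell=0} (2) applies verbatim.

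First I would record, for each $r$-region, the pointwise bound on $\f^{\leq m}G_P$ supplied by Lemma \ref{lem:GP_bound,l=0}, and translate each term into the integrand of one of the energies controlled by Proposition \ref{pro:S,l=0}. Concretely: in $r\in[2M,bM]$ the bound $M^{-1}|(M\partial)^{\leq 1}\f^{\leq m}S_W|$ is absorbed into the integrand of $M\bar B[\f^{\leq m}S_W]$ (no loss of a derivative is needed here since there is no photon sphere obstruction at $\ell=0$, and $S_W$ already satisfies a homogeneous wave equation); in the interpolation region $r\in[bM,2bM]$ the extra factor $(1+\tau/M)$ produces $(1+\tau/M)^2\cdot M\bar B[\f^{\leq m}S_W]$, which after integration in $\tau$ against Proposition \ref{pro:S,l=0} with $p$ near $\delta$ still decays like $(1+\tau_1/M)^{-2+\delta}$ because $S_W$ enjoys the faster $\tau^{-4+\delta+p}$ rate; in $r\in[2bM,\infty)$ the terms $s^{-1}|\nabla_L\f^{\leq m}S_W|$ and $M^{-1}s^{-2}|\f^{\leq m}S_W|$ feed into $M E^{p-1,(0)}_{L,\nablas}[\f^{\leq m}\tilde S_W]$. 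Summing these, together with the $\Sigma''_{\tau_i}$ boundary terms and the $\partial G_P$, $\partial_r G_P$ contributions to $\textup{E}^p_{\textup{source}}$ (all bounded the same way, with one more $\f$ applied, which is why $\f^{\leq m}$ is used throughout rather than $(M\partial)^{\leq m}$), gives
\[
\textup{I}_{\textup{source},\ell=0,\delta}[\f^{\leq m}G_P]\lesssim_m M^2\left(F[\f^{\leq m}S_W]+E^{4-\delta,(0)}_L[\f^{\leq m}\tilde S_W]\right)(0),
\]
as asserted just before the proposition. This is precisely the analogue of Lemma \ref{lem:GP_decay}, and the bookkeeping is routine once one knows the $S_W$ estimate.

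With that source bound in hand, the proof concludes in one line: apply Proposition \ref{cor:free_ell=0} (2) to $\hat P_{\even}$, whose potential is the admissible potential $0$, with source $G_P$; commuting with up to $m$ vector fields in $\f$ and using that $\f^{\leq m}$ commutes appropriately with $\Box$ up to lower-order terms already absorbed in the source norm, one obtains the stated decay of $F[\f^{\leq m}\hat P_{\even}]+E^{p,(0)}_L[\f^{\leq m}\hat P_{\even}]$ and the corresponding bulk integral, at rate $(1+\tau/M)^{-2+\delta+p}$, with constant depending on $m$ and on the two pieces of $\textup{Decay}_{\ell=0}[\f^{\leq m}\hat P_{\even}]$: the initial energy of $\hat P_{\even}$ itself, and the $M^2$-weighted initial energy of $S_W$ entering through the source. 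I expect the only mildly delicate point to be the interpolation region $r\in[bM,2bM]$, where the weight $u+r$ in the definition of $\hat P_{\even}$ forces a factor $(1+\tau/M)$ into $G_P$; one must check that the extra power is compensated by the superior $\tau^{-4+\delta+p}$ decay of $S_W$ coming from Proposition \ref{pro:S,l=0} (which holds because $S_W$ satisfies a \emph{sourceless} wave equation on $\ell=0$ and so decays two powers faster than a generic $\ell=0$ solution would need). Everything else is a direct transcription of the $\ell\geq 1$ argument in Section \ref{sec:W0}.
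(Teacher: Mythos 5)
Your proposal matches the paper's proof: the paper likewise bounds $\textup{I}_{\textup{source},\ell=0,\delta}[\f^{\leq m}G_P]$ region by region via Lemma \ref{lem:GP_bound,l=0} together with Proposition \ref{pro:S,l=0}, obtaining exactly the bound $M^2\big(F[\f^{\leq m}S_W]+E^{4-\delta,(0)}_L[\f^{\leq m}\tilde{S}_W]\big)(0)$, and then applies Proposition \ref{cor:free_ell=0} (2) to $\hat{P}_{\even}$. The only minor discrepancy is that the $\ell=0$ source norm \eqref{def:I0} consists solely of the weighted spacetime integral $M\int s^{p+1}|G|^2\,dvol$, so your extra bookkeeping of $\Sigma''$ boundary terms and $\partial G_P$, $\partial_r G_P$ contributions is unnecessary (though harmless).
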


\subsection{Analysis of $W$}

Define $W_{3}$ as

\begin{align}
W_{3}:=\D^{-1}(W_1-W_0).
\end{align}
Since on the horizon $dt=\D^{-1} dr$, $W_{3}$ is smooth upto the horizon. From direct computation, $P_{\even}$ and $S_W$ can be rewritten as

\begin{align}
\label{def:P0} P_{\even}=&r\nabla_t W_{3}+r \nabla_{\underline{L}'}W_0-W_0,\\
\label{def:SW0} S_W=&-2 \nabla_{\underline{L}'}W_0+2\D\nabla_rW_{3}+\frac{4}{r}W_0+\frac{4}{r}\left(1-\frac{M}{r}\right)W_{3}.
\end{align}
and $W_{3}$ satisfies the equation

\begin{equation}\label{equ:W-1}
\Box W_{3}-\frac{2M}{r^2} \nabla_{\underline{L}'}W_{3}-\frac{2}{r^2}\D W_{3}-\frac{2}{r^2}W_0=0.
\end{equation}
Note that the time derivative of $W_0$ can be controlled by $P_{\even}$ and $S_W$ through

\begin{align}\label{W0t}
\nabla_t W_0=\frac{r}{2}\D\nabla_r S_W-\nabla_t P_{\even}.
\end{align}
Here we used the equation for $W_1$ as well as spherical symmetry of $W$. We denote

\begin{equation}\label{def:O1}
O_1:=\frac{r}{2}\D\nabla_r S_W-\nabla_t P_{\even}.
\end{equation}
Recall $g(r)$ is the function such that $\Sigma_{\tau}=\{t+r+2M\log(r/2M-1)-g(r)=\tau\}$ and that $\rho=r$ together with $\tau, \theta$ and $\phi$ forms a coordinate. From \eqref{Box decom}, for spherical symmetric functions, 

\begin{align*}
\Box=\D\frac{\partial^2}{\partial \rho^2}+\frac{2}{r}\left(1-\frac{M}{r}\right)\frac{\partial}{\partial \rho}+Z\frac{\partial}{\partial \tau},
\end{align*}
with

\begin{equation*}
\begin{split}
Z=&\left( -2 \frac{dg}{dr}+\D \left(\frac{dg}{dr}\right)^2 \right)\frac{\partial}{\partial \tau}+\left( 2-2\D \frac{dg}{dr} \right)\frac{\partial}{\partial \rho}\\
&+\left(-\D \frac{d^2g}{dr^2}-\frac{2}{r}\left(1-\frac{M}{r}\right)\frac{dg}{dr}+\frac{2}{r}\right).
\end{split}
\end{equation*}
Then \eqref{equ:W_0,l=0} can be rewritten as

\begin{equation}\label{equ:ODE_W0}
\D\partial^2_\rho M_0+\frac{2}{r}\left(1-\frac{M}{r}\right)\partial_\rho M_0+\frac{2M}{r^3}M_0=O_2,
\end{equation}
where

\begin{align}\label{def:O2}
O_2:=-Z\cdot O_1 -\frac{2M}{r^3}P_{\even}.
\end{align}
We view \eqref{equ:ODE_W0} as a second order ODE with source term $O_1$. The homogeneous solutions are

\begin{equation}
\begin{split}
W_{0,1}(\rho)&=\frac{1}{r},\\
W_{0,2}(\rho)&=1+\frac{2M}{r}\log \left(\frac{r}{2M}-1\right).
\end{split}
\end{equation}
The Wronskian of $W_{0,1}$ and $W_{0,2}$ is

\begin{align*}
\left|\begin{array}{cc}
W_{0,1} & W_{0,2}\\
W'_{0,1} & W'_{0,2}
\end{array}\right|=\frac{1}{r^2}\D^{-1}.
\end{align*} 
Then $W_0$ can be expressed as

\begin{equation}
W_0(\tau,\rho)=c_1(\tau)W_{0,1}(\rho)+c_2(\tau)W_{0,2}(\rho)+O_3,
\end{equation}
where 

\begin{equation}
\begin{split}
c_1(\tau)=&r^2\D \left|\begin{array}{cc}
W_{0}(\tau,4M) & W_{0,2}(4M)\\
\partial_\rho W_{0}(\tau,4M) & \partial_\rho W_{0,2}(4M)
\end{array}\right|\\
=&-8M^2 \partial_\rho W_{0}(\tau,4M)+2M W_{0}(\tau,4M),
\end{split}
\end{equation}

\begin{equation}\label{W0:ode}
\begin{split}
c_2(\tau)=&r^2\D \left|\begin{array}{cc}
W_{0,1}(\tau,4M) & W_{0}(4M)\\
\partial_\rho W_{0,1}(\tau,4M) & \partial_\rho W_{0}(4M)
\end{array}\right|\\
=&2M\cdot\partial_\rho  W_{0}(\tau,4M)+\frac{1}{2} W_{0}(\tau,4M).
\end{split}
\end{equation}
And

\begin{equation}\label{def:O3}
\begin{split}
O_3(\tau,\rho)=&\int_{4M}^\rho -(\rho')^2W_{0,2}(\rho')O_2(\tau,\rho') d\rho'\cdot W_{0,1}(\rho)+\int_{4M}^\rho (\rho')^2W_{0,1}(\rho')O_2(\tau,\rho') d\rho'\cdot W_{0,2}(\rho).
\end{split}               
\end{equation}

We will show that $c_2(\tau)$ converges to zero. The time derivative of $c_1(\tau)$ and $c_2(\tau)$ are already controlled as

\begin{align}\label{def:Oc1}
c_1'(\tau)=&-8M^2\partial_\rho O_1 (\tau,4M)+2MO_1 (\tau,4M)=:O_{c1},
\end{align}
and

\begin{align}\label{def:Oc2}
c_2'(\tau)=& 2M\cdot\partial_\rho  O_1 (\tau,4M)+\frac{1}{2} O_1 (\tau,4M)=:O_{c2}.
\end{align}
By \eqref{def:P0} and $\partial_\rho=-\underline{L}'+\frac{d g}{dr}\partial_\tau$, we obtain

\begin{align*}
P_{\even}=r\partial_\tau W_{3}+r\frac{d g}{dr} \partial_\tau W_0-r\partial_\rho W_0-W_0.
\end{align*}
Equivalently,

\begin{align*}
\partial_\tau W_{3}=&\frac{1}{r}P_{\even}-\frac{dg}{dr} \partial_\tau W_0+\partial_\rho W_0+\frac{1}{r}W_0\\
                 =&\left( \frac{1}{r}P_{\even}-\frac{dg}{dr} O_1+\left(\partial_\rho+\frac{1}{r}\right)O_3 \right)+c_2(\tau)\cdot\frac{1}{r}\D^{-1}\\
                 =&\frac{c_2(\tau)}{r}\D^{-1}+O_4,
\end{align*}
where $O_4$ is defined by the last equality and we used \eqref{W0:ode} in the second equality. By taking one more time derivative, we have

\begin{align*}
\partial^2_\tau W_{3}=&O_{c_2} \cdot\frac{1}{r}\D^{-1}+\partial_\tau O_4=:O_5.
\end{align*}
By $\underline{L}'=\frac{dg}{dr}\partial_\tau-\partial_\rho$ and \eqref{Box decom}, \eqref{equ:W-1} can be written as

\begin{align*}
0=&\D \partial_\rho^2 W_{3}+\frac{2}{r}\partial_\rho W_{3}+\left( Z-\frac{2M}{r^2}g' \right)\partial_\tau W_{3}\\
&-\frac{2}{r^2}\D  W_{3}-\frac{2}{r^2}W_0.
\end{align*}
Taking one more time derivative, we obtain

\begin{align*}
0=&\left[ \D\partial_\rho^2+\frac{2}{r}\partial_\rho-\frac{2}{r^2}\D \right]\partial_\tau W_{3}+\left( Z-\frac{2M}{r^2}g' \right)\partial^2_\tau W_{3}-\frac{2}{r^2}\partial_\tau W_0\\
=&\left( \D\partial_\rho^2+\frac{2}{r}\partial_\rho-\frac{2}{r^2}\D \right)\left(\frac{1}{r}\D^{-1}c_2(\tau)+O_4\right)\\
&+\left( Z-\frac{2M}{r^2}g' \right) O_5-\frac{2}{r^2}O_1\\
=&-\frac{2}{r^3} c_2(\tau)+O_6.
\end{align*}
Here $O_6$ is defined by the last equality as

\begin{align*}
O_6:=\left( \D\partial_\rho^2+\frac{2}{r}\partial_\rho-\frac{2}{r^2}\D \right) O_4+\left( Z-\frac{2M}{r^2}g' \right) O_5-\frac{2}{r^2}O_1.
\end{align*}
In particular, $c_2(\tau)\equiv 0$ if $S_W$ and $P_{\even}$ vanish. In general, $c_2(\tau)$ decays to zero if $O_6$ do. Moreover, by using $c_2(\tau)=r^3/2*O_6$, we have

\begin{align*}
W_0(\tau,\rho)=&\frac{c_1(\tau)}{\rho}+O_7,\ \ \  O_7:=2r^3 O_6 W_{0,2}(\rho)+O_3,
\end{align*}
and

\begin{align}\label{W-1t}
\partial_\tau W_{3}=\left(\frac{r^3}{2}O_6\right) \frac{1}{r}\D^{-1}+O_4=:O_8.
\end{align}
From \eqref{def:SW0} and $\underline{L}'=\frac{dg}{dr}\partial_\tau-\partial_\rho$, we have

\begin{align*}
&2\D\partial_\rho W_{3}+\frac{4}{r}\left(1-\frac{M}{r}\right)W_{3}\\
=&S_W+2\frac{dg}{dr}  \partial_\tau W_0-2\partial_\rho W_0+\left(2-\D \frac{dg}{dr}\right)\partial_\tau W_{3}-\frac{4}{r}W_0\\
=&S_W+2 \frac{dg}{dr} O_1-2\partial_\rho O_7+\left(2-\D \frac{dg}{dr}\right)O_8-\frac{4}{r}O_7-\frac{2c_1(\tau)}{r^2}\\
=:&-\frac{2c_1(\tau)}{r^2}+O_9.
\end{align*}
We view 

\begin{equation}\label{equ:ODE_W-1}
2\D\partial_\rho W_{3}+\frac{4}{r}\left(1-\frac{M}{r}\right)W_{3}=-\frac{2c_1(\tau)}{r^2}+O_9
\end{equation}
as an ODE. A homogeneous solution is $\frac{1}{r^2}\D^{-1}$ and a particular solution for the $-\frac{2c_1(\tau)}{r^2}$ source term is $-\frac{c_1(\tau)}{r}$. Therefore,

\begin{align*}
W_{3}(\tau,\rho)=-\frac{c_1(\tau)}{r}+ \frac{d(\tau)}{r^2}\D^{-1}+O_{10}.
\end{align*}
Here

\begin{align*}
O_{10}(\tau,\rho)=\int_{4M}^\rho \frac{1}{2}(\rho')^2  O_9(\tau,\rho')d\rho' \cdot \rho^{-2}\left(1-\frac{2M}{\rho}\right)^{-1},
\end{align*}
and

\begin{align*}
d(\tau)=8M^2(W_{3}(\tau,4M)+\frac{c_1(\tau)}{4M}).
\end{align*}

We would show that $d(\tau)$ decays to zero and $c_1(\tau)$ converges to a constant as $\tau$ goes to infinity. Let

\begin{align}
\bar{W}_0:=&W_0-\frac{c_1(\tau)}{r},\\
\bar{W}_{3}:=&W_{3}+\frac{c_1(\tau)}{r},\\
\W_{\ell=0}:=&(\bar{W}_0,\bar{W}_{3}).
\end{align}
We remark that

\begin{align}
\bar{W}_0=&O_7,\label{bW0_substitute}\\
\bar{W}_{3}=&\frac{d(\tau)}{r^2}\D^{-1}+O_{10}.\label{bW-1_substitute}
\end{align}
As $d(\tau)$ is the coefficient of a singular function, it must be zero if $O_{10}=0$. In general, we use red-shift to obtain integrated decay estimate of $d(\tau)$ and $d'(\tau)$. \\

By direct computation, we have

\begin{align}
&\Box \bar{W}_0+\frac{2M}{r^3}\bar{W}_0=O_{11}, \\
&\Box \bar{W}_{3}-\frac{2M}{r^2} \nabla_{\underline{L}'}\bar{W}_{3}-\frac{2}{r^2}\D\bar{W}_{3}-\frac{2}{r^2}\bar{W}_0=O_{12}.
\end{align}
Here

\begin{align}
O_{11}=&-Z( r^{-1}\cdot O_{c1})-\frac{2M}{r^3}P_{\even}, \label{def:O11}\\
O_{12}=&\left(Z-\frac{2M}{r^2}\frac{dg}{dr} \right) (r^{-1}\cdot O_{c1}).\label{def:O12}
\end{align}

Let $\sigma>0$ be determined soon, $\eta_{rs}(r)$ be the cur-off function defined in \eqref{cutoff_rs} and $Y(\sigma)$ be the red-shit vector field defined in \eqref{vector_rs}. We consider

\begin{equation}
({J}_{\ell=0})_a:=\bigg(T_{ab}[\W_{\ell=0}]-\frac{1}{r^2}|\W_{\ell=0}|^2g_{ab} \bigg)(Y^b+\eta_{rs}T^b).
\end{equation}
Then from \eqref{g_redshift}, on the horizon $r=2M$, 

\begin{align*}
\Div {J}_{\ell=0}\bigg|_{r=2M}=&\frac{\sigma}{2}|\nabla_v\W_{\ell=0}|^2+\frac{1}{2M}|\nabla_R\W_{\ell=0}|^2+\frac{2}{M}\nabla_v\W_{\ell=0}\cdot\nabla_{R}\W_{\ell=0}+\frac{\sigma}{8M^2}|\W_{\ell=0}|^2\\
            &+ \nabla_{T+Y} \W_{\ell=0}\cdot \left(\Box\W_{\ell=0}-\frac{1}{r^2}\W_{\ell=0} \right).
\end{align*}
Note that the coefficient of $\nabla_{\underline{L}'}\W_{\ell=0}$ in $\Box \W_{\ell=0}$ is non-negative. Therefore by choosing $\sigma>>M^{-1}$, there exists a constant $C_{rs}>0$, which may increase from line to line, such that on the horizon

\begin{align*}
\Div {J}_{\ell=0}\bigg|_{r=2M} \geq &\frac{1}{C_{rs}}M^{-3}|(M\partial)^{\leq 1}\W_{\ell=0}|^2 -C_{rs}M(|O_{11}|^2+|O_{12}|^2).
\end{align*}
By continuity, there exists $r_{rs,\ell=0}\in (2M,r_{rs}^+)$ such that in $[2M,r_{rs,\ell=0}]$ we still have

\begin{align*}
\Div {J}_{\ell=0}+C_{rs} M(|O_{11}|^2+|O_{12}|^2)\geq  &\frac{1}{C_{rs}} M^{-3}|(M\partial)^{\leq 1}\W_{\ell=0}|^2.
\end{align*}
Together with \eqref{bW-1_substitute}, we have in $[2M,r_{rs,\ell=0}]$

\begin{align*}
&\Div {J}_{\ell=0}+C_{rs} M(|O_{11}|^2+|O_{12}|^2)+C_{rs} M^{-3}|(M\partial)^{\leq 1}O_{10}|^2\\
\geq & \frac{1}{C_{rs}} \left( M^{-5}\D^{-2}|d'(\tau)|^2+M^{-7}\D^{-4}|d(\tau)|^2\right).
\end{align*}
In $r\in[r_{rs,\ell=0},r_{rs}^+]$, we have

\begin{align*}
|\Div {J}_{\ell=0}|\leq C_{rs}\big( M^{-3}|(M\partial)^{\leq 1}\W_{\ell=0}|^2+M|O_{11}|^2+M|O_{12}|^2 \big).
\end{align*}
Through \eqref{bW0_substitute} and \eqref{bW-1_substitute}, in $r\in [r_{rs,\ell=0},r_{rs}^+]$, $|\Div {J}_{\ell=0}|$ can be further bounded by

\begin{align*}
C_{rs} &\bigg( M^{-5}|d'(\tau)|^2+M^{-7}|d(\tau)|^2+M^{-3}|(M\partial)^{\leq 1}O_7|^2+M^{-3}|(M\partial)^{\leq 1}O_{10}|^2+M|O_{11}|^2+M|O_{12}|^2 \bigg).
\end{align*}
Now we fix a number $\underline{r}\in (2M,r_{rs,\ell=0})$ close to $2M$ such that for any $\tau_2\geq \tau_1$

\begin{align*}
&\frac{1}{2}\int_{D(\tau_1,\tau_2)}\left[ M^{-5}\D^{-2}|d'(\tau)|^2+M^{-7}\D^{-4}|d(\tau)|^2 \right]\cdot\chi_{[\underline{r},  r_{rs,\ell=0}]} dvol\\
\geq &C_{rs}^{2}\ \int_{D(\tau_1,\tau_2)} \left[ M^{-5}|d'(\tau)|^2+M^{-7}|d(\tau)|^2\right]\cdot \chi_{[r_{rs,\ell=0}, r_{rs}^+]}dvol.
\end{align*}
This can be done as the left hand side diverges when $\underline{r}$ goes to $2M$. Then in applying the divergence theorem \ref{div_thm} to  ${J}_{\ell=0}$, we obtain positive $M^{-3}|(M\partial)\W_{\ell=0}|^2$ term in $r\in [2M,\underline{r}]$, positive $|d(\tau)|^2$ $|d'(\tau)|^2$ terms in $r\in [\underline{r},r_{rs,\ell=0}]$ which overcomes the negative ones in $r\in [r_{rs,\ell=0}, r_{rs}^+]$ and error $O_j$ terms. Throwing away the positive boundary term along $\mathcal{H}^+(\tau_1,\tau_2)$, we obtain

\begin{align*}
&\int_{\Sigma_{\tau_2}} J_{\ell=0}\cdot n dvol_3+\int_{\tau_1}^{\tau_2}M^{-2}|d'(\tau)|^2+M^{-4}|d(\tau)|^2 d\tau+M^{-3}\int_{D(\tau_1,\tau_2)} |(M\partial)^{\leq 1}\W_{\ell=0}|^2\cdot\chi_{[ 2M, \underline{r}]} dvol  \\
\lesssim &\int_{\Sigma_{\tau_1}} J_{\ell=0}\cdot n dvol_3+M\int_{D(\tau_1,\tau_2)} \left(|O_{11}|^2+|O_{12}|^2\right)\cdot\chi_{[2M, r_{rs}^+]} dvol\\
&+M^{-3} \int_{D(\tau_1,\tau_2)}\left( |(M\partial)^{\leq 1}O_{10}|^2+|(M\partial)^{\leq 1}O_{7}|^2\right)\cdot\chi_{[ \underline{r},r_{rs}^+]} dvol.
\end{align*}
The boundary term can be estimated from above as

\begin{align*}
\int_{\Sigma_\tau}J_{\ell=0}\cdot n\ dvol_3\lesssim &M^{-2}\int_{\Sigma_\tau}|(M\partial)^{\leq 1} \W_{\ell=0}|^2\cdot\chi_{[2M, \underline{r}]}\ dvol_3+M^{-1}|d'(\tau)|^2+M^{-3}|d(\tau)|^2\\
+&M^{-2}\int_{\Sigma_\tau}\left( |(M\partial)^{\leq 1}O_{10}|^2+|(M\partial)^{\leq 1}O_{7}|^2\right)\cdot\chi_{[ \underline{r}, r_{rs}^+]} dvol_3.
\end{align*}
Here we used \eqref{bW0_substitute}, \eqref{bW-1_substitute} in $r\in [\underline{r},r^+_{rs}]$. Similarly,

\begin{align*}
&\int_{\Sigma_\tau}J_{\ell=0}\cdot n\ dvol_3+C_{rs}M^{-2}\int_{\Sigma_\tau}  |(M\partial)^{\leq 1}O_{10}|^2\cdot\chi_{[ \underline{r}, r_{rs}^+]} dvol_3.\\
\gtrsim &M^{-2}\int_{\Sigma_\tau}|(M\partial)^{\leq 1} \W_{\ell=0}|^2\cdot\chi_{[2M, \underline{r}]}\ dvol_3+M^{-1}|d'(\tau)|^2+M^{-3}|d(\tau)|^2.
\end{align*}
Denote

\begin{equation}
\mathbf{E}(\tau):=M^{-2}\int_{\Sigma_\tau}|(M\partial)^{\leq 1} \W_{\ell=0}|^2\cdot\chi_{[2M, \underline{r}]}\ dvol_3+M^{-1}|d'(\tau)|^2+M^{-3}|d(\tau)|^2.
\end{equation}
Then we obtain for any $\tau_2\geq \tau_1\geq 0$,
\begin{equation}\label{mathbb{E}}
\mathbf{E}(\tau_2)+M^{-1}\int_{\tau_1}^{\tau_2} \mathbf{E}(\tau)d\tau \lesssim \mathbf{E}(\tau_1)+Err_{\ell=0}(\tau_1,\tau_2),
\end{equation}
where

\begin{align*}
Err_{\ell=0}(\tau_1,\tau_2)=&M\int_{D(\tau_1,\tau_2)} \left(|O_{11}|^2+|O_{12}|^2\right)\cdot\chi_{[2M, r_{rs}^+]} dvol\\
+&M^{-3} \int_{D(\tau_1,\tau_2)\}}\left( |(M\partial)^{\leq 1}O_{10}|^2+|(M\partial)^{\leq 1}O_{7}|^2\right)\cdot\chi_{[ \underline{r},r_{rs}^+]} dvol\\
+&M^{-2}\int_{\Sigma_{\tau_1}}\left( |(M\partial)^{\leq 1}O_{10}|^2+|(M\partial)^{\leq 1}O_{7}|^2\right)\cdot\chi_{[ \underline{r}, r_{rs}^+]} dvol_3\\
+&M^{-2}\int_{\Sigma_{\tau_2}} |(M\partial)^{\leq 1}O_{10}|^2 \cdot\chi_{[ \underline{r}, r_{rs}^+]} dvol_3.
\end{align*}
We will show in Appendix \ref{sec:O} that $Err_{\ell=0}(\tau_1,\tau_2)$ is bounded as

\begin{lemma}\label{lem:Err_ell=0}
\begin{align*}
Err_{\ell=0}(\tau_1,\tau_2)\lesssim \left(1+\frac{\tau_1}{M}\right)^{-2+2\delta} \textup{Decay}_{\ell=0}[\hat{P}_{\even}].
\end{align*}
\end{lemma}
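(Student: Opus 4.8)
The plan is to trace through the chain of definitions of the error terms $O_j$ in subsection 6.3 and bound each contribution of $Err_{\ell=0}(\tau_1,\tau_2)$ by the controlled quantities $\hat{P}_{\even}$ and $S_W$, whose decay is given by Proposition \ref{pro:P,l=0} and Proposition \ref{pro:S,l=0}. The key observation is that each $O_j$ is built, via the differential operators $Z$, $\partial_\rho$, $\frac{dg}{dr}$ and multiplication by rational functions of $r$, out of $P_{\even}$, $S_W$, and the Wronskian-type combinations $O_{c1}$, $O_{c2}$ evaluated on the timelike hypersurface $\{r=4M\}$ (or integrated in $\rho$ over a compact $r$-range $[\underline r, r_{rs}^+]$ or $[2M,r_{rs}^+]$). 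Since all of these integrals are over regions where $r$ is bounded away from both $2M$ and $\infty$, there is no issue with weights in $r$: every $\Box$ hitting $P_{\even}$ or $S_W$ can be eliminated using their wave equations \eqref{equ:P,l=0}, \eqref{equ:S,l=0}, so a fixed finite number of $\f$-derivatives of $\hat P_{\even}$ and $S_W$ suffices, which is exactly what the $\f^{\leq 7}$ in $\textup{Decay}_{\ell=0}[\hat P_{\even}]$ accounts for.

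First I would unwind the definitions in order: $O_1 = \tfrac{r}{2}\D\nabla_r S_W - \nabla_t P_{\even}$ (so $|(M\partial)^{\leq k}O_1|\lesssim |(M\partial)^{\leq k+1}S_W| + |(M\partial)^{\leq k+1}P_{\even}|$ on $r\leq r_{rs}^+$); $O_{c1}$, $O_{c2}$ are $O_1$ and $\partial_\rho O_1$ restricted to $r=4M$; $O_2 = -Z O_1 - \tfrac{2M}{r^3}P_{\even}$ costs one more derivative; $O_3$ is a $\rho$-integral of $O_2$ against the bounded functions $W_{0,1},W_{0,2}$ over $[4M,\rho]$ — but crucially the relevant integrals only ever appear with $\rho$ in the compact range $[\underline r, r_{rs}^+]$, so $O_3$ and its derivatives are controlled by $\sup_{[2M,r_{rs}^+]}$-type quantities, hence by $B[\f^{\leq m}\hat P_{\even}]$ and $B[\f^{\leq m}S_W]$ after translating $P_{\even} = \hat P_{\even}$ (recall $\eta_b\equiv 0$ there) and absorbing the $Z$-operators. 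Then $O_4, O_5, O_6, O_7, O_8, O_9, O_{10}$ are obtained similarly, each a bounded differential/integral expression in the previous $O_j$'s and $c_2(\tau) = \tfrac{r^3}{2}O_6$; I would keep a running tally of how many derivatives are consumed so as to land within $\f^{\leq 7}$. Finally $O_{11}, O_{12}$ in \eqref{def:O11}, \eqref{def:O12} are $Z$ (resp.\ $(Z - \tfrac{2M}{r^2}\tfrac{dg}{dr})$) applied to $r^{-1}O_{c1}$, so they are bounded by finitely many derivatives of $O_1$ at $r=4M$, i.e.\ by finitely many $\f$-derivatives of $P_{\even}=\hat P_{\even}$ and $S_W$.

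With these pointwise bounds in hand, the four terms making up $Err_{\ell=0}(\tau_1,\tau_2)$ are each bounded by a spacetime or slice integral over a compact $r$-region of $|(M\partial)^{\leq 7}\hat P_{\even}|^2$ and $|(M\partial)^{\leq 7}S_W|^2$; using \eqref{killing_to_partial} to convert $(M\partial)$ to $\f$ and $\K$ (valid since $r\geq \underline r > 2M$), and using that on a compact $r$-range $\bar B$, $F$, and $E^{p,(0)}_L$ integrands are all comparable, I would bound $Err_{\ell=0}(\tau_1,\tau_2)$ by
\begin{align*}
&\int_{\tau_1}^{\infty} \bar B[\f^{\leq 7}\hat P_{\even}](\tau) + \bar B[\f^{\leq 7}S_W](\tau)\, d\tau\\
&\quad + \big(F[\f^{\leq 7}\hat P_{\even}] + F[\f^{\leq 7}S_W]\big)(\tau_1).
\end{align*}
Proposition \ref{pro:P,l=0} with $p=\delta$ gives $\int_{\tau_1}^\infty \bar B[\f^{\leq 7}\hat P_{\even}]\lesssim (1+\tau_1/M)^{-2+2\delta}\,\textup{Decay}_{\ell=0}[\f^{\leq 7}\hat P_{\even}]$, and Proposition \ref{pro:S,l=0} gives the analogous but faster $(1+\tau_1/M)^{-4+2\delta}$ decay for the $S_W$ terms; the slice terms at $\tau_1$ decay at least as fast via the same propositions with the $F$-energy. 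Noting $\textup{Decay}_{\ell=0}[\f^{\leq 7}\hat P_{\even}] \approx \textup{Decay}_{\ell=0}[\hat P_{\even}]$ in the notation of the Lemma (the $\f^{\leq 7}$ is already folded into the defining norm), this yields the claimed $(1+\tau_1/M)^{-2+2\delta}\textup{Decay}_{\ell=0}[\hat P_{\even}]$ bound.

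The main obstacle I anticipate is purely bookkeeping rather than conceptual: I must verify that the cumulative derivative count through the cascade $O_1 \to O_2 \to O_3 \to O_4 \to \cdots \to O_{12}$ does not exceed the budget encoded in $\textup{Decay}_{\ell=0}[\hat P_{\even}]$ (seven $\f$-derivatives plus the $F$ and weighted $E^{4-\delta,(0)}_L$, $E^{2-\delta,(0)}_L$ energies), and that each $\rho$-integral $O_3, O_{10}$, etc.\ genuinely only needs to be evaluated for $\rho$ in a compact set, so that the $\sup_{r}$ bounds coming from $\bar B$ are legitimate — this requires checking that $\underline r$, $r_{rs,\ell=0}$, $r_{rs}^+$ are all finite and $>2M$, which they are by construction. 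A secondary subtlety is the factor $(1+\tau/M)$ appearing in Lemma \ref{lem:GP_bound,l=0} in the transition region $[bM,2bM]$; but since the $Err_{\ell=0}$ integrals are all supported in $r\leq r_{rs}^+ \ll bM$, that region never enters, so no loss of $\tau$-weight occurs here and the decay rate is governed solely by the compactly-supported estimates above.
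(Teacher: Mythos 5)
Your proposal follows essentially the same route as the paper: unwind the $O_j$ cascade into finitely many $\f$-derivatives of $\hat{P}_{\even}$ (equal to $P_{\even}$ on the relevant compact $r$-range) and $S_W$, bound $Err_{\ell=0}(\tau_1,\tau_2)$ by the slice energies $F[\f^{\leq 7}\hat{P}_{\even}]$, $M^2F[\f^{\leq 7}S_W]$ and the integrated bulks $\bar{B}[\f^{\leq 7}\cdot]$, and conclude via Propositions \ref{pro:S,l=0} and \ref{pro:P,l=0} with $p=\delta$. The minor omissions (the $\Sigma_{\tau_2}$ slice term, the local $P_{\even}$ contribution in $O_{11}$, the $M^2$ bookkeeping) are harmless and handled identically by the same estimates.
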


Thus, 

\begin{align*}
\mathbf{E}(\tau)+\int_{\tau}^\infty \mathbf{E}(\tau')d\tau'\lesssim \left(1+\frac{\tau}{M}\right)^{-2+2\delta}\cdot\textup{Decay}[\mathbf{E}],
\end{align*}
where $\textup{Decay}[\mathbf{E}]=\mathbf{E}(0)+\textup{Decay}_{\ell=0}[\hat{P}_{\even}]$. In particular, $\W_{\ell=0}$ decays to zero near the horizon. Furthermore, we will show that the coefficient $c_1(\tau)$ converges as $\tau$ goes to infinity. From \eqref{W0t} and $\D\partial_r=\partial_t$ at $r=2M$, we have 

\begin{align*}
0=M\partial_{\tau} S_W-\partial_{\tau} P_{\even}-\partial_{\tau} W_0\bigg|_{r=2M}.
\end{align*}
Also, from the definition of $\bar{W}_0$,

\begin{align*}
0=\partial_{\tau} W_0-\partial_{\tau}\bar{W}_0-\frac{c_1'(\tau)}{2M}\bigg|_{r=2M}.
\end{align*}
Therefore

\begin{align*}
c_1'(\tau)=2M\partial_{\tau}\big(M S_W- P_{\even}-\bar{W}_0\big)(\tau,2M).
\end{align*}

From $\mathbf{E}(\tau)\to 0$, Proposition \eqref{pro:S,l=0} and \eqref{pro:P,l=0}, $S_W,\ P_{\even}$ and $\bar{W}_0$ all decay to zero at $\tau$ goes to infinity. In particular, $c_1(\infty):=c_1(0)+\int_{0}^\infty c_1'(\tau) d\tau $ exists and 

\begin{align*}
c_1(\infty)=&c_1(0)-2M(M S_W-  P_{\even}- \bar{W}_0)(0,2M)\\
           =&-2M(M S_W-  P_{\even}-{W}_0)(0,2M).
\end{align*}

\begin{proof}[proof of theorem \ref{thm:vector,l=0}]
We first show that $\textup{Decay}[\mathbf{E}]\lesssim \textup{Decay}_{\ell=0}[W]$ and start by analyzing $d(\tau)$. From the definition of $d(\tau)$ and $c_1(\tau)$,

\begin{align*}
d(\tau)=-16M^3 \partial_\rho W_{3}(\tau,4M)+8M^2W_{-1}(\tau,4M)+4M^2W_{0}(\tau,4M).
\end{align*}
Hence

\begin{align*}
M^{-3}|d(0)|^2\lesssim M|(M\partial)^{\leq 1} W(0,4M)|^2 \lesssim M^{-2} \int_{\Sigma_0''} |(M\partial)^{\leq 2}W |^2\ dvol_3,
\end{align*}
which is bounded by $\textup{Decay}_{\ell=0}[W]$. We used the spherical symmetry of $W$ to control its sup norm by its $H^1$ norm. For the term $d'(\tau)$, we have

\begin{align*}
d'(\tau)=8M^2\cdot O_8+2M\cdot O_{c1}\lesssim \sup_{[\underline{r},R_{\textup{null}}]} \bigg( M |(M\partial)^{\leq 5}P_{\even}|+M^2|(M\partial)^{\leq 5}S_W| \bigg).
\end{align*}
Therefore

\begin{align*}
M^{-1}|d'(0)|^2\lesssim  F[\f^{\leq 5}P_{\even}](0)+M^2F[\f^{\leq 5}S_W](0),
\end{align*}
which is also bounded by $\textup{Decay}_{\ell=0}[W]$. The last term to bound in $\mathbf{E}(0)$ is the $H^1$ norm of $\W_{\ell=0}=\bar{W}_0,\bar{W}_3$ on $\Sigma_0$. Clearly $\textup{Decay}_{\ell=0}[W]$ includes the $H^1$ norm of $\W_{\ell=0}={W}_0,{W}_3$ on $\Sigma_0$ so if suffices to bound their difference. The difference between $\bar{W}_{3}, \bar{W}_{0}$ and ${W}_{3}, {W}_{0}$ is $c_1(\tau)/r$. Hence

\begin{align*}
M^{-2}\int_{\Sigma_\tau } |(M\partial)^{\leq 1} (\bar{W}_0-W_0)|^2\cdot \chi_{[2M, \underline{r}]}\ dvol_3\approx M |c_1'(\tau)|^2+M^{-1}|c_1(\tau)|^2,
\end{align*}
which can be estimated by the same way as $d(\tau)$. Thus we conclude

\begin{align*}
\textup{Decay}[\mathbf{E}]\lesssim \textup{Decay}_{\ell=0}[W].
\end{align*}
Therefore

\begin{align*}
\mathbf{E}(\tau)+\int_\tau^\infty \mathbf{E}(\tau')d\tau'\lesssim \left(1+\frac{\tau}{M}\right)^{-2+2\delta}\cdot \textup{Decay}_{\ell=0}[W].
\end{align*}

Second, we want to show that Theorem \ref{thm:vector,l=0} holds with $\hat{W}$ replaced by $\bar{W}$. Note that $\mathbf{E}(\tau)$ controls the $H^1$ norm of $\bar{W}$ in $r\in [2M,\underline{r}]$. For the region $r\in [\underline{r},R_{\textup{null}}]$, we use \eqref{bW0_substitute} and \eqref{bW-1_substitute}. The $d(\tau)$ term is also controlled in $\mathbf{E}(\tau)$. To bound $O_7$ and $O_{10}$, we need

\begin{lemma}\label{lem:Err_ell=02}
\begin{align*}
&M^{-3} \int_{D''(\tau,\infty)} \left(|(M\partial)^{\leq 1}O_{10}|^2+|(M\partial)^{\leq 1}O_{7}|^2\right)\cdot\chi_{[\underline{r},R_{\textup{null}}]}\ dvol\\
&+M^{-2}\int_{\Sigma''_{\tau}} \left( |(M\partial)^{\leq 1}O_{10}|^2+|(M\partial)^{\leq 1}O_{7}|^2\right) \cdot\chi_{[\underline{r},R_{\textup{null}}]}\ dvol_3\\
\lesssim &\left(1+\frac{\tau}{M}\right)^{-2+2\delta}\textup{Decay}_{\ell=0}[\f^{\leq 7} \hat{P}_{\even}].
  \end{align*}
\end{lemma}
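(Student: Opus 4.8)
The plan is to trace through the chain of definitions $O_7, O_{10} \leftarrow O_6, O_9 \leftarrow O_4, O_5, O_8 \leftarrow O_1, O_2, O_3$ and express every $O_j$ in the region $r \in [\underline{r}, R_{\textup{null}}]$ as a finite linear combination of $(M\partial)^{\leq k}$ derivatives of $P_{\even}$ and $S_W$ (hence of $\hat{P}_{\even}$ and $S_W$, since these agree for $r \le bM$ and here $R_{\textup{null}} \ll bM$), with $M$-weights tracked so that the resulting bound matches the norms $F$, $E^{p,(0)}_L$ appearing in $\textup{Decay}_{\ell=0}[\f^{\leq 7}\hat P_{\even}]$. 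Since $r$ ranges over a \emph{compact} region bounded away from the horizon, all the functions $g(r)$, $dg/dr$, $1/r$, $\D^{\pm 1}$, the coefficients of $Z$, and the Wronskian weight $(\rho')^2 W_{0,i}(\rho')$ are smooth and bounded above and below; so powers of $s$ only cost harmless constants and the $\lesssim$ absorbs them. The key point to keep in view is the count of derivatives: $O_1$ costs one derivative of $(P_{\even}, S_W)$ (via $\nabla_r, \nabla_t$), then $O_2 = -Z O_1 - \tfrac{2M}{r^3}P_{\even}$ costs one more, the integral operators producing $O_3$ cost none, and the time-derivatives $O_5 = \partial_\tau O_4$, $O_6$ applied with $\D\partial_\rho^2 + \cdots$ add a couple more; tallying the worst branch shows $(M\partial)^{\leq 1}O_7$ and $(M\partial)^{\leq 1}O_{10}$ are controlled by $(M\partial)^{\leq 7}$ applied to $P_{\even}, S_W$ — which is exactly why $\f^{\leq 7}$ appears in the statement.

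Concretely, I would first record the pointwise bounds, for $r \in [\underline{r}, R_{\textup{null}}]$ and each integer $k \ge 0$,
\begin{align*}
|(M\partial)^{\leq k} O_1|,\ |(M\partial)^{\leq k} O_{c1}|,\ |(M\partial)^{\leq k} O_8| &\lesssim_k M|(M\partial)^{\leq k+1}P_{\even}| + M^2|(M\partial)^{\leq k+1}S_W|,\\
|(M\partial)^{\leq k} O_3|,\ |(M\partial)^{\leq k} O_4|,\ |(M\partial)^{\leq k} O_6|,\ |(M\partial)^{\leq k} O_7|,\ |(M\partial)^{\leq k} O_{9}|,\ |(M\partial)^{\leq k} O_{10}| &\lesssim_k M|(M\partial)^{\leq k+6}P_{\even}| + M^2|(M\partial)^{\leq k+6}S_W|,
\end{align*}
obtained by substituting the definitions \eqref{def:O1}, \eqref{def:O2}, \eqref{def:O3}, \eqref{def:Oc1}, then tracking through the $O_4, O_5, O_6, O_8, O_9, O_{10}$ definitions; the only nontrivial observation is that differentiating the integral formulas for $O_3$ and $O_{10}$ in $\tau$ passes $\partial_\tau$ under the integral (the kernels $W_{0,i}$ and the homogeneous solution of \eqref{equ:ODE_W-1} are $\tau$-independent) while differentiating in $\rho$ either hits the explicit $\rho$-dependence or the integrand, in both cases staying within the stated derivative budget. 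Plugging $k=1$ gives the integrand-level bound $M^{-3}|(M\partial)^{\leq 1}O_7|^2 + M^{-3}|(M\partial)^{\leq 1}O_{10}|^2 \lesssim M^{-1}|(M\partial)^{\leq 7}P_{\even}|^2 + M|(M\partial)^{\leq 7}S_W|^2$, which on the compact set is bounded by the integrand of $M^{-1}\bar B[\f^{\leq 7}\hat P_{\even}] + M\bar B[\f^{\leq 7}S_W]$ (for the spacetime integral) and by the integrand of $M^{-1}F[\f^{\leq 7}\hat P_{\even}] + MF[\f^{\leq 7}S_W]$ times $M^{-2}$ on each slice (for the $\Sigma''_\tau$ integral), after replacing $\f$-derivatives in the timelike region $r \le R_{\textup{null}}$ by $\partial$-derivatives via \eqref{killing_to_partial}.

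Then I would invoke Proposition \ref{pro:S,l=0} with $p$ chosen so that the $S_W$-decay rate beats $(1+\tau/M)^{-2+2\delta}$ (e.g. $p = 2-\delta < 4-\delta$, giving rate $-2+2\delta$ for $\bar B$ and $F$), and Proposition \ref{pro:P,l=0} with $p = 2-\delta$ likewise giving $\hat P_{\even}$-decay at rate $-2+2\delta$ controlled by $\textup{Decay}_{\ell=0}[\f^{\leq 7}\hat P_{\even}]$ (whose definition already folds in the needed $S_W$ data norms). Integrating the $\bar B$-type bound over $D''(\tau,\infty)$ and evaluating the $F$-type bound on $\Sigma''_\tau$ produces exactly the right-hand side $(1+\tau/M)^{-2+2\delta}\textup{Decay}_{\ell=0}[\f^{\leq 7}\hat P_{\even}]$, completing the proof. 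The main obstacle — and the only place real care is needed — is the bookkeeping in the second paragraph: verifying that the successive substitutions $O_1 \to \cdots \to O_{10}$ genuinely close at six extra derivatives rather than accumulating more, and confirming that differentiating the Wronskian-integral representations does not produce boundary terms at $\rho = 4M$ with a worse weight (it does not, since $4M \in [\underline{r}, R_{\textup{null}}]$ lies in the good region and the evaluation $W_0(\tau,4M)$, $\partial_\rho W_0(\tau,4M)$ are themselves controlled by $(M\partial)^{\leq 1}$ of $\W_{\ell=0}$ plus lower-order pieces already absorbed into $\mathbf{E}(\tau)$ and the $O_j$'s). Once that accounting is pinned down, everything else is a routine application of the already-established decay propositions.
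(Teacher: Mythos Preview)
Your approach is essentially the same as the paper's: in Appendix~\ref{sec:O} the paper traces through the definitions $O_1\to O_2\to O_3\to\cdots\to O_{10}$ to produce pointwise bounds of the form $\sup_{[\underline r,R_{\textup{null}}]}|(M\partial)^{\leq m}O_{10}|\lesssim_m \sup_{[\underline r,R_{\textup{null}}]}\big(|(M\partial)^{\leq m+6}P_{\even}|+M|(M\partial)^{\leq m+6}S_W|\big)$ (and the analogous bound for $O_7$ with $m+5$), then concludes by invoking Propositions~\ref{pro:S,l=0} and~\ref{pro:P,l=0}; the proof of Lemma~\ref{lem:Err_ell=02} is explicitly stated to be ``similar'' to that of Lemma~\ref{lem:Err_ell=0}. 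Two small points: your stated $M$-weights in the first displayed pointwise bound (e.g.\ for $O_1$) are off by a factor of $M^2$ compared to the paper's, and the concern you raise about $W_0(\tau,4M)$ is a red herring since $O_{c1},O_{c2}$ are already expressed purely through $O_1$ evaluated at $4M$ --- but neither affects the structure of the argument, which is correct.
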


This will be proved in Appendix \ref{sec:O}. Putting these together, we have

\begin{equation*}
M^{-2}\int_{\Sigma''_\tau}|(M\partial)^{\leq 1} \bar{W} |^2 dvol_3+M^{-3} \int_{D''(\tau,\infty)} |(M\partial)^{\leq 1} \bar{W}|^2 dvol \lesssim \left(1+\frac{\tau}{M}\right)^{-2+2\delta}\cdot\textup{Decay}_{\ell=0}[W].
\end{equation*}

To control $\hat{W}$ instead of $\bar{W}$, we turn to their difference
\begin{align*}
\hat{W}_0-\bar{W}_0=-\hat{W}_{3}-\bar{W}_{3}=\frac{c_1(\tau)-c_1(\infty)}{r}=:\frac{\bar{c}_1(\tau)}{r}.
\end{align*}
Note that

\begin{align*}
\bar{c}_1(\tau)=2M(MS_W-P_{\even}-\bar{W}_0)(\tau,2M).
\end{align*}
And the time derivative of $\bar{c}_1$ can be estimated as

\begin{align*}
\bar{c}_1'(\tau)=c_1'(\tau)=O_{c1}.
\end{align*}
Thus

\begin{align*}
&M^{-2}\int_{\Sigma''_\tau}  |(M\partial)^{\leq 1}(\hat{W}-\bar{W})|^2 dvol_3\approx M \bar{c}_1'(\tau)^2+M^{-1}\bar{c}_1(\tau)^2\\
\lesssim & \sup_{[\underline{r},R_{\textup{null}}]} \left( M|(M\partial)^{\leq 3} P_{\even}|^2+M^3|(M\partial)^{\leq 3} S_W|^2\right)+\bigg(M|\bar{W}_0|^2+M|P_{\even}|^2+M^3 |S_W|^2\bigg)(\tau,2M)\\
\lesssim &\mathbf{E}(\tau)+F[\f^{\leq 3}P_{\even}](\tau)+M^2F[\f^{\leq 3}S_W](\tau).
\end{align*}
Similarly,

\begin{align*}
&M^{-3}\int_{D''(\tau,\infty)}  |(M\partial)^{\leq 1}(\hat{W}_0-\bar{W}_0)|^2 dvol\\
\lesssim &M^{-1}\int_\tau^{\infty} \mathbf{E}(\tau')+F[\f^{\leq 3}P_{\even}](\tau')+M^2F[\f^{\leq 3}S_W](\tau').
\end{align*}
Thus from the estimate for $\mathbf{E}(\tau)$, Proposition \ref{pro:S,l=0} and Proposition \ref{pro:P,l=0},

\begin{align*}
&M^{-2}\int_{\Sigma''_\tau}|(M\partial)^{\leq 1}(\hat{W}-\bar{W}) |^2 dvol_3+M^{-3} \int_{D''(\tau,\infty)} |(M\partial)^{\leq 1} (\hat{W}-\bar{W})|^2 dvol\\
 \lesssim &\left(1+\frac{\tau}{M}\right)^{-2+2\delta}\cdot\textup{Decay}_{\ell=0}[W].
\end{align*}
And the proof is finished.

\end{proof}

\subsection{The $\ell=0$ mode of \eqref{HG} and \eqref{Main_equation}}\label{subsec:sad}
In this subsection we discuss the solution of \eqref{HG} and \eqref{Main_equation} supported on $\ell=0$. Form \cite{Zerilli_2, Martel-Poisson}, one can decompose $h=h_{\ell=0}$ into the mass change $K$, defined in \eqref{def:K}, and a deformation tensor. There exists a unique constant $c$ and a one form $W=W_{\ell=0}$ unique upto the time translation $T_{\flat}$ such that

\begin{align*}
h_{\ell=0}=cK+{}^W\pi,
\end{align*}
From \eqref{pi}, $W_1=\frac{r}{4}\s{tr}h$, where $W_1=W\cdot \D\partial_r$. Hence from \eqref{pi}, the constant $c$ is determined by

\begin{align}\label{def:csad}
\frac{c}{r}=\D^2 h_{rr}-\frac{r}{2}\D\nabla_r \s{tr}h-\frac{1}{2}\left(1-\frac{3M}{r}\right)\s{tr}h.
\end{align}
Suppose $c=0$, then $W_a$ is a solution of \eqref{wave_equation_vector} supported on $\ell=0$ and Theorem \ref{thm:vector,l=0} applies. Nevertheless, $K$ doesn't satisfy the harmonic gauge \eqref{HG} as

\begin{align*}
\Gamma_a[K]dx^a=\frac{1}{R^2}dv.
\end{align*}
To resolve this issue, one can modify $K$ through a deformation tensor ${}^{W^{**}}\pi$. An explicit form of $W'$ is obtained in \cite{Hafner-Hintz-Vasy}:

\begin{align}
W^{**}_a dx^a:=v\cdot W^*_a dx^a-\log \left(\frac{r}{2M}\right)dt+\frac{1}{2}\left(1+\frac{2M}{r}\right)dr.
\end{align}
Define

\begin{align}\label{def:Kstar}
K^*:=K+{}^{W^{**}}\pi.
\end{align}
From direct calculation $K^*$ satisfies \eqref{HG} and hence \eqref{Main_equation}. Note that $K^*$ grows linearly in $\tau$ with leading term being $\tau\cdot {}^{W^*}\pi$. The remaining one form $W'=W-cW^{**}$ satisfies \eqref{HG} and decays to zero if the assumption in Theorem \ref{thm:vector,l=0} holds. In that case we can conclude that $h$ converges to $cK^*+c_1(\infty){}^{W^*}\pi$ with the constant $c_1(\infty)$ given in Theorem \ref{thm:vector,l=0}. Since $tr K^*=0$, the trace of ${}^{W''}\pi$ is the same as $tr h$. However, the value of $P_{\even}$ , defined in \eqref{def:P}, is only determined upto a constant as $P_{\even}[T_\flat]=1$. 

\begin{theorem}\label{thm:sad}
Let $h=h_{\ell=0}$ be a solution of \eqref{HG} and \eqref{Main_equation} and $c$ be the constant defined in \eqref{def:csad}. Let $W=W_{\ell=0}$ be an even one form such that

\begin{align*}
h=cK^*+{}^W\pi.
\end{align*} 

Suppose that $S=tr h$ satisfies $ \big(F[\f^{\leq 7}S]+E^{4-\delta,(0)}_L[\f^{\leq 7}S]\big)(0)<\infty$. Further assume $W$ can be chosen such that $\big(F[\f^{\leq 7} \hat{P}_{\even}]+E^{2-\delta,(0)}_L[\f^{\leq 7} \hat{P}_{\even}]\big)(0)<\infty$. Then $h$ converges to $cK^*+c_1(\infty){}^{W^*}\pi$.
\end{theorem}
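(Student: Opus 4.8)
The plan is to reduce Theorem \ref{thm:sad} to Theorem \ref{thm:vector,l=0} applied to the one form $W' := W - c W^{**}$, whose deformation tensor measures the difference $h - c K^*$. First I would verify the decomposition itself: by \cite{Zerilli_2, Martel-Poisson}, any $h = h_{\ell=0}$ solving \eqref{linear_gravity} is a linear combination of the mass change $K$ and a deformation tensor, and since $K^* = K + {}^{W^{**}}\pi$ differs from $K$ by a deformation tensor, one may equally write $h = cK^* + {}^W\pi$ for a unique constant $c$ (given by \eqref{def:csad} via the formula for $W_1 = \frac{r}{4}\s{tr}h$ in \eqref{pi}) and a one form $W = W_{\ell=0}$ determined up to $T_\flat$. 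The gauge condition on $h$ together with $\Gamma_a[K^*] = 0$ (which follows from the explicit form of $W^{**}$ and $K$, using $\Gamma_a[K]dx^a = \frac{1}{R^2}dv$ and $\Gamma_a[{}^{W^{**}}\pi] = \Box W^{**}_a$, computed directly) forces $\Gamma_a[{}^W\pi] = 0$, hence $\Box W_a = 0$, so $W$ is an even solution of \eqref{wave_equation_vector} supported on $\ell = 0$ — exactly the hypothesis of Theorem \ref{thm:vector,l=0}.

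Next I would check that the two finiteness assumptions of Theorem \ref{thm:sad} imply $\textup{Decay}_{\ell=0}[W] < \infty$. Recall
\begin{align*}
\textup{Decay}_{\ell=0}[W] = F[\f^{\leq 2} W](0) + \bigl(F[\f^{\leq 7}\hat{P}_{\even}] + E^{2-\delta,(0)}_L[\f^{\leq 7}\hat{P}_{\even}]\bigr)(0) + \bigl(F[\f^{\leq 7}S_W] + E^{4-\delta,(0)}_L[\f^{\leq 7}S_W]\bigr)(0).
\end{align*}
The second group is assumed finite outright. Since $\s{tr}K^* = 0$, we have $S_W = \s{tr}\,{}^W\pi = \s{tr}h = S$, so the third group is precisely the assumed bound on $S$. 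For the first group $F[\f^{\leq 2}W](0)$, I would use that $W_1 = \frac{r}{4}S$ (from \eqref{pi}) controls one component in terms of $S$; the remaining components $W_0, W_2$ are controlled: $W_2 \equiv 0$ on $\ell = 0$, and $W_0$ is recovered from $\hat{P}_{\even}$, $S$ and the algebra of \eqref{def:P}, \eqref{def:hP} (with $\psi_Z \equiv 0$). The only subtlety is the $T_\flat$-ambiguity in $W$: $W$ is determined only up to a constant multiple of $T_\flat$, which shifts $W_0$ by a constant and $P_{\even}$ by a constant, but neither the conclusion "$h$ converges to $cK^* + c_1(\infty){}^{W^*}\pi$" nor the relevant energy norms of $\hat{W} = W - c_1(\infty)W^*$ depend on this choice — I would remark that the hypothesis "$W$ can be chosen such that ..." already fixes a convenient representative.

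With $\textup{Decay}_{\ell=0}[W] < \infty$ established, Theorem \ref{thm:vector,l=0} yields a constant $c_1(\infty)$ and the decay estimate
\begin{align*}
M^{-2}\int_{\Sigma''_\tau}|(M\partial)^{\leq 1} \hat{W}|^2\, dvol_3 + M^{-3}\int_{D''(\tau,\infty)}|(M\partial)^{\leq 1}\hat{W}|^2\, dvol \lesssim \left(1 + \frac{\tau}{M}\right)^{-2+2\delta}\textup{Decay}_{\ell=0}[W],
\end{align*}
where $\hat{W} = W - c_1(\infty)W^*$. Since $W' = W - cW^{**}$ and $h - cK^* = {}^W\pi$, I would translate this into decay of $h - cK^* - c_1(\infty){}^{W^*}\pi = {}^{\hat{W}}\pi$: the components of ${}^{\hat{W}}\pi$ in \eqref{pi} are first-order expressions in $\hat{W}_0, \hat{W}_1, \hat{W}_2$, so $|(M\partial)^{\leq 0}\,{}^{\hat{W}}\pi| \lesssim M^{-1}|(M\partial)^{\leq 1}\hat{W}|$ pointwise (away from the horizon the weights are harmless; near $\mathcal{H}^+$ one uses the regular components as in subsection \ref{subsec:decomp_h}). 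Hence ${}^{\hat{W}}\pi \to 0$ in the appropriate energy, i.e. $h$ converges to $cK^* + c_1(\infty){}^{W^*}\pi$.

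\textbf{Main obstacle.} The genuine content is entirely inherited from Theorem \ref{thm:vector,l=0}; the work here is the bookkeeping. I expect the trickiest point to be the verification that $\Gamma_a[K^*] = 0$ and the careful handling of the $T_\flat$-ambiguity: one must confirm that $\hat{P}_{\even}$ built from a $T_\flat$-shifted $W$ changes only by a bounded (indeed stationary) quantity, so that the hypothesis on $\hat{P}_{\even}$ is meaningful and the output $c_1(\infty)$ and $\hat{W}$ are the intended ones. Showing $\Gamma_a[K^*]=0$ reduces to the identity $\Box W^{**}_a = -\Gamma_a[K]$, which is a direct (if slightly lengthy) computation in the $(v,R,\theta,\phi)$ coordinates using \eqref{Box_horizon} and \eqref{def:W*}; this is the one place where I would actually grind a little.
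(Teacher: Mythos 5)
Your proposal follows essentially the same route as the paper's own discussion in subsection \ref{subsec:sad}: decompose $h=cK^*+{}^W\pi$ with $K^*$ the gauge-fixed mass perturbation (verified in harmonic gauge by direct computation with $W^{**}$), note that the harmonic gauge on $h$ forces $\Box W_a=0$, identify the trace so that $S_W=S$, fix the $T_\flat$-ambiguity through the finiteness hypothesis on $\hat{P}_{\even}$, and then invoke Theorem \ref{thm:vector,l=0} to conclude $W\to c_1(\infty)W^*$ and hence $h\to cK^*+c_1(\infty){}^{W^*}\pi$. One minor slip: the trace identification rests on the vanishing of the spacetime trace, $tr\,K^*=0$, not of the spherical trace $\s{tr}K^*$ (which is nonzero since $W^{**}$ has a nontrivial radial component), but this does not affect the conclusion $S_W=S$ or the rest of your argument.
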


\begin{appendix}

\section{wave equation in Schwarzschild spacetime}\label{sec:wave equation}

In this section we record the estimates for the wave equation equation

\begin{align}\label{wave_equation_source}
\Box\psi-{\cblue V(r)}\psi={\cred G},
\end{align}

or

\begin{equation}\label{wave_equation_free}
\Box\psi-V\psi=0,
\end{equation}

where $V(r)$ is a smooth function depending only on $r$.  \\

The study of wave equation in black hole spacetimes using vector field method was initiated by Dafermos and Rodnianski. In \cite{Dafermos-Rodnianski_redshift, Dafermos-Rodnianski_rp} the authors proposed the red-shfit, Morawetz and $r^p$ vector fields as multiplier and proved Proposition \ref{cor:source_high_order} for $m=0$. To get the higher order estimates, one can commute the equation with various vector fields. It's clear that the equation $\eqref{wave_equation_source}$ commutes with Killing vector fields in $\K$. Moreover, the authors discovered that one can also use the red-shift vector as a commutator. Even though the red-shift vector doesn't commute with $\Box$, the lower order terms has special structure that allows the red-shift estimate. Schlue \cite{Schlue} further observed that one can use $L$ and $rL$ as commutator in the $r^p$ estimate. Later Moschidis \cite{Moschidis_rp} applied this technique to a very general class of asymptotically flat spacetimes. These insights yields the higher order estimate ($m\geq 1$) in Proposition \ref{cor:source_high_order}. See \cite{Dafermos-Rodnianski_note} for more detail.

Angelopoulos-Aretakis-Gajic \cite{Angelopoulos-Aretakis-Gajic1} exploited the vector field method by using $r^2L$ as the commutator. There is actually a bad term in the commutation relation and one key observation they made is that the bad term can be absorbed using the Poincar\'e inequality.  This leads to Proposition \ref{cor:free_price}. Furthermore, the authors also show that for the case $\psi=\psi_{\ell=0}$ is supported on $\ell=0$, one can extend the $r^p$ estimate to $p<4$ provide the first Newman-Penrose constant vanishes and leads to faster decay of $\psi$, which is the content of Proposition \ref{cor:free_ell=0}.

The above mentioned results focus on the case without the potential term $V(r)$. Through straight forward modifications, the potential term can be included with suitable assumptions that we list in the following. 

\subsection{Assumptions}



Fix an integer $\ell_0\geq 0$. The assumptions in below are designed for a solution of \eqref{wave_equation_source} supported on $\ell\geq \ell_0$. First, we assume that

\begin{equation}\label{potential_T}
\frac{\ell_0(\ell_0+1)}{r^2}+{\cblue V(r)} \geq 0.
\end{equation}

This implies for all $\psi=\psi_{\ell\geq \ell_0}$, one has $|\nablas \psi|^2+{\cblue V}|\psi|^2\geq_s 0$. Hence the boundary term of the $T$-current is non-negative as

\begin{align*}
\left(T_{ab}[\psi]-\frac{1}{2}V|\psi|^2\right)T^an^b\geq_s 0.
\end{align*}

Second, we assume that there exists a constant $C_{Mor}>0$ and a function $f(r)$ with the following properties:

\begin{itemize}
\item $\frac{d f}{d r}\geq \frac{1}{C_{Mor}}\cdot\frac{M}{r^2}$
\item $f(r)(1-3M/r)\geq 0$
\end{itemize}

And with $\omega=\left(1-\frac{2M}{r}\right)\left( \frac{2f}{r}+\frac{d f}{d r} \right)$,
 
\begin{equation}\label{potential_Mor}
\begin{split}
&f\left(1-\frac{3M}{r}\right)\frac{\ell_0 (\ell_0+1)}{r^3}  +\left(-\frac{1}{4}\Box\omega{\cblue-\frac{1}{2}f\D\frac{d V}{d r}-\frac{M}{r^2}fV}\right)\geq  \frac{1}{C_{Mor}\cdot r^3} \left(  \left(1-\frac{3M}{r}\right)^2 \ell_0 (\ell_0+1)+1   \right).
\end{split}
\end{equation}
The left hand side of \eqref{potential_Mor} appears in the divergence term in the Morawetz current, see subsection \ref{subsection:Morawetz}. This assumption ensures one can construct a current like $J_3$ in subsection \ref{subsection:Morawetz}.\\

Third, we assume that for all $r\geq R_{\textup{null}}$ and $p\in [\delta, 2-\delta]$, 

\begin{equation}\label{potential_rp_1}
\left(1-\frac{p}{2}\right) \cdot \ell_0(\ell_0+1)  +\left({\cblue-\frac{1}{2}r^3\left(\frac{d V}{d r}+\frac{p}{r} V\right)}+(3-p)Mr^{-1}\right)\geq 0.
\end{equation}
Here $R_{\textup{null}}$ is the value of where $\Sigma_\tau$ changes from spacelike to null and $0<\delta<1/10$ is a fixed constant. The left hand side of \eqref{potential_rp_1} appears in the divergence term in the $r^p$ current, see subsection \ref{subsec:rp}. The assumption ensures one can construct a current like $J^p_4$ in subsection \ref{subsec:rp}.\\

Once the potential $V(r)$ satisfies \eqref{potential_T}, \eqref{potential_Mor} and \eqref{potential_rp_1}, Proposition \ref{cor:source_high_order} applies with $m=0$. To get the higher order estimate, we assume that there exists constants $C_m>0$ for all $m\geq 1$ such that

\begin{align}\label{potential_high}
\left| r^{m+2} \frac{d^m V}{dr^m} \right|\leq C_{m}.
\end{align}

\begin{definition}\label{def:potential1}

We denote by $\mathcal{V}(\delta,R_{\textup{null}},\ell_0)$ be the collection of functions $V(r)$ which satisfies \eqref{potential_T}, \eqref{potential_Mor}, \eqref{potential_rp_1} and \eqref{potential_high}.

\end{definition}

We continue to list the requirements for the potential. Fix $k\geq 1$ and denote for $0\leq j\leq k$,

\begin{align*}
{\cblue V^{(j)}(r):=V(r)+\frac{2M}{r^3}-\frac{j(j+1)}{r^2}+\frac{6Mj(j+1)}{r^3}}.
\end{align*}

We assume that for all $1\leq j\leq k$, $r\geq R_{\textup{null}}$  and $p\in [\delta,2-\delta]$, 

\begin{equation}\label{potential_pk_1}
\begin{split}
\left(1-\frac{p}{2}\right)\cdot  \ell(\ell+1){\cblue -\frac{1}{2}r^3\left(\frac{d V^{(j)}}{d r}+\frac{p}{r} V^{(j)}\right)}\geq 0.
\end{split}
\end{equation}

And we require that

\begin{equation}\label{potential_pk_3}
\liminf_{r\to\infty} r^2{\cblue V^{(k)}} \geq -\ell_0 (\ell_0+1),
\end{equation}

Then we assume that there exists a constant $C_{(k)}>0$ such that for all $0\leq j\leq k$

\begin{equation}\label{potential_pk_4}
{\cblue  \left|\left(\frac{d }{d r}\cdot r^2 \right)^jV(r)\right|}\leq C_{(k)}\frac{M^j}{r^2}.
\end{equation}

\begin{definition}\label{def:potential2}
The set $\mathcal{V}(\delta, R_{\textup{null}},\ell_0,k)$ is the subset of $\mathcal{V}(\delta, R_{\textup{null}},\ell_0)$ whose member further satisfies \eqref{potential_pk_1}, \eqref{potential_pk_3} and \eqref{potential_pk_4}.
\end{definition}

The three extra assumptions are needed in using $r^2L$ as commutator $k$ times. Let $\psi$ be a solution of \eqref{wave_equation_free}. Then $r^{-1}(r^2L)^k(r \psi)$ behaves like a solution of \eqref{wave_equation_free} with potential $V^{(k)}(r)$ in the $r^p$ estimate. These assumptions guarantee the boundary and the bulk terms are non-negative. Applying divergence theorem, one deduces that $E_L^{p,(j)}[\psi](\tau)$ decays like $\tau^{-2+p}\cdot E_L^{2,(j)}[\psi](0)$ for all $0\leq j\leq k$. Then using $E^{2,(j)}_L\lesssim E^{0,(j+1)}_L$, which can be proved by Hardy inequality Lemma \ref{lem:Hardy}, $E_L^{p,(0)}[\psi](\tau)$ decays like $\tau^{-2(k+1)}$.

\begin{remark}\label{rem:0}

It's easy to verify that the zero function $0$ satisfies all the assumptions for $\mathcal{V}(\delta,R_{\textup{null}},\ell_0)$ except \eqref{potential_Mor}, the Morawetz assumption. For such potential ($V=0$), the Morawetz estimate was proved in \cite{Dafermos-Rodnianski_redshift}. Moreover, $0$ clearly satisfies \eqref{potential_pk_1}, \eqref{potential_pk_3} and \eqref{potential_pk_4} provided $\ell_0\geq k$. This is the case studied in \cite{Angelopoulos-Aretakis-Gajic1}.

\end{remark}

\subsection{decay estimate}

Recall that the initial norms of $\psi$ are defined by

\begin{equation*}
I^{(k)}[\psi]:=\left( F[\mathcal{K}^{\leq 2k+2}\psi]+\sum_{\substack{i_1+i_2\leq 2k+1 \\ i_2\leq k}} E_L^{2-\delta,(i_2)}[\mathcal{K}^{i_1}\tilde{\psi}]\right)(0).
\end{equation*}

The spacetime norm of the source term $G$, for any $\bar{\delta}\geq \delta$, is defined by 

\begin{equation*}
\textup{I}_{\textup{source},\bar{\delta}}[G]:=\sup_{p\in [\delta,2-\delta]}\sup_{\tau_2\geq \tau_1\geq 0} \left(1+\frac{\tau_1}{M}\right)^{2-p-\bar{\delta}} \textup{E}^p_{\textup{source}}[\K^{\leq 2} G](\tau_1,\tau_2).
\end{equation*}

\begin{proposition}\label{cor:source_high_order}
Let $\psi$ be a solution of wave equation \eqref{wave_equation_source} with source term $G$ and $m\geq 0$ be a fixed integer. Suppose $\psi=\psi_{\ell\geq \ell_0}$ and $V\in\mathcal{V}(\delta, R_{\textup{null}},\ell_0)$. Suppose that $I^{(0)}[\f^{\leq m}\psi]$ and $\textup{I}_{\textup{source},\bar{\delta}}[\f^{\leq m}G]$ are finite, then for any $p\in [\delta,2-\bar{\delta}]$ and $\tau\geq 0$ we have

\begin{equation}\label{source_rp_decay}
\begin{split}
&F[\f^{\leq m}\psi](\tau)+E^{p,(0)}_L[\f^{\leq m}\tilde{\psi}](\tau)+\int_\tau^\infty B[\f^{\leq m}\psi](\tau')+E^{p-1,(0)}_{L,\nablas}[\f^{\leq m}\tilde{\psi}](\tau') d\tau'\\
\lesssim_{m,V,R_{\textup{null}}} &\left(1+\frac{\tau}{M}\right)^{-2+p+\bar{\delta}} \left(I^{(0)}[\f^{\leq m}\psi]+\textup{I}_{\textup{source},\bar{\delta}}[\f^{\leq m}G]\right).
\end{split}
\end{equation}

\end{proposition}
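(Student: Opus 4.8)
\textbf{Proof proposal for Proposition \ref{cor:source_high_order}.}

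The plan is to reduce the statement with general $m$ to the base case $m=0$ by commuting the equation \eqref{wave_equation_source} with vector fields in $\f$, and to prove the base case by the vector-field method, combining the four currents — a $T$-current, a red-shift current, a Morawetz current, and an $r^p$-current — exactly as in the model construction carried out in Section \ref{sec:W12} for the coupled system $\W$, but now for a single scalar equation with potential $V\in\mathcal{V}(\delta,R_{\textup{null}},\ell_0)$. First I would fix $p\in[\delta,2-\bar\delta]$ and build the multiplier currents: the $T$-current has nonnegative boundary flux because of assumption \eqref{potential_T} (which gives $|\nablas\psi|^2+V|\psi|^2\geq_s 0$ on $\psi=\psi_{\ell\geq\ell_0}$), and its divergence is $\nabla_t\psi\cdot G$ plus cut-off error; the red-shift current of Dafermos-Rodnianski gives, via \eqref{g_redshift}, a positive $M^{-3}|(M\partial)^{\leq1}\psi|^2$ contribution in a neighborhood $[2M,r_{rs}]$ of the horizon; the Morawetz current, built from the function $f(r)$ and $\omega(r)$ postulated in the hypotheses, has divergence bounded below using \eqref{potential_Mor} by $M^{-1}s^{-2}\big((1-3s^{-1})^2\ell_0(\ell_0+1)+1\big)M^{-2}s^{-2}|\psi|^2$ type terms (degenerate at the photon sphere, hence $B$ and not $\bar B$); and the $r^p$-current, built from $s^{p-2}\D^{-1}T_{ab}[s\tilde\psi]L^b$ minus the potential correction, has divergence bounded below using \eqref{potential_rp_1} by the integrand of $M^{-1}E^{p-1,(0)}_{L,\nablas}[\tilde\psi]$, with the $A_{\W,\textup{sub}}$-type lower-order error replaced here by the contribution of $V-2M/r^3$, absorbed by enlarging $R_{\textup{null}}$ using \eqref{potential_high} for $m=1$. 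Assembling $J=C_1J_T+\epsilon_2 J_{rs}+J_{Mor}+\epsilon_4 J^p_{rp}$ with the constants chosen in the same hierarchical order as in Section \ref{sec:W12}, the divergence theorem \eqref{div_thm} over $D(\tau_1,\tau_2)$ yields, after Cauchy-Schwarz on the source terms $\nabla\psi\cdot G$ (absorbing a small multiple of the positive bulk and leaving $\epsilon^{-1}Ms^{p+1}|G|^2$), the inequality
\begin{equation*}
\begin{split}
&F[\psi](\tau_2)+E^{p,(0)}_L[\tilde\psi](\tau_2)+M^{-1}\int_{\tau_1}^{\tau_2}B[\psi](\tau')+E^{p-1,(0)}_{L,\nablas}[\tilde\psi](\tau')\,d\tau'\\
&\lesssim F[\psi](\tau_1)+E^{p,(0)}_L[\tilde\psi](\tau_1)+\int_{D(\tau_1,\tau_2)}Ms^{p+1}|G|^2\,dvol,
\end{split}
\end{equation*}
which is the integrated energy estimate one feeds into the standard dyadic/pigeonhole argument (as at the end of Section \ref{sec:W12}) to upgrade to the pointwise-in-$\tau$ decay $\big(1+\tau/M\big)^{-2+p+\bar\delta}$ claimed in \eqref{source_rp_decay}; the source term on the right is controlled by $\textup{I}_{\textup{source},\bar\delta}[G]$ by definition \eqref{source_assumption}, and the extra $\K^{\leq2}$ derivatives in that norm account for the derivatives of $G$ lost when commuting with the Morawetz/red-shift currents and when recovering full spatial derivatives from $\K$ via Lemma \ref{killing_to_partial}.

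For the higher-order step $m\geq1$ I would commute \eqref{wave_equation_source} with each $X\in\f$. Killing fields in $\K$ commute with $\Box-V$ exactly (producing source $\K G$); the vectors $rL$ and $M\underline L'$ do not, but the commutator $[\Box-V,rL]$ and $[\Box-V,M\underline L']$ have the known lower-order structure (a multiple of $\nabla_L$, first-order angular, plus $r^{-1}$-decaying zeroth order, cf. the Schlue/Moschidis $r^p$-commutation and the red-shift commutation of Dafermos-Rodnianski), so that $\f^{\leq m}\psi$ satisfies a wave equation with potential $V$ (or a shifted version with the same membership in $\mathcal{V}$) and source $\f^{\leq m}G$ plus lower-order terms in $\f^{\leq m-1}\psi$ that are handled inductively by the estimate already proven for $\f^{\leq m-1}\psi$. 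Iterating the base-case estimate and summing over the finitely many combinations of vector fields in $\f$ (using $|\f^{\leq m}\cdot|^2=\sum$) gives \eqref{source_rp_decay}; the constant picks up dependence on $m$, $V$ (through the $C_m$ in \eqref{potential_high}) and $R_{\textup{null}}$, exactly as stated.

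The main obstacle is the same one that forced the careful bookkeeping in Section \ref{sec:W12}: making the assembled boundary flux $\int_{\Sigma_\tau}J\cdot n$ positive-definite and equivalent to $F[\psi]+E^{p,(0)}_L[\tilde\psi]$ uniformly in $\tau$ and in the mode $\ell\geq\ell_0$, while simultaneously keeping the boundary flux at null infinity nonnegative (or controlled by $s^{-(2-\delta)}M^{-1}E^{2-\delta,(0)}_L$, which requires the a priori finiteness assumption on $E^{2-\delta,(0)}_L[\tilde\psi](0)$ to propagate, via a separate application of $J^{2-\delta}_{rp}$, to all $\tau$). The degeneracy of the Morawetz bulk at $r=3M$ — which here, unlike the $\W$-system, cannot be removed by the algebraic substitutions \eqref{X1o_X2r_substitution}, \eqref{X1r_X2o_substitution} — is precisely why the left-hand side of \eqref{source_rp_decay} carries $B[\f^{\leq m}\psi]$ rather than $\bar B$, so the loss of one derivative at the photon sphere must be tracked consistently through the induction; this is the delicate accounting step, but it is routine given \eqref{potential_Mor} and does not require any new idea beyond what is developed in the main text.
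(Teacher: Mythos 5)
Your proposal follows essentially the same route the paper takes: the paper gives no self-contained proof of this proposition, instead attributing the $m=0$ case to the Dafermos--Rodnianski red-shift/Morawetz/$r^p$ multiplier scheme and the higher-order case to commutation with $\K$, the red-shift vector and $rL$ (Schlue, Moschidis), with the potential handled precisely by the structural assumptions \eqref{potential_T}, \eqref{potential_Mor}, \eqref{potential_rp_1}, \eqref{potential_high} that you invoke, followed by the same dyadic upgrade to decay. The one small inaccuracy is your remark that the subleading potential contribution in the $r^p$ bulk is absorbed ``by enlarging $R_{\textup{null}}$'': $R_{\textup{null}}$ is fixed in the hypothesis $V\in\mathcal{V}(\delta,R_{\textup{null}},\ell_0)$, and the non-negativity of that bulk term for $r\geq R_{\textup{null}}$ is exactly what \eqref{potential_rp_1} postulates, so no enlargement is needed or available.
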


\begin{proposition}\label{cor:free_price}

Let $\psi=\psi_{\ell\geq \ell_0}$ be a solution of \eqref{wave_equation_free}. Assume that $V\in\mathcal{V}(\delta, R_{\textup{null}},\ell_0,k)$. We have for any $p\in [\delta,2-\delta]$ and any $\tau\geq 0$, 

\begin{equation}\label{free_price_decay}
\begin{split}
&F[\psi](\tau)+E^{p,(0)}_L[\tilde{\psi}](\tau)+\int_\tau^\infty B[\psi](\tau')+E^{p-1,(0)}_{L,\nablas}[\tilde{\psi}](\tau') d\tau'\\
\lesssim_{V,R_{\textup{null}}} &\left(1+\frac{\tau}{M}\right)^{-2(k+1)+p+(1+2k)\delta}I^{(k)}[\psi].
\end{split}
\end{equation}

\end{proposition}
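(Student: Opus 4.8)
\textbf{Proof proposal for Proposition \ref{cor:free_price}.} The plan is to run the vector field method with the commutators $\K$, the red-shift vector, and iterated applications of $r^2L$, exactly as in \cite{Angelopoulos-Aretakis-Gajic1} but carrying the potential $V\in\mathcal{V}(\delta,R_{\textup{null}},\ell_0,k)$ through every step. The base case $k=0$ is Proposition \ref{cor:source_high_order} with $G=0$, which already uses only the assumptions \eqref{potential_T}, \eqref{potential_Mor}, \eqref{potential_rp_1}, \eqref{potential_high} defining $\mathcal{V}(\delta,R_{\textup{null}},\ell_0)$, so the decay $\lesssim (1+\tau/M)^{-2+p+\delta}I^{(0)}[\psi]$ holds. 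The task is to improve the power by $2$ for each extra level of regularity $k$, and the mechanism is: $r^{-1}(r^2L)^j(r\psi)$ behaves, in the $r^p$-hierarchy near null infinity, like a solution of a wave equation with the shifted potential $V^{(j)}(r)=V+2M/r^3-j(j+1)/r^2+6Mj(j+1)/r^3$. Assumptions \eqref{potential_pk_1} and \eqref{potential_pk_3} are precisely what make the boundary term $E^{p,(j)}_L$ and the bulk term $E^{p-1,(j)}_{L,\nablas}$ non-negative in the corresponding $r^p$-current, and \eqref{potential_pk_4} controls the commutator errors that are generated when $r^2L$ fails to commute with $\Box-V$.

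First I would establish, for each $0\le j\le k$ and each $p\in[\delta,2-\delta]$, the hierarchy estimate
\begin{equation*}
E^{p,(j)}_L[\psi](\tau_2)+\int_{\tau_1}^{\tau_2}E^{p-1,(j)}_{L,\nablas}[\psi](\tau)\,d\tau\lesssim E^{p,(j)}_L[\psi](\tau_1)+(\text{lower-order terms}),
\end{equation*}
where the lower-order terms are controlled by the already-treated quantities $E^{\cdot,(j-1)}_L$, $F[\K^{\le\cdot}\psi]$ and $B[\K^{\le\cdot}\psi]$ near the photon sphere and near the horizon. The key commutation identity is the analogue of $r\done\cdot{}^1\Box=(\Box+r^{-2})\cdot r\done$ used in the proof of Proposition \ref{pro:Q}: commuting $\Box-V$ with $r^2L$ produces a good term (the shift from $V$ to $V^{(j)}$) plus a bad angular term, which is absorbed using the Poincaré inequality $r^{-2}\ell_0(\ell_0+1)|\psi|^2\le_s|\nablas\psi|^2$ — this is where one needs $\psi$ supported on $\ell\ge\ell_0$ with $\ell_0\ge k$ implicit in \eqref{potential_pk_3}, exactly as in Remark \ref{rem:0}. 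Second, I would feed this into the standard red-shift/Morawetz/$r^p$ black-box of Proposition \ref{cor:source_high_order} but now treating the $(j-1)$-st level output as a decaying source for the $j$-th level, so that by downward induction on $p$ (starting from $p=2-\delta$) and using the pigeonhole/dyadic argument sketched at the end of section \ref{sec:W12}, $E^{2-\delta,(j)}_L[\psi](\tau)\lesssim(1+\tau/M)^{-2j+\cdots}I^{(k)}[\psi]$.

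Third, the crucial interpolation step: once one has decay of $E^{2-\delta,(k)}_L$, one uses the chain $E^{2,(j)}_L[\psi]\lesssim E^{0,(j+1)}_L[\psi]$ — a direct consequence of the Hardy inequality Lemma \ref{lem:Hardy} with $q$ chosen appropriately — to trade one unit of $r$-weight at level $j$ for a full extra level $j+1$. Iterating this $k$ times converts the mild decay $E^{p,(k)}_L\sim\tau^{-2+p}$ at the top level into $E^{p,(0)}_L[\tilde\psi](\tau)\lesssim\tau^{-2(k+1)+p+(1+2k)\delta}I^{(k)}[\psi]$, and the $F$ and $B$ energies are recovered from $E^{0,(0)}_L$ plus the Morawetz bulk via \eqref{energy_by_rp}. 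The definition of $I^{(k)}[\psi]$ in \eqref{initial_norm}, with its $\K^{\le 2k+2}$ and $E^{2-\delta,(i_2)}_L[\K^{i_1}\tilde\psi]$ terms, is tailored so that the $(2k+2)$ commutations with $\K$ (two per level for the elliptic estimate to restore $\partial$ from $\K$ via \eqref{killing_to_partial}) plus the $k$ levels of $r^2L$ all have finite data.

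I expect the main obstacle to be the bookkeeping of the commutator error terms in the $r^p$-hierarchy: each application of $r^2L$ to $\Box-V$ produces not only the clean potential shift but also terms of the form $(r^2\,dV/dr)\cdot(\cdots)$ and mixed terms coupling different levels, and one must verify that \eqref{potential_pk_4}, i.e. $|(\tfrac{d}{dr}r^2)^jV|\le C_{(k)}M^j/r^2$, is exactly strong enough to absorb all of these into the positive bulk $E^{p-1,(j)}_{L,\nablas}$ after possibly enlarging $R_{\textup{null}}$ — this is the place where the argument is genuinely potential-dependent rather than a black-box citation, and where the precise powers of $r$ must be tracked to ensure the absorption (rather than merely a Cauchy–Schwarz bound that loses the decay). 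The near-horizon and photon-sphere regions, by contrast, are handled uniformly by the red-shift and Morawetz currents already built into $\mathcal{V}(\delta,R_{\textup{null}},\ell_0)$, and contribute only lower-order (faster-decaying, by the inductive hypothesis at level $j-1$) terms.
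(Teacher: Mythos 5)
Your proposal follows essentially the same route as the paper, which itself only sketches this proof and defers to Angelopoulos--Aretakis--Gajic: commute with $r^2L$ so that $r^{-1}(r^2L)^j(r\psi)$ enters the $r^p$-hierarchy with the shifted potential $V^{(j)}$, use \eqref{potential_pk_1}, \eqref{potential_pk_3}, \eqref{potential_pk_4} (plus the Poincar\'e inequality for the bad commutator term) to keep boundary and bulk terms non-negative, and then convert weight into decay via $E^{2,(j)}_L\lesssim E^{0,(j+1)}_L$ from the Hardy inequality together with the dyadic/pigeonhole argument. The only blemish is the indexing in your second paragraph (the $(j-1)$-st level feeding the $j$-th, with ``$E^{2-\delta,(j)}_L\lesssim(1+\tau/M)^{-2j+\cdots}$''): the cascade runs the other way --- the most commuted level $j=k$ decays slowest and level $0$ fastest --- exactly as your third paragraph correctly states, so this is cosmetic rather than a gap.
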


As $k=1$ or $2$, this is part of Proposition 7.5 in \cite{Angelopoulos-Aretakis-Gajic1} with simple modification to include the potential. The authors in \cite{Angelopoulos-Aretakis-Gajic1} work with asymptotically flat space time with suitable behavior near null infinity. In the Schwarzschild spacetiem, one can continue commuting with $r^2L$ as suggested in \cite{Angelopoulos-Aretakis-Gajic1}.


Since the trapped geodesic can only be approached by a solutions of wave equation with higher and higher frequency. Once we restrict to a fixed mode $\ell=0$ (or any other modes), the loss of derivative can be avoided. In particular, we define the norm below.

\begin{align}\label{def:I0}
\textup{I}_{\textup{source},\ell=0,\bar{\delta}}[G]:=\sup_{\tau\geq 0}\sup_{p\in [\delta,2-\bar{\delta}]} \left(1+\frac{\tau}{M}\right)^{2-p-\bar{\delta}}\ M\int_{D(\tau,\infty)} s^{p+1}|G|^2 dvol.
\end{align}

\begin{proposition}\label{cor:free_ell=0}
Fix $m\geq 0$.\\

(1) Let $\psi=\psi_{\ell=0}$ be a solution of wave equation $\Box\psi=0$ supported on the mode $\ell=0$. Suppose that $F[\f^{\leq m}\psi](0)+E^{4-\delta,(0)}[\f^{\leq m}\psi](0)$ is finite. Then for any $p\in [\delta,4-\delta]$ and $\tau\geq 0$,

\begin{equation}\label{free_ell=0_decay}
\begin{split}
&F[\f^{\leq m}\psi](\tau)+E^{4-\delta,(0)}_L[\f^{\leq m}\tilde{\psi}](\tau)+\int_\tau^\infty \bar{B}[\f^{\leq m}\psi](\tau')+E^{p-1,(0)}_{L,\nablas}[\f^{\leq m}\tilde{\psi}](\tau') d\tau'\\
\lesssim_{m,V,R_{\textup{null}}} &\left(1+\frac{\tau}{M}\right)^{-4+p+{\delta}} \left(F[\f^{\leq m}\psi](0)+E^{4-\delta,(0)}[\f^{\leq m}\psi](0)\right).
\end{split}
\end{equation}

(2) Let $\psi=\psi_{\ell=0}$ be a solution of wave equation $\Box\psi=G$ supported on the mode $\ell=0$. Suppose that $F[\f^{\leq m}\psi](0)+E^{2-\delta,(0)}[\f^{\leq m}\psi](0)$ and $\textup{I}_{\textup{source},\ell=0,\bar{\delta}}[\f^{\leq m}G]$ are finite. Then for any $p\in [\delta,2-\bar{\delta}]$ and $\tau\geq 0$,

\begin{equation}\label{source_ell=0_decay}
\begin{split}
&F[\f^{\leq m}\psi](\tau)+E^{4-\delta,(0)}_L[\f^{\leq m}\tilde{\psi}](\tau)+\int_\tau^\infty \bar{B}[\f^{\leq m}\psi](\tau')+E^{p-1,(0)}_{L,\nablas}[\f^{\leq m}\tilde{\psi}](\tau') d\tau'\\
\lesssim_{m,V,R_{\textup{null}}} &\left(1+\frac{\tau}{M}\right)^{-2+p+\bar{\delta}} \left(F[\f^{\leq m}\psi](0)+E^{2-\delta,(0)}[\f^{\leq m}\psi](0)+\textup{I}_{\textup{source},\ell=0,\bar{\delta}}[\f^{\leq m}G]\right).
\end{split}
\end{equation}

\end{proposition}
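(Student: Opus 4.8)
The plan is to prove both parts by the vector-field method in the $\ell=0$ setting, essentially repackaging the analysis of \cite{Angelopoulos-Aretakis-Gajic1} (for the homogeneous case) and \cite{Dafermos-Rodnianski_redshift, Dafermos-Rodnianski_rp}, exploiting the two features that single out this mode: there is no trapping, so the Morawetz estimate is \textbf{non-degenerate} and no Killing vector fields need be expended, and the $r^p$-hierarchy can be run up to $p$ close to $4$. Since $\psi=\psi_{\ell=0}$ we have $\Omega_i\psi=0$, so it suffices to construct currents for the reduced (spherically symmetric) equation and then commute with the remaining vector fields of $\f$. The three building blocks are: (i) a red-shift current built from $Y(\sigma)$ as in \eqref{g_redshift}, controlling $M^{-3}|(M\partial)^{\leq1}\psi|^2$ in $r\in[2M,r_{rs}^+]$; (ii) an integrated-decay (Morawetz) current, which for the zero potential on the $\ell=0$ mode is \textbf{non-degenerate} --- this is the estimate of \cite{Dafermos-Rodnianski_redshift} recorded in Remark \ref{rem:0}, and is what yields the $\bar{B}$ (rather than $B$) on the left-hand side of \eqref{free_ell=0_decay} and \eqref{source_ell=0_decay}; (iii) an $r^p$-current near $\mathcal{I}^+$ of the form $s^{p-2}\D^{-1}(T_{ab}[s\tilde\psi]L^b-\cdots)$, whose divergence has the favourable sign for $p$ in the \textbf{extended} range $[\delta,4-\delta]$ when $\psi$ is supported on $\ell=0$. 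Throughout, the zeroth-order terms $M^{-2}s^{-2}|\psi|^2$ are recovered from the $\dot H^1$-pieces via the Hardy inequality (Lemma \ref{lem:Hardy}, Corollaries \ref{cor:E^p_base} and \ref{cor:F_base}).

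For part (1) I would combine a red-shift, a non-degenerate Morawetz, and an $r^p$-current with relative constants chosen as in Section \ref{sec:W12}, so that the boundary term is $\approx F[\psi](\tau)+E^{p,(0)}_L[\tilde\psi](\tau)$, the horizon and null-infinity fluxes have the correct sign, and the bulk controls $M^{-1}(\bar{B}[\psi]+E^{p-1,(0)}_{L,\nablas}[\tilde\psi])$. Applying the divergence theorem \eqref{div_thm} then gives, for every $p\in[\delta,4-\delta]$ and $\tau_2\geq\tau_1\geq0$,
\begin{equation*}
F[\psi](\tau_2)+E^{p,(0)}_L[\tilde\psi](\tau_2)+M^{-1}\int_{\tau_1}^{\tau_2}\bar{B}[\psi](\tau)+E^{p-1,(0)}_{L,\nablas}[\tilde\psi](\tau)\,d\tau\lesssim F[\psi](\tau_1)+E^{p,(0)}_L[\tilde\psi](\tau_1).
\end{equation*}
Starting this hierarchy at $p=4-\delta$ and descending one unit at a time, together with the dyadic mean-value/pigeonhole argument and interpolation used at the end of Section \ref{sec:W12}, produces the decay rate $(1+\tau/M)^{-4+p+\delta}$ and the time-integrated bound claimed in \eqref{free_ell=0_decay}. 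The higher-order statement with $\f^{\leq m}$ follows by commuting the equation successively with the vector fields of $\f$: $MT$ is Killing and harmless, while $rL$ and $M\underline{L}'$ generate lower-order commutator terms handled exactly as in Proposition \ref{cor:source_high_order} (\cite{Schlue, Moschidis_rp}).

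For part (2) the only change is the appearance of the source $G$ in the divergences above: in the $r^p$-current a term $\approx s^{p-1}\D^{-1}\nabla_L(s\tilde\psi)\cdot\eta_{\textup{null}}G$, and in the red-shift and Morawetz currents terms of the form $\nabla_X\psi\cdot G$ with $X$ bounded. By Cauchy--Schwarz each is bounded by $\epsilon\cdot(\text{bulk})+C\epsilon^{-1}Ms^{p+1}|G|^2$; near the horizon $s\approx1$, so the weaker $M^{-1}|G|^2$-type bound there is subsumed. After integration over $D(\tau_1,\tau_2)$ these contributions are controlled, for $p\in[\delta,2-\bar{\delta}]$, by $\textup{I}_{\textup{source},\ell=0,\bar{\delta}}[\f^{\leq m}G]$ of \eqref{def:I0}; crucially, because there is no trapping one needs neither the $\K^{\leq2}G$ factor nor the $|\partial G|^2$-terms present in the general $\textup{E}^p_{\textup{source}}$ norm. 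Running the hierarchy for $p$ up to $4-\delta$ on the source-free part and for $p$ up to $2-\bar{\delta}$ on the source-driven part, and applying the same dyadic argument, yields \eqref{source_ell=0_decay}; the $\f^{\leq m}$-version is again obtained by commutation.

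The main obstacle, and the one point genuinely beyond routine bookkeeping, is building block (iii): verifying that the $r^p$-current divergence keeps the favourable sign for $p$ in the \textbf{extended} window $(2,4-\delta)$ for an $\ell=0$ solution. This is where the mode restriction is essential --- a single commutation with $r^2L$ (equivalently, passing to $\psi^{(1)}=\D^{-1}Ms^2\nabla_L\psi^{(0)}$) produces a quantity obeying a wave equation whose effective potential shifts the admissible $r^p$-window upward by two, at the price of a ``bad'' term absorbed via the Poincar\'e inequality, and the relevant Newman--Penrose-type quantity is then controlled by the assumed initial energy $E^{4-\delta,(0)}[\psi](0)$. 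This is precisely the content of \cite{Angelopoulos-Aretakis-Gajic1}, which we invoke; the only adaptation is its translation into the $F$, $\bar{B}$, $E^{p,(j)}$ norms of the present paper, together with the insertion of the source term, both routine.
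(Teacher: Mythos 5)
Your proposal is correct and follows essentially the same route as the paper: part (1) is obtained by invoking Angelopoulos--Aretakis--Gajic for the extended $r^p$-hierarchy up to $p<4$ on the $\ell=0$ mode (with the finiteness of $E^{4-\delta,(0)}_L$ guaranteeing the vanishing of the Newman--Penrose constant) and commuting with $\f$ for $m\geq 1$, while part (2) is the source-term variant of Proposition \ref{cor:source_high_order} simplified by the absence of trapping on a fixed mode, which gives the non-degenerate bulk $\bar{B}$ and lets the source be controlled by the weaker norm $\textup{I}_{\textup{source},\ell=0,\bar{\delta}}$ without derivative or boundary terms. The paper's own proof is exactly this reduction and citation, so no further comparison is needed.
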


 Proposition \ref{cor:free_ell=0} (1), as $m=0$, is part of Theorem 1.5 in \cite{Angelopoulos-Aretakis-Gajic1}. Note that our assumption $E^{4-\delta,(0)}_L<\infty$ implies the Newman-Penrose constant vanishes. The case $m\geq 1$ can be proved by using $\f$ as commutator and is closely related to Theorem 1.7 in \cite{Angelopoulos-Aretakis-Gajic1}. Proposition \ref{cor:free_ell=0} (2) can be almost derived from Proposition \eqref{cor:source_high_order} except that we have the non-degenerate norm $\bar{B}$ and $\textup{I}_{\textup{source},\ell=0,\bar{\delta}}[\f^{\leq m}G]$ because we focus on a fixed mode $\ell=0$. 

\section{$O_j$ terms in section \ref{sec:ell=0}}\label{sec:O}

In this appendix, we give estimates on the bounds of the $O_j$ terms in in section \ref{sec:ell=0}. Let $m\geq 1$ be a fixed integer. From the definition of $O_1$ \eqref{def:O1}, we have

\begin{align*}
|(M\partial)^{\leq m} O_1|\lesssim_m &  M^{-1}|(M\partial)^{\leq m+1}P_{\even}|+|(M\partial)^{\leq m+1}S_W|.
\end{align*}

From the definition of $O_2$ \eqref{def:O2}, we have

\begin{align*}
|(M\partial)^{\leq m} O_2|\lesssim_m  M^{-2}|(M\partial)^{\leq m+2}P_{\even}|+M^{-1}|(M\partial)^{\leq m+2}S_W|.
\end{align*}

The definition of $O_3$ involves integration and is singular at $r=2M$ because of $W_{0,2}(\rho)$. Therefore we from now on restrict ourselve in the region $r\in [\underline{r},R_{\textup{null}}]$. Then as $r\in [\underline{r},R_{\textup{null}}]$,

\begin{align*}
|(M\partial)^{\leq m}  O_3|\lesssim_m \sup_{[\underline{r},R_{\textup{null}}]} \bigg( |(M\partial)^{\leq m+2}P_{\even}|+M |(M\partial)^{\leq m+2}S_W| \bigg).
\end{align*}

The constant in above inequality depends on $\underline{r}$ and $R_{\textup{null}}$ and blows up as $\underline{r}$ goes to $2M$ or $R_{\textup{null}}$ goes to infinity. However , $\underline{r}$ and $R_{\textup{null}}$ are already fixed.\\

From the definition of $O_{c1}$ and $O_{c2}$, \eqref{def:Oc1} and \eqref{def:Oc2}, we have 

\begin{align*}
|(M\partial)^{\leq m}O_{c1}|\lesssim_m & \bigg( |(M\partial)^{\leq m+2}P_{\even}|+M|(M\partial)^{\leq m+ 2}S_W| \bigg)\bigg|_{r=4M}\\
|(M\partial)^{\leq m}O_{c2}|\lesssim_m & \bigg( M^{-1}|(M\partial)^{\leq m+2}P_{\even}|+|(M\partial)^{\leq m+ 2}S_W| \bigg)\bigg|_{r=4M}.
\end{align*}

Repeating the same argument, we have

\begin{align*}
\sup_{[\underline{r},R_{\textup{null}}]}|(M\partial)^{\leq m}  O_4|\lesssim_m &\sup_{[\underline{r},R_{\textup{null}}]} \bigg( M^{-1}|(M\partial)^{\leq m+3}P_{\even}|+|(M\partial)^{\leq m+3}S_W| \bigg),\\
\sup_{[\underline{r},R_{\textup{null}}]} |(M\partial)^{\leq m} O_5|\lesssim_m & \sup_{[\underline{r},R_{\textup{null}}]} \bigg( M^{-2}|(M\partial)^{\leq m+4}P_{\even}|+M^{-1}|(M\partial)^{\leq m+4}S_W| \bigg),\\
\sup_{[\underline{r},R_{\textup{null}}]} |(M\partial)^{\leq m} O_6|\lesssim_m & \sup_{[\underline{r},R_{\textup{null}}]} \bigg( M^{-3}|(M\partial)^{\leq m+5}P_{\even}|+M^{-2}|(M\partial)^{\leq m+5}S_W| \bigg).
\end{align*}

and

\begin{align*}
\sup_{[\underline{r},R_{\textup{null}}]}|(M\partial)^{\leq m}O_7|\lesssim_m &\sup_{[\underline{r},R_{\textup{null}}]} \bigg( |(M\partial)^{\leq m+5}P_{\even}|+M |(M\partial)^{\leq m+5}S_W| \bigg),\\
\sup_{[\underline{r},R_{\textup{null}}]}|(M\partial)^{\leq m}O_8|\lesssim_m &\sup_{[\underline{r},R_{\textup{null}}]} \bigg( M^{-1}|(M\partial)^{\leq m+5}P_{\even}|+|(M\partial)^{\leq m+5}S_W| \bigg),\\
\sup_{[\underline{r},R_{\textup{null}}]}|(M\partial)^{\leq m}O_9|\lesssim_m &\sup_{[\underline{r},R_{\textup{null}}]} \bigg( M^{-1}|(M\partial)^{\leq m+6}P_{\even}|+|(M\partial)^{\leq m+6}S_W| \bigg),\\
\sup_{[\underline{r},R_{\textup{null}}]}|(M\partial)^{\leq m}O_{10}|\lesssim_m &\sup_{[\underline{r},R_{\textup{null}}]} \bigg( |(M\partial)^{\leq m+6}P_{\even}|+M|(M\partial)^{\leq m+6}S_W| \bigg).
\end{align*}

The terms $O_{11}$ and $O_{12}$ are smooth upto the horizon as they depend on $O_{c1}$ and $P_{\even}$ in \eqref{def:O11} and \eqref{def:O12}.

\begin{align*}
|(M\partial)^{\leq m} O_{11}|\lesssim_m & \bigg( M^{-2}|(M\partial)^{\leq 3}P_{\even}|+M^{-1}|(M\partial)^{\leq 3}S_W| \bigg)\bigg|_{r=4M}+M^{-2}|P_{\even}|,\\
|(M\partial)^{\leq m} O_{12}|\lesssim_m & \bigg( M^{-2}|(M\partial)^{\leq 3}P_{\even}|+M^{-1}|(M\partial)^{\leq 3}S_W| \bigg)\bigg|_{r=4M},\\
\end{align*}

Now we are ready to prove Lemma \ref{lem:Err_ell=0} and \ref{lem:Err_ell=02}.

\begin{proof}[proof of Lemma \ref{lem:Err_ell=0} and \ref{lem:Err_ell=02}]
From the above discussion, $Err_{\ell=0}(\tau_1,\tau_2)$ is bounded by

\begin{align*}
&\bigg(F[\f^{\leq 7}\hat{P}_{\even}]+M^2F[\f^{\leq 7}S_W]\bigg)(\tau_1)+\bigg(F[\f^{\leq 7}\hat{P}_{\even}]+M^2F[\f^{\leq 7}S_W]\bigg)(\tau_2)\\
&+M^{-1}\int_{\tau_1}^{\tau_2} \bigg(\bar{B} [\f^{\leq 7}\hat{P}_{\even}]+M^2 \bar{B} [\f^{\leq 7}S_W]\bigg)(\tau)d\tau.
\end{align*}

Therefore the claim follows Proposition \ref{pro:S,l=0} and \ref{equ:P,l=0}. The proof of Lemma \ref{lem:Err_ell=02} is similar.

\end{proof}

\section{Equation for $\h$}\label{sec:equation_metirc}

In this appendix we derive \eqref{equation_far}, the wave equation for $\h$. We follow closely the computation in \cite{Berndtson} except that we adapt a different decomposition which is similar to the one in \cite{Barack-Lousto}.\\

Recall that for a function $\psi(t,r,\theta,\phi)$, we decompose $\psi$ as

\begin{align*}
\psi=\sum_{\ell=0}^{\infty}\sum_{|m|\leq \ell} \psi_{m\ell}(t,r)Y_{m\ell}(\theta,\phi).
\end{align*}

Similarly, for any \textbf{even} spherical one form $\xi_A$ and \textbf{even} spherical symmetric  traceless two tensor $\Xi_{AB}$, we decompose them as

\begin{align*}
\xi_A=&\sum_{\ell=1}^\infty \sum_{|m|\leq \ell} \xi_{m\ell}(t,r)\nablas_A Y_{m\ell}(\theta,\phi),\\
\Xi_{AB}=&\sum_{\ell=2}^\infty\sum_{|m|\leq \ell} \Xi_{m\ell}(t,r)\hat{\nablas}_{AB} Y_{m\ell}(\theta,\phi).
\end{align*}

We also define the operator ${}^0\Box_\ell$ on the scalar functions depending on $\{t,r\}$ as

\begin{align}
{}^0\Box_\ell=-\D^{-1}\frac{\partial^2}{\partial t^2}+\D \frac{\partial^2}{\partial r^2}+\frac{2}{r}\left(1-\frac{M}{r}\right)\frac{\partial}{\partial r}-\frac{\ell(\ell+1)}{r^2}.
\end{align}

Then we have

\begin{align*}
\Box f=\sum_{\ell=0}^\infty\sum_{|m|\leq \ell}\left({}^0\Box_\ell f_{m\ell}(t,r)\right)\cdot Y_{m\ell}(\theta,\phi).
\end{align*}

Similarly, we define

\begin{align}
{}^1\Box_\ell=r\cdot{}^0\Box_\ell\cdot r^{-1}+\frac{1}{r^2},\\
{}^2\Box_\ell=r^2\cdot{}^0\Box_\ell\cdot r^{-2}+\frac{4}{r^2}
\end{align}

in order to have

\begin{align*}
{}^1\Box \xi_A=&\sum_{\ell=1}^\infty\sum_{|m|\leq \ell}\left({}^1\Box_\ell \xi_{m\ell}(t,r)\right)\cdot\nablas_A Y_{m\ell}(\theta,\phi),\\
{}^2\Box \Xi_{AB}=&\sum_{\ell=0}^\infty\sum_{|m|\leq \ell}\left({}^2\Box_\ell \Xi_{m\ell}(t,r)\right)\cdot\hat{\nablas}_{AB} Y_{m\ell}(\theta,\phi).
\end{align*}

For an even perturbation $h_{ab}$, the decomposition \cite{Berndtson} in in the following:

\begin{align}
h_{tt}=&\D\sum_{\ell=0}^\infty \sum_{|m|\leq \ell}  H_{0,m\ell}\cdot Y_{m\ell},\\
h_{tr}=&\sum_{\ell=0}^\infty \sum_{|m|\leq \ell} H_{1,m\ell}\cdot Y_{m\ell},\\
h_{rr}=&\D^{-1}\sum_{\ell=0}^\infty \sum_{|m|\leq \ell} H_{2,m\ell}\cdot Y_{m\ell},\\
\s{tr}h=&2\sum_{\ell=0}^\infty \sum_{|m|\leq \ell} K_{m\ell}\cdot Y_{m\ell}.
\end{align}

\begin{align}
h_{tA}=&\sum_{\ell=1}^\infty\sum_{|m|\leq \ell}  h_{0,m\ell}\cdot \nablas_A Y_{m\ell},\\
h_{rA}=&\sum_{\ell=1}^\infty\sum_{|m|\leq \ell}  h_{1,m\ell}\cdot \nablas_A Y_{m\ell},\\
\hat{h}_{AB}=&2r^2 \sum_{\ell=2}^\infty\sum_{|m|\leq \ell}  G_{m\ell}\cdot \hat{\nablas}_{AB} Y_{m\ell}
\end{align}

From now on we fix $\ell\geq 0$, $|m|\leq \ell$. Let

\begin{align}
\LW_{ab}=&\Box h_{ab}+2R_{acbd}h_{cd},\\
\HG_b=&\nabla^a\left( h_{ab}-\frac{1}{2}(\text{tr}h)g_{ab} \right).
\end{align}

Recall that $\lambda=\lambda(\ell)$ is defined in \eqref{def:lambda_ell}. From \cite{Berndtson} (3.1-3.7) or direct computation,

\begin{equation}\label{MainEquation_1}
\tag{ME1}
\begin{split}
\LW_{tt,m\ell}=\D{}^0\Box_\ell H_{0,m\ell}-\frac{2M^2}{r^4}H_{0,m\ell}+\frac{4M}{r^2}\partial_t H_{1,m\ell}+\frac{2M(3M-2r)}{r^4}H_{2,m\ell}+\frac{4M(r-2M)}{r^4}K_{m\ell}.
\end{split}
\end{equation}

\begin{equation}\label{MainEquation_2}
\tag{ME2}
\begin{split}
\LW_{rr,m\ell}=\D^{-1}{}^0\Box_\ell H_{2,m\ell}+\frac{2M}{r^2}\frac{(3M-2r)}{(r-2M)^2}H_{0,m\ell}+\frac{8\lambda+8}{r^3}h_{1,m\ell}+\frac{4M}{(r-2M)^2}\partial_t H_{1,m\ell}\\
-\frac{(4r^2-16Mr+18M^2)}{r^2(r-2M)^2}H_{2,m\ell}+\frac{4(r-3M)}{r^2(r-2M)}K_{m\ell}.
\end{split}
\end{equation}

\begin{equation}\label{MainEquation_3}
\tag{ME3}
\begin{split}
\LW_{tr,m\ell}={}^0\Box_\ell H_{1,m\ell}+\frac{4(1+\lambda)}{r^3}h_{0,m\ell}+\frac{2M}{r^2-2Mr}\partial_t H_{0,m\ell}+\frac{2(r^2-2Mr+2M^2)}{r^3(2M-r)}H_{1,m\ell}+\frac{2M}{r^2-2Mr}\partial_t H_{2,m\ell}.
\end{split}
\end{equation}

\begin{equation}\label{MainEquation_4}
\tag{ME4}
\begin{split}
\LW_{tA,m\ell}={}^0\Box_\ell h_{0,m\ell}+\frac{4M}{r^3}h_{0,m\ell}+\frac{2(M-r)}{r^2}\partial_r h_{0,m\ell}+\frac{2M}{r^2}\partial_t h_{1,m\ell}+\frac{2(r-2M)}{r^2}H_{1,m\ell}.
\end{split}
\end{equation}

\begin{equation}\label{MainEquation_5}
\tag{ME5}
\begin{split}
\LW_{rA,m\ell}={}^0\Box_\ell h_{1,m\ell}+\frac{4\lambda}{r}G_{m\ell}+\frac{2M}{(r-2M)^2}\partial_t h_{0,m\ell}+\frac{8M-4r}{r^3}h_{1,m\ell}+\frac{2}{r}H_{2,m\ell}-\frac{2}{r}K_{m\ell}+\frac{6M-2r}{r^2}\partial_r h_{1,m\ell}.
\end{split}
\end{equation}

\begin{equation}\label{MainEquation_6}
\tag{ME6}
\begin{split}
r^2\slashed{tr} \LW_{m\ell}=r^2\left(  {}^0\Box_\ell 2K_{m\ell}+\frac{4M}{r^3}H_{0,m\ell}+\frac{(8\lambda+8)(2M-r)}{r^4}h_{1,m\ell}+\frac{4(r-3M)}{r^3}H_{2,m\ell}+\frac{4r-16M}{r^3}K_{m\ell}\right).
\end{split}
\end{equation}

\begin{equation}\label{MainEquation_7}
\tag{ME7}
\begin{split}
\hat{\LW}_{AB,m\ell}=2r^2\cdot {}^0\Box_\ell G_{m\ell}+4G_{m\ell}+\frac{4r-8M}{r^2}h_{1,m\ell}.
\end{split}
\end{equation}

Also from \cite{Berndtson} (3.8-3.10) or direct computation, the Harmonic gauge condition reads

\begin{equation}\label{HarmonicGauge_1}
\tag{HG1}
\begin{split}
{\HG}_{t,m\ell}=\frac{-2\lambda-2}{r^2}h_{0,m\ell}-\frac{1}{2}\partial_t H_{0,m\ell}-\frac{2(M-r)}{r^2}H_{1,m\ell}-\frac{1}{2}\partial_t H_{2,m\ell}-\partial_t K_{m\ell}+\D\partial_r H_{1,m\ell}.
\end{split}
\end{equation}

\begin{equation}\label{HarmonicGauge_2}
\tag{HG2}
\begin{split}
{\HG}_{r,m\ell}=-\frac{M}{2Mr-r^2}H_{0,m\ell}-\frac{2\lambda+2}{r^2}h_{1,m\ell}+\frac{r}{2M-r}\partial_t H_{1,m\ell}+\frac{3M-2r}{2Mr-r^2}H_{2,m\ell}\\
-\frac{2}{r}K_{m\ell}+\frac{1}{2}\partial_r H_{0,m\ell}+\frac{1}{2}\partial_r H_{2,m\ell}-\partial_r K_{m\ell}.
\end{split}
\end{equation}

\begin{equation}\label{HarmonicGauge_3}
\tag{HG3}
\begin{split}
{\HG}_{A,m\ell}=-2\lambda G_{m\ell}+\frac{r}{2M-r}\partial_t h_{0,m\ell}+\frac{1}{2}H_{0,m\ell}-\frac{2M-2r}{r^2}h_{1,m\ell}-\frac{1}{2}H_{2,m\ell}+\D\partial_r h_{1,m\ell}.
\end{split}
\end{equation}

Now we start to derive the equation for $\h$. From the definition of $\h$, we have

\begin{align*}
\h_{1,m\ell}=&\frac{1}{2}\D\left(H_{0,m\ell}+H_{2,m\ell}\right)+\frac{2M}{r}K_{m\ell},\\
\h_{2,m\ell}=&K_{m\ell},\\
\h_{3,m\ell}=&\frac{1}{2}(H_{0,m\ell}-H_{2,m\ell}),\\
\h_{6,m\ell}=&\D H_{1,m\ell}+\frac{2M}{r}K_{m\ell}.
\end{align*}

\begin{align*}
\h_{4,m\ell}=&\D h_{1,m\ell},\\
\h_{7,m\ell}=& h_{0,m\ell}.
\end{align*}

And

\begin{align*}
\h_{5,m\ell}=\sqrt{2}r^2 G_{m\ell}.
\end{align*}

The equation for $\h_1$ can be derived by 
\begin{align*}
&{}^0\Box_\ell \h_{1,m\ell}-\left(\frac{1}{2}\eqref{MainEquation_1}+\frac{M}{r^3}\eqref{MainEquation_6}+\frac{1}{2}\D^2\eqref{MainEquation_2}+\frac{4M(r-2M)}{r^3}\eqref{HarmonicGauge_2}\right)\\
=&\frac{2}{r^2}(1-6s^{-1})\h_{1,m\ell}-\frac{2}{r^2}(1-8s^{-1}+10s^{-2})\h_{2,m\ell	}-\frac{2}{r^2}(1-10s^{-1}+20s^{-2})\h_{3,m\ell}\\
-&\frac{4\lambda+4}{r^3}(1-6s^{-1})\h_{4,m\ell}.
\end{align*}

Together with Lemma \ref{lem:d_mode}, the equation of $\h_1$ follows by summing over all $m\ell$.\\

The equation for $\h_2$ can be derived by 
\begin{align*}
&{}^0\Box_\ell \h_{2,m\ell}-\frac{1}{2r^2}\eqref{MainEquation_6}\\
=&-\frac{2}{r^2}\h_{1,m\ell}+\frac{2}{r^2}(1-2s^{-1})\h_{2,m\ell}+\frac{2}{r^2}(1-4s^{-1})\h_{3,m\ell}+\frac{4\lambda+4}{r^3}\h_{4,m\ell}.
\end{align*}

The equation for $\h_3$ can be derived by

\begin{align*}
&{}^0\Box_\ell \h_{3,m\ell}-\left(\frac{1}{2}\D^{-1}\eqref{MainEquation_1}-\frac{1}{2}\D\eqref{MainEquation_2}\right)\\
=&-\frac{2}{r^2}\h_{1,m\ell}+\frac{2}{r^2}(1-2s^{-1})\h_{2,m\ell}+\frac{2}{r^2}(1-4s^{-1})\h_{3,m\ell}+\frac{4\lambda+4}{r^3}\h_{4,m\ell}.
\end{align*}

The equation for $\h_4$ can be derived by

\begin{align*}
&{}^1\Box_\ell \h_{4,m\ell}-\left( \D\eqref{MainEquation_5}+\frac{2M}{r^2}\eqref{HarmonicGauge_3}\right)\\
=&\frac{2}{r^2}(5/2 - 9 s^{-1})\h_{4,m\ell} - \frac{2}{r} \h_{1,m\ell} + \frac{2}{r}\h_{2,m\ell} + 
 \frac{2}{r}(1 - 3 s^{-1})\h_{3,m\ell} -\frac{2\sqrt{2}\lambda}{r^3}(1-3s^{-1})\h_{5,m\ell}.
\end{align*}

The equation for $\h_5$ can be derived by

\begin{align*}
&{}^2\Box_\ell \h_{5,m\ell}-\frac{1}{\sqrt{2}}\eqref{MainEquation_7}=\frac{2}{r^2} \h_{5,m\ell} -\frac{2\sqrt{2}}{r} \h_{4,m\ell}.
\end{align*}

The equation for $\h_6$ can be derived by

\begin{align*}
&{}^0\Box_\ell \h_{6,m\ell}-\left(\D\eqref{MainEquation_2}+\frac{M}{r^2}\eqref{MainEquation_6}+\frac{4M(r-2M)}{r^2}\eqref{HarmonicGauge_2}\right)\\
=&\frac{2}{r^2}(-4s^{-1})\h_{1,m\ell} + \frac{2}{r^2}(4s^{-1} - 6s^{-2})\h_{2,m\ell} + 
 \frac{2}{r^2} (6 s^{-1} - 16 s^{-2})\h_{3,m\ell} + \frac{2}{r^2} (1 - 2 s^{-1})\h_{6,m\ell} \\
 +&\frac{2}{r^3}(s^{-1})(8 \lambda + 8) \h_{4,m\ell} - \frac{4\lambda+4}{r^3}(1-2s^{-1})\h_{7,m\ell}-\frac{4}{r^2}(1 - 2 s^{-1})^{-1} s^{-1} \nabla_L( r(\h_{1,m\ell}-\h_{6,m\ell}) ).
\end{align*}

Finally, the equation for $\h_{7}$ can be derived by

\begin{align*}
&{}^1\Box_\ell \h_{7,m\ell}-\left( \eqref{MainEquation_4}+\frac{2M}{r^2}\eqref{MainEquation_3}\right)\\
=& \frac{2}{r^2}(-2s^{-1})\h_{4,m\ell} + \frac{2}{r^2} (1/2 - 3 s^{-1})\h_{7,m\ell} + 
 \frac{2}{r} (2 s^{-1})\h_{2,m\ell}  -\frac{2}{r} (s^{-1}) \h_{3,m\ell} + 
 \frac{2\sqrt{2}\lambda}{r^3}s^{-1}\h_{5,m\ell}\\
  -& \frac{2}{r} \h_{6,m\ell} + 
 \frac{2}{r^2}(1-2s^{-1})^{-1} (-s^{-1}) r\cdot\nabla_L(\h_{4,m\ell}-\h_{7,m\ell}).
\end{align*}

\section{Equation for gauge change}\label{sec:equation_vector}

Now we start to derive the equation for $W_0,$ $W_1$ and $W_2$ \eqref{equ:W_0}, \eqref{equ:W_1} and \eqref{equ:W_2}. It is equivalent to \cite{Berndtson} except we view $W_2$ as a section of $\mathcal{L}(-1)$ and $W_1$ is defined differently. We will follow closely the approach in \cite{Berndtson}. In this section we always assume $\ell\geq 2$. In particular, all $\done,\donest,\dtwo$ and $\dtwost$ are invertible. Furthermore

\begin{align}
\slashed{\Delta}^{-1} \sum_{|m|\leq \ell, \ell\geq 1} f_{m\ell}Y_{m\ell}=\sum_{|m|\leq \ell, \ell\geq 1}\left(-\frac{r^2}{2\lambda+2}f_{m\ell}Y_{m\ell}\right),\\
\slashed{\Delta}_Z^{-1} \sum_{|m|\leq \ell, \ell\geq 1} f_{m\ell}Y_{m\ell}=\sum_{|m|\leq \ell, \ell\geq 1}\left(- \frac{r^2}{2\lambda+6M/r}f_{m\ell}Y_{m\ell}\right).
\end{align}
Together with Lemma \ref{lem:d_mode}, the mode of the Zerilli quantity $\psi$ is

\begin{equation}\label{def:psi_mode}
\begin{split}
\psi_{m\ell}=-&r^2\D (\lambda+1)^{-1}(\lambda+3M/r)^{-1}\partial_r K_{m\ell}\\
+&r(\lambda+1)^{-1}K_{m\ell}+r\D(\lambda+1)^{-1}(\lambda+3M/r)^{-1} H_{2,m\ell}\\
-&2\D(\lambda+3M/r)^{-1}h_{1,m\ell}+2rG_{m\ell}.
\end{split}
\end{equation}
This is (3.34) in \cite{Berndtson}. The Zerilli equation for $\psi$ is

\begin{align*}
{}^0\Box_\ell (r^{-1}\psi_{m\ell})+\frac{2M}{r^3}\frac{(2\lambda+3)(2\lambda+3M/r)}{(\lambda+3M/r)^2}\cdot(r^{-1}\psi_{m\ell})=0.
\end{align*}
This equation as well as the definition of $\psi_Z$ was first derived by Zerilli \cite{Zerilli}. It can also be verified directly through 

\begin{align*}
0=&-\frac{r  (6 M+\lambda  r)}{(\lambda +1) (3 M+\lambda  r)^2}\partial_t\eqref{HarmonicGauge_1} -\frac{(r-2 M)^2 (6
   M+\lambda  r)}{(\lambda +1) r (3 M+\lambda  r)^2} \partial_r \eqref{HarmonicGauge_2} -\frac{2  (2 M-r) \left(6 M^2-6 M r-\lambda 
   r^2\right)}{(\lambda +1) r^2 (3 M+\lambda  r)^2}\eqref{HarmonicGauge_2}\\
   &+\frac{2  \left(-6 M^2+6 M r+\lambda  r^2\right)}{r^2 (3
   M+\lambda  r)^2}\eqref{HarmonicGauge_3}-\frac{\left(3 M^2+2 \lambda  M r+\lambda  (\lambda +1) r^2\right)}{2 (\lambda +1) r^2 (3
   M+\lambda  r)^2}\eqref{MainEquation_6}\\
   &+\frac{r  (6 M+\lambda  r)}{2 (\lambda +1) (3 M+\lambda  r)^2}\eqref{MainEquation_1}-\frac{\lambda  (r-2 M)^2
   }{2 (\lambda +1) (3 M+\lambda  r)^2}\eqref{MainEquation_2}\\
   &+\frac{2 (r-2 M) }{r (3 M+\lambda  r)}\eqref{MainEquation_5}+\frac{(r-2 M)
   }{2 (\lambda +1) r (3 M+\lambda  r)}\partial_r\eqref{MainEquation_6}\\
   &-\frac{1}{r^2}\eqref{MainEquation_7}.
\end{align*}
From \eqref{def:W0},\eqref{def:W1} and \eqref{def:W2},

\begin{align}
W_{0,m\ell}&=-h_{0,m\ell}+r^2\partial_t G_{m\ell},\\
W_{1,m\ell}&=-\D h_{1,m\ell}+r^2\D \partial_r G_{m\ell},\\
W_{2,m\ell}&=-r^2G_{m\ell}.
\end{align}
Equivalently,

\begin{align*}
h_{0,m\ell}&=-W_{0,m\ell}-\partial_t W_{2,m\ell},\\
h_{1,m\ell}&=-\D^{-1}W_{1,m\ell}-\partial_r W_{2,m\ell}+\frac{2}{r}W_{2,m\ell},\\
G_{m\ell}&=-r^{-2}W_{2,m\ell}.
\end{align*}
To derive the equation for $W_0$, $W_1$ and $W_2$, we first rewrite $H_{1,m\ell}$, $K_{m\ell}$, and $H_{2,m\ell}$ in terms of $h_{0,m\ell}$, $h_{1,m\ell}$, $G_{m\ell}$ and $\psi_{m\ell}$. Then we substitute them in \eqref{MainEquation_4}, \eqref{MainEquation_5} and \eqref{MainEquation_7} to obtain wave equation for $h_{0,m\ell}$, $h_{1,m\ell}$, $G_{m\ell}$. The next step is to replace $h_{0,m\ell}$, $h_{1,m\ell}$, $G_{m\ell}$ by $W_{0,m\ell},\ W_{1,m\ell}$ and $W_{2,m\ell}$.

\begin{lemma}

\begin{align*}
H_{1,m\ell}=&-\frac{2r (3 M - r)	 }{2 M - r}\partial_t G_{m\ell} + \frac{2 M }{
 2 M r - r^2}h_{0,m\ell} + \partial_t h_{1,m\ell}  \\
  +&\frac{3 M^2 + 3 M r \lambda - \lambda r^2  }{(2 M - r) (3 M + \lambda r )} \partial_t \psi_{m\ell} - 
 2 r^2 \partial_t \partial_r G_{m\ell} \\
 +& \partial_r h_{0,m\ell} + 
 r \partial_t\partial_r \psi_{m\ell}.
\end{align*}

\end{lemma}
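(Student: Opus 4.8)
The statement is a purely algebraic identity on a fixed spherical harmonic mode $(m,\ell)$ with $\ell\geq 2$, expressing the metric coefficient $H_{1,m\ell}$ in terms of $h_{0,m\ell}$, $h_{1,m\ell}$, $G_{m\ell}$, $\psi_{m\ell}$ and their first derivatives. Since no time integration or wave-equation machinery is invoked, the plan is to obtain it by solving a small linear system built from the harmonic gauge conditions and the definition of the Zerilli quantity. The one genuine input that carries dynamical content is the Zerilli equation for $\psi_{m\ell}$, but that is not needed here; what is needed is only the \emph{definition} \eqref{def:psi_mode} of $\psi_{m\ell}$ together with the three gauge equations \eqref{HarmonicGauge_1}, \eqref{HarmonicGauge_2}, \eqref{HarmonicGauge_3}.

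First I would treat \eqref{def:psi_mode} as an equation determining one of the scalar components. Inspecting its right-hand side, it contains $\partial_r K_{m\ell}$, $K_{m\ell}$, $H_{2,m\ell}$, $h_{1,m\ell}$ and $G_{m\ell}$; differentiating in $t$ and combining with the bare gauge equations one can isolate each of $H_{1,m\ell}$, $K_{m\ell}$, $H_{2,m\ell}$ as a function of $h_{0,m\ell}$, $h_{1,m\ell}$, $G_{m\ell}$, $\psi_{m\ell}$ and their first derivatives. Concretely: from \eqref{HarmonicGauge_3} solve for $H_{0,m\ell}-H_{2,m\ell}$; from \eqref{HarmonicGauge_2} and \eqref{HarmonicGauge_1} one gets two further relations among $H_{0,m\ell}$, $H_{1,m\ell}$, $H_{2,m\ell}$, $K_{m\ell}$ and their derivatives; and \eqref{def:psi_mode} supplies the fourth. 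This is a $4\times 4$ linear system (over the ring of functions of $r$, with $\partial_t$, $\partial_r$ acting as parameters) whose solution expresses $H_{0,m\ell}, H_{1,m\ell}, H_{2,m\ell}, K_{m\ell}$ in terms of the ``good'' quantities. The coefficient in front of $H_{1,m\ell}$ in \eqref{HarmonicGauge_1} is $\tfrac{r}{2M-r}\partial_t$ acting on it plus $\D\partial_r H_{1,m\ell}$ — so one must be a little careful that $H_{1,m\ell}$ appears both undifferentiated (via \eqref{HarmonicGauge_2}, \eqref{HarmonicGauge_3}) and differentiated (via \eqref{HarmonicGauge_1}); the cleanest route is to use the two gauge equations \eqref{HarmonicGauge_2} and \eqref{HarmonicGauge_3} that contain $H_{1,m\ell}$ only through $\partial_t H_{1,m\ell}$ and to use \eqref{HarmonicGauge_1} only to check consistency, or equivalently to eliminate $\partial_t H_{1,m\ell}$ first and then recover $H_{1,m\ell}$.

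The key step I would carry out explicitly is the elimination producing $H_{1,m\ell}$. Using the substitutions $h_{0,m\ell}=-W_{0,m\ell}-\partial_t W_{2,m\ell}$, $h_{1,m\ell}=-\D^{-1}W_{1,m\ell}-\partial_r W_{2,m\ell}+\tfrac{2}{r}W_{2,m\ell}$, $G_{m\ell}=-r^{-2}W_{2,m\ell}$ is premature here — the lemma is stated before that translation, so I would keep everything in terms of $h_{0,m\ell}, h_{1,m\ell}, G_{m\ell}, \psi_{m\ell}$. The mechanical content is: take an appropriate linear combination $a(r)\,\partial_t\eqref{HarmonicGauge_2}+b(r)\,\partial_r\eqref{HarmonicGauge_3}+c(r)\,\partial_t\partial_r(\text{def of }\psi)+\dots$ with rational coefficients chosen so that all of $H_{0,m\ell}$, $H_{2,m\ell}$, $K_{m\ell}$ and their derivatives cancel, leaving only $H_{1,m\ell}$ on one side. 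The announced coefficients $\tfrac{2r(3M-r)}{2M-r}$, $\tfrac{2M}{2Mr-r^2}$, $\tfrac{3M^2+3Mr\lambda-\lambda r^2}{(2M-r)(3M+\lambda r)}$, and the terms $-2r^2\partial_t\partial_r G_{m\ell}$, $\partial_r h_{0,m\ell}$, $r\,\partial_t\partial_r\psi_{m\ell}$ on the right pin down which combination works; one verifies the identity by direct substitution and simplification.

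The main obstacle is purely bookkeeping: the coefficients are rational functions in $r$ with denominators $(2M-r)$ and $(3M+\lambda r)$, and the cancellations that remove $K_{m\ell}, H_{0,m\ell}, H_{2,m\ell}$ are somewhat delicate — a sign or factor-of-$r$ slip propagates everywhere. I would mitigate this by organizing the computation as: (i) write the $4\times 4$ system, (ii) compute its determinant (which should be a monomial times a power of $(2M-r)$ and $(3M+\lambda r)$, matching the Wronskian-type structure already visible in \eqref{def:psi_mode}), (iii) apply Cramer's rule for the $H_{1,m\ell}$ component, (iv) simplify and match with the claimed expression, using the Zerilli relation $\ell(\ell+1)=2\lambda+2$ only to tidy the $\lambda$-dependence. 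Since the paper says this ``can be verified directly,'' I expect no conceptual difficulty beyond this algebra; the proof in the paper presumably just records the verifying linear combination, and I would do the same.
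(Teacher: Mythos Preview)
Your high-level plan—find an explicit linear combination of the available equations that isolates $H_{1,m\ell}$, then verify by direct substitution—is exactly what the paper does. But there is a concrete error in your assessment of \emph{which} equations are needed. You claim that only the three harmonic-gauge relations \eqref{HarmonicGauge_1}--\eqref{HarmonicGauge_3} together with the definition \eqref{def:psi_mode} of $\psi_{m\ell}$ suffice. The paper's verifying combination, however, uses in addition the field equations \eqref{MainEquation_1}, \eqref{MainEquation_2}, \eqref{MainEquation_3}, \eqref{MainEquation_6} (and derivatives thereof, including $\partial_t^2$\eqref{HarmonicGauge_1} and $\partial_t\partial_r$\eqref{HarmonicGauge_2}).

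This is not merely a stylistic choice. The three gauge relations are first-order constraints linking the seven components $H_0,H_1,H_2,K,h_0,h_1,G$; together with the algebraic definition of $\psi_{m\ell}$ they do not form a determined system for $H_1$ alone. Concretely, after using \eqref{HarmonicGauge_3} to eliminate $H_0-H_2$, the remaining relations still couple $H_1,\partial_r H_1,\partial_t H_1$ to $\partial_t K,\partial_t H_2,\partial_r K,\partial_r H_2$, and taking further derivatives only proliferates unknowns. The field equation \eqref{MainEquation_3} (the $tr$-component, containing ${}^0\Box_\ell H_1$) and its companions are what close the system. So your $4\times4$ Cramer's-rule picture does not survive contact with the actual differential order of the equations; if you attempted the elimination as you describe, you would find the system underdetermined. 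The fix is simply to enlarge the pool of relations to include the \eqref{MainEquation_1}--\eqref{MainEquation_7} equations (i.e.\ use that $h$ solves \eqref{Main_equation}, not only \eqref{HG}), and then proceed exactly as you outline.
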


\begin{proof}
This is (3.39) in \cite{Berndtson}. Or it can be derive by using \eqref{def:psi_mode} and 

\begin{align*}
0= &\frac{M r}{(2 M - r) (\lambda+ 1 )}\eqref{HarmonicGauge_1} + \frac{r^5}{
    2 (2 M - r) ( \lambda+1) (3 M + \lambda r )} \partial_t^2 \eqref{HarmonicGauge_1}\\
    & + \frac{
 r^2 (7 M + r (\lambda -2 )) }{
 2 (\lambda+1  ) (\lambda r+ 3 M  )} \partial_t \eqref{HarmonicGauge_2} + \frac{r^2}{
 \lambda r+ 3 M }\partial_t \eqref{HarmonicGauge_3}\\
 & - \frac{r^5}{
 4 (2 M - r) (\lambda+1 ) (\lambda r+ 3 M)} \partial_t \eqref{MainEquation_1} + \frac{
 r^3 (-2 M + r) }{
 4 (\lambda+1 ) (\lambda r+ 3 M )} \partial_t \eqref{MainEquation_2}\\
 & -\frac{r^2}{
 2\lambda+2} \eqref{MainEquation_3} + \frac{r^2}{
 4 (\lambda+1 ) (\lambda r+ 3 M )} \partial_t \eqref{MainEquation_6}\\
 & + \frac{
 r^2}{
 2\lambda+2 }\partial_r \eqref{HarmonicGauge_1} + \frac{(2 M - r) r^3 }{
 2 (\lambda+ 1) (\lambda r+ 3 M )}\partial_t \partial_r\eqref{HarmonicGauge_2}.
\end{align*}
\end{proof}

\begin{lemma}
\begin{align*}
K_{m\ell}= &-2 (\lambda+ 1) G_{m\ell} + \frac{2 (r-2 M) }{r^2} h_{1,m\ell}  + \frac{(6 M^2 + 3\lambda M r  + 
  \lambda (\lambda+1 ) r^2  ) }{
 r^2 (\lambda r+3 M)}\psi_{m\ell}\\
 & - (2r-4 M)\partial_r G_{m\ell} + \D \partial_r\psi_{m\ell}.
\end{align*}
\end{lemma}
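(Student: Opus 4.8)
The identity for $K_{m\ell}$ is of exactly the same nature as the preceding lemma expressing $H_{1,m\ell}$: it is a linear algebraic consequence of the mode equations \eqref{HarmonicGauge_1}--\eqref{HarmonicGauge_3}, \eqref{MainEquation_1}--\eqref{MainEquation_7} together with the definition \eqref{def:psi_mode} of the Zerilli mode $\psi_{m\ell}$. So the plan is to produce an explicit linear combination (with $r$- and $\ell$-dependent coefficients, possibly involving $\partial_t$ and $\partial_r$ applied to some of the relations) of these equations that collapses to the asserted formula. First I would take \eqref{def:psi_mode} itself and solve it for one of the "gauge" scalars, say $K_{m\ell}$, in terms of $\psi_{m\ell}$, $h_{1,m\ell}$, $G_{m\ell}$, $\partial_r K_{m\ell}$ and $H_{2,m\ell}$; this already exhibits the correct $\psi_{m\ell}/(\lambda r + 3M)$-type coefficient and the $-2(\lambda+1)G_{m\ell}$ and $2(r-2M)r^{-2}h_{1,m\ell}$ terms. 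The remaining unwanted quantities $\partial_r K_{m\ell}$ and $H_{2,m\ell}$ must then be eliminated using the other relations — this is where combinations of \eqref{HarmonicGauge_2}, \eqref{HarmonicGauge_3}, and the trace equations \eqref{MainEquation_6}, \eqref{MainEquation_7}, and possibly $\partial_r$ of the constraint \eqref{HarmonicGauge_1}, come in, mirroring the weights appearing in the $H_{1,m\ell}$ lemma.

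More concretely, the steps I would carry out are: (i) write down the linear system consisting of the three harmonic-gauge relations \eqref{HarmonicGauge_1}--\eqref{HarmonicGauge_3} and the definition \eqref{def:psi_mode}, treating $H_{0,m\ell},H_{1,m\ell},H_{2,m\ell},K_{m\ell}$ (and first derivatives thereof) as unknowns and $h_{0,m\ell},h_{1,m\ell},G_{m\ell},\psi_{m\ell}$ as knowns; (ii) observe that \eqref{def:psi_mode} plus one derivative of \eqref{HarmonicGauge_1} or \eqref{HarmonicGauge_2} suffices to remove $\partial_r K_{m\ell}$, while the algebraic combination of \eqref{HarmonicGauge_2} and the trace equation \eqref{MainEquation_6} removes $H_{2,m\ell}$ and $H_{0,m\ell}$; (iii) verify that all time derivatives of $h_{0,m\ell}$ and the two-tensor piece $G_{m\ell}$ cancel, leaving only $\partial_r G_{m\ell}$ and $\partial_r\psi_{m\ell}$ with the stated coefficients $-(2r-4M)$ and $(1-2M/r)$; (iv) collect coefficients of $G_{m\ell}$, $h_{1,m\ell}$, $\psi_{m\ell}$, $\partial_r G_{m\ell}$, $\partial_r\psi_{m\ell}$ and check they match. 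As with the $H_{1,m\ell}$ lemma, the cleanest presentation is to state the precise linear combination of the numbered equations that yields the result — analogous to the displayed combination in the proof above — and then assert that the stated identity follows by direct (if lengthy) algebra, or alternatively cite the corresponding formula in \cite{Berndtson} after accounting for the different definition of $W_1$ and the treatment of $W_2$ as a section of $\mathcal{L}(-1)$.

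The main obstacle is purely bookkeeping: finding the exact rational-function multipliers (with factors of $(\lambda r + 3M)^{-1}$, $(2M-r)^{-1}$, powers of $r$, and $\partial_t,\partial_r$) so that every term outside the target list cancels identically. This is the same difficulty encountered in the $H_{1,m\ell}$ lemma and in the derivation of the Zerilli equation itself; there is no conceptual subtlety, but the algebra must be organized carefully — ideally by grouping the equations according to which unwanted scalar ($\partial_r K_{m\ell}$, then $H_{2,m\ell}$, then $H_{0,m\ell}$, then the various $\partial_t$ terms) each is used to kill, and checking cancellations stage by stage rather than expanding everything at once. A symbolic computation is the natural way to certify the final coefficient match, but the structure of the elimination (which equations, in which order, with which derivative applied) is what should be recorded in the written proof.
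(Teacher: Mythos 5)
Your plan is essentially the paper's own proof: the paper either cites (3.40) of \cite{Berndtson} or verifies the identity directly from \eqref{def:psi_mode} together with an explicit linear combination of the gauge and field equations, namely $\partial_t\eqref{HarmonicGauge_1}$, \eqref{HarmonicGauge_2}, \eqref{HarmonicGauge_3}, \eqref{MainEquation_1}, \eqref{MainEquation_2}, \eqref{MainEquation_6} and $\partial_r\eqref{HarmonicGauge_2}$ with rational coefficients in $r,\lambda,(\lambda r+3M)^{-1}$ — exactly the elimination scheme you describe. The only detail to adjust is that the cancellation also requires the $tt$ and $rr$ equations \eqref{MainEquation_1}, \eqref{MainEquation_2} (to absorb the second derivatives of $K_{m\ell}$ coming from \eqref{MainEquation_6} and $\partial_r\eqref{HarmonicGauge_2}$), not only the trace and gauge relations, but this is bookkeeping of the kind you already anticipate.
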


\begin{proof}
This is (3.40) in \cite{Berndtson}. Or it can be derive by using \eqref{def:psi_mode} and 

\begin{align*}
0=&-\frac{r^3}{
  2 (\lambda+ 1 ) (\lambda r+ 3 M)}\partial_t \eqref{HarmonicGauge_1} - \frac{(r-2 M)^2}{(\lambda+ 1) (\lambda r+3 M )}\eqref{HarmonicGauge_2} \\
  &+ \frac{r-2 M}{
 \lambda r+3 M } \eqref{HarmonicGauge_3} + \frac{r^3}{
 4 (\lambda+ 1 ) (\lambda r+ 3 M )}\eqref{MainEquation_1}\\
  &+\frac{r (r-2 M)^2 }{
 4 (\lambda+1) (\lambda r+ 3 M )}\eqref{MainEquation_2} + \frac{r-2 M }{
 4 (\lambda+ 1 ) (\lambda r+ 3 M )}\eqref{MainEquation_6}\\
  &- \frac{
 r (r-2 M)^2}{
 2 (\lambda+1 ) (\lambda r+ 3 M)} \partial_r\eqref{HarmonicGauge_2}.
\end{align*}

\end{proof}

\begin{lemma}
\begin{align*}
H_{2,m\ell}=& -4\lambda G_{m\ell} - \frac{2 r^3}{r-2 M }\partial_t^2 G_{m\ell} + \frac{4 r-6 M } {r^2} h_{1,m\ell} \\
&+ \frac{
    9 M^3 + 9\lambda M^2 r  + 3\lambda^2 M r^2  + 
     \lambda^2 (\lambda+1) r^3 }{r^2 (\lambda r+ 3 M )^2}\psi_{m\ell} + \frac{
    r^2}{r-2M} \partial_t^2\psi_{m\ell} \\
    &+ 
 2 M \partial_r G_{m\ell} + 2\D \partial_r h_{1,m\ell}+\frac{-3 M^2 - 3 \lambda M r  +\lambda r^2  }{r (\lambda r +3 M)}\partial_r\psi_{m\ell}.
\end{align*}
\end{lemma}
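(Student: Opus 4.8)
The claim is the explicit formula for $H_{2,m\ell}$ in terms of $G_{m\ell}$, $h_{1,m\ell}$, $\psi_{m\ell}$ and their first and second derivatives, exactly paralleling the two preceding lemmas for $H_{1,m\ell}$ and $K_{m\ell}$. As with those lemmas, the plan is to exhibit a single linear combination of the seven linearized Einstein equations \eqref{MainEquation_1}--\eqref{MainEquation_7} and the three harmonic gauge conditions \eqref{HarmonicGauge_1}--\eqref{HarmonicGauge_3}, together with $r$- and $t$-derivatives thereof, whose right-hand side vanishes identically and which, after substituting the mode expression \eqref{def:psi_mode} for $\psi_{m\ell}$, reduces to the asserted identity. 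Concretely, I would write
\begin{align*}
0=&\ \alpha_1\,\partial_t^2\eqref{HarmonicGauge_1}+\alpha_2\,\partial_t\eqref{HarmonicGauge_2}+\alpha_3\,\partial_t\eqref{HarmonicGauge_3}+\alpha_4\,\eqref{MainEquation_1}\\
&+\alpha_5\,\eqref{MainEquation_2}+\alpha_6\,\eqref{MainEquation_6}+\alpha_7\,\partial_r\eqref{HarmonicGauge_2}+\cdots,
\end{align*}
with rational coefficients $\alpha_i=\alpha_i(r;M,\lambda)$ having denominators built from $r-2M$ and $\lambda r+3M$, chosen so that: (i) all the ${}^0\Box_\ell$-type second-order terms and all occurrences of $H_{0,m\ell}$, $H_{1,m\ell}$, $K_{m\ell}$, $h_{0,m\ell}$ cancel; (ii) the coefficient of $H_{2,m\ell}$ is normalized to $1$; and (iii) the remaining terms organize into the stated combination of $G_{m\ell}$, $\partial_r G_{m\ell}$, $\partial_t^2 G_{m\ell}$, $h_{1,m\ell}$, $\partial_r h_{1,m\ell}$, $\psi_{m\ell}$, $\partial_r\psi_{m\ell}$, $\partial_t^2\psi_{m\ell}$. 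This is the same machinery Berndtson uses, so I would either cite \cite{Berndtson} directly (this is essentially (3.41) there) or reproduce the coefficient vector for completeness.

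First I would recall from the two preceding lemmas that $H_{1,m\ell}$ and $K_{m\ell}$ have already been solved for; substituting these into \eqref{HarmonicGauge_2} (which contains $H_{2,m\ell}$, $\partial_r H_{2,m\ell}$, $H_{0,m\ell}$, $\partial_r H_{0,m\ell}$, $h_{1,m\ell}$, $K_{m\ell}$, $\partial_r K_{m\ell}$, $\partial_t H_{1,m\ell}$) together with \eqref{HarmonicGauge_1} and \eqref{HarmonicGauge_3} gives a first-order ODE system in $r$ for the pair $(H_{0,m\ell},H_{2,m\ell})$ with source terms quadratic-free in $G_{m\ell}$, $h_{1,m\ell}$, $\psi_{m\ell}$. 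Alternatively, and more in the spirit of the lemmas above, I would note that \eqref{HarmonicGauge_1} read at a fixed mode lets one trade $\partial_r H_{1,m\ell}$ for algebraic terms plus $\partial_t(H_0+H_2+2K)$, and \eqref{MainEquation_3} lets one trade ${}^0\Box_\ell H_{1,m\ell}$; combining the $H_{1}$-lemma's expression with these isolates $H_{2,m\ell}$ purely algebraically once $\psi$ is used to eliminate $K$, $\partial_r K$. The cleanest route is to take the identity for $K_{m\ell}$ from the previous lemma, differentiate it in $r$ to get $\partial_r K_{m\ell}$, and substitute both into \eqref{HarmonicGauge_2}; then use \eqref{def:psi_mode} (and its $r$-derivative) to replace $\partial_r K_{m\ell}$-type terms, leaving a relation among $H_0$, $H_2$ and the three target unknowns. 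A second independent such relation comes from \eqref{MainEquation_1}$-$\D^2\,$\eqref{MainEquation_2}$ (the combination defining $\h_3$), and solving the $2\times2$ algebraic system yields $H_{2,m\ell}$.

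The main obstacle is purely computational bookkeeping: verifying that the chosen coefficient combination annihilates \emph{all} of $H_{0,m\ell},H_{1,m\ell},K_{m\ell},h_{0,m\ell}$ and their derivatives simultaneously, and that the surviving rational functions collapse (after clearing the $(\lambda r+3M)^2$ and $(r-2M)$ denominators) to exactly the polynomial coefficients $\tfrac{9M^3+9\lambda M^2 r+3\lambda^2 Mr^2+\lambda^2(\lambda+1)r^3}{r^2(\lambda r+3M)^2}$ on $\psi_{m\ell}$, $\tfrac{r^2}{r-2M}$ on $\partial_t^2\psi_{m\ell}$, $-\tfrac{2r^3}{r-2M}$ on $\partial_t^2 G_{m\ell}$, $2M$ on $\partial_r G_{m\ell}$, $-4\lambda$ on $G_{m\ell}$, $\tfrac{4r-6M}{r^2}$ on $h_{1,m\ell}$, $2\D$ on $\partial_r h_{1,m\ell}$, and $\tfrac{-3M^2-3\lambda Mr+\lambda r^2}{r(\lambda r+3M)}$ on $\partial_r\psi_{m\ell}$. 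I would carry this out by substituting \eqref{def:psi_mode} symbolically and checking the cancellation coefficient-by-coefficient in powers of $r$ (equivalently in powers of $s^{-1}$ at fixed $\lambda$), exactly as the analogous verifications for the $H_1$- and $K$-lemmas are handled in \cite{Berndtson}; no conceptual difficulty arises beyond this algebra. If a reference-only proof is acceptable, the entire argument is: ``This is (3.41) in \cite{Berndtson}; alternatively it follows from the previous two lemmas by substituting into \eqref{HarmonicGauge_2} and using \eqref{def:psi_mode}.''
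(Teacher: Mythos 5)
Your proposal matches the paper's proof: the paper simply cites Berndtson (3.41) and, alternatively, records an explicit rational-coefficient linear combination of \eqref{HarmonicGauge_1}--\eqref{HarmonicGauge_3} and \eqref{MainEquation_1}--\eqref{MainEquation_7} (with $\partial_t$ and $\partial_r$ derivatives, denominators built from $2M-r$ and $\lambda r+3M$) which, together with \eqref{def:psi_mode}, reduces to the stated identity. Your primary route is essentially the same argument, with the remaining work being the coefficient bookkeeping you describe.
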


\begin{proof}
This is (3.41) in \cite{Berndtson}. Or it can be derive by using \eqref{def:psi_mode} and 

\begin{align*}
&-\frac{r^4 }{2 (\lambda +1) (3 M+\lambda  r)}\partial_t\partial_r  \eqref{HarmonicGauge_1} -\frac{r^3 \left(3 M^2+3 M (2
   \lambda  r+r)+\lambda ^2 r^2\right)}{2 (\lambda +1) (2 M-r) (3 M+\lambda  r)^2} \partial_t \eqref{HarmonicGauge_1}\\
   &-\frac{r^2 (r-2 M)^2 }{2
   (\lambda +1) (3 M+\lambda  r)}\partial_r^2\eqref{HarmonicGauge_2}-\frac{r (2 M-r)  \left(3 M^2+3 (2 \lambda -1) M r+(\lambda -2) \lambda 
   r^2\right)}{2 (\lambda +1) (3 M+\lambda  r)^2}\partial_r\eqref{HarmonicGauge_2}\\
   &+\frac{ (2 M-r) \left(3 M^2-2 (2 \lambda +3) M r-\lambda 
   (\lambda +1) r^2\right)}{(\lambda +1) (3 M+\lambda  r)^2}\eqref{HarmonicGauge_2}+\frac{r (r-2 M) }{3 M+\lambda 
   r}\partial_r \eqref{HarmonicGauge_3}\\
   &-\frac{\left(-3 M^2+(4 \lambda +9) M r+\lambda  (\lambda +2) r^2\right)}{(3 M+\lambda  r)^2}\eqref{HarmonicGauge_3}+\frac{r^3
   \left(3 M^2+3 M (2 \lambda  r+r)+\lambda ^2 r^2\right)}{4 (\lambda +1) (2 M-r) (3 M+\lambda  r)^2}\eqref{MainEquation_1}\\
   &-\frac{r
   (r-2 M)  \left(15 M^2+(10 \lambda -9) M r+(\lambda -4) \lambda  r^2\right)}{4 (\lambda +1) (3 M+\lambda 
   r)^2}\eqref{MainEquation_2}+\frac{ \left(-3 M^2+3 M r+\lambda  (\lambda +2) r^2\right)}{4 (\lambda +1) (3 M+\lambda 
   r)^2}\eqref{MainEquation_6}\\
   &+\frac{r^4 }{4 (\lambda +1) (3 M+\lambda  r)}\partial_r\eqref{MainEquation_1}+\frac{r^2 (r-2 M)^2 }{4 (\lambda +1) (3
   M+\lambda  r)}\partial_r\eqref{MainEquation_2}\\
   &+\frac{2 r (2 M-r) }{3 M+\lambda  r}\eqref{MainEquation_5}+\frac{r (2 M-r) }{4 (\lambda +1) (3
   M+\lambda  r)}\partial_r\eqref{MainEquation_6}.
\end{align*}
\end{proof}

By substituting $H_{1,m\ell}$ in \eqref{MainEquation_4}, we obtain

\begin{equation}\label{reduced_equation1}
\begin{split}
{}^{0}\Box_\ell h_{0,m\ell}=& -4 (r-2M)\partial_t\partial_r G_{m\ell}  -4 \left(1-\frac{3 M}{r}\right)\partial_t G_{m\ell}\\
&-\frac{2 M }{r^2}\partial_r h_{0,m\ell}+\frac{2 \omega 
   (r-M)}{r^2}h_{1,m\ell}\\
   &+\frac{2  \left(-3 M^2-3 \lambda  M r+\lambda  r^2\right)}{r^2 (3 M+\lambda 
   r)}\partial_r\psi_{m\ell}+\left(2-\frac{4 M}{r}\right)\partial_t\partial_r \psi_{m\ell}.
\end{split}
\end{equation}
By substituting $K_{m\ell}$ and $H_{2,m\ell}$ we have

\begin{equation}\label{reduced_equation2}
\begin{split}
{}^0 \Box_\ell h_{1,m\ell}=&\left(4-\frac{4 M}{r}\right) \partial_r G + \frac{4 r^2 }{2 M-r}\partial_t^2 G_{m\ell}\\
&+\frac{4}{r}G_{m\ell}+\frac{2 M }{(r-2 M)^2}\partial_t h_{0,m\ell}\\
&-\frac{2 (M-r) }{r^2}\partial_r h_{1,m\ell}+\frac{4 M }{r^3}h_{1,m\ell}\\
&-\frac{6 M
   \left(3 M^2+2 \lambda  M r+\lambda  (\lambda +1) r^2\right)}{r^3 (3 M+\lambda  r)^2}\psi_{m\ell}-\frac{2 r }{2
   M-r}\partial^2_t\psi_{m\ell}\\
 &  +\frac{2 M (3 M-(\lambda +3) r) }{r^2 (3 M+\lambda  r)}\partial_r\psi_{m\ell}.
\end{split}
\end{equation}
By replacing $G_{m\ell}$ and $h_{1,m\ell}$ by $W_{2,m\ell}$ and $W_{1,m\ell}$ in \eqref{MainEquation_7}, we obtain

\begin{align*}
^{1}\Box_\ell W_{2,m\ell}=\frac{1}{r^2}\D W_{2,m\ell}+\frac{2}{r}W_{1,\ell}.
\end{align*}
Summing over all $m\ell$ with $\ell\geq 2$ and using Lemma \eqref{lem:d_mode}, we obtain \eqref{equ:W_2}. Similarly, by replacing $h_{0,m\ell}$, $h_{1,m\ell}$ and $G_{m\ell}$ in \eqref{reduced_equation1}, we obtain

\begin{align*}
{}^0\Box_\ell W_{0,m\ell}=&\frac{2M}{r^2}\partial_r W_{0,m\ell}-\frac{2M}{r^2}\D^{-1}\partial_t W_{1,m\ell}\\
&+\frac{2 \left(-3 M^2-3 \lambda  M r+\lambda  r^2\right)}{r^2 (\lambda r+ 3 M)}\partial_t\psi_{m\ell}+2\D\partial_t\partial_r\psi_{m\ell}\\
&=-\frac{2M}{r^2}P_{\even,m\ell}-\frac{2M}{r^2}W_{0,m\ell}\\
&+\frac{2 \left(-3 M^2-3 \lambda  M r+\lambda  r^2\right)}{r^2 (\lambda r+ 3 M)}\partial_t\psi_{m\ell}+2\D\partial_t\partial_r\psi_{m\ell}.
\end{align*}
Here we used the definition of $P_{\even}$, \eqref{def:P}. Then \eqref{equ:W_0} follows from summing over $m\ell$ with $\ell\geq 2$. Similarly, from \eqref{reduced_equation2} we obtain

\begin{align*}
{}^0\Box_\ell W_{1,m\ell}=&-\frac{2M}{r^2}\D^{-1}\partial_t W_{0,m\ell}+\frac{2M}{r^2}\partial_r W_{1,m\ell}-\frac{4\lambda+4}{r^3}\D W_{2,m\ell}+\frac{2}{r^2}\D W_{1,m\ell}\\
&+\frac{6 M (2 M-r) \left(3 M^2+2 \lambda  M r+\lambda  (\lambda +1) r^2\right)}{r^4 (3 M+\lambda  r)^2}\psi_{m\ell} -\frac{2 M (2 M-r) (3 M-(\lambda +3) r)}{r^3 (3 M+\lambda  r)}\partial_r\psi_{m\ell}\\
&+2\partial_t^2\psi_{m\ell}\\
=&\frac{M}{r^2}S_{W,m\ell}-\frac{4\lambda+4}{r^3}\left(1-\frac{3M}{r}\right)W_{2,m\ell}+\frac{2}{r^2}\left(1-\frac{4M}{r}\right)W_{1,m\ell}\\
&+\frac{6 M (2 M-r) \left(3 M^2+2 \lambda  M r+\lambda  (\lambda +1) r^2\right)}{r^4 (3 M+\lambda  r)^2}\psi_{m\ell} -\frac{2 M (2 M-r) (3 M-(\lambda +3) r)}{r^3 (3 M+\lambda  r)}\partial_r\psi_{m\ell}\\
&+2\partial_t^2\psi_{m\ell}.
\end{align*}
Here we used

\begin{align*}
S_{W,m\ell}=-2\D^{-1}\partial_t W_{0,m\ell}+2\partial_r W_{1,m\ell}-\frac{4\lambda+4}{r^2}W_{2,m\ell}+\frac{4}{r}W_{1,m\ell}.
\end{align*}
By summing over all $m\ell$ with $\ell\geq 2$ and using Lemma \eqref{lem:d_mode}, one obtains \eqref{equ:W_1}. \\

Now we start to derive the equation for $S_W$, $P_{\even}$ and $Q_{\even}$. $S_W-\slashed{tr}h^{\RW}$ satisfies free wave equation since $\slashed{tr}h^{\RW}={tr}h^{\RW}$ and ${tr}h^{\RW}-S_W=S$. The equation for $P_{\even,m\ell}$ follows the the one for $W_{0,m\ell}$ and $W_{1,m\ell}$. We compute

\begin{align*}
{}^0\Box P_{\even,m\ell}=&-2\D^{-1}\partial^2_t W_{0,m\ell}+2\partial_t \partial_r W_{1,m\ell}-\frac{4\lambda+4}{r^2}\partial_t W_{2,m\ell}+\frac{4}{r}\partial_t W_{1,m\ell}\\
                         &+\frac{4 \left(3 M^2+\lambda  (\lambda +2) r^2\right)}{r^2 (3 M+\lambda  r)}\partial_t\psi_{m\ell}-8\D\partial_t\partial_r\psi_{m\ell}.
\end{align*}
From the expression of $S_{W,m\ell}$ and

\begin{align*}
(\slashed{tr}h^{\RW})_{m\ell}=\frac{2\lambda+2}{r}\psi_{m\ell}+2\D\partial_{r}\psi_{m\ell}-\frac{6M}{r(\lambda r+3M)}\D\psi_{m\ell}.
\end{align*}
we compute

\begin{align*}
{}^0\Box_{\ell} P_{\even,m\ell}=&\partial_t(S_{W,m\ell}-(\slashed{tr}h^{\RW})_{m\ell})\\
                        -&\frac{2 \lambda  ((\lambda +3) r-3 M)}{r (3 M+\lambda  r)}\partial_t\psi_{m\ell}-6\D\partial_t\partial_r\psi_{m\ell}.
\end{align*}
Then after summing over all $m\ell$ with $\ell\geq 2$, one has \eqref{equ:P}. From the definition of $Q_{\even}$, we have

\begin{align*}
Q_{\even,m\ell}=-W_{1,m\ell}+\D \partial_r W_{2,m\ell}+\frac{r}{2}S_{W,m\ell}-\frac{2r}{3}\left(\lambda+\frac{3M}{r}\right)\cdot\frac{1}{r}\psi_{m\ell}.
\end{align*}
$Q_{\even}$ corresponds to $\psi_1$ defined in  \cite[(3.63)]{Berndtson}. And the equation for $Q_{\even}$ is equivalent to \cite[(3.61)]{Berndtson}. Alternative, we can also derive it through the equation of $W_{1,m\ell},\ W_{2,m\ell},\ S_{W,m\ell}$ and $\psi_{m\ell}$ as

\begin{align*}
{}^1\Box_\ell Q_{\even,m\ell}-\frac{1}{r^2}\D Q_{\even,m\ell}=0.
\end{align*}

Then \eqref{equ:Q} follows by summing over all $m\ell$ with $\ell\geq 1$.

\end{appendix}

\bibliography{stability_paper}{}
\bibliographystyle{plain}

\end{document}